%% file: main.tex
\documentclass[12pt, reqno]{amsart}  

\usepackage{main}

\usepackage{setspace}       
\usepackage{booktabs}       
\usepackage{array}          
\usepackage{makecell}       
\usepackage{tabularx}       
\usepackage{multirow}       
\usepackage{adjustbox}      
\usepackage{pdflscape}      
\usepackage{longtable}      

\graphicspath{{Figures/}}

\usepackage[style=apa, backend=biber, natbib=true, backref=true]{biblatex}
\DefineBibliographyStrings{english}{%
  backrefpage = {},
  backrefpages = {}
}
\renewbibmacro*{pageref}{%
  \iflistundef{pageref}
    {}
    {\printtext[brackets]{\printlist[pageref][-\value{listtotal}]{pageref}}}}

\usepackage{titletoc}

\newgeometry{margin=1.25in}

\title{\Large A Bayesian Hierarchical Hurdle Beta-Binomial Model for Survey-Weighted Bounded Counts and Its Application to Childcare Enrollment}

\author{JoonHo Lee}

\date{\footnotesize March 2, 2026. \\[0.5em]
Lee: The University of Alabama, Tuscaloosa, AL, USA. jlee296@ua.edu.
\\[0.5em]
This research was supported by the Office of Planning, Research, and Evaluation (OPRE), Administration for Children and Families, U.S.\ Department of Health and Human Services, through Grant 90YE0346 to the University of Nevada, Las Vegas (PI: Gerilyn Slicker), with a subaward to the University of Alabama. The opinions expressed are those of the author and do not represent views of the Administration for Children and Families or the U.S.\ Department of Health and Human Services.
}

\numberwithin{equation}{section}  

\renewcommand{\theHequation}{\thesection.\arabic{equation}}
\renewcommand{\theHfigure}{\thesection.\arabic{figure}}
\renewcommand{\theHtable}{\thesection.\arabic{table}}

\makeatletter
\renewcommand{\theHALG@line}{\thealgorithm.\arabic{ALG@line}}
\makeatother

\begin{document}


\begin{abstract}
Bounded discrete proportions---counts out of known totals---present
modeling challenges when data exhibit structural zeros, overdispersion,
and hierarchical clustering.  We develop a Bayesian hierarchical hurdle
beta-binomial model with state-varying coefficients that addresses
all four features.  The framework makes three methodological contributions:
(i)~it studies cross-margin dependence via a cross-block covariance
component and clarifies when and how this parameter is identified
through the hierarchical layer rather than the conditional likelihood;
(ii)~it proposes a Cholesky-based sandwich variance calibration for
pseudo-posterior inference under survey weights, guided by a
parameter-specific design effect ratio diagnostic; and
(iii)~it introduces a log-scale marginal effect decomposition
for hurdle models that translates regression coefficients
into policy-relevant quantities.  Applied to
6,785~childcare providers across 51~states from the 2019 National Survey
of Early Care and Education, the model reveals a ``poverty reversal'':
poverty reduces enrollment participation yet increases intensity among
participants, with the extensive margin accounting for two-thirds of the
total effect.  Design-calibrated simulation shows that sandwich-corrected
intervals substantially improve coverage, reaching 82--88.5\% at
the 90\% nominal level for fixed effects.  The R package \textsf{hurdlebb} implements
all methods.
\end{abstract}

\maketitle  

\noindent\textbf{Keywords:} Bayesian hierarchical model; beta-binomial distribution; bounded discrete proportions; hurdle model; sandwich variance correction; survey weighting

\pagestyle{plain}  
\newpage



\input{section1}


\input{section2}


\input{section3}


\input{section4}


\input{section5}


\input{section6}


\bigskip

\paragraph{Software and reproducibility.}
The companion R package \textsf{hurdlebb}~\citep{hurdlebb2026} implements
the full methodology developed in this paper, including survey-weighted
pseudo-posterior inference, sandwich variance correction via Cholesky
transformation, and marginal effect decomposition.  The package is
available at \url{https://github.com/joonho112/hurdlebb} with
documentation at \url{https://joonho112.github.io/hurdlebb/}.  A
separate replication package at
\url{https://github.com/joonho112/hurdlebb-replication} contains all
analysis scripts needed to reproduce every result reported in this paper,
conditional on access to the NSECE restricted-use data.  The complete
estimation pipeline is described in~\cref{alg:estimation};
Supplementary Material~E (\cref{app:sim-computation}) provides
computational details including Stan configuration, convergence
diagnostics, and runtime benchmarks.

\paragraph{Data availability.}
The empirical results in this paper are based on data from the 2019
National Survey of Early Care and Education (NSECE). All key variables
used in the analysis---including the state identifier, community-level
poverty rate, urbanicity, racial/ethnic composition, infant/toddler
enrollment counts, and total enrollment---are available only through the
NSECE Restricted-Use Files (ICPSR Study 38445,
\url{https://www.icpsr.umich.edu/web/ICPSR/studies/38445}). For
researchers interested in the methodology without access to the restricted
data, the \textsf{hurdlebb} package includes a synthetic dataset
(\texttt{nsece\_synth}) that preserves the essential distributional
characteristics of the original data.

\paragraph{Funding.}
This research was supported by the Office of Planning, Research, and
Evaluation (OPRE), Administration for Children and Families, U.S.\
Department of Health and Human Services, through Grant 90YE0346 to the
University of Nevada, Las Vegas (PI: Gerilyn Slicker), with a subaward
to the University of Alabama. The opinions expressed are those of the
author and do not represent views of the Administration for Children and
Families or the U.S.\ Department of Health and Human Services.


\printbibliography[title={References}]


\newpage
\appendix
\setcounter{section}{0}
\renewcommand{\thesection}{\Alph{section}}
\numberwithin{equation}{section}
\numberwithin{figure}{section}
\numberwithin{table}{section}

\renewcommand{\theHequation}{App.\thesection.\arabic{equation}}
\renewcommand{\theHfigure}{App.\thesection.\arabic{figure}}
\renewcommand{\theHtable}{App.\thesection.\arabic{table}}
\renewcommand{\theHlemma}{App.\thesection.\arabic{lemma}}
\renewcommand{\theHtheorem}{App.\thesection.\arabic{theorem}}
\renewcommand{\theHcorollary}{App.\thesection.\arabic{corollary}}
\renewcommand{\theHproposition}{App.\thesection.\arabic{proposition}}
\renewcommand{\theHdefinition}{App.\thesection.\arabic{definition}}
\renewcommand{\theHremark}{App.\thesection.\arabic{remark}}
\renewcommand{\theHexample}{App.\thesection.\arabic{example}}
\renewcommand{\theHconjecture}{App.\thesection.\arabic{conjecture}}

\setcounter{lemma}{0}
\renewcommand{\thelemma}{\thesection.\arabic{lemma}}
\setcounter{theorem}{0}
\renewcommand{\thetheorem}{\thesection.\arabic{theorem}}
\setcounter{corollary}{0}
\renewcommand{\thecorollary}{\thesection.\arabic{corollary}}
\setcounter{proposition}{0}
\renewcommand{\theproposition}{\thesection.\arabic{proposition}}
\setcounter{definition}{0}
\renewcommand{\thedefinition}{\thesection.\arabic{definition}}
\setcounter{remark}{0}
\renewcommand{\theremark}{\thesection.\arabic{remark}}
\setcounter{example}{0}
\renewcommand{\theexample}{\thesection.\arabic{example}}
\setcounter{conjecture}{0}
\renewcommand{\theconjecture}{\thesection.\arabic{conjecture}}

\begin{center}
{\Large\bfseries Online Supplemental Materials}\\[1em]
{\large A Bayesian Hierarchical Hurdle Beta-Binomial Model for\\
Survey-Weighted Bounded Counts and Its Application\\
to Childcare Enrollment}\\[1.5em]
{\normalsize JoonHo Lee}
\end{center}

\vspace{2em}

\startcontents[appendices]
\printcontents[appendices]{}{1}{\textbf{Contents}\vskip1em\hrule\vskip1em}
\vskip1em\hrule\vskip2em

\addcontentsline{toc}{section}{Online Supplemental Materials}

\input{osm_body}


\end{document}

%% file: section1.tex

\section{Introduction}
\label{sec:intro}


Bounded discrete proportions---counts out of known totals---arise
whenever a finite capacity is partially utilized: hospital beds
occupied, insurance plans enrolled, dental surfaces decayed, survey
items endorsed, or enrollment slots filled at a childcare center.  When the outcome can take the value zero for
structural rather than sampling reasons and displays variance well
beyond the binomial, standard generalized linear models are inadequate.
Infant and toddler childcare enrollment in the United States offers a
canonical instance of all these complications simultaneously.  Data from
the 2019 National Survey of Early Care and Education
\citep{NSECEProjectTeam2022} record, for each center-based provider,
the number of children under age three enrolled out of total capacity.
Over one-third of centers report zero infant and toddler enrollment---not
because demand is absent, but because stringent staff-to-child ratios and
elevated per-child costs lead many providers to decline serving this age
group entirely~\citep{AdamsRohacek2010}.  The resulting access gaps fall
disproportionately on high-poverty
communities~\citep{DobbinsEtAl2016, MalikEtAl2018}, yet the relationship
between poverty and enrollment turns out to be considerably more complex
than a simple deficit story.


A preliminary examination of the data reveals a distinctive pattern that we
term the \emph{poverty reversal}: community poverty reduces the
probability that a center serves infants and toddlers at all, yet among
centers that \emph{do} participate, those in higher-poverty areas devote
a \emph{larger} share of capacity to this age group.
The reversal poses a methodological challenge: it \emph{cannot be detected} by any model
that conflates participation and intensity into a single equation, nor
by one that fails to accommodate bounded support, overdispersion,
structural zeros, and survey weighting simultaneously.  These four
features jointly motivate the hierarchical hurdle beta-binomial
framework developed in this paper.


Two-part models for count data with excess zeros have a substantial
history.  \citet{Cragg1971} introduced the hurdle framework for
continuous expenditures and \citet{Mullahy1986} adapted it for counts,
separating a binary participation stage from a truncated count stage.
Bayesian extensions have expanded their scope considerably: hierarchical
zero-inflated or hurdle models with Poisson and negative-binomial kernels
now accommodate spatial, longitudinal, and multilevel structures
\citep{NeelonEtAl2010, NeelonEtAl2013, NeelonEtAl2016, Neelon2019,
Altinisik2023}.  Most relevant to our work, \citet{GhosalEtAl2020}
propose a hierarchical hurdle model for spatiotemporal count data that
captures cross-margin dependence through a single shared scalar.  However,
all of these models assume Poisson or negative-binomial kernels, which
treat the count as unbounded.  When the response has bounded support, the
beta-binomial distribution is the natural alternative, introducing
overdispersion while respecting the $\{0,1,\ldots,n\}$
support~\citep{WagnerEtAl2015}.  \citet{BandyopadhyayEtAl2011} embed the
beta-binomial in a spatial framework for clustered bounded counts, and
\citet{WenEtAl2025} propose a zero-inflated beta-binomial with spatially
varying coefficients and cross-component covariance.  Yet they adopt a
zero-inflation formulation in which zeros arise from a latent mixture,
whereas in the childcare setting a center reporting zero infant enrollment
has made a deliberate operational decision~\citep[cf.][]{Lambert1992,
Hall2000}, making the hurdle specification the natural choice.


A separate methodological strand addresses design-based inference in
Bayesian models for survey data.  \citet{SavitskyToth2016} develop a
pseudo-posterior framework in which survey weights enter the likelihood
exponent, and \citet{WilliamsSavitsky2021} provide calibrated sandwich
variance corrections that restore nominal credible-interval
coverage~\citep[building on][]{Binder1983, Pfeffermann1993}.  These
contributions target standard single-equation regressions; their
extension to two-part hierarchical specifications has not been attempted.
In sum, the literature provides the individual building
blocks---hurdle, beta-binomial, hierarchical structure, survey
correction---but not their synthesis.


This paper develops that synthesis.  The model extends the framework of
\citet{GhosalEtAl2020} in two directions, each addressing a limitation
of existing approaches.

First, we introduce a Bayesian hierarchical hurdle beta-binomial
model with state-varying coefficients and cross-margin covariance.  The
hurdle component separates participation from intensity; the
beta-binomial kernel respects the bounded support and accommodates the
observed overdispersion, which exceeds the binomial variance by a factor
of approximately twelve; and the hierarchical prior on state-level
coefficients identifies a cross-margin covariance structure whose
effective sample size is $S = 51$ states rather than $N = 6{,}785$
providers, with the $\LKJ$ prior providing finite-sample regularization
for the high-dimensional covariance ($2q = 10$ parameters).  The model nests
the scalar cross-margin parameter of \citet{GhosalEtAl2020} as a special
case and extends the beta-binomial models of
\citet{BandyopadhyayEtAl2011} and \citet{WenEtAl2025} by incorporating a
hurdle for structural zeros.  Applied to the NSECE data, this
specification reveals the reversal---and its geographic
heterogeneity across 51 states---with full uncertainty quantification.

Second, we extend the pseudo-posterior framework of
\citet{SavitskyToth2016} and \citet{WilliamsSavitsky2021} to the
two-part hierarchical setting, implementing a Cholesky-based sandwich
variance correction~\citep{Binder1983, Pfeffermann1993} and introducing
the design effect ratio as a parameter-specific diagnostic.  A companion
simulation study (\cref{sec:simulation}) provides candid diagnostics:
sandwich-corrected intervals substantially improve coverage for fixed
effects under realistic design effects (82--88.5\% at the 90\% nominal
level), while variance components require
unweighted estimation---a finding we report transparently rather than
treating the sandwich as a universal remedy.

These two innovations jointly enable a marginal effect
decomposition~(\cref{prop:lae-lie}) that separates the reversal into
extensive-margin (access) and intensive-margin (intensity) components.
A single-equation specification conflates these offsetting forces into
a muted net effect; the two-part decomposition quantifies each channel
and their relative magnitudes, with the extensive margin accounting for
approximately two-thirds of the total effect.  The companion R package
\textsf{hurdlebb}~\citep{hurdlebb2026} implements the full methodology.


The remainder of the paper is organized as follows.
\Cref{sec:data} introduces the NSECE data and documents the empirical
patterns motivating the model.
\Cref{sec:model} develops the hierarchical hurdle beta-binomial
specification, its analytic properties, and the sandwich correction.
\Cref{sec:simulation} reports the simulation study.
\Cref{sec:application} applies the model to the 2019 NSECE data.
\Cref{sec:discussion} discusses implications and limitations.
Supplementary Materials~A--E provide proofs, identification conditions,
extended tables, and computational details.

%% file: section2.tex

\section{Motivating Data and Preliminary Analysis}
\label{sec:data}

This section introduces the data and documents the empirical features
that jointly determine the model developed in \cref{sec:model}.  We
organize the presentation around four distributional
characteristics---structural zeros, bounded support, overdispersion, and
complex survey design---each of which maps directly to a component of the
hierarchical hurdle beta-binomial specification.  A preliminary analysis
then establishes the reversal phenomenon and the cross-state
heterogeneity that motivate the hierarchical structure and cross-margin
covariance.


\subsection{The NSECE survey}\label{sec:nsece}

The 2019 National Survey of Early Care and Education
\citep[NSECE;][]{NSECEProjectTeam2022} is a nationally representative
survey of early care and education providers conducted by NORC at the
University of Chicago under contract with the Office of Planning,
Research, and Evaluation (OPRE) within the Administration for Children
and Families.  Data collection took place from November 2018 through
July 2019, with $6{,}917$ center-based providers completing the survey.
The center-based provider component employs a stratified
multistage cluster design: 30 strata defined by state groupings, 415
primary sampling units, and $6{,}809$ centers with nonmissing survey
responses.
Of these, 24 report a total enrollment of $n_i = 0$ and are excluded
because the beta-binomial kernel requires a positive denominator,
yielding a final analytic sample of $N = 6{,}785$ providers across all
$S = 51$ states (50 states plus the District of Columbia).  Full details of the sampling frame, weight
construction, and informativeness testing appear in Supplementary
Material~D.

For each provider $i$, the survey records total enrollment $n_i$ (ages
0--5) and infant/toddler (IT) enrollment $Y_i \in \{0, 1, \ldots, n_i\}$
(ages 0--2).  The outcome pair $(Y_i, n_i)$ is the natural unit of
analysis.  Provider-level covariates include the community poverty rate,
urbanicity, and racial/ethnic composition of the provider's Census
tract~\citep{DobbinsEtAl2016, MalikEtAl2018}.  State-level policy
variables capture three dimensions of Child Care and Development Fund
(CCDF) subsidy policy---market rate percentile, tiered reimbursement, and
infant/toddler rate add-on---and enter the model as cross-level
moderators in \cref{sec:application}.  \Cref{tab:data-summary}
summarizes the key descriptive statistics.

\begin{table}[t]
\centering
\caption{Data summary: NSECE 2019 center-based providers.
  Panel~A describes the outcome variables; Panel~B the provider-level
  covariates (continuous variables shown before standardization, as
  means with standard deviations and ranges); Panel~C the survey design
  features, including the Kish design effect
  $\DEFF_{\mathrm{Kish}} = 1 + \mathrm{CV}^2(w)$.}\label{tab:data-summary}
\smallskip
\begin{tabular}{@{}ll@{}}
\toprule
Variable & Value \\
\midrule
\multicolumn{2}{@{}l}{\textit{Panel A: Outcome}} \\[2pt]
Total providers ($N$) & 6{,}785 \\
IT-serving providers ($z_i = 1$) & 4{,}392 (64.7\%) \\
Non-servers ($z_i = 0$) & 2{,}393 (35.3\%) \\
Total enrollment ($n_i$): Mean / Median / Range & 60.5 / 48 / 1--378 \\
IT enrollment ($Y_i \mid z_i = 1$): Mean / Median & 29.4 / 24 \\
IT share ($Y_i/n_i \mid z_i = 1$): Mean (SD) & 0.478 (0.214) \\[4pt]
\multicolumn{2}{@{}l}{\textit{Panel B: Covariates --- Mean (SD) [Range]}} \\[2pt]
Community poverty rate (\%) & 17.3 (8.3) [1.9, 54.2] \\
Urban (\%) & 93.3 \\
Community \% Black & 17.2 (20.5) [0.0, 96.5] \\
Community \% Hispanic & 24.2 (22.6) [0.0, 98.4] \\[4pt]
\multicolumn{2}{@{}l}{\textit{Panel C: Survey design}} \\[2pt]
States ($S$) & 51 (50 + DC) \\
Strata / PSUs & 30 / 415 \\
Sampling weight range & 1--462 \\
Weight CV & 1.66 \\
Kish effective sample size & 1{,}803 \\
Kish $\DEFF$ & 3.76 \\
\bottomrule
\end{tabular}
\end{table}


\subsection{Four data features and their modeling implications}\label{sec:features}

We now examine four distributional characteristics of the NSECE data
that individually and jointly determine the model specification of
\cref{sec:model}.  \Cref{fig:it-distribution} displays the marginal
distribution of the infant/toddler enrollment share, which exhibits all
four features simultaneously.

\begin{figure}[t]
  \centering
  \includegraphics[width=\textwidth]{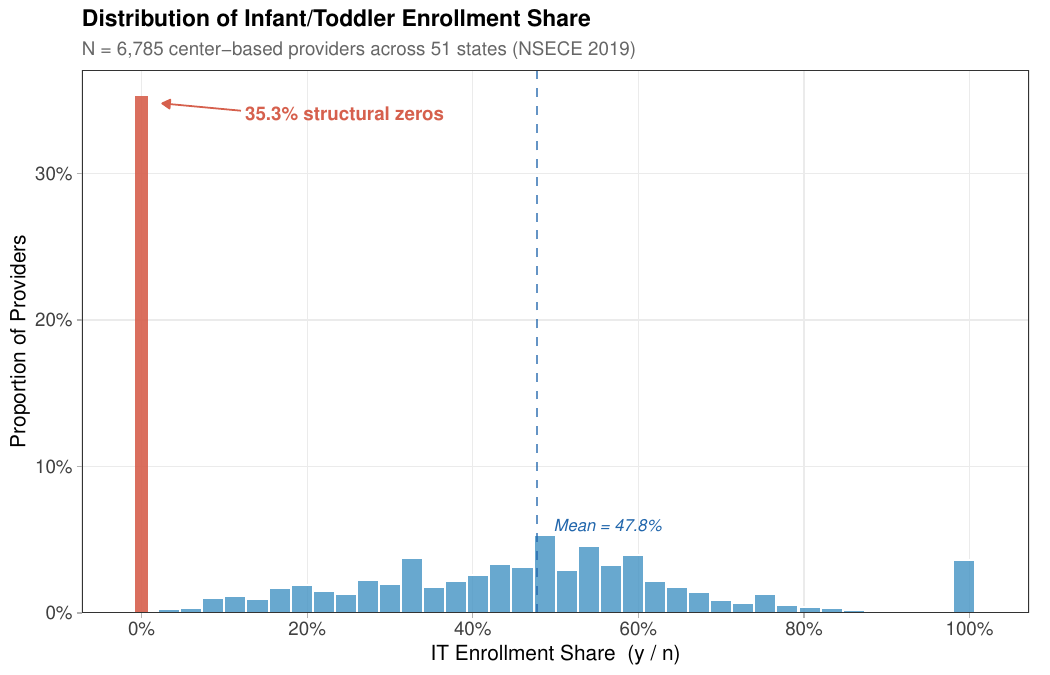}
  \caption{Distribution of the infant/toddler enrollment share $Y_i/n_i$
    across $N = 6{,}785$ center-based providers (NSECE 2019). The spike at
    zero (35.3\% of centers) represents providers that do not serve any
    children under age 3. Among IT-serving providers ($z_i = 1$, $N_{\mathrm{pos}} = 4{,}392$),
    the distribution is roughly unimodal with a mean of 0.478 and substantial
    dispersion (SD $= 0.214$), consistent with overdispersion approximately
    twelve times beyond the binomial---three features that jointly motivate
    the hurdle beta-binomial model.}\label{fig:it-distribution}
\end{figure}

\paragraph{Structural zeros and the hurdle specification.}
Of the 6,785 providers, 2,393 (35.3\%) report zero infant/toddler
enrollment---a rate far too large for sampling variability and reflecting
a deliberate operational decision not to serve children under age
three~\citep{AdamsRohacek2010}.  The hurdle
specification~\citep{Mullahy1986} treats participation and intensity as
distinct processes governed by separate parameters, as developed in
\cref{sec:likelihood}.

\paragraph{Bounded support and the beta-binomial kernel.}
Among servers ($z_i = 1$), infant/toddler enrollment $Y_i$ is bounded
above by total enrollment $n_i$, so the share $Y_i/n_i$ lies in
$(0,1]$.  The beta-binomial distribution respects this
$\{0,1,\ldots,n_i\}$ support and incorporates extra-binomial variation
through the concentration parameter
$\kappa$~\citep{Prentice1986, WagnerEtAl2015}; see
\cref{sec:likelihood}.

\paragraph{Overdispersion and the concentration parameter $\kappa$.}
The observed variance of $Y_i/n_i$ among participants exceeds the
binomial prediction by a factor of approximately 12 at the pooled level
(marginal overdispersion, before accounting for covariates),
and all 51 states individually exhibit overdispersion.  After conditioning
on the model covariates and hierarchical structure, the estimated
concentration $\hat{\kappa} \approx 7$ implies a conditional overdispersion
factor of approximately 7.  The beta-binomial
nests the binomial as $\kappa \to \infty$, modeling this overdispersion
structurally rather than correcting for it post hoc~\citep{Prentice1986}.

\paragraph{Complex survey design and the pseudo-posterior.}
Sampling weights range from 1 to 462, yielding a Kish effective sample
size of $\ESS_{\mathrm{Kish}} = 1{,}803$ and a design effect of
$\DEFF_{\mathrm{Kish}} = 3.76$; a formal test confirms that the design
is informative for the intensive margin~\citep{Pfeffermann1993}.
Ignoring the design would yield credible intervals with empirical
coverage well below nominal; \cref{sec:survey} develops the
pseudo-posterior and sandwich correction.

Each feature alone eliminates a class of candidate models: structural
zeros eliminate single-equation specifications, bounded support
eliminates Poisson and negative-binomial kernels, universal
overdispersion eliminates the binomial, and the complex survey design
eliminates naive Bayesian posteriors.  The \emph{intersection} of all
four constraints narrows the viable specification to a hurdle model with
a beta-binomial kernel, estimated via a pseudo-posterior with sandwich
correction.


\subsection{Preliminary evidence for the poverty reversal}\label{sec:prelim}

The four data features determine the model's distributional
specification.  The model's \emph{hierarchical} structure---state-varying
coefficients with cross-margin covariance---is motivated by a fifth
empirical pattern: the reversal varies systematically across
states.

\paragraph{Pooled estimates.}
A pooled logistic regression of infant/toddler participation ($z_i$) on
the standardized community poverty rate yields an extensive-margin
coefficient of $\hat{\alpha}_{\mathrm{pov}} = -0.020$
($p < 10^{-10}$), confirming that providers in higher-poverty
communities are significantly less likely to serve infants.  A pooled OLS
regression of the enrollment share ($Y_i/n_i$ among servers) on poverty
yields an intensive-margin coefficient of
$\hat{\beta}_{\mathrm{pov}} = +0.003$ ($p < 10^{-14}$), establishing
that, conditional on serving, providers in poorer communities devote a
larger share of enrollment to infants.  The opposing signs constitute the
reversal in its simplest form: poverty \emph{reduces}
participation but \emph{increases} intensity.

The substantive interpretation is noteworthy.  Poverty erects a barrier to
entry: the elevated costs of infant care discourage participation
precisely where demand is greatest~\citep{DobbinsEtAl2016,
MalikEtAl2018}.  Yet among centers that clear this barrier, those in
disadvantaged communities tilt their enrollment mix toward the youngest
age group, reflecting either market demand in areas with fewer
alternatives or mission-driven service provision.  Any model that
collapses participation and intensity into a single equation would
estimate a muted net effect, obscuring the two countervailing forces at
work.

\paragraph{State heterogeneity.}
When these regressions are estimated separately by state, 23 of 51
states (45\%) display the classic reversal pattern
($\hat{\alpha}_{\mathrm{pov},s} < 0$ and
$\hat{\beta}_{\mathrm{pov},s} > 0$).  The remaining states show varying
combinations: some exhibit negative effects on both margins, while
others show positive effects on both.  The cross-state correlation
between the two margin-specific poverty coefficients is $r = 0.021$---near
zero---indicating that the extensive and intensive mechanisms operate
largely independently.  \Cref{fig:poverty-raw} displays the aggregate reversal
in the raw data; the state-level bivariate scatter appears
in~\cref{fig:cross-margin}.  The near-zero cross-margin correlation implies
that knowing a state has a strong participation barrier provides
essentially no information about its intensity response, a finding that
motivates the full covariance structure developed in
\cref{sec:model}~\citep{SippleEtAl2020}.

\begin{figure}[t]
  \centering
  \includegraphics[width=\textwidth]{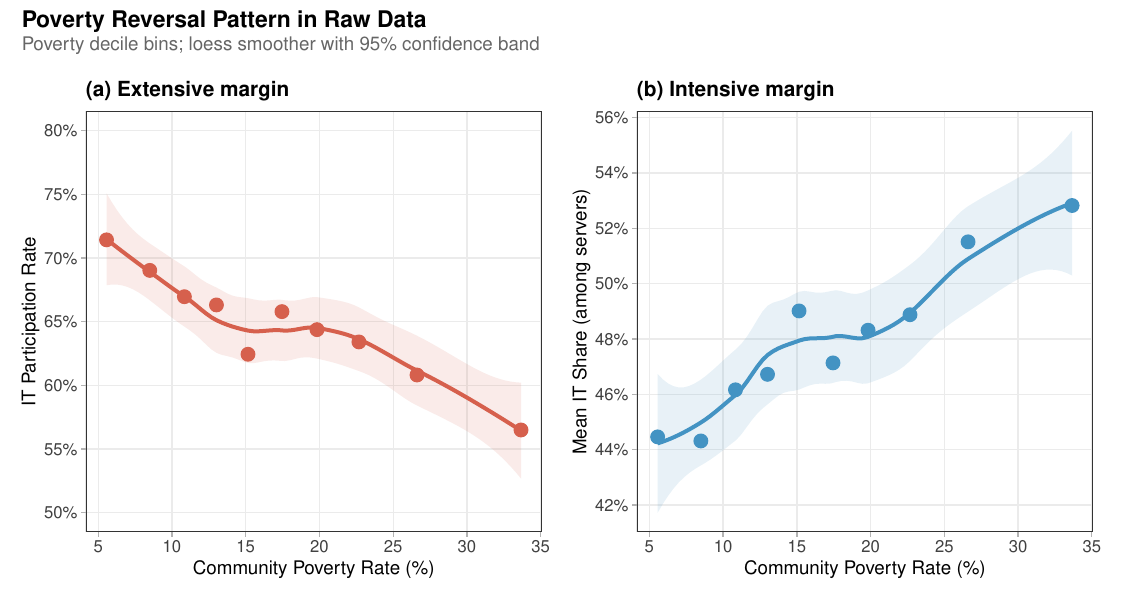}
  \caption{Preliminary evidence for the poverty reversal in the raw NSECE
    2019 data.  \emph{Left panel:} IT participation rate (fraction of
    centers serving any infants) by community poverty decile; a loess
    smoother with 95\% confidence band confirms the declining trend.
    \emph{Right panel:} mean IT enrollment share ($Y_i/n_i$) among servers
    by poverty decile, showing the opposing positive trend.  The two panels
    together display the reversal in its simplest form: participation falls
    with poverty, but intensity rises.}\label{fig:poverty-raw}
\end{figure}

\paragraph{Policy correlations.}
Among the state-level policy variables, tiered reimbursement shows the
strongest association with the poverty--participation relationship:
states with tiered reimbursement tend to have less negative
extensive-margin poverty coefficients, consistent with the hypothesis
that quality-linked payment incentives offset the cost disadvantage of
serving infants in disadvantaged communities.  These preliminary
correlations motivate the cross-level moderation analysis in
\cref{sec:application}.

In sum, the data require a model that simultaneously (i)~separates
structural zeros from positive counts via a hurdle, (ii)~respects the
bounded support via a beta-binomial kernel, (iii)~accommodates
overdispersion through the concentration parameter, (iv)~incorporates
survey weights through a pseudo-posterior with sandwich correction, and
(v)~captures cross-state heterogeneity and cross-margin dependence
through a hierarchical prior with full covariance.  \Cref{sec:model}
develops this model.

%% file: section3.tex

\section{The Hierarchical Hurdle Beta-Binomial Model}\label{sec:model}

This section develops the hierarchical hurdle beta-binomial (HBB) model
whose distributional specification was motivated by the four data features
documented in \cref{sec:features}.  The model combines three
building blocks---hurdle, beta-binomial kernel, and hierarchical
prior---into a unified framework and adds two methodological
contributions that are new to this class of models: a cross-margin
covariance whose identification operates entirely through the
hierarchical prior rather than the conditional likelihood, and a
Cholesky-based sandwich correction that extends
pseudo-posterior inference to the two-part hierarchical setting.

In plain terms, the model works as follows.  A \emph{hurdle} first
separates providers that serve infants and toddlers from those that do
not, treating participation as a logistic regression with
state-specific intercepts and slopes.  For participants, a
\emph{beta-binomial} kernel models the share of enrollment devoted to
infants and toddlers, accommodating the bounded count and within-state
overdispersion.  The two margins share a joint hierarchical prior on
their state-level deviations, so that a state's participation barrier
and its conditional enrollment intensity can be positively or negatively
correlated---the cross-margin covariance that is the paper's central
structural novelty.  Sampling weights enter through a
pseudo-likelihood, and a sandwich correction restores design-consistent
coverage for population-average parameters.


\subsection{Notation and data structure}\label{sec:notation}

We observe $N = 6{,}785$ center-based childcare providers drawn from
$S = 51$ states (including the District of Columbia).  Each provider $i$
is nested in a state $s[i]$.  The outcome pair $(Y_i, n_i)$ records
infant/toddler enrollment $Y_i \in \{0, 1, \ldots, n_i\}$ out of total
enrollment $n_i$; the participation indicator is $z_i = \one(Y_i > 0)$.
\cref{tab:notation} collects the symbols used throughout the paper.
All continuous covariates are standardized prior to analysis.

\begin{table}[ht]
\centering
\caption{Notation summary. Dimensions are shown in parentheses.}\label{tab:notation}
\smallskip
\begin{tabularx}{\textwidth}{@{}lclX@{}}
\toprule
Symbol & Dimension & Domain & Description \\
\midrule
\multicolumn{4}{@{}l}{\textit{Indices and dimensions}} \\[2pt]
$i$ & --- & $\{1,\ldots,N\}$ & Provider index ($N = 6{,}785$) \\
$s$ & --- & $\{1,\ldots,S\}$ & State index ($S = 51$, incl.\ DC) \\
$s[i]$ & --- & $\{1,\ldots,S\}$ & State membership of provider $i$ \\
$N_s$ & --- & $\mathbb{Z}_{>0}$ & Number of providers in state $s$; $\sum_s N_s = N$ \\[4pt]
\multicolumn{4}{@{}l}{\textit{Outcome variables}} \\[2pt]
$n_i$ & $(1)$ & $\mathbb{Z}_{>0}$ & Total enrollment, ages 0--5 (treated as fixed) \\
$Y_i$ & $(1)$ & $\{0,1,\ldots,n_i\}$ & Infant/toddler (IT) enrollment count \\
$z_i$ & $(1)$ & $\{0,1\}$ & $z_i = \one(Y_i > 0)$: IT participation indicator \\[4pt]
\multicolumn{4}{@{}l}{\textit{Covariates}} \\[2pt]
$\bx_i$ & $(P \times 1)$ & $\R^P$ & Provider covariates, $P=5$: intercept, poverty, urban, Black, Hispanic \\
$\bx_i^{(r)}$ & $(q\times 1)$ & $\R^q$ & State-varying subvector of $\bx_i$ ($q = P$ in M2/M3) \\
$q_i$ & $(1)$ & $(0,1)$ & Participation probability (extensive margin); cf.~\eqref{eq:eta-ext} \\
$\bv_s$ & $(Q\times 1)$ & $\R^Q$ & State policy vector, $Q = 4$: intercept + $Q-1$ policy instruments \\[4pt]
\multicolumn{4}{@{}l}{\textit{Parameters}} \\[2pt]
$\balpha$ & $(P \times 1)$ & $\R^P$ & Population-average extensive-margin coefficients \\
$\bbeta$ & $(P \times 1)$ & $\R^P$ & Population-average intensive-margin coefficients \\
$\kappa$ & $(1)$ & $\R_{>0}$ & Beta-binomial concentration \\
$\bdelta_{k,s}$ & $(q\times 1)$ & $\R^q$ & State deviations, margin $k\in\{1,2\}$, state $s$ \\
$\bGamma_k$ & $(q\times Q)$ & $\R^{q \times Q}$ & Cross-level policy moderator matrix, margin $k$ \\
$\bSigma_\delta$ & $(2q\times 2q)$ & $\mathbb{S}_{++}^{2q}$ & Cross-margin state-effect covariance \\[4pt]
\multicolumn{4}{@{}l}{\textit{Survey design}} \\[2pt]
$w_i$ & $(1)$ & $\R_{>0}$ & Sampling weight \\
$\tilde{w}_i$ & $(1)$ & $\R_{>0}$ & Normalized weight: $\tilde{w}_i = w_i N / \textstyle\sum_j w_j$ \\
\bottomrule
\end{tabularx}
\end{table}


\subsection{The hurdle beta-binomial likelihood}\label{sec:likelihood}

\paragraph{The beta-binomial distribution.}
The beta-binomial distribution in the mean--precision parameterization
\citep{Griffiths1973,Prentice1986}
$(\mu,\kappa)$ sets $a = \mu\kappa$ and $b = (1-\mu)\kappa$, where
$\mu \in (0,1)$ is the conditional mean parameter and $\kappa > 0$ is the
concentration parameter. Larger $\kappa$ implies less overdispersion, with the
$\Bin(n,\mu)$ distribution recovered as $\kappa \to \infty$. The probability
mass function (PMF) is
\begin{equation}\label{eq:bb-pmf}
  \Prob(Y = y \mid n,\mu,\kappa)
  = \binom{n}{y}
    \frac{B\bigl(y + \mu\kappa,\; n - y + (1-\mu)\kappa\bigr)}
         {B\bigl(\mu\kappa,\;(1-\mu)\kappa\bigr)},
  \quad y = 0,1,\ldots,n,
\end{equation}
where $B(\cdot,\cdot)$ is the beta function. The first two moments are
\begin{equation}\label{eq:bb-moments}
  \E[Y] = n\mu, \qquad
  \Var(Y) = n\mu(1-\mu)\,\frac{n + \kappa}{1 + \kappa}.
\end{equation}
The variance inflation factor $(n+\kappa)/(1+\kappa)$ exceeds unity for all
$n \ge 2$, capturing overdispersion relative to the binomial. In the
childcare data, with a typical enrollment of $n_i \approx 50$ and an
estimated $\kappa \approx 7$, this factor is approximately $57/8 \approx 7$,
far beyond what a binomial model can accommodate.

\paragraph{Computational remark.}
In practice, we compute $\log f_{\BetaBin}$ using the
\texttt{lbeta()} function to avoid numerical overflow when $n_i$
exceeds 50. Specifically, the log-PMF is evaluated as
$\log\binom{n}{y} + \texttt{lbeta}(y+a,\,n-y+b) - \texttt{lbeta}(a,b)$,
which remains stable for $n_i$ up to several hundred.

\paragraph{Zero probability.}
At $y = 0$ the PMF simplifies to a product form
that is both interpretable and numerically stable:
\begin{equation}\label{eq:p0}
  p_0(n,\mu,\kappa)
  = \prod_{j=0}^{n-1} \frac{(1-\mu)\kappa + j}{\kappa + j}.
\end{equation}
Each factor in the product is the conditional probability that the $j$-th
``trial'' yields a non-IT outcome, given all preceding trials did so. The
product form shows that $p_0$ is strictly decreasing in both $\mu$ and $\kappa$
(for $n \ge 2$): higher mean intensity and lower overdispersion each reduce the
probability of observing zero IT enrollment.

\paragraph{Hurdle construction.}
As documented in \cref{sec:features}, the zeros in the childcare data are
structural: they reflect a deliberate decision not to serve infants rather
than a random draw from a count distribution.  This motivates a hurdle
specification \citep{Mullahy1986,Cragg1971} that separates the participation
decision from the enrollment intensity.

\begin{definition}[Hurdle beta-binomial model]\label{def:hbb}
  For provider $i$ with enrollment capacity $n_i$, the HBB model specifies
  \begin{equation}\label{eq:hbb}
    f(y_i \mid q_i, \mu_i, \kappa, n_i) =
    \begin{cases}
      1 - q_i, & \text{if } y_i = 0, \\[4pt]
      q_i \cdot f_{\ZTBB}(y_i \mid n_i, \mu_i, \kappa),
        & \text{if } y_i \in \{1,\ldots,n_i\},
    \end{cases}
  \end{equation}
  where $q_i \in (0,1)$ is the participation probability (extensive margin),
  $\mu_i \in (0,1)$ is the conditional mean IT share (intensive margin), and
  the zero-truncated beta-binomial PMF is
  \begin{equation}\label{eq:ztbb}
    f_{\ZTBB}(y \mid n,\mu,\kappa)
    = \frac{f_{\BetaBin}(y \mid n,\mu,\kappa)}{1 - p_0(n,\mu,\kappa)},
    \quad y = 1,\ldots,n.
  \end{equation}
\end{definition}

The hurdle separates two distinct decisions that childcare centers face. The
\emph{extensive margin} captures whether a center serves any infants at
all---the access decision---governed by $q_i$. The \emph{intensive margin}
captures how many infant slots a center offers relative to its total
capacity---the intensity decision---governed by $\mu_i$. This separation
reflects a genuine two-stage process: a center must first decide to accept the
costs and regulatory burden of infant care (clearing the hurdle), and only then
does it determine how many infant slots to offer.

The log-likelihood under conditional independence decomposes additively into
extensive and intensive components,
\begin{multline}\label{eq:loglik-decomp}
  \ell(\btheta)
  = \underbrace{\sum_{i=1}^{N}\bigl[z_i\log q_i + (1-z_i)\log(1-q_i)\bigr]}
    _{\ell_{\mathrm{ext}}(\balpha)} \\
  + \underbrace{\sum_{i:\,z_i=1}\bigl[\log f_{\BetaBin}(y_i \mid n_i,\mu_i,\kappa)
    - \log\bigl(1 - p_{0,i}\bigr)\bigr]}
    _{\ell_{\mathrm{int}}(\bbeta,\kappa)},
\end{multline}
where $p_{0,i} = p_0(n_i,\mu_i,\kappa)$. This separation is a structural
property of hurdle models: the binary indicator $z_i$ is a sufficient statistic
partition that renders the extensive parameters $\balpha$ and the intensive
parameters $(\bbeta,\kappa)$ variation-independent in the likelihood. In
contrast, zero-inflated models introduce latent at-risk indicators that couple
the two components, complicating both estimation and interpretation.

\paragraph{Unconditional expectation and the intensity function.}
The unconditional mean enrollment count is
\begin{equation}\label{eq:uncond-mean}
  \E[Y_i] = q_i \cdot n_i \cdot h(\mu_i, n_i, \kappa),
  \qquad
  h(\mu, n, \kappa) \equiv \frac{\mu}{1 - p_0(n,\mu,\kappa)},
\end{equation}
where $h$ is the \emph{intensity function} mapping the latent mean proportion
$\mu$ to the expected proportion among participants. Since $h(\mu) \ge \mu$
(truncation shifts the conditional mean upward), participating centers are
predicted to have a higher IT share than the latent parameter $\mu$ alone
would suggest. The function $h$ is central to the reversal
interpretation: the key question is whether the opposing signs of the extensive
and intensive coefficients can be ``absorbed'' by a nonlinear response of $h$,
or whether the reversal is genuinely a coefficient-level phenomenon. The
following theorem resolves this.

\begin{theorem}[Monotonicity of $h$]\label{thm:monotonicity}
  For all $n \ge 2$, $\kappa > 0$, and $\mu \in (0,1)$,
  \begin{equation}\label{eq:dh-dmu}
    \frac{\partial h}{\partial \mu} > 0.
  \end{equation}
  For $n = 1$, $h(\mu) \equiv 1$ for all $\mu \in (0,1)$.
\end{theorem}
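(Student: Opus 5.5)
The plan is to reinterpret $h$ probabilistically and reduce the claim to a monotone-likelihood-ratio statement for the beta-binomial family in $\mu$ with $n$ and $\kappa$ fixed.

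\emph{Step 1: rewrite $h$ as a truncated mean.} By \eqref{eq:bb-moments}, $n\mu = \E[Y]$. Since the $y=0$ term contributes nothing to the mean, $\E[Y] = \E[Y\,\one(Y\ge 1)]$. Hence
\begin{equation*}
  h(\mu,n,\kappa) = \frac{\E[Y]}{n\,(1-p_0)} = \frac{1}{n}\,\E_\mu[\,Y \mid Y \ge 1\,],
\end{equation*}
where the expectation is under the zero-truncated law $f_{\ZTBB}(\cdot\mid n,\mu,\kappa)$ of \eqref{eq:ztbb}. The case $n=1$ is then immediate. The truncated support is $\{1\}$, so $h\equiv 1$. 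Directly, $p_0 = 1-\mu$ from \eqref{eq:p0}, so $h = \mu/\mu = 1$. For $n\ge 2$ it therefore suffices to show that the truncated mean has a strictly positive $\mu$-derivative.

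\emph{Step 2: score monotone in $y$.} Write $a=\mu\kappa$ and $b=(1-\mu)\kappa$. From \eqref{eq:bb-pmf} the successive ratio is
\begin{equation*}
  \frac{f_{\BetaBin}(y+1)}{f_{\BetaBin}(y)} = \frac{n-y}{y+1}\cdot\frac{y+a}{\,n-y-1+b\,}.
\end{equation*}
Its second factor is strictly increasing in $\mu$, because $a$ increases and $b$ decreases with $\mu$ while $y+a>0$ and $n-y-1+b>0$. Differentiating the logarithm gives
\begin{equation*}
  \frac{\partial}{\partial\mu}\Bigl[\log f(y+1)-\log f(y)\Bigr] = \frac{\kappa}{y+a}+\frac{\kappa}{n-y-1+b} > 0 .
\end{equation*}
So the score $S(y)=\partial_\mu \log f_{\BetaBin}(y\mid n,\mu,\kappa)$ is strictly increasing in $y$. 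This is a strict monotone likelihood ratio property.

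\emph{Step 3: covariance identity.} Let $g(y)=f_{\BetaBin}(y)/(1-p_0)$ on $\{1,\dots,n\}$. Differentiating under the finite sum gives $\partial_\mu \log g(y) = S(y) - c(\mu)$, where $c(\mu)$ does not depend on $y$. Hence
\begin{equation*}
  \frac{\partial}{\partial\mu}\E_\mu[Y\mid Y\ge1] = \Cov_g\bigl(Y,\,S(Y)\bigr),
\end{equation*}
because the constant $c$ drops out of the covariance. Two strictly increasing functions of the same random variable have positive covariance whenever that variable is nondegenerate. This follows, for example, from the identity $\Cov(Y,S(Y))=\tfrac12\E[(Y-Y')(S(Y)-S(Y'))]$ with $Y'$ an independent copy.

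For $n\ge2$ the truncated law puts positive mass on at least two points, $1$ and $2$, since all beta-binomial probabilities are positive for $a,b>0$. The covariance is therefore strictly positive, and dividing by $n$ gives $\partial h/\partial\mu>0$.

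The only delicate point is ensuring \emph{strict} positivity of the derivative rather than mere monotonicity of $h$. This is why the covariance route is preferable to a stochastic-ordering argument. It is settled by the strict increase of $S$ and the nondegeneracy of the truncated support for $n\ge 2$.
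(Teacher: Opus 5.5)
Your proof is correct, and it takes a genuinely different route from the paper.

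\textbf{Your route.} You identify $h = n^{-1}\E[Y\mid Y\ge 1]$. You then show that the $\mu$-score $S(y)$ of the untruncated beta-binomial is strictly increasing in $y$, which is a strict monotone likelihood ratio and follows from the successive-ratio computation. Finally you write $\partial h/\partial\mu = n^{-1}\Cov_g(Y,S(Y))$ under the truncated law. This is positive by the Chebyshev covariance inequality once that law is nondegenerate. The passage from $\E_g[Y(S(Y)-c(\mu))]$ to a covariance tacitly uses $\E_g[S(Y)-c(\mu)]=0$. That identity comes from differentiating $\sum_{y\ge 1} g(y)=1$, and it is worth stating explicitly.

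\textbf{The paper's route.} The paper differentiates $h$ directly. With $\Phi(\mu)=(1-p_0)-\mu p_0\Lambda$ it obtains $\partial h/\partial\mu=\Phi/(1-p_0)^2$. It then checks $\Phi(0)=0$ and $\Phi'(\mu)=\mu p_0(\Lambda^2-\Lambda_2)$. It closes with $\Lambda^2>\Lambda_2$ for $n\ge 2$, by expanding the square of a sum of positive reciprocals. Since $\Phi'(\mu)=\mu\,\partial^2 p_0/\partial\mu^2$, this is in effect strict convexity of $p_0$ in $\mu$: $\Phi(\mu)>0$ says the tangent to $p_0$ at $\mu$ passes strictly below $p_0(0)=1$.

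\textbf{What each buys.} Your argument is more conceptual and more portable. It uses only strict MLR of the untruncated family and nondegeneracy of the truncated one. So it extends verbatim to other truncation sets, other increasing functionals of $Y$, and other MLR families. It also explains the $n=1$ exception as degeneracy of the truncated law rather than as an algebraic coincidence.

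The paper's computation buys explicit closed forms in $p_0$ and $\Lambda$, which it reuses immediately:
\begin{itemize}
\item the elasticity $\varepsilon_h=1-\omega$ with $\omega=\mu p_0\Lambda/(1-p_0)$;
\item the second derivative of $h$;
\item the convexity corollary for $p_0$;
\item the numerically evaluated intensive-margin multiplier $(1-\mu)\varepsilon_h$ in the marginal-effect decomposition.
\end{itemize}
Your covariance formula does not deliver these directly. It does, however, yield the nontrivial half of the elasticity bound, $\omega<1$.
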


\begin{proof}[Proof sketch]
  Define $\Phi(\mu) = (1 - p_0) - \mu\, p_0\, \Lambda$, where
  $\Lambda = \kappa\sum_{j=0}^{n-1}[(1-\mu)\kappa + j]^{-1} > 0$.
  Since $\partial h / \partial\mu = \Phi(\mu)/(1-p_0)^2$, the sign of the
  derivative equals the sign of $\Phi$. One verifies that $\Phi(0) = 0$ and
  $\Phi'(\mu) = \mu\, p_0(\Lambda^2 - \Lambda_2)$, where
  $\Lambda_2 = \kappa^2\sum_{j=0}^{n-1}[(1-\mu)\kappa + j]^{-2}$.
  For $n \ge 2$, the Cauchy--Schwarz inequality gives
  $\Lambda^2 > \Lambda_2$, so $\Phi'(\mu) > 0$ on $(0,1)$.
  Since $\Phi$ starts at zero and is strictly increasing, $\Phi(\mu) > 0$ for
  all $\mu \in (0,1)$. The full proof appears in~\cref{app:math-properties}.
\end{proof}

\paragraph{Interpretation for the poverty reversal.}
\cref{thm:monotonicity} establishes that the intensity function $h$ is strictly
increasing in $\mu$: a center with a higher underlying IT intensity $\mu$ will
always have a higher expected IT share among participants, regardless of how much
overdispersion is present. Consequently, the poverty reversal---whereby higher poverty reduces participation ($\alpha_{\mathrm{pov}} < 0$) yet
increases conditional intensity ($\beta_{\mathrm{pov}} > 0$)---is purely a
\emph{coefficient-level} phenomenon, not an artifact of the nonlinear response
function. Researchers can examine the posterior distributions of
$\alpha_{\mathrm{pov}}$ and $\beta_{\mathrm{pov}}$ directly to assess the
reversal, without needing to account for nonlinear link function effects.

\begin{proposition}[Elasticity of $h$]\label{prop:elasticity}
  Define the elasticity $\varepsilon_h = (\partial h / \partial\mu)(\mu / h)$.
  Then
  \begin{equation}\label{eq:elasticity}
    \varepsilon_h(\mu) = 1 - \omega(\mu),
    \qquad
    \omega(\mu) = \frac{\mu\, p_0\, \Lambda}{1 - p_0} \in (0,1)
    \quad \text{for } n \ge 2.
  \end{equation}
  Hence $\varepsilon_h \in (0,1)$: the intensity function responds
  inelastically to changes in $\mu$.
\end{proposition}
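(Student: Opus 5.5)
The plan is to compute the elasticity directly from the derivative formula used in the proof of \cref{thm:monotonicity} and then bound $\omega$ using the function $\Phi$ from that proof.

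\textbf{Computing the elasticity.} First record the derivative of $p_0$ in \eqref{eq:p0}. Taking logarithms,
\[
\log p_0 = \sum_{j=0}^{n-1}\bigl[\log((1-\mu)\kappa+j) - \log(\kappa+j)\bigr],
\]
so $\partial p_0/\partial\mu = -p_0\Lambda$, with $\Lambda$ as in the proof of \cref{thm:monotonicity}. Since $h = \mu/(1-p_0)$, we get
\[
\frac{\partial h}{\partial\mu} = \frac{1-p_0-\mu p_0\Lambda}{(1-p_0)^2} = \frac{\Phi(\mu)}{(1-p_0)^2}.
\]
Multiplying by $\mu/h = 1-p_0$ gives
\[
\varepsilon_h = \frac{\Phi(\mu)}{1-p_0} = 1 - \frac{\mu p_0\Lambda}{1-p_0} = 1-\omega(\mu),
\]
which is exactly \eqref{eq:elasticity}.

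\textbf{Bounding $\omega$.} Positivity $\omega>0$ is immediate. We have $\mu\in(0,1)$ and $\Lambda>0$. Also $p_0\in(0,1)$, because every factor $((1-\mu)\kappa+j)/(\kappa+j)$ lies in $(0,1)$ when $\mu\in(0,1)$. For the upper bound, $\omega<1$ is equivalent to $1-p_0-\mu p_0\Lambda>0$, i.e.\ $\Phi(\mu)>0$. That is precisely what \cref{thm:monotonicity} establishes for $n\ge 2$ (via $\Phi(0)=0$ and $\Phi'>0$), so I will simply invoke it. Together, $\omega\in(0,1)$ gives $\varepsilon_h = 1-\omega \in (0,1)$.

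\textbf{Main obstacle.} The only delicate point is that $\omega<1$ is not a direct consequence of the definitions; it relies on the Cauchy--Schwarz step inside \cref{thm:monotonicity}. I will cite that result rather than redo it. If a self-contained argument were needed, I would reproduce $\Phi'(\mu) = \mu p_0(\Lambda^2-\Lambda_2)$ and the strict inequality $\Lambda^2>\Lambda_2$ for $n\ge2$. I would also note that for $n=1$ one has $p_0 = 1-\mu$ and $\Lambda = 1/(1-\mu)$, so $\omega\equiv 1$ and $\varepsilon_h = 0$. This is consistent with $h\equiv1$ and explains why the statement is restricted to $n\ge 2$.
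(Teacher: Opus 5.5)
Your proof is correct and is essentially the paper's argument. The paper reaches $\varepsilon_h = 1-\omega$ by multiplying the log-derivative $\partial(\log h)/\partial\mu = 1/\mu - p_0\Lambda/(1-p_0)$ by $\mu$, rather than by multiplying $\Phi/(1-p_0)^2$ by $\mu/h = 1-p_0$. After that it bounds $\omega$ exactly as you do: $\omega<1$ comes from $\Phi(\mu)>0$ (\cref{thm:monotonicity}), and $\omega>0$ holds because every factor is positive.
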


The inelastic response arises because increasing $\mu$ simultaneously increases
the probability that a center would have had a positive draw even without the
hurdle (raising $1 - p_0$, the denominator of $h$), partially offsetting the
numerator effect. In practical terms, a one-percent increase in a center's
underlying IT propensity $\mu$ translates into \emph{less} than a one-percent
increase in its expected IT share among participants. At our estimated parameters
($\mu \approx 0.30$, $\kappa \approx 7$, $n \approx 50$), $\omega \approx 0.07$,
so the elasticity is approximately 0.93---relatively modest damping in this
parameter region.


\subsection{State-varying coefficients (SVC) with cross-margin covariance}\label{sec:svc}

The reversal is not a single national pattern. Preliminary estimates
show that fewer than half of states exhibit the classic reversal---a negative
extensive-margin poverty coefficient paired with a positive intensive-margin
one (\cref{sec:prelim}). This geographic variation raises a fundamental
question: \emph{Does the poverty--enrollment relationship vary systematically
across states, and if so, are the extensive and intensive margins linked?}

\paragraph{Linear predictors.}
Following the spatially varying coefficient framework of~\citet{GelfandEtAl2003},
the extensive and intensive margins are linked to provider characteristics
through logit linear predictors:
\begin{align}
  \logit(q_i)
    &= \bx_i\t\balpha + \bx_i^{(r)\t}\bdelta_{1,s[i]},
    \label{eq:eta-ext} \\[4pt]
  \logit(\mu_i)
    &= \bx_i\t\bbeta + \bx_i^{(r)\t}\bdelta_{2,s[i]},
    \label{eq:eta-int}
\end{align}
where $\balpha, \bbeta \in \R^P$ are population-average fixed effects---the
``national story''---and $\bdelta_{k,s} \in \R^q$
($k = 1$ for extensive, $k = 2$ for intensive) are zero-mean state-specific
deviations that capture how each state's poverty--enrollment relationship
departs from the national average. The subvector $\bx_i^{(r)}$ consists of
the first $q$ components of $\bx_i$ whose coefficients are permitted to vary
across states.
The \emph{total state-specific coefficient} for covariate $j$ on margin $k$ is
\begin{equation}\label{eq:total-coef}
  \tilde{\alpha}_{j,s} = \alpha_j + \delta_{1,s,j},
  \qquad
  \tilde{\beta}_{j,s} = \beta_j + \delta_{2,s,j}.
\end{equation}
In the childcare application, a state with $\tilde{\alpha}_{\mathrm{pov},s} < 0$
and $\tilde{\beta}_{\mathrm{pov},s} > 0$ exhibits the reversal: higher
community poverty is associated with lower participation but higher conditional
intensity.

\paragraph{Cross-margin covariance: the key novelty.}
Standard hurdle models estimate the two margins independently. We depart from
this convention by stacking the state deviations into a single $2q$-dimensional
vector:
\begin{equation}\label{eq:delta-stack}
  \bdelta_s
  = \begin{pmatrix} \bdelta_{1,s} \\ \bdelta_{2,s} \end{pmatrix}
  \sim \Norm_{2q}\!\parens{\mathbf{0},\;\bSigma_\delta},
  \qquad s = 1,\ldots,S,
\end{equation}
with the $2q \times 2q$ positive-definite covariance
\begin{equation}\label{eq:sigma-delta}
  \bSigma_\delta
  = \begin{pmatrix}
      \bSigma_{11} & \bSigma_{12} \\
      \bSigma_{21} & \bSigma_{22}
    \end{pmatrix},
\end{equation}
where $\bSigma_{11}$ governs between-state variation in extensive-margin
coefficients, $\bSigma_{22}$ governs the intensive margin, and the
off-diagonal block $\bSigma_{12} = \bSigma_{21}\t$ captures
\emph{cross-margin dependence} (in the policy-moderated model M3b, $\bSigma_{12}$ refers to the \emph{residual} cross-margin covariance after removing policy effects; see~\cref{smb:rem-sigma12-residual}). This cross-margin covariance is the central
structural novelty of the model: it allows states with stronger participation
barriers (larger $|\delta_{1,\mathrm{pov},s}|$) to simultaneously exhibit
stronger or weaker intensity effects ($\delta_{2,\mathrm{pov},s}$), with the
direction and magnitude estimated from data. The cross-margin correlation for
covariate $j$ is
\begin{equation}\label{eq:cross-corr}
  \varrho_j^{\mathrm{cross}} = \frac{[\bSigma_{12}]_{jj}}
  {\sqrt{[\bSigma_{11}]_{jj}\,[\bSigma_{22}]_{jj}}}.
\end{equation}

\begin{proposition}[Non-identification of $\bSigma_{12}$ from the
  conditional likelihood]
  \label{prop:sigma12-nonid}
  Conditional on the state-level deviations $\bdelta_s$, the Fisher
  information matrix for the HBB model is block-diagonal between
  the extensive-margin parameters $(\balpha, \bdelta_{1,1},\ldots,\bdelta_{1,S})$
  and the intensive-margin parameters
  $(\bbeta, \bdelta_{2,1},\ldots,\bdelta_{2,S}, \kappa)$. Consequently,
  $\bSigma_{12}$ receives zero information from the conditional
  likelihood and is identified through the hierarchical prior on
  $\bdelta_s$ acting on the $S = 51$ estimated state-level deviations.
\end{proposition}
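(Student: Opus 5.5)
The plan is to work directly from the additive decomposition \eqref{eq:loglik-decomp}, now with $\logit(q_i)$ and $\logit(\mu_i)$ given by \eqref{eq:eta-ext}--\eqref{eq:eta-int}. Write $\btheta_1 = (\balpha, \bdelta_{1,1},\ldots,\bdelta_{1,S})$ and $\btheta_2 = (\bbeta, \bdelta_{2,1},\ldots,\bdelta_{2,S},\kappa)$. First I will check that $q_i$ depends only on $\btheta_1$ through \eqref{eq:eta-ext}, and that $\mu_i$ (hence $p_{0,i}$ and $f_{\BetaBin}$) depends only on $\btheta_2$ through \eqref{eq:eta-int}. Then, for each provider, the log-likelihood contribution is $\ell_i = \ell_{\mathrm{ext},i}(\btheta_1) + z_i\,\ell_{\mathrm{int},i}(\btheta_2)$, where $z_i$ is an observed function of the data and not a parameter. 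From this, every mixed second derivative satisfies $\partial^2 \ell/\partial\btheta_1\partial\btheta_2\t = 0$ identically, for every data realization and not merely in expectation. The observed information, and hence the Fisher information $-\E[\nabla^2\ell]$ as well, is therefore block-diagonal.

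To be careful about the expectation, I will note that the off-diagonal block of the Fisher information can be written equivalently as $\E[\nabla_{\btheta_1}\ell\,\nabla_{\btheta_2}\ell\t]$. Here I will use the factorization of the hurdle density \eqref{eq:hbb}: conditional on $\bdelta$, the likelihood factors as $\Bern(z_i;q_i)$ times $[f_{\ZTBB}]^{z_i}$. The intensive score has conditional mean zero given $z_i=1$, because $f_{\ZTBB}$ is a proper PMF on $\{1,\ldots,n_i\}$. Iterating expectations over $z_i$ then kills the cross term. This confirms the block structure without relying on the Hessian argument alone.

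The second step concerns $\bSigma_{12}$. I will observe that $\bSigma_\delta$ does not appear in $f(\by\mid\btheta_1,\btheta_2)$ at all. The conditional density therefore has zero score with respect to every entry of $\bSigma_\delta$, in particular with respect to $\bSigma_{12}$, so the conditional likelihood carries zero information about it. I will also remark that block-diagonality adds something beyond this triviality: even the likelihood-based information on the deviations themselves gives no cross-coupling between $\bdelta_{1,s}$ and $\bdelta_{2,s}$. Any posterior dependence between the two margins' deviations must therefore originate in the prior \eqref{eq:delta-stack}.

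The final step is to show that the information enters through the hierarchical layer. Integrating over $\bdelta$, the part of the joint density that involves $\bSigma_\delta$ is $\prod_{s=1}^S \Norm_{2q}(\bdelta_s;\mathbf 0,\bSigma_\delta)$. Its derivative in $\bSigma_{12}$ depends on $\sum_s \bdelta_{1,s}\bdelta_{2,s}\t$, a statistic built from $S=51$ state-level vectors. I expect the main obstacle to be phrasing "identified through the hierarchical prior" rigorously. My plan is to state it as follows: the marginal likelihood $\int f(\by\mid\bdelta)\,\pi(\bdelta\mid\bSigma_\delta)\,d\bdelta$ depends on $\bSigma_{12}$ only through the prior factor, so its curvature in $\bSigma_{12}$ is governed by the $S$ state deviations rather than by the $N$ providers. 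I will leave a formal identifiability theorem for the marginal model to the supplementary identification conditions, and not attempt to prove it here.
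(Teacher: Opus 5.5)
Your proposal is correct and follows the paper's own proof. The split $\ell=\ell_{\mathrm{ext}}+\ell_{\mathrm{int}}$, with $z_i$ an observed indicator rather than a parameter, makes every mixed second derivative vanish identically, so the observed and expected information are block-diagonal and $\bSigma_{12}$ never enters the conditional likelihood. Your score-covariance check and your hierarchical-layer discussion are sound additions that the paper's brief proof omits. One small clarification: the $\bSigma_{12}$-score of the Gaussian prior depends on the full scatter matrix $\sum_s\bdelta_s\bdelta_s^\top$ through $\bSigma_\delta^{-1}$, and reduces to the cross block $\sum_s\bdelta_{1,s}\bdelta_{2,s}^\top$ only at $\bSigma_{12}=\mathbf{0}$, though it is still a statistic of the $S$ state deviations, as your argument needs.
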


\begin{proof}
  The log-likelihood decomposes as $\ell = \ell_{\mathrm{ext}} + \ell_{\mathrm{int}}$
  (\cref{eq:loglik-decomp}). Since $\ell_{\mathrm{ext}}$ depends only on
  $(\balpha, \bdelta_{1,\cdot})$ and $\ell_{\mathrm{int}}$ depends only on
  $(\bbeta, \bdelta_{2,\cdot}, \kappa)$, the cross-derivatives
  $\partial^2 \ell / \partial\phi_1 \partial\phi_2\t$ vanish for any
  extensive-margin parameter $\phi_1$ and intensive-margin parameter $\phi_2$.
  Hence the observed and expected information matrices are block-diagonal
  across margins, and the off-diagonal block $\bSigma_{12}$ appears nowhere
  in the likelihood.
\end{proof}

\paragraph{$\bSigma_{12}$ as a hierarchical estimand.}
\cref{prop:sigma12-nonid} clarifies the information structure for
$\bSigma_{12}$: its posterior is shaped entirely by the $S = 51$ pairs
$(\bdelta_{1,s}, \bdelta_{2,s})$ drawn from the common
$\Norm_{2q}(\mathbf{0}, \bSigma_\delta)$ distribution, giving it an
effective sample size of $S$ rather than $N$.  While a frequentist could
in principle estimate $\bSigma_{12}$ via REML on the marginal
likelihood, the high-dimensional covariance structure ($2q = 10$
free parameters per margin) relative to the number of groups ($S = 51$)
makes unrestricted frequentist estimation fragile.  The $\LKJ(\eta)$
prior provides finite-sample regularization, concentrating the posterior
at rate $O(S^{-1/2})$. This contrasts with the within-margin blocks
$\bSigma_{11}$ and $\bSigma_{22}$, which also receive direct likelihood
information through the state-specific coefficients and concentrate at
rate $O(N^{-1/2})$.

\begin{remark}[Conditional versus marginal likelihood]\label{rem:cond-vs-marg}
  The block-diagonality in \cref{prop:sigma12-nonid} is a statement
  about the \emph{conditional} likelihood $p(\mathbf{y} \mid \bdelta,
  \balpha, \bbeta, \kappa)$.  The marginal likelihood obtained by
  integrating over the random effects,
  $p(\mathbf{y} \mid \bSigma_\delta, \balpha, \bbeta, \kappa) = \int
  p(\mathbf{y} \mid \bdelta) \, p(\bdelta \mid \bSigma_\delta) \,
  \mathrm{d}\bdelta$, does \emph{not} factorize when $\bSigma_{12}
  \neq \mathbf{0}$, so that $\bSigma_{12}$ is in principle identified
  through the marginal likelihood.  However, the conditional
  formulation is natural for MCMC computation, where $\bdelta_s$ is
  sampled as a latent variable and $\bSigma_\delta$ is updated
  conditional on the realized deviations.  In a frequentist framework,
  REML estimation of $\bSigma_{12}$ via the marginal likelihood is
  possible in principle but numerically fragile when $S = 51$ groups
  contribute to a $2q = 10$-dimensional covariance matrix: boundary
  solutions and singular Hessians are common in this regime.  The
  Bayesian approach with $\LKJ$ regularization provides stable
  estimation at the cost of the slower $O(S^{-1/2})$ concentration
  rate documented in \cref{prop:sigma12-id}.
\end{remark}

\begin{proposition}[Identifiability of $\bSigma_{12}$ via the prior]
  \label{prop:sigma12-id}
  Suppose (i) $S > 2q$, (ii) each state $s$ has at least $q$ providers with
  linearly independent covariates, and (iii) the prior on $\bSigma_\delta$ has
  full support on $\mathbb{S}_{++}^{2q}$. Then the posterior of $\bSigma_\delta$
  (including $\bSigma_{12}$) is well-concentrated, with the posterior standard
  deviation of each element of $\bSigma_{12}$ decreasing at rate $O(S^{-1/2})$.
  For our data, $S = 51 \gg 2q = 10$.
\end{proposition}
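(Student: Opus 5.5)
The approach is to reduce the posterior of $\bSigma_\delta$ to an inverse-Wishart-type problem driven by the $S$ latent vectors $\bdelta_s$, and then to control the extra uncertainty that comes from those vectors being estimated rather than observed. By \cref{prop:sigma12-nonid}, $\bSigma_\delta$ enters the joint posterior only through $\prod_{s=1}^S \Norm_{2q}(\bdelta_s\mid \mathbf{0},\bSigma_\delta)$ times its prior. Hence
\[
p(\bSigma_\delta\mid \bdelta_{1:S},\mathbf{y}) = p(\bSigma_\delta\mid\bdelta_{1:S})
\propto |\bSigma_\delta|^{-S/2}\exp\bigl\{-\tfrac12\operatorname{tr}(\bSigma_\delta^{-1}\mathbf{T})\bigr\}\,\pi(\bSigma_\delta),
\qquad \mathbf{T}=\sum_{s=1}^S\bdelta_s\bdelta_s\t .
\]
The marginal posterior is then the mixture of this conditional over $p(\bdelta_{1:S}\mid\mathbf{y})$.

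\textbf{Step 1: the oracle case.} First I treat the case where the $\bdelta_s$ are known. Condition (i), $S>2q$, makes $\mathbf{T}$ almost surely positive definite, because $S$ i.i.d.\ continuous draws in $\R^{2q}$ span the space. The likelihood factor is then a proper inverse-Wishart kernel with $S$ degrees of freedom. Multiplying by a prior with full support and a continuous, positive density at the truth $\bSigma_\delta^0$ (condition (iii)) gives a posterior that concentrates at $\bSigma_\delta^0$. This follows from a standard Bernstein--von Mises argument for the $2q(2q+1)/2$-parameter Gaussian covariance model, whose Fisher information per observation is $\tfrac12(\bSigma^{-1}\otimes\bSigma^{-1})$ restricted to the symmetric matrices. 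Therefore $\Var([\bSigma_{12}]_{jk}\mid\bdelta) = S^{-1}\bigl([\bSigma_{11}]_{jj}[\bSigma_{22}]_{kk}+[\bSigma_{12}]_{jk}^2\bigr)(1+o(1))$, which gives a posterior standard deviation of order $S^{-1/2}$. For the $\LKJ$-times-half-Cauchy prior used in the paper, I would also check that the prior density is bounded and positive near $\bSigma^0_\delta$, so that it contributes only $O(1)$ to the log posterior.

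\textbf{Step 2: propagating the estimation error in $\bdelta_s$.} Condition (ii) ensures that the state-specific design matrices have full column rank $q$, so each block $\bdelta_{k,s}$ is identified from within-state data in both margins. Its conditional posterior therefore has covariance of order $N_s^{-1}$, estimated from the $N_s$ providers in state $s$. By the law of total variance,
\[
\Var(\Sigma_{12,jk}\mid\mathbf{y}) = \E\bigl[\Var(\Sigma_{12,jk}\mid\bdelta,\mathbf{y})\bigr] + \Var\bigl(\E[\Sigma_{12,jk}\mid\bdelta]\mid\mathbf{y}\bigr).
\]
The first term is $O(S^{-1})$ by Step 1. The second term is the variance of roughly $S^{-1}\sum_s\delta_{1,s,j}\delta_{2,s,k}$. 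Because the margins are block-diagonal by \cref{prop:sigma12-nonid}, the posterior errors in $\bdelta_{1,s}$ and $\bdelta_{2,s}$ are asymptotically independent, and the errors are independent across states. This term is therefore $O\bigl(S^{-2}\sum_s N_s^{-1}\bigr)=O(S^{-1}\min_s N_s^{-1})$, which is dominated by $O(S^{-1})$. Combining the two terms yields the claimed $O(S^{-1/2})$ rate.

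\textbf{Main obstacle.} The hard step is Step 2 done rigorously. It needs a uniform approximation of the conditional posteriors of $\bdelta_s$ with nonvanishing logit curvature in each state, which requires boundedness assumptions on covariates and parameters that the statement leaves implicit. It also needs the Bernstein--von Mises approximation in Step 1 to hold uniformly over a neighbourhood of the true deviations. I would first attempt this by working on a high-probability set where every state's Fisher information is bounded below, which follows from (ii) together with bounded covariates, and then apply a Laplace approximation there. If full rigor is out of reach, I would present the result for the oracle case plus the variance decomposition above. Finally, $S=51>10$ verifies (i) for the data.
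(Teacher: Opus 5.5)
Your proposal is correct at the heuristic level the paper itself works at, and it follows essentially the paper's route. The paper identifies the state deviations within states (your (ii), the paper's (C2)), treats the $S$ deviations as i.i.d.\ $\Norm_{2q}(\mathbf{0},\bSigma_\delta)$ draws whose scatter matrix is full rank once $S>2q$, and reads off the $O(S^{-1/2})$ rate with the prior supplying regularization (Step~5 of the identification proof together with \cref{smb:prop-sigma12-wellconc}); your law-of-total-variance step simply makes explicit the paper's assumption that the state effects are ``well-estimated.'' Three small corrections: the cross-state independence of the $\bdelta_s$ errors holds only conditionally on the hyperparameters, so your second-term bound is valid only in that data-dominated regime, that bound should read $O\bigl(S^{-1}(\min_s N_s)^{-1}\bigr)$ rather than $O(S^{-1}\min_s N_s^{-1})$, and the paper's scale prior is half-normal, not half-Cauchy.
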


\begin{remark}[Boundary condition]\label{rem:boundary}
  When $S < 2q$ the cross-margin covariance is poorly identified even with a
  proper prior, because $S$ draws from a $2q$-dimensional distribution yield a
  singular sample covariance matrix. In this regime, the scalar cross-margin
  parameter $a$ of~\citet{GhosalEtAl2020}, which nests as
  $\bSigma_{12} = a\,\bSigma_{22}$, provides a parsimonious but still useful
  summary of cross-margin dependence.
\end{remark}

\paragraph{Computational remark.}
In Stan~\citep{CarpenterEtAl2017}, the state deviations are sampled via the non-centered
parameterization (NCP): draw $\bz_s \iid \Norm_{2q}(\mathbf{0},\mathbf{I})$
and set $\bdelta_s = \diag(\boldsymbol{\tau})\,\mathbf{L}_\varepsilon\,\bz_s$,
where $\mathbf{L}_\varepsilon$ is the Cholesky factor of the
correlation matrix $\mathbf{R}_\varepsilon$. This eliminates the funnel
geometry~\citep{BetancourtGirolami2015} that arises under centered parameterization when small $\tau_j$ forces
$\delta_{s,j}$ near zero, causing divergent transitions in Hamiltonian Monte
Carlo. With $S = 51$ states and $N_s$ ranging from 17 to 568, NCP is the
appropriate default.


\subsection{Cross-level policy moderators and marginal effect decomposition}
\label{sec:policy}

Why does the reversal appear in some states but not others?  State-level
subsidy policies provide a natural explanation: they directly affect the cost
structure that centers face when deciding whether and how much to serve infants.
We formalize this hypothesis through cross-level policy moderators.

\paragraph{The moderation framework.}
\begin{definition}[Policy moderator specification]\label{def:gamma}
  For each margin $k \in \{1,2\}$, the state-specific deviation decomposes as
  \begin{equation}\label{eq:gamma-decomp}
    \bdelta_{k,s} = \bGamma_k\,\bv_s + \bepsilon_{k,s},
  \end{equation}
  where $\bGamma_k \in \R^{q \times Q}$ is the cross-level interaction matrix
  and $\bepsilon_{k,s}$ is a residual state effect.
  Stacking both margins, the residuals follow a joint distribution that
  preserves cross-margin dependence:
  \begin{equation}\label{eq:eps-stacked}
    \bepsilon_s
    \equiv \begin{pmatrix}\bepsilon_{1,s}\\\bepsilon_{2,s}\end{pmatrix}
    \sim \Norm_{2q}\!\left(\mathbf{0},\;
      \bSigma_\varepsilon
      = \begin{pmatrix}
        \bSigma_{\varepsilon,1} & \bSigma_{12}\\
        \bSigma_{21} & \bSigma_{\varepsilon,2}
      \end{pmatrix}\right),
  \end{equation}
  so that the full model estimates the residual cross-margin covariance
  $\bSigma_{12} = \Cov(\bepsilon_{1,s},\, \bepsilon_{2,s})$
  after removing the policy-explained component (see
  \cref{prop:sigma12-nonid} and Supplementary Material~B).
\end{definition}

The state-level policy vector is
$\bv_s \in \R^Q$ with $Q = 4$, consisting of an intercept and $Q - 1 = 3$
Child Care and Development Fund (CCDF) policy instruments that directly affect
the cost structure centers face for infant care.  The specific policy variables
and their substantive roles are detailed in \cref{sec:application}.

\paragraph{Nested model hierarchy.}
The HBB framework nests a sequence of models with increasing complexity,
summarized in~\cref{tab:model-hierarchy}.

\begin{table}[ht]
\centering
\caption{Nested model hierarchy. Each model adds structure relative to its
  predecessor.}\label{tab:model-hierarchy}
\smallskip
\small
\begin{tabular}{@{}llp{0.38\textwidth}c@{}}
\toprule
Model & Structure & Question addressed & Approx.\ params \\
\midrule
M0 & Fixed effects only & National-level poverty reversal? & $11$ \\
M1 & + Random intercepts$^{\dagger}$ & Does enrollment vary by state? & $116$ \\
M2 & + State-varying coefs.\ & Does the reversal vary by state? & $\sim 576$ \\
M3a & + Cross-margin cov.\ & Are the two margins linked? & $\sim 576$ \\
M3b & + Policy moderators & Can policies explain state variation? & $\sim 626$ \\
\bottomrule
\end{tabular}
\par\vspace{2pt}
{\footnotesize $^{\dagger}$M1 is obtained as a special case of M2 by
  restricting the SVC to intercepts only ($q = 1$).}
\end{table}

Each transition addresses a question raised by its predecessor: M0 to M1 asks
whether the baseline enrollment rate varies by state; M1 to M2, whether the
reversal pattern varies by state; M2 to M3a, whether cross-margin
covariance improves fit; and M3a to M3b, whether observable policies explain
the state heterogeneity.

\paragraph{Poverty coefficient expansion.}
\begin{proposition}[Poverty coefficient expansion]\label{prop:poverty-expansion}
  Under M3b, the total state-specific poverty coefficient on the extensive
  margin is
  \begin{equation}\label{eq:poverty-expansion}
    \tilde{\alpha}_{\mathrm{pov},s}
    = \underbrace{\alpha_{\mathrm{pov}}}_{\substack{\text{national} \\
    \text{average}}}
    + \underbrace{\bgamma_{1,\mathrm{pov}}\t\bv_s}_{\substack{\text{policy-explained}
    \\ \text{deviation}}}
    + \underbrace{\varepsilon_{1,\mathrm{pov},s}}_{\substack{\text{residual}
    \\ \text{state effect}}},
  \end{equation}
  where $\bgamma_{1,\mathrm{pov}} = (\gamma_{1,2,1},\ldots,\gamma_{1,2,Q})\t$
  is the second row of $\bGamma_1$. Analogously, the intensive-margin
  counterpart is
  $\tilde{\beta}_{\mathrm{pov},s} = \beta_{\mathrm{pov}} +
  \bgamma_{2,\mathrm{pov}}\t\bv_s + \varepsilon_{2,\mathrm{pov},s}$.
\end{proposition}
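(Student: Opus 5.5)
The expansion follows directly from the definitions: substitute the moderator decomposition of \cref{def:gamma} into the total-coefficient formula \eqref{eq:total-coef} and read off the poverty coordinate. Concretely, I will fix the ordering of $\bx_i^{(r)}$ as (intercept, poverty, urban, Black, Hispanic), which under M3b has $q = P$. Then poverty is the second component of the state-varying subvector, so $\tilde{\alpha}_{\mathrm{pov},s} = \alpha_{\mathrm{pov}} + \delta_{1,s,2}$, where $\delta_{1,s,2}$ is the second entry of $\bdelta_{1,s}$.

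Next I will take the second coordinate of \eqref{eq:gamma-decomp} with $k=1$:
\[
  \delta_{1,s,2} = \bigl[\bGamma_1 \bv_s\bigr]_2 + \varepsilon_{1,s,2}
  = \sum_{m=1}^{Q} \gamma_{1,2,m}\, v_{s,m} + \varepsilon_{1,\mathrm{pov},s}.
\]
Here I use that the second entry of a matrix--vector product is the inner product of the second row of the matrix with the vector. Writing $\bgamma_{1,\mathrm{pov}} = (\gamma_{1,2,1},\ldots,\gamma_{1,2,Q})\t$ for that row turns this into $\bgamma_{1,\mathrm{pov}}\t\bv_s$. Substituting back gives \eqref{eq:poverty-expansion}. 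The intensive-margin statement is obtained by running the same two steps with $k=2$, using \eqref{eq:eta-int} and $\tilde{\beta}_{j,s} = \beta_j + \delta_{2,s,j}$.

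There is no real analytic obstacle. The only point requiring care is bookkeeping: I must check that the index convention $\gamma_{1,2,m}$ (margin, covariate row, policy column) matches the $q\times Q$ shape of $\bGamma_k$. I must also check that poverty actually belongs to the state-varying subvector, which holds in M3b but would fail if $q=1$ as in M1.

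I would also add a one-line remark on the interpretation of the three terms. By \eqref{eq:eps-stacked}, $\E[\varepsilon_{1,\mathrm{pov},s}]=0$, so the residual term has mean zero. The first two terms therefore give the conditional mean $\E[\tilde{\alpha}_{\mathrm{pov},s}\mid \bv_s]$, which is what justifies calling $\bgamma_{1,\mathrm{pov}}\t\bv_s$ the ``policy-explained deviation.''
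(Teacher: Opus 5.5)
Your proposal is correct and matches the paper: the paper states this proposition without a separate proof, because it follows at once from substituting $\bdelta_{k,s} = \bGamma_k\bv_s + \bepsilon_{k,s}$ into $\tilde{\alpha}_{j,s} = \alpha_j + \delta_{1,s,j}$ (and its intensive analogue) and reading off the poverty row. Your bookkeeping is right on both points you flagged—the row and column indexing of $\bGamma_k$ and the fact that poverty is state-varying because $q = P$ under M3b—and the mean-zero remark on the residual term is a harmless addition.
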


\cref{prop:poverty-expansion} decomposes the state-specific poverty effect into
three interpretable components: a national average ($\alpha_{\mathrm{pov}}$), a
policy-explained deviation ($\bgamma_{1,\mathrm{pov}}\t\bv_s$), and a residual
idiosyncrasy ($\varepsilon_{1,\mathrm{pov},s}$).  The variance of the
state-specific poverty coefficient decomposes correspondingly as
$\Var(\tilde{\alpha}_{\mathrm{pov},s}) =
\bgamma_{1,\mathrm{pov}}\t\,\Var(\bv_s)\,\bgamma_{1,\mathrm{pov}}
+ (\bSigma_{\varepsilon,1})_{\mathrm{pov,pov}}$,
partitioning between-state variation into policy-explained and residual
components.

\paragraph{Marginal effect decomposition.}
To translate model parameters into policy-relevant quantities, we decompose the
total effect of any covariate $x_k$ on expected enrollment into access and
intensity channels.

\begin{proposition}[Log-access and log-intensity decomposition]
  \label{prop:lae-lie}
  For provider $i$ in state $s$, the marginal effect of covariate $x_k$ on
  log expected enrollment decomposes as
  \begin{equation}\label{eq:lae-lie}
    \frac{\partial \log \E[Y_i]}{\partial x_{i,k}}
    = \underbrace{(1 - q_i)\,\tilde{\alpha}_{k,s}}_{\LAE_k}
    \;+\;
    \underbrace{(1 - \mu_i)\,\varepsilon_{h,i}\,\tilde{\beta}_{k,s}}_{\LIE_k},
  \end{equation}
  where $\LAE_k$ is the log access effect and $\LIE_k$ is the log
  intensity effect, with $\varepsilon_{h,i} = 1 - \omega_i$ from
  \cref{prop:elasticity}.
\end{proposition}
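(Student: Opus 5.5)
We start from the unconditional mean \eqref{eq:uncond-mean}, $\E[Y_i] = q_i\, n_i\, h(\mu_i,n_i,\kappa)$, and take logarithms:
\[
\log \E[Y_i] = \log q_i + \log n_i + \log h(\mu_i, n_i, \kappa).
\]
Since $n_i$ is treated as fixed (\cref{tab:notation}) and does not depend on the covariates $\bx_i$, only the first and third terms contribute to the derivative with respect to $x_{i,k}$. By the linear predictors \eqref{eq:eta-ext}--\eqref{eq:eta-int}, $q_i$ depends on $x_{i,k}$ only through $\logit(q_i)$, and $\mu_i$ only through $\logit(\mu_i)$. The two terms can therefore be differentiated separately by the chain rule. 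This mirrors the additive separation of margins used in \cref{prop:sigma12-nonid}.

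For the access term, $\partial \log q_i / \partial \logit(q_i) = 1 - q_i$, the standard logistic identity. Next we compute $\partial \logit(q_i)/\partial x_{i,k}$. If $x_k$ belongs to the state-varying subvector $\bx_i^{(r)}$, this derivative is $\alpha_k + \delta_{1,s,k} = \tilde\alpha_{k,s}$ by \eqref{eq:total-coef}. Otherwise the derivative is $\alpha_k$, which we take as $\tilde\alpha_{k,s}$ with $\delta_{1,s,k}\equiv 0$; we will state this convention explicitly. The product gives $\LAE_k = (1-q_i)\tilde\alpha_{k,s}$.

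For the intensity term, write
\[
\frac{\partial \log h}{\partial x_{i,k}} = \frac{\partial \log h}{\partial \log \mu}\cdot \frac{\partial \log \mu_i}{\partial \logit(\mu_i)}\cdot \frac{\partial \logit(\mu_i)}{\partial x_{i,k}}.
\]
The first factor is the elasticity $\varepsilon_{h,i} = (\partial h/\partial\mu)(\mu/h)$, which by \cref{prop:elasticity} equals $1-\omega_i$, evaluated at $(\mu_i,n_i,\kappa)$. The second factor is $1-\mu_i$, again by the logistic identity. The third is $\tilde\beta_{k,s}$, by the same argument as on the extensive margin. Their product is $\LIE_k$. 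Adding the two pieces gives \eqref{eq:lae-lie}.

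The main point requiring care is the middle factor. Here we must ensure that $h$ depends on $x_{i,k}$ only through $\mu_i$: $\kappa$ and $n_i$ are not covariate-dependent, so $p_0$ changes only via $\mu$. Hence the full derivative of $\log h$, including the change in the truncation denominator $1-p_0$, is captured by the elasticity. We need the explicit form of $\omega$ from \cref{prop:elasticity}, with $\Lambda$ as in the proof of \cref{thm:monotonicity}, only to identify $\varepsilon_{h,i}$; the decomposition itself does not require evaluating it. For $n_i=1$, $h\equiv 1$ by \cref{thm:monotonicity}, so $\varepsilon_{h,i}=0$ and the intensive term vanishes. This is consistent with the formula, which we note as a degenerate case.
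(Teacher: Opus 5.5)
Your proposal is correct and follows essentially the same route as the paper: take the log of $q_i\,n_i\,h_i$, drop the fixed $\log n_i$, and apply the chain rule through the two logit links. The intensive factor is then the elasticity $\varepsilon_{h,i}$ times $(1-\mu_i)$. Your added remarks on covariates outside the state-varying subvector (setting $\delta_{1,s,k}\equiv 0$) and on the degenerate $n_i=1$ case are harmless refinements that the paper leaves implicit.
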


\begin{proof}
  Since $\log\E[Y_i] = \log q_i + \log n_i + \log h_i$ and $n_i$ is fixed,
  the chain rule gives two terms. For the access term, the logit link yields
  $\partial(\log q_i)/\partial x_{i,k} = (1 - q_i)\,\tilde{\alpha}_{k,s}$.
  For the intensity term,
  $\partial(\log h_i)/\partial x_{i,k}
  = \varepsilon_{h,i}\,(1-\mu_i)\,\tilde{\beta}_{k,s}$,
  using the logistic derivative $\partial\mu/\partial\eta = \mu(1-\mu)$
  and the elasticity definition
  $\varepsilon_{h} = \mu\,\partial(\log h)/\partial\mu$.
\end{proof}

\paragraph{Concrete interpretation.}
At the sample average ($q \approx 0.64$, $\mu \approx 0.30$,
$\varepsilon_h \approx 0.93$) with population-average coefficients
$\alpha_{\mathrm{pov}} = -0.324$ and $\beta_{\mathrm{pov}} = +0.090$:
$\LAE_{\mathrm{pov}} = (1 - 0.64)(-0.324) = -0.117$ and
$\LIE_{\mathrm{pov}} = (1 - 0.30)(0.93)(0.090) = +0.058$.
The net effect is $-5.9\%$: the access barrier dominates, but the intensity
response offsets nearly half.  Three features of the decomposition merit
emphasis: it is \emph{exactly additive} on the log scale, both multipliers
$(1-q_i)$ and $(1-\mu_i)\varepsilon_{h,i}$ are positive and bounded so that
the sign of each component is determined solely by the coefficient sign, and
the reversal pattern is confirmed as a coefficient-level phenomenon consistent
with~\cref{thm:monotonicity}.


\subsection{Survey-weighted pseudo-posterior and sandwich correction}
\label{sec:survey}

The NSECE 2019 employs a stratified multistage cluster design with 30 strata,
415 primary sampling units (PSUs), and sampling weights ranging from 1 to 462
(Kish $\ESS = 1{,}803$; $\DEFF = 3.76$). Under informative sampling designs,
ignoring these weights can bias point estimates and produce
miscalibrated credible intervals
\citep{PfeffermannEtAl1998}. Following~\citet{SavitskyToth2016} and
\citet{WilliamsSavitsky2021}, we adopt a pseudo-posterior approach with
post-hoc sandwich correction
\citep[see also][for a general treatment of weighting in multilevel
models]{RabeHeskethSkrondal2006}.

\paragraph{Pseudo-log-likelihood.}
Define the weighted log-likelihood
\begin{equation}\label{eq:pseudo-loglik}
  \ell^{(w)}(\btheta)
  = \sum_{i=1}^{N} \tilde{w}_i \log f_{\HBB}(y_i \mid \btheta),
\end{equation}
where $\tilde{w}_i = w_i N / \sum_j w_j$ are the normalized survey weights.

\begin{theorem}[Pseudo-posterior propriety]\label{thm:propriety}
  Under a proper prior $\pi(\btheta)$, the pseudo-posterior
  \begin{equation}\label{eq:pseudo-posterior}
    \pi^{(w)}(\btheta \mid \by)
    \propto \exp\!\bigl(\ell^{(w)}(\btheta)\bigr)\,\pi(\btheta)
  \end{equation}
  is proper for any set of positive, finite weights $\{w_i\}_{i=1}^N$.
\end{theorem}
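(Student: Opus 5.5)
The argument is a bounding argument: show that the weighted likelihood factor is bounded above by a constant uniformly in $\btheta$, so that the unnormalized pseudo-posterior is dominated by a constant multiple of the proper prior. Since $\pi(\btheta)$ integrates to one, the normalizing constant is then finite. It is also strictly positive, because the integrand is positive wherever the prior is.

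First, bound each provider's contribution. By \cref{def:hbb}, every $f_{\HBB}(y_i\mid\btheta)$ is a probability mass function value on the finite support $\{0,1,\ldots,n_i\}$.
\begin{itemize}
\item If $y_i=0$, the value is $1-q_i\in(0,1)$.
\item If $y_i\ge 1$, the value is $q_i f_{\ZTBB}(y_i\mid n_i,\mu_i,\kappa)$, where $f_{\ZTBB}$ is a normalized PMF on $\{1,\ldots,n_i\}$. It is therefore at most $1$; note that $1-p_0>0$ because $p_0<1$ by \eqref{eq:p0} whenever $\mu\in(0,1)$ and $n_i\ge1$.
\end{itemize}
Hence $0<f_{\HBB}(y_i\mid\btheta)\le 1$ for every $\btheta$ in the parameter space, where $q_i,\mu_i\in(0,1)$ via the logit links \eqref{eq:eta-ext}--\eqref{eq:eta-int} and $\kappa>0$. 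This holds regardless of the values of $\balpha,\bbeta,\bdelta_s$ or of $\bSigma_\delta$ and $\bGamma_k$.

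Next, use the weights. Because $\tilde w_i>0$ is finite and $\log f_{\HBB}(y_i\mid\btheta)\le 0$, each term $\tilde w_i\log f_{\HBB}$ is nonpositive. Therefore $\ell^{(w)}(\btheta)\le 0$, which gives
$$\exp\bigl(\ell^{(w)}(\btheta)\bigr)=\prod_i f_{\HBB}(y_i\mid\btheta)^{\tilde w_i}\le 1.$$
Then
$$\int \exp\bigl(\ell^{(w)}\bigr)\pi\,\mathrm{d}\btheta\le\int\pi\,\mathrm{d}\btheta=1<\infty.$$
Positivity of the integral follows because the integrand is strictly positive on the support of $\pi$. Measurability is immediate, since the likelihood is continuous in $\btheta$: it is a composition of beta functions, products, and logistic maps.

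The one point needing care is that "proper prior" must cover the entire parameter vector $\btheta$. This includes the latent $\bdelta_s$ through the hierarchical $\Norm_{2q}$ density, and the hyperparameters through, for example, the $\LKJ$ prior and proper scale priors. If the prior were proper only on hyperparameters, the bound would still go through: the joint prior density $p(\bdelta\mid\bSigma_\delta)\,\pi(\bSigma_\delta,\ldots)$ is itself a proper joint density. So I would state explicitly that $\pi$ denotes this joint density.

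I expect no real obstacle. The only subtlety is that the bound $f\le1$ relies on the discreteness of the kernel. For a density it would fail, and the weights would then matter, because a large exponent on an unbounded density could break integrability. In the present setting, arbitrary positive weights are harmless.
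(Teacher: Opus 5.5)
Your proposal is correct and matches the paper's proof step for step. The paper likewise bounds each $f_{\HBB}(y_i\mid\btheta)\in(0,1]$ to get $\exp(\ell^{(w)}(\btheta))\le 1$, dominates the normalizing integral by the proper prior (which it writes as a product of proper components), and notes that the integrand is strictly positive so the pseudo-posterior is non-degenerate. Your remarks on measurability and on reading $\pi$ as the joint density that includes the latent $\bdelta_s$ are sensible clarifications the paper leaves implicit.
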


\begin{proof}[Proof sketch]
  Since $f_{\HBB}(y_i \mid \btheta) \in (0,1]$ and $\tilde{w}_i > 0$,
  each weighted factor satisfies $f_i^{\tilde{w}_i} \le 1$. Hence
  $\int \exp(\ell^{(w)})\,\pi(\btheta)\,d\btheta
  \le \int \pi(\btheta)\,d\btheta = 1 < \infty$, using
  the properness of $\pi(\btheta)$. The full proof, which verifies that the
  marginal likelihood is strictly positive, appears in Supplementary
  Material~C.
\end{proof}

\begin{proposition}[Weighted separability]\label{prop:weighted-sep}
  The weighted log-likelihood retains the two-part decomposition:
  \begin{equation}\label{eq:weighted-sep}
    \ell^{(w)}(\btheta)
    = \ell_{\mathrm{ext}}^{(w)}(\balpha)
    + \ell_{\mathrm{int}}^{(w)}(\bbeta, \kappa),
  \end{equation}
  where
  $\ell_{\mathrm{ext}}^{(w)} = \sum_i \tilde{w}_i[z_i\log q_i + (1-z_i)\log(1-q_i)]$
  and
  $\ell_{\mathrm{int}}^{(w)} = \sum_{i:\,z_i=1}
  \tilde{w}_i[\log f_{\BetaBin}(y_i \mid n_i,\mu_i,\kappa) - \log(1-p_{0,i})]$.
\end{proposition}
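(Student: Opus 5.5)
The plan is to start from the definition of the pseudo-log-likelihood in \eqref{eq:pseudo-loglik} and substitute the hurdle PMF of \cref{def:hbb} observation by observation. For each provider $i$, \eqref{eq:hbb} gives
\[
\log f_{\HBB}(y_i\mid\btheta)=(1-z_i)\log(1-q_i)+z_i\bigl[\log q_i+\log f_{\ZTBB}(y_i\mid n_i,\mu_i,\kappa)\bigr],
\]
with $z_i=\one(y_i>0)$. The two cases $y_i=0$ and $y_i\ge1$ of \eqref{eq:hbb} are mutually exclusive and exhaustive, so the indicator form reproduces each branch exactly. Using \eqref{eq:ztbb}, the last term equals $\log f_{\BetaBin}(y_i\mid n_i,\mu_i,\kappa)-\log(1-p_{0,i})$; this is well defined because $p_{0,i}<1$ for $\mu_i\in(0,1)$ by the product form \eqref{eq:p0}.

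Next I multiply by the positive scalar $\tilde w_i$ and sum over $i$. Linearity lets me regroup the terms into $\sum_i\tilde w_i[z_i\log q_i+(1-z_i)\log(1-q_i)]$ and $\sum_{i:z_i=1}\tilde w_i[\log f_{\BetaBin}-\log(1-p_{0,i})]$. The factor $z_i$ in the second group restricts that sum to the participants. This is the same algebra that yields the unweighted decomposition \eqref{eq:loglik-decomp}, with $\tilde w_i$ attached termwise. The weights are fixed design quantities that do not depend on $\btheta$, so inserting them cannot couple the two blocks.

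Finally I check the parameter dependence. By \eqref{eq:eta-ext}, $q_i$ depends only on $\balpha$ (and $\bdelta_{1,\cdot}$ in the hierarchical versions). By \eqref{eq:eta-int}, $\mu_i$ depends only on $\bbeta$ (and $\bdelta_{2,\cdot}$), and $\kappa$ appears only in the intensive term. Hence the first sum is $\ell^{(w)}_{\mathrm{ext}}(\balpha)$ and the second is $\ell^{(w)}_{\mathrm{int}}(\bbeta,\kappa)$, which gives \eqref{eq:weighted-sep}.

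There is no real obstacle here. The only point needing care is bookkeeping: the indicator representation must be exact, including that the $\log q_i$ contribution of a participant belongs to the extensive part. I would also remark that the state deviations split the same way, which supports carrying \cref{prop:sigma12-nonid} over to the pseudo-posterior.
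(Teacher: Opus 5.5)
Your proof is correct and follows the paper's route. The paper's proof just observes that multiplying each observation's log-density by the positive, $\btheta$-free scalar $\tilde w_i$ preserves the additive split already established in \eqref{eq:loglik-decomp}; you spell out the per-observation indicator form and the regrouping explicitly, and your check that $p_{0,i}<1$ (so the truncation term is finite) is a harmless, slightly more careful addition.
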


\begin{proof}
  Weighting by $\tilde{w}_i > 0$ multiplies each observation's log-density
  by a positive scalar, which cannot introduce functional dependence between
  parameters that were already in separate additive components.
\end{proof}

\paragraph{Bernstein--von Mises (BvM) condition.}
The weight ratio $w_{\max}/w_{\min} = 462$ exceeds $N^{1/2} \approx 82.5$.
The finite-sample diagnostic
$\delta \approx \log(462)/\log(6{,}785) \approx 0.695 > 0.5$ suggests that
the pseudo-posterior variance may understate the true sampling variability,
motivating the sandwich correction below.

\paragraph{Sandwich variance.}
Define the observed Hessian and cluster-robust outer product:
\begin{align}
  \mathbf{H}_{\mathrm{obs}}
    &= -\sum_{i=1}^{N}\tilde{w}_i\,
       \frac{\partial^2\log f_{\HBB}(y_i \mid \btheta)}
            {\partial\btheta\,\partial\btheta\t}
       \bigg|_{\btheta = \hat{\btheta}},
    \label{eq:H-obs} \\[6pt]
  \mathbf{J}_{\mathrm{cluster}}
    &= \sum_{c=1}^{C}\mathbf{s}_c\,\mathbf{s}_c\t,
    \qquad
    \mathbf{s}_c
    = \sum_{i \in \text{PSU}_c}\tilde{w}_i\,
      \frac{\partial\log f_{\HBB}(y_i \mid \btheta)}{\partial\btheta}
      \bigg|_{\btheta = \hat{\btheta}},
    \label{eq:J-cluster}
\end{align}
where $C = 415$ is the number of primary sampling units and $\hat{\btheta}$ is
the posterior mean. The hurdle separability implies that
$\mathbf{H}_{\mathrm{obs}}$ is block-diagonal between the extensive and
intensive margin parameters; however, $\mathbf{J}_{\mathrm{cluster}}$ is
\emph{not} block-diagonal, because a single PSU contributes scores from both
margins simultaneously.

\begin{remark}[Simplified exposition]\label{rem:J-simplified}
  \Cref{eq:J-cluster} uses the unstratified form for expositional
  clarity.  The implemented estimator is the stratified cluster-robust
  form in~\cref{smc:prop-cluster-robust}, which centers PSU scores
  within the $H = 30$ NSECE strata and applies finite-population
  correction factors $C_h/(C_h - 1)$.
\end{remark}

The design-consistent sandwich variance is
\begin{equation}\label{eq:V-sand}
  \mathbf{V}_{\mathrm{sand}}
  = \mathbf{H}_{\mathrm{obs}}^{-1}\,
    \mathbf{J}_{\mathrm{cluster}}\,
    \mathbf{H}_{\mathrm{obs}}^{-1}.
\end{equation}

\begin{theorem}[Cholesky affine transformation]\label{thm:cholesky}
  Let $\hat{\btheta}$ be the posterior mean, $\mathbf{L}_{\mathrm{sand}}$ the
  lower Cholesky factor of $\mathbf{V}_{\mathrm{sand}}$, and
  $\mathbf{L}_{\mathrm{MCMC}}$ the lower Cholesky factor of the MCMC posterior
  covariance $\bSigma_{\mathrm{MCMC}}$. The transformed draws
  \begin{equation}\label{eq:cholesky-transform}
    \btheta^{*(m)}
    = \hat{\btheta}
    + \mathbf{L}_{\mathrm{sand}}\,\mathbf{L}_{\mathrm{MCMC}}^{-1}
      \bigl(\btheta^{(m)} - \hat{\btheta}\bigr),
    \qquad m = 1,\ldots,M,
  \end{equation}
  satisfy $\E[\btheta^*] = \hat{\btheta}$ (mean-preserving) and
  $\Cov(\btheta^*) = \mathbf{V}_{\mathrm{sand}}$
  (design-consistent covariance).  The transformation matches
  the target covariance structure; Wald confidence intervals
  based on the marginal variances are invariant to the choice
  of transformation matrix.
\end{theorem}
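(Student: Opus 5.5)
The plan is to treat \cref{thm:cholesky} as a statement about an affine map applied to a random vector with known first two moments. Here $\btheta^{(m)}$ is a draw from the (empirical) pseudo-posterior, with mean $\hat{\btheta}$ and covariance $\bSigma_{\mathrm{MCMC}} = \mathbf{L}_{\mathrm{MCMC}}\mathbf{L}_{\mathrm{MCMC}}\t$. Write $\mathbf{A} = \mathbf{L}_{\mathrm{sand}}\mathbf{L}_{\mathrm{MCMC}}^{-1}$, so that $\btheta^* = \hat{\btheta} + \mathbf{A}(\btheta - \hat{\btheta})$.

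First, I check that $\mathbf{A}$ is well defined. $\mathbf{L}_{\mathrm{MCMC}}$ is invertible because $\bSigma_{\mathrm{MCMC}}$ is positive definite. I will state this as a standing assumption, since it holds whenever the number of draws exceeds $\dim\btheta$ and the draws are not degenerate. $\mathbf{L}_{\mathrm{sand}}$ exists because $\mathbf{V}_{\mathrm{sand}} = \mathbf{H}_{\mathrm{obs}}^{-1}\mathbf{J}_{\mathrm{cluster}}\mathbf{H}_{\mathrm{obs}}^{-1}$ is symmetric and positive semidefinite: $\mathbf{J}_{\mathrm{cluster}}$ is a sum of outer products, and $\mathbf{H}_{\mathrm{obs}}$ is symmetric and invertible at $\hat{\btheta}$. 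I will assume $\mathbf{J}_{\mathrm{cluster}}$ has full rank, which is plausible with $C=415$ PSUs, so that $\mathbf{V}_{\mathrm{sand}}$ is positive definite.

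Second, mean preservation. By linearity of expectation, $\E[\btheta^*] = \hat{\btheta} + \mathbf{A}(\E[\btheta]-\hat{\btheta}) = \hat{\btheta}$, because $\hat{\btheta}$ is defined as the posterior mean. For the empirical version, the same identity holds exactly for the sample mean of the draws, since the map is affine.

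Third, the covariance. Translation invariance and the bilinearity of covariance give
$\Cov(\btheta^*) = \mathbf{A}\,\bSigma_{\mathrm{MCMC}}\,\mathbf{A}\t = \mathbf{L}_{\mathrm{sand}}\mathbf{L}_{\mathrm{MCMC}}^{-1}\mathbf{L}_{\mathrm{MCMC}}\mathbf{L}_{\mathrm{MCMC}}\t\mathbf{L}_{\mathrm{MCMC}}^{-\top}\mathbf{L}_{\mathrm{sand}}\t = \mathbf{L}_{\mathrm{sand}}\mathbf{L}_{\mathrm{sand}}\t = \mathbf{V}_{\mathrm{sand}}$.
The same algebra applies verbatim to the sample covariance of the transformed draws, so the identity is exact in finite samples and not merely asymptotic.

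Finally, the invariance claim. Any matrix $\mathbf{B}$ with $\mathbf{B}\bSigma_{\mathrm{MCMC}}\mathbf{B}\t = \mathbf{V}_{\mathrm{sand}}$ can be written $\mathbf{B} = \mathbf{V}_{\mathrm{sand}}^{1/2}\mathbf{O}\,\bSigma_{\mathrm{MCMC}}^{-1/2}$ with $\mathbf{O}$ orthogonal. One example is the symmetric square-root choice in place of Cholesky factors. Every such $\mathbf{B}$ yields the same mean $\hat{\btheta}$ and the same diagonal $[\mathbf{V}_{\mathrm{sand}}]_{jj}$. Wald intervals $\hat\theta_j \pm z\,[\mathbf{V}_{\mathrm{sand}}]_{jj}^{1/2}$ depend only on these quantities, so they are invariant to the choice of $\mathbf{B}$.

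The main obstacle is not the algebra but the precise sense of the claims. The identity $\Cov(\btheta^*) = \mathbf{V}_{\mathrm{sand}}$ is exact for the empirical covariance, but "design-consistent" relies on $\mathbf{V}_{\mathrm{sand}}$ itself being design-consistent. I would take that as given from the standard pseudo-likelihood M-estimation theory of \citet{Binder1983} and \citet{WilliamsSavitsky2021}, rather than prove it here. I would also note that quantile-based intervals of $\btheta^*$, unlike Wald intervals, do depend on $\mathbf{A}$ unless the posterior is Gaussian.
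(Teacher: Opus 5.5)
Your proposal is correct and follows the paper's proof essentially step for step. The paper gets mean preservation from the empirical centering $\frac{1}{M}\sum_m(\btheta^{(m)}-\hat{\btheta})=\mathbf{0}$, covariance recovery from $\mathbf{A}\bSigma_{\mathrm{MCMC}}\mathbf{A}^{\top}=\mathbf{L}_{\mathrm{sand}}\mathbf{L}_{\mathrm{sand}}^{\top}=\mathbf{V}_{\mathrm{sand}}$, and invariance of the Wald intervals because they depend only on $\mathrm{diag}(\mathbf{V}_{\mathrm{sand}})$; your parametrization $\mathbf{B}=\mathbf{V}_{\mathrm{sand}}^{1/2}\mathbf{O}\,\bSigma_{\mathrm{MCMC}}^{-1/2}$ of all admissible transformation matrices just makes the paper's step~(c) explicit.

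One small quibble concerns your well-definedness preamble, which the theorem in fact takes as a hypothesis. $\mathbf{J}_{\mathrm{cluster}}$ has rank at most $\sum_h(C_h-1)=385$. The likelihood-only $\mathbf{H}_{\mathrm{obs}}$ is singular along the shift $\balpha^{(r)}\mapsto\balpha^{(r)}+\mathbf{c}$, $\bdelta_{1,s}\mapsto\bdelta_{1,s}-\mathbf{c}$. So invertibility of $\mathbf{H}_{\mathrm{obs}}$ and positive definiteness of $\mathbf{V}_{\mathrm{sand}}$ hold only for the low-dimensional blocks (e.g.\ the 11-dimensional fixed-effect block) to which the paper applies the transformation. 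They do not hold for the full $\approx 616$-dimensional $\btheta$.
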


\paragraph{Computational remark.}
The observed Hessian $\mathbf{H}_{\mathrm{obs}}$ is computed as the negative
second derivative of the weighted pseudo-log-likelihood~\eqref{eq:H-obs},
excluding prior curvature, evaluated at the posterior mean,
\emph{not} from the MCMC sample covariance. This distinction is critical in
hierarchical models: the marginal posterior variance
$\bSigma_{\mathrm{MCMC}}$ of the fixed effects integrates over the
random-effect distribution, producing intervals much wider than the
conditional precision $\mathbf{H}_{\mathrm{obs}}^{-1}$. In our
application, the hierarchical variance ratio
$(\bSigma_{\mathrm{MCMC}})_{pp}/(\mathbf{H}_{\mathrm{obs}}^{-1})_{pp}$
exceeds 600 for most fixed effects---a consequence of the hierarchical
structure, not prior domination---confirming that the standard substitution
$\bSigma_{\mathrm{MCMC}} \approx \mathbf{H}^{-1}$ fails in hierarchical
models.

\paragraph{Design effect ratio and block-wise correction strategy.}
As a parameter-specific diagnostic extending the classical design
effect~\citep{RaoScott1981, PfeffermannEtAl1998} to individual model
parameters, define the design effect ratio for parameter $p$ as
\begin{equation}\label{eq:der}
  \DER_{p}
  = \frac{[\mathbf{V}_{\mathrm{sand}}]_{pp}}{[\mathbf{H}_{\mathrm{obs}}^{-1}]_{pp}}.
\end{equation}
The $\DER_p$ quantifies the factor by which the survey design inflates
the variance of the $p$-th parameter beyond the data-only (SRS) variance
and motivates the following three-tier correction strategy.
The Cholesky transformation~\eqref{eq:cholesky-transform} is applied
in blocks:
\begin{itemize}[nosep]
  \item \emph{Fixed effects} $(\balpha, \bbeta, \log\kappa)$: full sandwich
    correction. The DER ranges from 1.14 to 4.18 in our data
    (mean 2.11, consistent with the Kish $\DEFF$ of 3.76). The corresponding
    confidence interval inflation factors are
    $\sqrt{\DER} \in [1.07, 2.04]$.
  \item \emph{Random effects} $(\bdelta_{k,s})$: partial correction guided by
    the DER. States with low effective sample size show DER near one (the prior
    dominates), while large states show DER up to four.
  \item \emph{Hyperparameters} $(\bSigma_\delta, \bGamma_k)$: no correction
    applied. The sandwich estimator is not applicable to variance components
    estimated from between-state variation; the DER concept does not extend to
    these parameters. Consequently, posterior intervals for hyperparameters are
    model-based summaries whose validity relies on correct specification of the
    hierarchical distribution, not on design consistency. When a hyperparameter
    is substantively central---as $\bSigma_{12}$ is in this
    application (\cref{prop:sigma12-nonid,prop:sigma12-id})---prior
    sensitivity analysis is recommended.
\end{itemize}

\begin{remark}[Block-wise caveat]\label{rem:blockwise}
  The block-wise application of~\cref{thm:cholesky} ignores cross-block
  covariance between fixed and random effects, which can be non-negligible
  for the three to five largest states where within-state ESS approaches $N_s$.
  A sensitivity analysis comparing block-wise and joint correction is reported
  in Supplementary Material~C.  The finite-sample coverage of sandwich-corrected
  intervals is evaluated in the simulation study (\cref{sec:simulation}).
\end{remark}


\subsection{Prior specification and computation}\label{sec:priors}

\paragraph{Prior hierarchy.}
Each prior is chosen to be weakly informative---providing enough
regularization to ensure computational stability and proper posteriors, while
remaining vague enough that the data dominate in well-identified regions.
\cref{tab:priors} summarizes the complete specification, following the
recommendations of~\citet{GelmanEtAl2008} for regularized regression and
\citet{Gelman2006} for variance components.

\begin{table}[ht]
\centering
\caption{Prior specification. NCP = non-centered parameterization.}\label{tab:priors}
\smallskip
\footnotesize
\begin{adjustbox}{max width=\textwidth}
\begin{tabular}{@{}llp{0.50\textwidth}@{}}
\toprule
Parameter & Prior & Justification \\
\midrule
$\balpha,\;\bbeta$ & $\Norm(\mathbf{0},\;2^2\,\mathbf{I}_P)$ &
  Weakly informative on logit scale; \\
  & & $\pm 4$ logit units covers most of the probability range \\[3pt]
$\operatorname{vec}(\bGamma_k)$ & $\Norm(0,\;1^2)$ &
  Diffuse; allows moderate cross-level effects \\[3pt]
$\log\kappa$ & $\Norm(2,\;1.5^2)$ &
  Centered at moderate overdispersion ($\kappa \approx 7.4$); \\
  & & 95\% prior interval $\kappa \in [0.4, 140]$ \\[3pt]
$\boldsymbol{\tau}$ (SDs of $\bSigma_\delta$) &
  $\HalfNormal(0,1)$ &
  Regularizing; shrinkage-friendly \\[3pt]
$\mathbf{R}_\varepsilon$ (correlations) & $\LKJ(2)$ \citep{LewandowskiEtAl2009} &
  Mild shrinkage toward identity \\[3pt]
$\bz_s$ (NCP auxiliaries) & $\Norm(\mathbf{0},\;\mathbf{I}_{2q})$ &
  Standard normal (non-centered parameterization) \\
\bottomrule
\end{tabular}
\end{adjustbox}
\end{table}

\paragraph{Prior for $\log\kappa$.}
We center the log-concentration prior at $2$ (corresponding to $\kappa \approx 7.4$)
because preliminary analysis of IT enrollment counts suggests moderate
overdispersion: the variance-to-mean ratio across participating centers is
approximately 7--15 times the binomial baseline, consistent with $\kappa$ in the
range 5--20.
The scale of 1.5 is generous, accommodating $\kappa$ values from less than~1
(near-maximum overdispersion) to over~140 (approaching the binomial).

\paragraph{Identification.}
\begin{theorem}[Identification]\label{thm:identification}
  Suppose the following conditions hold:
  \begin{enumerate}[label=\textup{(C\arabic*)},nosep]
    \item The design matrices have full rank:
      $\mathrm{rank}(\mathbf{X}) = P$ and $\mathrm{rank}(\mathbf{V}) = Q$.
    \item Each state $s$ has at least $q$ providers with linearly independent
      covariate vectors $\bx_i^{(r)}$, with at least one zero and at least
      $q$ positive responders.
    \item $n_i \ge 3$ for all $i$ with $z_i = 1$, and there exist providers
      $i, j$ with $s[i] = s[j]$ and $n_i \ne n_j$ in each state.
    \item The prior $\pi(\btheta)$ has full support on the parameter space.
  \end{enumerate}
  Then the full parameter vector $\btheta = (\balpha, \bbeta, \kappa,
  \{\bdelta_s\}, \bSigma_\delta, \{\bGamma_k\})$ is identified under the
  posterior.
\end{theorem}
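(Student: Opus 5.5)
The proof proceeds layer by layer through the hierarchy, exploiting the additive separation \eqref{eq:loglik-decomp} (and its weighted analogue, \cref{prop:weighted-sep}). We take identification to mean that two distinct parameter values cannot induce the same sampling distribution of the observables, at the level appropriate to each layer. Parameters entering the conditional likelihood are identified from the data distribution given the latent state deviations. The hyperparameters are identified from the distribution of those deviations. Condition (C4) then guarantees that the posterior concentrates on the identified value rather than being truncated away from it.

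\textbf{Step 1 (extensive margin).} Within state $s$, the Bernoulli likelihood depends on $(\balpha,\bdelta_{1,s})$ only through the logits $\bx_i\t\balpha+\bx_i^{(r)\t}\bdelta_{1,s}$. Condition (C2) supplies $q$ linearly independent $\bx_i^{(r)}$ in the state and ensures both outcomes $z_i=0$ and $z_i=1$ occur, so the state-specific coefficient vector $\balpha^{(r)}+\bdelta_{1,s}$ is pinned down. Condition (C1), applied to the non-varying columns, pins down the remaining components of $\balpha$. The split between $\balpha$ and $\bdelta_{1,s}$ is then fixed by the zero-mean constraint in \eqref{eq:delta-stack}: $\balpha^{(r)}$ is the population mean of the state-specific coefficients.

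\textbf{Step 2 (intensive margin, including $\kappa$).} Here I would use the zero-truncated beta-binomial family. The conditional mean $n_i h(\mu_i,n_i,\kappa)$ is strictly increasing in $\mu_i$ by \cref{thm:monotonicity}. Hence, for fixed $\kappa$, the map from the linear predictor to the conditional mean is injective, and the argument of Step 1 identifies $\bbeta$ and $\bdelta_{2,s}$.

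The main obstacle is separating $\kappa$ from $\mu$. A single $n$ with only first moments would not suffice, so I would use (C3). First, with $n_i\ge3$ the truncated PMF has at least three support points. The ratios $f(2)/f(1)$ and $f(3)/f(2)$ are rational functions of $(a,b)=(\mu\kappa,(1-\mu)\kappa)$, and they determine $(a,b)$ injectively; the normalizing constant $1-p_0$ cancels in these ratios. Second, as a backstop, variation in $n$ within a state enters the variance factor $(n+\kappa)/(1+\kappa)$ of \eqref{eq:bb-moments} differently from the mean, which separates $\kappa$ from $\mu$. I would verify injectivity of the ratio map by explicit computation. This is the step most likely to need care.

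\textbf{Step 3 (hyperparameters).} Once the $\bdelta_s$ are identified, \cref{def:gamma} makes $\bdelta_s$ a Gaussian multivariate regression on $\bv_s$. Full rank of $\mathbf{V}$ (C1) identifies $\bGamma_k$ as the regression coefficients. The residual covariance, including $\bSigma_{12}$, is identified as the covariance of a nondegenerate Gaussian. This is exactly the mechanism of \cref{prop:sigma12-nonid,prop:sigma12-id}: $\bSigma_{12}$ is identified through the hierarchical layer, not through the likelihood.

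Finally, (C4) implies that the prior puts positive mass in every neighbourhood of the identified value. Combining this with \cref{thm:propriety}, distinct parameter values induce distinct (pseudo-)posterior kernels, which gives identification under the posterior.
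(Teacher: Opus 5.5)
Your layered architecture matches the paper's. The likelihood pins down only the total state-specific coefficients $\tilde\balpha_s=\balpha^{(r)}+\bdelta_{1,s}$ and $\tilde\bbeta_s=\bbeta^{(r)}+\bdelta_{2,s}$. The consecutive-ratio argument with $n\ge 3$ separates $(\mu,\kappa)$ exactly as in \cref{smb:prop-mu-kappa-id}, so your detour through \cref{thm:monotonicity} and the variance ``backstop'' are superfluous. The hierarchical layer then handles $\bGamma_k$ and $\bSigma_\varepsilon$.

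The gap is in how you split the totals. In Step~1 you set $\balpha^{(r)}$ equal to the mean of the state-specific coefficients, citing the zero mean of $\bdelta_s$ in \eqref{eq:delta-stack}. But the theorem includes $\{\bGamma_k\}$, so the operative model is \cref{def:gamma}. There $\bdelta_{k,s}$ has mean $\bGamma_k\bv_s$, and only $\bepsilon_{k,s}$ is centered. Your Steps~1 and~3 are therefore mutually inconsistent. Taken together they silently impose $\bGamma_k\overline{\bv}=\mathbf{0}$, where $\overline{\bv}$ is the average policy vector. That restriction is not part of the model, so you recover a normalized version of $(\balpha^{(r)},\bGamma_1)$ rather than the model's own parameters.

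The paper avoids this by splitting jointly, through the between-state regression $\tilde\balpha_s=\balpha^{(r)}+\bGamma_1\bv_s+\bepsilon_{1,s}$ with $\E[\bepsilon_{1,s}]=\mathbf{0}$ (its Steps~2 and~4). Even so, (C1) does not deliver what both arguments claim. Because $\bv_s$ contains an intercept, $\balpha^{(r)}$ and the intercept column $\bGamma_1\mathbf{e}_1$ enter only through their sum. For every $\mathbf{c}\in\R^q$, the shift $(\balpha^{(r)},\bGamma_1\mathbf{e}_1,\bdelta_{1,s})\mapsto(\balpha^{(r)}+\mathbf{c},\,\bGamma_1\mathbf{e}_1-\mathbf{c},\,\bdelta_{1,s}-\mathbf{c})$ leaves every $\tilde\balpha_s$ and every $\bepsilon_{1,s}$ unchanged. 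Hence it leaves the joint law of the data and the residual effects unchanged, and the same holds for $\bbeta^{(r)}$ and $\bGamma_2$. The condition $\mathrm{rank}(\mathbf{V})=Q$ cannot exclude this, because the augmented design $[\mathbf{1},\mathbf{V}]$ has rank $Q$, not $Q+1$.

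The paper's Step~2 asserts that OLS separates the two. However, \cref{smc:rem-intercept-id} acknowledges the overlap and falls back on the prior for ``soft identification.'' The application is consistent with exactly this ridge: the order-one posterior SDs of $\balpha$, $\bbeta$ (\cref{tab:fixed-effects}) and of the intercept columns of $\bGamma_k$ (\cref{smd:tab-gamma}) are far larger than the $\tau/\sqrt{S}$ scale that centering alone would give.

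To close the argument you have three options:
\begin{itemize}
\item remove the intercept from $\bv_s$, or drop $\balpha^{(r)}$ and $\bbeta^{(r)}$;
\item impose an explicit normalization on $\bGamma_k\mathbf{e}_1$ (your centering of $\bdelta_s$ is one such choice, but it must be stated as a constraint);
\item concede that, along this $q$-dimensional direction per margin, the posterior merely reproduces the prior and does not concentrate as $N\to\infty$.
\end{itemize}
(C4) and \cref{thm:propriety} cannot rescue this, because full support does not select a point on a flat ridge. Your closing criterion that ``distinct parameter values induce distinct posterior kernels'' likewise adds nothing to the injectivity that Steps~1--3 must supply (compare \cref{smb:rem-freq-vs-bayes}).
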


\noindent\emph{Interpretation.}
The hurdle likelihood identifies the state-specific \emph{total}
coefficients $\tilde{\alpha}_{j,s}$ and $\tilde{\beta}_{j,s}$
(\cref{eq:total-coef}) under standard rank and replication
conditions; the hierarchical layer then pins down their
decomposition into population-average effects~$\balpha^{(r)}$,
$\bbeta^{(r)}$, policy moderation~$\bGamma_k$, and residual state
deviations~$\bepsilon_{k,s}$.
See Supplementary Material~B for the complete proof.

Condition (C3) is critical for the intensive margin: with $n_i = 1$, the
intensity function $h \equiv 1$ and the parameters $\mu_i$ and $\kappa$
enter the likelihood only through $1 - p_0 = \mu_i$, so $\kappa$ is not
identified. With $n_i = 2$, $p_0$ depends on both $\mu$ and $\kappa$, but the
single constraint $p_0(2,\mu,\kappa) = (1-\mu)[(1-\mu)\kappa+1]/(\kappa+1)$ determines
only one degree of freedom. For $n_i \ge 3$, the product formula
\eqref{eq:p0} provides enough structure to separate $\mu$ from $\kappa$. The
full proof appears in Supplementary Material~B.

\paragraph{Estimation pipeline.}
\cref{alg:estimation} outlines the complete estimation and inference procedure.

\begin{algorithm}[ht]
\caption{Estimation pipeline for the HBB model}\label{alg:estimation}
\begin{algorithmic}[1]
  \State \textbf{Data preparation:} Normalize weights
    $\tilde{w}_i = w_i N / \textstyle\sum_j w_j$; standardize covariates.
  \State \textbf{Compile:} Stan model with NCP and \texttt{lbeta()} for numerical
    stability.
  \State \textbf{MCMC sampling:} Run 4 chains~\citep[NUTS;][]{Hoffman2014}
    with \texttt{adapt\_delta} $= 0.95$; see Supplementary Material~E for
    iteration counts and tree-depth settings.
  \State \textbf{Convergence diagnostics:} Verify $\hat{R} < 1.01$,
    $\ESS_{\mathrm{bulk}} > 400$~\citep{Vehtari2021}, and zero divergent transitions for all
    parameters.
  \State \textbf{Score extraction:} Extract observation-level scores
    $\partial\log f_{\HBB} / \partial\btheta$ computed in Stan's
    \texttt{generated quantities} block, evaluated at each MCMC draw.
  \State \textbf{Sandwich computation:} Form $\mathbf{H}_{\mathrm{obs}}$
    and $\mathbf{J}_{\mathrm{cluster}}$ via~\cref{eq:H-obs,eq:J-cluster};
    compute $\mathbf{V}_{\mathrm{sand}}$ via~\cref{eq:V-sand}.
  \State \textbf{Cholesky transformation:} Apply
    \cref{eq:cholesky-transform} block-wise to obtain design-corrected
    posterior draws $\{\btheta^{*(m)}\}$.
  \State \textbf{Model comparison:} Compute LOO-CV
    \citep{VehtariEtAl2017} across M0--M3b; report
    $\widehat{\elpd}_{\mathrm{loo}}$ and Pareto-$k$ diagnostics.
\end{algorithmic}
\end{algorithm}

The models are fitted sequentially from M0 through M3b, using the posterior
mean of each simpler model as the starting point for its successor.
This warm-start strategy reduces total computation time by approximately
40\% compared with cold starts.  Full MCMC settings and convergence diagnostics
are reported in Supplementary Material~E; score computation details and
closed-form expressions for the sandwich ingredients are documented in the
\textsf{hurdlebb} package vignette~\citep{hurdlebb2026}.

The finite-sample properties of this estimation pipeline---particularly
the sandwich correction's behavior under realistic survey designs---are
evaluated in the simulation study that follows.

%% file: section4.tex

\section{Simulation Study}\label{sec:simulation}

The sandwich-corrected pseudo-posterior developed in~\cref{sec:survey}
rests on asymptotic arguments whose finite-sample behavior must be
characterized before the methodology can be recommended for practice.
This section maps the \emph{operating envelope} of the three inferential
strategies---unweighted, weighted-naive, and sandwich-corrected---through
a controlled simulation that varies design informativeness across conditions
calibrated to realistic survey parameters. We treat the exercise as an
\emph{honest diagnostic} rather than a pure validation: the results identify
where each strategy succeeds, where it fails, and why, exposing a
fundamental structural limitation of sandwich correction for hierarchical
variance components. By establishing these operating properties here,
we provide the evidential basis for the reporting strategy applied in
\cref{sec:application}.


\subsection{Design}\label{sec:simdesign}

\paragraph{Data-generating process.}
We construct a finite super-population of $M = 50{,}000$ providers
distributed across $S = 51$ states, with state sizes proportional to
observed survey counts. For each provider~$i$ in state~$s$, a
covariate vector $\bx_i = (1, x_{\mathrm{pov}}, x_{\mathrm{urb}},
x_{\mathrm{blk}}, x_{\mathrm{his}})^\t$ is drawn from empirical
marginal distributions. True parameters are calibrated to the M1
(random intercept) unweighted fit:
\begin{align}\label{eq:sim-truth}
  \balpha_0 &= (0.696,\; -0.119,\; 0.253,\; -0.070,\; -0.139)^\t, \notag\\
  \bbeta_0  &= (-0.032,\; 0.057,\; -0.018,\; 0.080,\; 0.040)^\t, \quad
  \log\kappa_0 = 1.655,
\end{align}
with state random-effect standard deviations $\tau_{\mathrm{ext}} = 0.577$,
$\tau_{\mathrm{int}} = 0.208$, and cross-margin correlation
$\rho_{\mathrm{cross}} = 0.285$. State random effects
$\bdelta_s = (\delta_{1s}, \delta_{2s})^\t \sim \Norm_2(\mathbf{0},
\bSigma_\delta)$ are drawn once per replication.
Participation indicators are generated as
$z_i \sim \Bern(q_i)$ with $\logit(q_i) = \bx_i^\t \balpha_0 +
\delta_{1,s[i]}$, and positive counts as
$Y_i \mid z_i = 1 \sim \ZTBB(n_i, \mu_i, \kappa_0)$
with $\logit(\mu_i) = \bx_i^\t \bbeta_0 + \delta_{2,s[i]}$,
where $n_i$ is drawn from the empirical distribution of total enrollment.
The zero-truncated beta-binomial is sampled via rejection from the
unconditional $\BetaBin(n_i, \mu_i, \kappa_0)$.

\paragraph{Sampling design and scenarios.}
From each finite population we draw a stratified cluster sample of
approximately $N \approx 7{,}000$ providers. Survey informativeness is
controlled through a size-biased inclusion probability:
\begin{equation}\label{eq:sim-inclusion}
  \log \pi_i = c_0 + \rho_{\mathrm{inc}} \cdot y_i^{*},
\end{equation}
where $y_i^{*}$ is the standardized latent outcome and $c_0$ is chosen to
yield the target sample size. The informativeness parameter
$\rho_{\mathrm{inc}}$ indexes three scenarios (\cref{tab:sim-scenarios}).
Scenario S3 is calibrated to match the NSECE design
($\DEFF_{\mathrm{Kish}} = 3.76$; \cref{tab:data-summary}). Scenario S4
doubles the informativeness to stress-test the sandwich correction under
conditions more extreme than those present in the actual survey.

\begin{table}[t]
\centering
\caption{Simulation scenarios. The informativeness parameter
  $\rho_{\mathrm{inc}}$ controls the dependence between inclusion
  probabilities and the outcome (\cref{eq:sim-inclusion});
  $\mathrm{CV}(w)$ is the coefficient of variation of the resulting
  sampling weights; Kish $\DEFF = 1 + \mathrm{CV}^2(w)$.}\label{tab:sim-scenarios}
\smallskip
\begin{tabular}{@{}llccl@{}}
\toprule
Scenario & $\rho_{\mathrm{inc}}$ & $\mathrm{CV}(w)$ &
  Kish $\DEFF$ & Description \\
\midrule
S0 & 0.00 & $\approx 1.0$ & $\approx 2.0$ & Non-informative baseline \\
S3 & 0.15 & $\approx 1.67$ & $\approx 3.8$ & NSECE-calibrated \\
S4 & 0.50 & $\approx 2.0$  & $\approx 5.0$ & Stress test \\
\bottomrule
\end{tabular}
\end{table}

\paragraph{Estimators.}
Each replication is analyzed under three estimators (\cref{tab:sim-estimators})
corresponding to the three inferential strategies in
\cref{sec:survey}. E-UW fits the model with an unweighted likelihood and
uses the naive MCMC posterior for interval construction. E-WT replaces
the likelihood with the pseudo-log-likelihood
(\cref{eq:pseudo-loglik}; \citealp{SavitskyToth2016}),
correcting point estimates for informativeness but retaining the
naive posterior variance. E-WS augments E-WT with the
sandwich-corrected variance via the Cholesky affine transformation
(\cref{thm:cholesky}; \citealp{WilliamsSavitsky2021}), yielding
Wald-type confidence intervals.
The Cholesky transformation is applied block-wise following the $\DER$
classification in~\cref{sec:survey}: full correction for fixed effects,
no correction for hyperparameters.

\begin{table}[t]
\centering
\caption{Estimator definitions.}\label{tab:sim-estimators}
\smallskip
\small
\begin{tabular}{@{}lp{2.5cm}p{3.5cm}l@{}}
\toprule
Estimator & Likelihood & Standard errors & Purpose \\
\midrule
E-UW & Unweighted         & Model-based (naive) & Ignore design \\
E-WT & Pseudo (weighted)  & Model-based (naive) & Correct point est. \\
E-WS & Pseudo (weighted)  & Sandwich~\eqref{eq:V-sand} & Full correction \\
\bottomrule
\end{tabular}
\end{table}

\paragraph{Target parameters and computation.}
We track five parameters that span fixed effects and variance components:
the poverty slopes $\alpha_{\mathrm{pov}}$ and $\beta_{\mathrm{pov}}$,
the overdispersion $\log\kappa$, and the state-effect standard deviations
$\tau_{\mathrm{ext}}$ and $\tau_{\mathrm{int}}$. Performance is measured
by 90\% interval coverage, relative bias
$\text{RB} = 100 \times (\bar{\hat\theta} - \theta_0)/|\theta_0|$ (\%),
root mean squared error (RMSE), and the width ratio
\[
  \text{WR}
  = \frac{\text{median CI width (E-WS)}}
         {\text{median CI width (E-WT)}},
\]
which isolates the inflation attributable to sandwich correction.
Each scenario--estimator combination uses $R = 200$ replications.
The Monte Carlo standard error (MCSE) for each coverage estimate is
$\mathrm{MCSE} = \sqrt{\hat{p}(1 - \hat{p})/R}$, ranging from 0.0~pp
(at boundary coverage) to a maximum of 3.5~pp (near 50\% coverage);
at the 90\% nominal level, $\mathrm{MCSE} \approx 2.1$~pp.
Per-cell MCSEs are reported in parentheses in \cref{tab:sim-results};
coverage deviating from 90\% by more than $2 \times \mathrm{MCSE}$ is
flagged with~$\dagger$.
Each replication fits the M1 specification
(\cref{tab:model-hierarchy}) with 4 chains of 3{,}500 iterations
(1{,}500 warmup, 2{,}000 sampling), for a total of 8{,}000 post-warmup draws.
All replications achieved $\hat{R} < 1.01$ \citep{Vehtari2021} and a minimum bulk effective
sample size exceeding 1{,}600.


\subsection{Results}\label{sec:simresults}

\cref{tab:sim-results} reports coverage, relative bias, RMSE, and width
ratio for all five parameters across three scenarios and three
estimators. \cref{fig:sim-coverage} displays the coverage rates
graphically with MCSE bands. We organize the results around five
findings that together characterize the operating properties of the
estimation framework.

\paragraph{Finding 1: The unweighted estimator achieves near-nominal
coverage for fixed effects under correct specification.}
Under E-UW, coverage of $\alpha_{\mathrm{pov}}$ and
$\beta_{\mathrm{pov}}$ lies within the 89--93\% range across all three
scenarios, comfortably within two MCSE of the 90\% nominal level. This
holds even under S4, where the sampling design is strongly informative.
The result confirms that when the generative model is \emph{correctly specified}
and inference is based on the model likelihood alone, ignoring the survey
weights does not compromise interval coverage.
This finding should \emph{not} be read as an argument against survey
weighting: in practice the generative model is never exactly correct,
and Findings~2--3 demonstrate that weighting with sandwich correction
provides important robustness against model misspecification.  A
dedicated misspecification experiment (\cref{smf:misspec}) confirms
this: under an M2 DGP with omitted state-varying poverty slopes,
E-UW coverage collapses to 0.5--30\% for the poverty coefficients,
while E-WS retains 67.5--79.0\% coverage.
The unweighted estimator
also maintains nominal coverage for $\log\kappa$ (90.5--92.0\%) and
for the hyperparameters $\tau_{\mathrm{ext}}$ and $\tau_{\mathrm{int}}$
(91.0--98.0\%), with the single exception of $\tau_{\mathrm{ext}}$ under
S4 (77.0\%), where the extreme weight structure induces composition
effects in the sampled random effects.

\paragraph{Finding 2: Naive weighted inference produces severe
undercoverage.}
The weighted estimator E-WT shows severe and systematically worsening
undercoverage as design complexity increases. For
$\alpha_{\mathrm{pov}}$, coverage drops from 81.0\% under S0 to 58.5\%
under S3 to 53.0\% under S4---a collapse from modest undercoverage to
barely half of the draws containing the true value. The pattern for
$\beta_{\mathrm{pov}}$ is similar (76.5\%, 62.5\%, 56.5\%). The
mechanism is well understood
\citep{SavitskyToth2016}: the pseudo-likelihood inflates the
effective sample size, concentrating the posterior around point
estimates that carry weight-induced bias, producing intervals that are
simultaneously too narrow and miscentered.

\paragraph{Finding 3: Sandwich correction provides meaningful but
imperfect recovery.}
The sandwich-corrected E-WS recovers substantially from the E-WT
collapse, achieving 82.0--88.5\% coverage for the two poverty
coefficients---gains of roughly 24--33 percentage points over E-WT in
S3 and S4. The width ratio tracks the design effect monotonically:
WR increases from 1.22 under S0 to 1.64--2.06 under S4
(\cref{tab:sim-results}), consistent with
$\mathrm{WR} \approx \sqrt{\DER}$ and the empirical $\DER$ range
of 1.14--4.18 in~\cref{tab:fixed-effects}. The residual coverage gap
reflects the finite-sample bias of the sandwich variance
estimator---a known property of cluster-robust estimators when the
number of clusters is moderate~\citep{Binder1983}.

To characterize the gap more precisely, we decompose it into
\emph{width} and \emph{bias} components via the SE calibration ratio
$\widehat{\mathrm{SD}} / \overline{\mathrm{SE}}$, where
$\widehat{\mathrm{SD}}$ is the actual standard deviation of point
estimates across replications and $\overline{\mathrm{SE}}$ is the mean
reported sandwich standard error
(\cref{smf:tab-decomposition}).  For the poverty coefficients under
S3, the SE ratio is 1.13--1.14, confirming that the sandwich SE
underestimates the true sampling variability by approximately 13\%,
while the standardized mean bias $|\bar{b}|/\widehat{\mathrm{SD}}$ is
negligible (0.06--0.22).  The gap is therefore \emph{width-driven}:
intervals are correctly centered but 13\% too narrow.  This 13\%
underestimate is consistent with the $O(S^{-1})$ finite-sample bias of
cluster-robust variance estimators with $S = 51$
clusters~\citep{Binder1983}, and suggests that small-sample corrections
(e.g., CR2; \citealp{PustejovskyTipton2018}) could further
close the gap---an avenue for future work.  Widening to
95\% nominal intervals confirms this interpretation: E-WS coverage
rises to 87.5--88.5\% for the poverty coefficients under S3---a gain
consistent with the expected $\approx 5$~pp from Normal tail
probability---leaving a proportional residual gap
(\cref{smf:tab-coverage95}).

\paragraph{Finding 4: Sandwich correction is less effective for the
overdispersion parameter.}
For $\log\kappa$, E-WS achieves 57.0--76.5\% coverage---substantially
below the 82--88.5\% observed for the poverty coefficients. This
discrepancy arises because $\kappa$ enters the variance function of the
beta-binomial nonlinearly: the relationship between the weighted
score variance and the sandwich-corrected parameter variance is less
well approximated by the linear Taylor expansion underlying the sandwich
formula. The width ratio for $\log\kappa$ is comparable to that for the
linear coefficients (1.31--2.04), confirming that the sandwich
correction does widen the intervals appropriately; the shortfall in
coverage is driven primarily by point estimate bias rather than variance
underestimation.

\paragraph{Finding 5: Hyperparameters cannot be sandwich-corrected.}
A notable finding concerns $\tau_{\mathrm{ext}}$ and
$\tau_{\mathrm{int}}$. Under E-WT and E-WS, coverage drops to
0.0--8.0\% across all scenarios, with relative bias of
$+68$ to $+110$\%. The width ratio is exactly $\mathrm{WR} = 1.00$,
confirming that E-WS$\;\equiv\;$E-WT for these parameters by
construction: the sandwich correction is applied only to parameters
that enter the weighted log-likelihood through scores, and the
hierarchical variance components do not produce individual-level score
contributions. The $\DER$ is undefined for $\tau$
(\cref{eq:der}), so no variance inflation can be computed. The
substantial positive bias arises because the pseudo-likelihood treats
each observation as carrying $\tilde{w}_i$ units of information,
inflating the apparent between-state heterogeneity by a factor
proportional to $\E[\tilde{w}_i^2]$. The unweighted estimator
E-UW maintains coverage of 91--98\% for both
$\tau$ parameters in S0 and S3; in S4,
$\tau_{\mathrm{ext}}$ drops to 77.0\%---the only E-UW
failure in the entire table---reflecting the growing tension between
the informative sampling weights (which are ignored) and the
extensive-margin random-effect variance.

\paragraph{Cross-margin correlation.}
We additionally track recovery of the cross-margin correlation
$\rho_{\mathrm{cross}} = \operatorname{cor}(\delta_j^{\mathrm{ext}},
\delta_j^{\mathrm{int}})$, estimated via the plug-in
$\hat{\rho} = \operatorname{cor}(\bar{\delta}_j^{\mathrm{ext}},
\bar{\delta}_j^{\mathrm{int}})$ from $S = 51$ posterior mean random
effects (\cref{tab:rho-cross-sim}).
Under the NSECE-calibrated design~(S3), the unweighted estimator E-UW
overestimates the true $\rho_{\mathrm{cross}} = 0.285$ with a relative
bias of $+104\%$ (RMSE $= 0.364$), while the weighted estimator E-WT
exhibits moderate downward bias of $-42\%$ (RMSE $= 0.226$).
As with $\tau_{\mathrm{ext}}$ and $\tau_{\mathrm{int}}$,
E-WS $\equiv$ E-WT for this hyperparameter: the sandwich correction
adjusts only fixed-effect covariances and cannot modify the
random-effect correlation structure.
Because full posterior draws of $\Omega$ were not retained in the
simulation archive, per-replication credible intervals are unavailable;
\cref{tab:rho-cross-sim} reports frequentist summaries across
replications instead.

\begin{table}[t]
\centering
\caption{Simulation results: empirical coverage of 90\% intervals
  (\%, with Monte Carlo standard error in parentheses),
  relative bias (RB, \%), root mean squared error (RMSE), and width ratio
  (WR) across $R = 200$ replications per scenario.
  E-UW = unweighted; E-WT = weighted, naive posterior variance;
  E-WS = weighted, sandwich-corrected (\cref{eq:V-sand}).
  $\mathrm{WR} = \text{median CI width (E-WS)}/\text{median CI
  width (E-WT)}$; $\mathrm{WR} = 1.00$ for $\tau$ parameters indicates that
  E-WS $\equiv$ E-WT by construction.
  $\dagger$ = coverage deviates from 90\% by more than 2 MCSE.
  MCSE $= \sqrt{\hat{p}(1-\hat{p})/R}$; range 0.0--3.5
  pp.}\label{tab:sim-results}
\smallskip
\small
\begin{adjustbox}{max width=\textwidth}
\begin{tabular}{@{}l ccc rrr rrr r@{}}
\toprule
 & \multicolumn{3}{c}{Coverage (\%, MCSE)} &
   \multicolumn{3}{c}{Relative bias (\%)} &
   \multicolumn{3}{c}{RMSE} & \\
\cmidrule(lr){2-4}\cmidrule(lr){5-7}\cmidrule(lr){8-10}
Parameter & E-UW & E-WT & E-WS & E-UW & E-WT & E-WS & E-UW & E-WT & E-WS & WR \\
\midrule
\multicolumn{11}{@{}l}{\textit{Panel A: S0 --- Non-informative ($\DEFF \approx 2.0$)}} \\[2pt]
$\alpha_{\text{pov}}$ & 93.0\,(1.8) & 81.0\,(2.8)$^{\dagger}$ & 88.5\,(2.3)
  & $+$13.2 & $+$4.9 & $+$4.9 & 0.033 & 0.044 & 0.044 & 1.22 \\
$\beta_{\text{pov}}$ & 93.0\,(1.8) & 76.5\,(3.0)$^{\dagger}$ & 86.5\,(2.4)
  & $-$9.0 & $-$4.8 & $-$4.8 & 0.016 & 0.022 & 0.022 & 1.22 \\
$\log\kappa$ & 91.5\,(2.0) & 64.0\,(3.4)$^{\dagger}$ & 76.5\,(3.0)$^{\dagger}$
  & $-$0.8 & $+$1.6 & $+$1.6 & 0.023 & 0.040 & 0.040 & 1.31 \\
$\tau_{\text{ext}}$ & 97.5\,(1.1)$^{\dagger}$ & 3.0\,(1.2)$^{\dagger}$ & 3.0\,(1.2)$^{\dagger}$
  & $+$9.6 & $+$68.8 & $+$68.8 & 0.082 & 0.423 & 0.423 & 1.00 \\
$\tau_{\text{int}}$ & 91.0\,(2.0) & 8.0\,(1.9)$^{\dagger}$ & 8.0\,(1.9)$^{\dagger}$
  & $-$6.6 & $+$70.6 & $+$70.6 & 0.036 & 0.158 & 0.158 & 1.00 \\[4pt]
\multicolumn{11}{@{}l}{\textit{Panel B: S3 --- NSECE-calibrated ($\DEFF \approx 3.8$)}} \\[2pt]
$\alpha_{\text{pov}}$ & 89.5\,(2.2) & 58.5\,(3.5)$^{\dagger}$ & 82.0\,(2.7)$^{\dagger}$
  & $+$16.5 & $+$12.9 & $+$12.9 & 0.037 & 0.072 & 0.072 & 1.71 \\
$\beta_{\text{pov}}$ & 89.0\,(2.2) & 62.5\,(3.4)$^{\dagger}$ & 82.5\,(2.7)$^{\dagger}$
  & $-$16.7 & $-$3.2 & $-$3.2 & 0.017 & 0.032 & 0.032 & 1.64 \\
$\log\kappa$ & 92.0\,(1.9) & 34.0\,(3.3)$^{\dagger}$ & 60.5\,(3.5)$^{\dagger}$
  & $-$0.5 & $+$3.3 & $+$3.3 & 0.020 & 0.068 & 0.068 & 1.75 \\
$\tau_{\text{ext}}$ & 98.0\,(1.0)$^{\dagger}$ & 0.5\,(0.5)$^{\dagger}$ & 0.5\,(0.5)$^{\dagger}$
  & $+$8.0 & $+$105.6 & $+$105.6 & 0.091 & 0.639 & 0.639 & 1.00 \\
$\tau_{\text{int}}$ & 97.5\,(1.1)$^{\dagger}$ & 0.0\,(0.0)$^{\dagger}$ & 0.0\,(0.0)$^{\dagger}$
  & $-$8.6 & $+$97.6 & $+$97.6 & 0.040 & 0.211 & 0.211 & 1.00 \\[4pt]
\multicolumn{11}{@{}l}{\textit{Panel C: S4 --- Stress test ($\DEFF \approx 5.0$)}} \\[2pt]
$\alpha_{\text{pov}}$ & 92.5\,(1.9) & 53.0\,(3.5)$^{\dagger}$ & 86.0\,(2.5)
  & $+$20.5 & $+$11.9 & $+$11.9 & 0.036 & 0.083 & 0.083 & 2.06 \\
$\beta_{\text{pov}}$ & 91.0\,(2.0) & 56.5\,(3.5)$^{\dagger}$ & 84.0\,(2.6)$^{\dagger}$
  & $-$21.7 & $-$2.2 & $-$2.2 & 0.017 & 0.035 & 0.035 & 1.91 \\
$\log\kappa$ & 90.5\,(2.1) & 23.0\,(3.0)$^{\dagger}$ & 57.0\,(3.5)$^{\dagger}$
  & $+$0.8 & $+$4.1 & $+$4.1 & 0.020 & 0.083 & 0.083 & 2.04 \\
$\tau_{\text{ext}}$ & 77.0\,(3.0)$^{\dagger}$ & 0.0\,(0.0)$^{\dagger}$ & 0.0\,(0.0)$^{\dagger}$
  & $+$26.5 & $+$110.2 & $+$110.2 & 0.175 & 0.671 & 0.671 & 1.00 \\
$\tau_{\text{int}}$ & 97.0\,(1.2)$^{\dagger}$ & 0.0\,(0.0)$^{\dagger}$ & 0.0\,(0.0)$^{\dagger}$
  & $-$0.7 & $+$103.9 & $+$103.9 & 0.043 & 0.225 & 0.225 & 1.00 \\
\bottomrule
\end{tabular}
\end{adjustbox}
\end{table}

\begin{figure}[t]
  \centering
  \includegraphics[width=\textwidth]{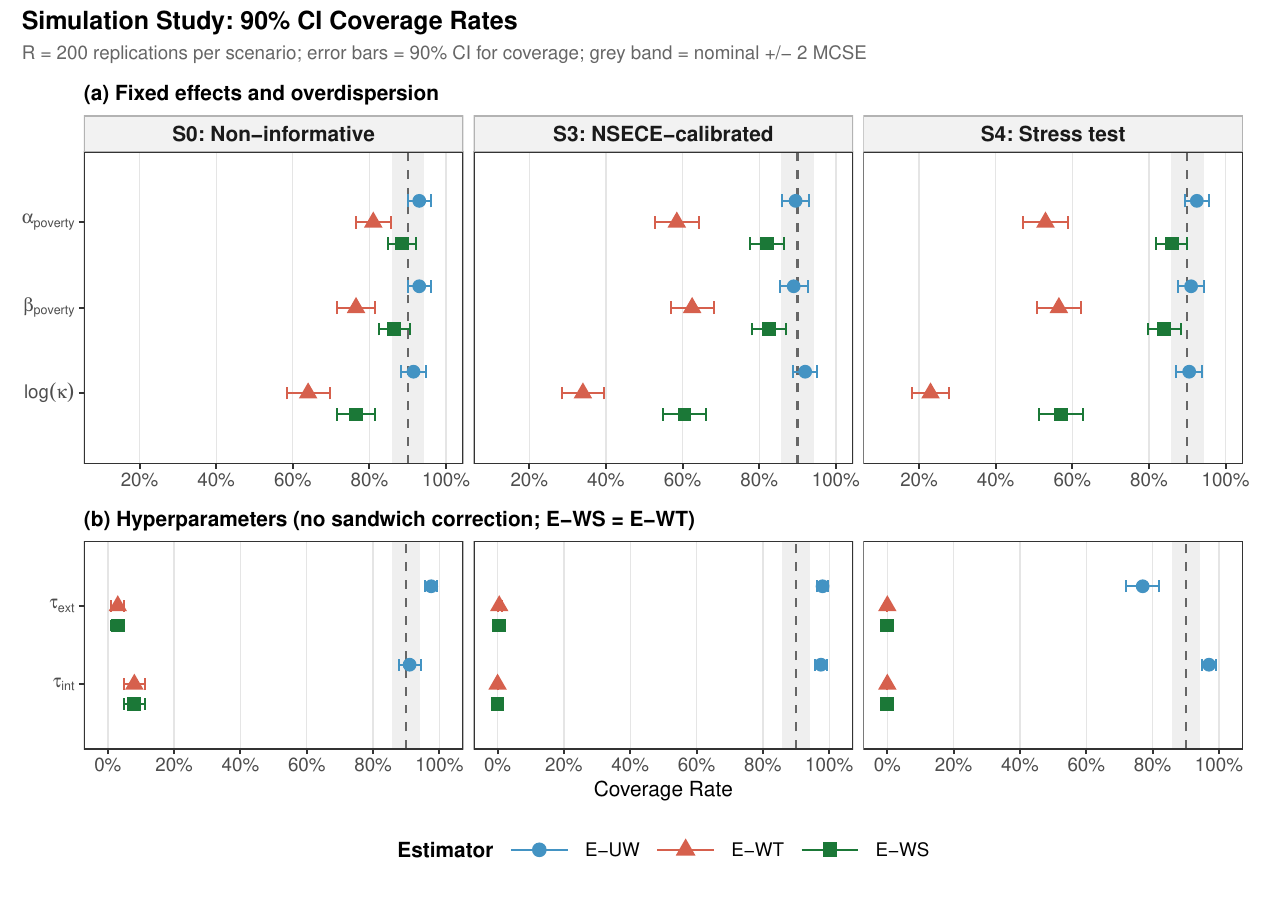}
  \caption{Empirical coverage of 90\% intervals across $R = 200$
    replications. Each panel corresponds to a simulation scenario
    (S0, S3, S4 with increasing Kish $\DEFF$). Points show coverage
    rates for each parameter--estimator combination; the horizontal
    dashed line marks the 90\% nominal level; the shaded band indicates
    $\pm\,2$ MCSE ($\approx 4.2$ percentage points). E-UW (unweighted)
    maintains near-nominal coverage throughout. E-WT (weighted, naive)
    collapses as $\DEFF$ increases. E-WS (sandwich-corrected) provides
    meaningful recovery for fixed effects but cannot correct the
    hyperparameters $\tau_{\mathrm{ext}}$ and
    $\tau_{\mathrm{int}}$.}\label{fig:sim-coverage}
\end{figure}

\paragraph{Practical recommendations.}
The simulation results support a three-part reporting strategy for
survey-weighted Bayesian inference with sandwich-corrected
pseudo-posteriors. First, for population-average fixed effects, report
sandwich-corrected Wald confidence intervals as the primary inferential
tool, acknowledging that finite-sample coverage may fall 2--8 percentage
points below nominal when the Kish $\DEFF$ exceeds 3. Second, for
hierarchical variance components, report unweighted posterior intervals,
since the sandwich correction is structurally inapplicable and the
weighted pseudo-posterior produces systematically biased point estimates.
Third, always compute and report the $\DER$ (\cref{eq:der}) for each
parameter of interest: this scalar summary communicates the magnitude of
the survey design effect in units directly comparable to the classical
$\DEFF$ and allows readers to gauge how much the sandwich correction has
widened each interval. The empirical $\DER$ values
in~\cref{tab:fixed-effects} (range 1.14--4.18) and the simulation width
ratios in \cref{tab:sim-results} (range 1.22--2.06) are mutually
consistent and provide complementary evidence on the design effect
magnitude.

These recommendations apply broadly to any hierarchical
Bayesian model estimated with survey-weighted pseudo-likelihoods, not
only to the hurdle beta-binomial specification: the structural
inapplicability of sandwich correction to variance components is a
consequence of the score-based construction of $\mathbf{V}_{\mathrm{sand}}$
(\cref{eq:V-sand}) and holds for any hierarchical model in which
hyperparameters lack individual-level score contributions. The
two-track reporting strategy adopted in~\cref{sec:application}---Wald
(sandwich) intervals for fixed effects and model-based posterior intervals for
hyperparameters---implements these recommendations directly; the latter track
provides valid inference only to the extent that the hierarchical model is
correctly specified.
Supplementary figures showing relative bias distributions, width ratio
boxplots, and $\DER$ summaries across all scenarios and parameters
appear in Supplementary Material~E, which also provides a frequentist
contextualization (\cref{smf:frequentist}) demonstrating that the
sandwich-corrected pseudo-posterior fixed effects are design-consistent
with conventional \texttt{svyglm} estimates while accommodating
hierarchical and distributional features that the frequentist approach
cannot.

With the operating envelope established, we now implement the two-track
reporting strategy in the NSECE application: sandwich-corrected Wald
intervals for population-average fixed effects and unweighted posterior
intervals for state-level random effects and variance components.

%% file: section5.tex

\section{Application to NSECE Childcare Data}\label{sec:application}

We now apply the full analytical pipeline developed in \cref{sec:model} to
the 2019 NSECE data described in \cref{sec:data}, demonstrating each
methodological component in a realistic survey setting. The simulation
evidence in \cref{sec:simulation} established that the sandwich correction
substantially improves frequentist coverage and that the DER diagnostic reliably
identifies parameters affected by informative sampling; the present
application illustrates these tools in practice. The analysis proceeds in
five parts: model comparison via LOO-CV (\cref{sec:modelcomp}),
population-average inference with the two-track reporting strategy and DER
diagnostic (\cref{sec:popavg}), state-specific heterogeneity and
cross-margin independence (\cref{sec:statevar}), marginal effect
decomposition (\cref{sec:ame}), and a brief examination of policy
moderation (\cref{sec:policyresults}). Together, these subsections showcase
the three methodological contributions: the two-track reporting strategy
that resolves prior domination, the DER diagnostic that quantifies
parameter-specific design effects, and the marginal effect decomposition
that translates hurdle model coefficients into policy-relevant quantities.


\subsection{Model comparison and selection}\label{sec:modelcomp}

We fit the five nested models summarized in~\cref{tab:model-hierarchy}---from
the pooled M0 to the full M3b with policy moderators---using the estimation
pipeline described in~\cref{alg:estimation}, implemented in Stan
\citep{CarpenterEtAl2017}. Each model is estimated with 4
chains of 3,500 iterations (1,500 warmup + 2,000 sampling), yielding 8,000
post-warmup draws per parameter.

\paragraph{Convergence.}
All five models satisfy standard diagnostics~\citep{Vehtari2021}:
$\hat{R} < 1.01$ for every monitored parameter, minimum bulk effective
sample size exceeds 400, and zero divergent transitions were recorded.
\cref{tab:convergence-summary} summarizes these statistics; the
non-centered parameterization~\citep{BetancourtGirolami2015} successfully
eliminates the funnel geometry. Full parameter-by-parameter diagnostics
appear in Supplementary Material~E.

\begin{table}[t]
\centering
\caption{Convergence diagnostics across all models. $\hat{R}_{\max}$ is the
  maximum potential scale reduction factor across all monitored parameters.
  $\ESS_{\min}$ reports the minimum bulk and tail effective sample sizes.
  All models achieve the recommended thresholds ($\hat{R} < 1.01$,
  $\ESS > 400$, zero divergences).}\label{tab:convergence-summary}
\smallskip
\begin{tabular}{@{}lccccc@{}}
\toprule
Model & $\hat{R}_{\max}$ & $\ESS_{\min}^{\mathrm{bulk}}$
  & $\ESS_{\min}^{\mathrm{tail}}$ & Divergences \\
\midrule
M0: Pooled         & 1.001 & 3{,}390 & 2{,}679 & 0 \\
M1: Random int.\   & 1.004 &    815  & 1{,}415 & 0 \\
M2: Block-diag.\ SVC & 1.004 & 1{,}078 & 1{,}244 & 0 \\
M3a: Cross-margin  & 1.004 & 1{,}385 & 1{,}257 & 0 \\
M3b: Policy mod.\  & 1.004 & 1{,}245 & 1{,}463 & 0 \\
\bottomrule
\end{tabular}
\end{table}

\paragraph{LOO-CV comparison.}
We compare models using Pareto-smoothed importance-sampling leave-one-out
cross-validation~\citep[\LOO-CV;][]{VehtariEtAl2017}, where ``leave-one-out''
refers to individual providers, not entire states.
The pointwise log-likelihood for each provider~$i$ includes both hurdle
components: the extensive-margin (Bernoulli) log-likelihood always
contributes, and the intensive-margin (zero-truncated beta-binomial)
log-likelihood contributes only for participants ($z_i = 1$).
\cref{tab:loo} presents the expected log pointwise predictive density
($\widehat{\elpd}_{\LOO}$) for each model.
For the survey-weighted specification M3b-W, the LOO computation uses the
unweighted nominal log-likelihood, because the pseudo-log-likelihood is not
a proper predictive density; the comparisons thus rank models on their
nominal predictive ability.
Because LOO-CV operates at the provider level while many inferential
targets are state-level parameters, the LOO ranking should be interpreted
as a guide to observation-level fit rather than a direct assessment of
state-level inference quality.
The progression reveals
diminishing returns: adding state random intercepts (M0$\to$M1) yields a
decisive improvement ($\Delta\widehat{\elpd} = +317.8$, SE $= 25.9$,
exceeding 12 standard errors), and allowing state-varying coefficients
(M1$\to$M2) provides a further significant gain ($+23.6$ incrementally).
Introducing the cross-margin covariance (M2$\to$M3a) does not improve
predictive performance ($\Delta\widehat{\elpd} < 2$), consistent with the
near-zero empirical cross-margin correlation and with
\cref{prop:sigma12-nonid}: since $\bSigma_{12}$ receives no likelihood
information, its value lies in inference rather than prediction. Adding
policy moderators (M3a$\to$M3b) likewise produces no predictive gain; the
$\bGamma$ matrices redistribute the sources of state heterogeneity without
changing the provider-level predictive distribution.

\begin{table}[t]
\centering
\caption{Leave-one-out cross-validation comparison for models M0--M3b.
  $\widehat{\elpd}_{\LOO}$ is the expected log pointwise predictive density
  estimated via Pareto-smoothed importance sampling
  \citep{VehtariEtAl2017}; $\Delta\widehat{\elpd} = \widehat{\elpd}_{\mathrm{M0}} - \widehat{\elpd}_{\mathrm{M}_j}$ measures the
  predictive gain over M0 (more negative indicates better prediction); SE($\Delta$) is the standard
  error of the difference. A model improvement is flagged as significant
  when $|\Delta\widehat{\elpd}| > 2 \times \mathrm{SE}(\Delta)$.
  All Pareto-$k$ diagnostics are below 0.7.}\label{tab:loo}
\smallskip
\begin{tabular}{@{}lrcrrrc@{}}
\toprule
Model & Params & $\widehat{\elpd}_{\LOO}$ & SE &
  $\Delta\widehat{\elpd}$ & SE($\Delta$) & Sig.\ \\
\midrule
M0: Pooled          &  11 & $-20{,}945.0$ & 140.3 &
  ---   & ---  & ---    \\
M1: Random int.\    & 116 & $-20{,}627.2$ & 144.0 &
  $-317.8$ & 25.9 & Yes \\
M2: Block-diag.\ SVC & 551 & $-20{,}603.5$ & 144.4 &
  $-341.4$ & 27.6 & Yes \\
M3a: Cross-margin   & 576 & $-20{,}601.6$ & 144.4 &
  $-343.4$ & 27.6 & No  \\
M3b: Policy mod.\   & 616 & $-20{,}607.6$ & 144.6 &
  $-337.3$ & 28.5 & No  \\
\bottomrule
\end{tabular}
\end{table}

\paragraph{Model selection rationale.}
We adopt M3b as the reporting model despite its equivalent predictive
performance to M2 and M3a. The cross-margin covariance $\bSigma_{12}$
enables joint posterior statements about the reversal (e.g.,
$\Pr(\tilde{\alpha}_{\mathrm{pov},s} < 0$ and
$\tilde{\beta}_{\mathrm{pov},s} > 0 \mid \mathrm{data})$), and the policy
moderators provide the decomposition of state heterogeneity into
policy-explained and residual components (\cref{prop:poverty-expansion}).
The non-significant LOO-CV differences confirm that this additional
structure incurs no predictive cost. Posterior predictive checks confirm
that M3b reproduces both the observed zero rate (35.3\%) and the
distribution of IT shares among servers; details appear in Supplementary
Material~D.


\subsection{Population-average effects and the sandwich correction}\label{sec:popavg}

\cref{tab:fixed-effects} reports the posterior means, naive 95\% credible
intervals (from the raw MCMC output), sandwich-corrected 95\% Wald confidence
intervals, and design effect ratios (DER) for all 11 fixed-effect parameters.
Of the 11 parameters, 9 are statistically significant at the 95\% level
under the sandwich correction; the two exceptions are
$\alpha_{\mathrm{Hisp}}$ (extensive-margin Hispanic coefficient, Wald CI
$[-0.095, 0.106]$) and $\beta_{\mathrm{Black}}$ (intensive-margin Black
coefficient, Wald CI $[-0.052, 0.019]$).

\begin{table}[t]
\centering
\caption{Population-average fixed effect estimates from weighted M3b.
  ``Post.\ Mean'' is the posterior mean from the pseudo-posterior. ``Naive
  95\% CI'' is the 2.5th--97.5th percentile interval from the uncorrected
  MCMC posterior. ``Wald 95\% CI'' uses the sandwich-corrected variance
  (\cref{thm:cholesky}). ``DER'' is the design effect ratio
  (\cref{eq:der}). Asterisks indicate significance at the 5\% level
  under the Wald test. All covariates are standardized.}\label{tab:fixed-effects}
\smallskip
\small
\begin{adjustbox}{max width=\textwidth}
\begin{tabular}{@{}lrll r@{}}
\toprule
Parameter & Post.\ Mean & Naive 95\% CI & Wald 95\% CI & DER \\
\midrule
\multicolumn{5}{@{}l}{\textit{Extensive margin ($\balpha$)}} \\[2pt]
Intercept           & $0.764$  & $[-1.109,\; 2.628]$ & $[0.677,\; 0.851]^{*}$  & 2.04 \\
Poverty             & $-0.324$ & $[-2.205,\; 1.497]$ & $[-0.426,\; -0.221]^{*}$ & 2.41 \\
Urban               & $0.442$  & $[-1.387,\; 2.224]$ & $[0.375,\; 0.509]^{*}$  & 4.18 \\
Black               & $0.478$  & $[-1.459,\; 2.405]$ & $[0.392,\; 0.565]^{*}$  & 1.54 \\
Hispanic            & $0.006$  & $[-1.902,\; 1.903]$ & $[-0.095,\; 0.106]$     & 1.82 \\[4pt]
\multicolumn{5}{@{}l}{\textit{Intensive margin ($\bbeta$)}} \\[2pt]
Intercept           & $-0.242$ & $[-2.058,\; 1.587]$ & $[-0.272,\; -0.213]^{*}$ & 1.36 \\
Poverty             & $0.090$  & $[-1.713,\; 1.904]$ & $[0.053,\; 0.127]^{*}$  & 1.61 \\
Urban               & $-0.047$ & $[-1.858,\; 1.768]$ & $[-0.087,\; -0.007]^{*}$ & 3.18 \\
Black               & $-0.017$ & $[-1.851,\; 1.760]$ & $[-0.052,\; 0.019]$     & 1.38 \\
Hispanic            & $-0.097$ & $[-1.974,\; 1.716]$ & $[-0.130,\; -0.064]^{*}$ & 1.14 \\[4pt]
\multicolumn{5}{@{}l}{\textit{Overdispersion}} \\[2pt]
$\log\kappa$        & $1.919$  & $[1.866,\; 1.972]$ & $[1.861,\; 1.976]^{*}$   & 2.58 \\
\bottomrule
\end{tabular}
\end{adjustbox}
\end{table}

\paragraph{Impact of survey weighting.}
\cref{tab:m3b-comparison} places the weighted (M3b-W) and unweighted
(M3b) posterior means side by side.  Weighting shifts every
fixed-effect estimate, with a median displacement of 5.2 sandwich
standard errors, yet preserves all substantively important sign
patterns---most critically, the poverty reversal
($\alpha_{\mathrm{pov}} < 0$, $\beta_{\mathrm{pov}} > 0$) is present in
both specifications.  Three coefficients change sign
($\alpha_{\mathrm{Hispanic}}$, $\beta_{\mathrm{Black}}$,
$\beta_{\mathrm{Hispanic}}$), but the two intensive-margin sign changes
involve parameters whose Wald confidence intervals include zero, so the
sign itself is not statistically determined.  Overall, the comparison
confirms that the NSECE sampling design materially affects
point estimates---justifying the pseudo-posterior approach---while
leaving the central empirical findings intact.

\input{Figures/T_m3b_comparison.tex}

\paragraph{The poverty reversal.}
The reversal is the central empirical finding. The extensive-margin
poverty coefficient is $\hat{\alpha}_{\mathrm{pov}} = -0.324$ (Wald 95\% CI:
$[-0.426, -0.221]$; $\DER = 2.41$): a one-standard-deviation increase in
community poverty (approximately 8.3 percentage points) reduces the log-odds
of serving infants by 0.324. The intensive-margin poverty coefficient is
$\hat{\beta}_{\mathrm{pov}} = +0.090$ (Wald 95\% CI:
$[+0.053, +0.127]$; $\DER = 1.61$): among centers that clear the hurdle,
the same poverty increase raises the logit IT share by 0.090. Both intervals
exclude zero, the signs are firmly opposed, and the posterior probability
that the reversal holds at the population average is
$\Pr(\alpha_{\mathrm{pov}} < 0 \text{ and } \beta_{\mathrm{pov}} > 0
\mid \text{data}) = 1.000$,
computed from the sandwich-corrected posterior draws
(the binding margin, $\beta_{\mathrm{pov}}$, lies $4.8$
sandwich standard deviations from zero; a bivariate normal
approximation gives a complementary probability of order~$10^{-6}$).

\paragraph{Naive versus Wald intervals: resolving prior domination.}
The contrast between the naive and Wald intervals in
\cref{tab:fixed-effects} provides a clear empirical demonstration of
the prior domination phenomenon. The naive
credible intervals for $\balpha$ and $\bbeta$ span approximately 3.7 units
on the logit scale---wide enough to include both positive and negative values
for every parameter, rendering significance testing impossible. The Wald
intervals, by contrast, are 0.06--0.21 units wide and permit clear
inferences.

This discrepancy arises because the naive posterior of the population-average
coefficients is dominated by the prior rather than the data. In the
hierarchical model, the state random effects $\bdelta_{k,s}$ absorb nearly
all data information about the state-specific coefficients
$\tilde{\alpha}_{k,s} = \alpha_k + \delta_{k,s}$
\citep[cf.][]{RabeHeskethSkrondal2006}; the population-average
$\alpha_k$ is identified only through the random effects distribution, with
an effective sample size of $S = 51$ (the number of states) rather than
$N = 6{,}785$ (the number of providers). Consequently, the marginal
posterior variance $[\bSigma_{\mathrm{MCMC}}]_{pp}$ remains close to the
prior variance rather than converging to the data-informed precision
$[\mathbf{H}_{\mathrm{obs}}^{-1}]_{pp}$. The hierarchical variance
ratio---the ratio of MCMC posterior variance to observed-information
variance---exceeds 600 for all fixed effects (see the computational remark
in~\cref{sec:survey}; a full diagnostic table appears in~\cref{smd:tab-der}).

The sandwich-corrected Wald intervals bypass this hierarchical inflation entirely.
The Wald standard error $\sqrt{[\mathbf{V}_{\mathrm{sand}}]_{pp}}$ is a
frequentist quantity derived from the observed information and the
cluster-robust outer product (\cref{eq:V-sand})
\citep{Binder1983, WilliamsSavitsky2021}: it measures the sampling
variability of $\hat{\btheta}$ across hypothetical replications of the
survey, not the width of the Bayesian posterior. For fixed effects in
hierarchical survey models, this frequentist precision is the appropriate
basis for population-level inference. The naive posterior intervals remain
useful for hyperparameters (where the sandwich is not applicable) and for
state random effects (where the prior appropriately regularizes small-sample
estimates).

\paragraph{Design effect ratios.}
The DER column in~\cref{tab:fixed-effects} reveals a substantively
interpretable pattern. The largest DER is 4.18 for
$\alpha_{\mathrm{Urban}}$: this coefficient is most severely inflated by
the survey design because the NSECE's stratification scheme
\citep{NSECEProjectTeam2022} is explicitly
linked to urbanicity---urban and rural areas are sampled at different rates.
The second-largest DER is 3.18 for $\beta_{\mathrm{Urban}}$, confirming
that the design effect is feature-specific rather than margin-specific. At
the other extreme, the smallest DER is 1.14 for $\beta_{\mathrm{Hisp}}$,
indicating that the survey design has minimal impact on this coefficient.
The overall mean DER of 2.11 is broadly consistent with the Kish $\DEFF$
of 3.76. \cref{fig:sandwich-impact} displays the naive and sandwich
intervals side by side for all 11 parameters, together with the DER
bar chart.

\begin{figure}[t]
  \centering
  \includegraphics[width=\textwidth]{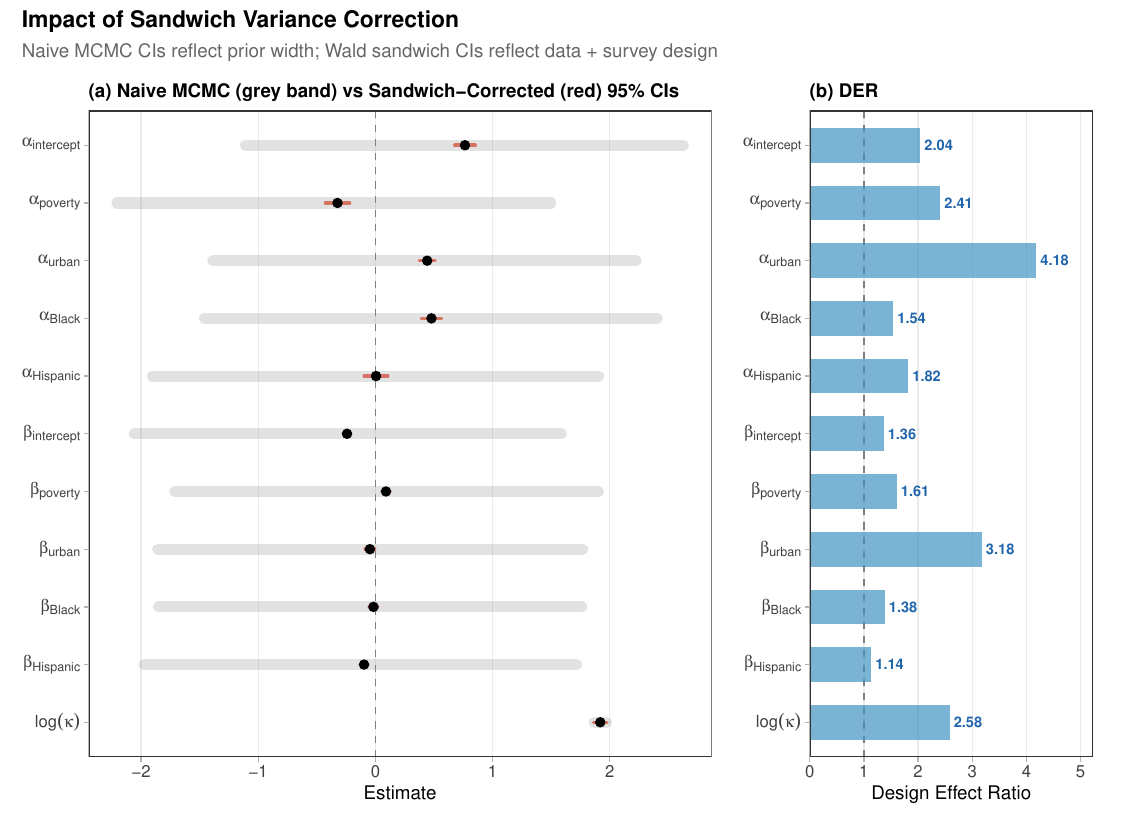}
  \caption{Impact of the sandwich variance correction on fixed-effect
    inference. \emph{Left panel:} naive MCMC 95\% credible intervals (gray)
    versus sandwich-corrected Wald 95\% confidence intervals (colored) for
    all 11 fixed effects. The naive intervals reflect prior width; the Wald
    intervals reflect data information adjusted for the survey design.
    \emph{Right panel:} design effect ratio ($\DER$) for each parameter. A
    $\DER$ of 1 indicates no survey design effect; the observed range is
    1.14--4.18 (mean 2.11).}\label{fig:sandwich-impact}
\end{figure}

\paragraph{Overdispersion.}
The estimated concentration parameter is $\hat{\kappa} = \exp(1.919)
= 6.81$ (Wald CI for $\log\kappa$: $[1.861, 1.976]$), confirming the
substantial overdispersion previewed in~\cref{sec:data}. With a typical
enrollment of $n_i \approx 50$, the variance inflation factor
$(n + \kappa)/(1 + \kappa) \approx 57/8 \approx 7$ is roughly sevenfold
the binomial baseline. Notably, $\log\kappa$ is the only parameter for
which the naive and Wald intervals are comparable in width (0.106 vs.\
0.115), because $\kappa$ has no corresponding state random effect---it is a
single global parameter informed directly by all $N_{\mathrm{pos}} = 4{,}392$
positive observations, so the standard Bernstein--von Mises approximation
holds and prior domination does not arise.


\subsection{State-specific heterogeneity and cross-margin independence}\label{sec:statevar}

The population-average coefficients describe the national-level effects;
the state-varying coefficients reveal where and how those effects change
across jurisdictions.

\paragraph{Cross-margin scatter and independence.}
\cref{fig:cross-margin} plots the posterior means
$(\tilde{\alpha}_{\mathrm{pov},s},\;
\tilde{\beta}_{\mathrm{pov},s})$ across all 51 states. Under the
hierarchical M2 specification, 48 of 51 states (94\%) fall in the
upper-left quadrant ($\alpha < 0$, $\beta > 0$), compared with 23
(45\%) in the preliminary OLS analysis (\cref{sec:features})---a
dramatic increase driven by hierarchical shrinkage pulling
small-state estimates toward the negative population-average
extensive-margin coefficient. The posterior cross-margin correlation
for the poverty coefficient is
$\hat{\varrho}_{\mathrm{pov}}^{\mathrm{cross}} \approx 0.021$,
indistinguishable from zero, implying that the extensive and intensive
margins of the poverty effect operate through largely independent
mechanisms---a structural finding consistent with the block-diagonal
identification result of \cref{prop:sigma12-nonid}. Full state-specific coefficient estimates appear in
Supplementary Material~D (\cref{smd:states}); the reversal
probability map is provided in \cref{smd:reversal-map}.

\begin{figure}[t]
  \centering
  \includegraphics[width=0.85\textwidth]{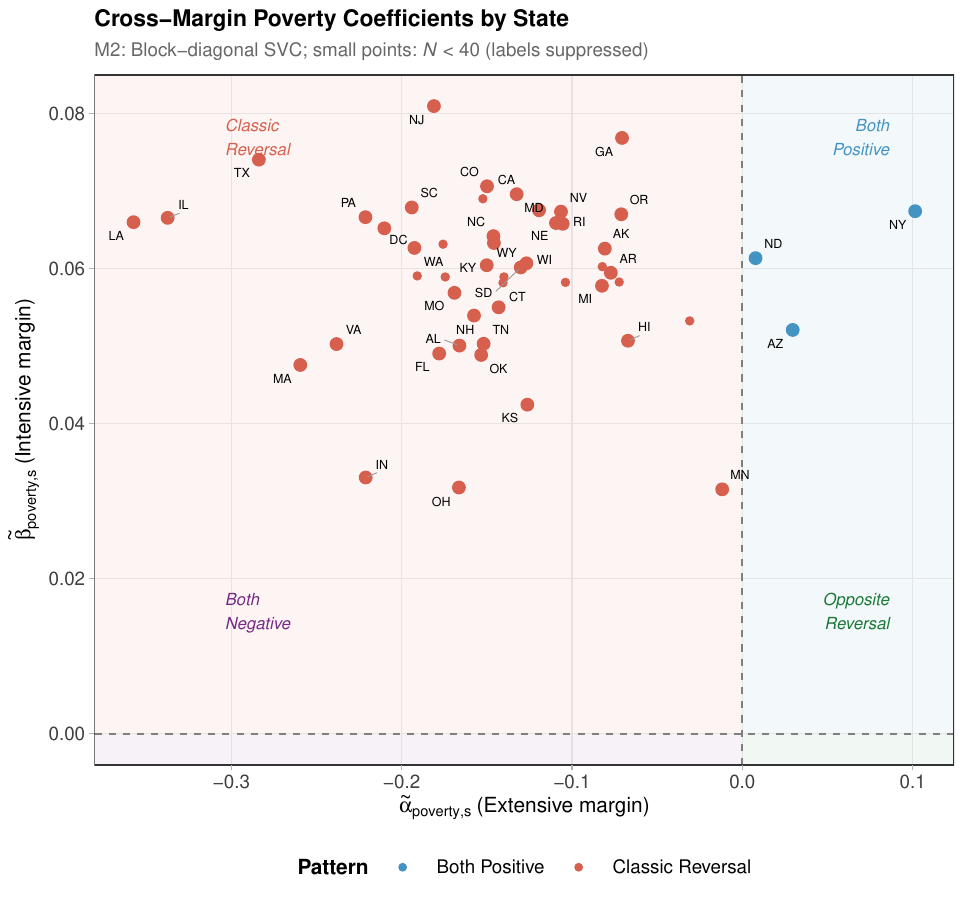}
  \caption{Cross-margin scatter plot of state-specific poverty coefficients
    from the unweighted M2 model (block-diagonal SVC).
    Each point represents the posterior mean
    $(\tilde{\alpha}_{\mathrm{pov},s},\;\tilde{\beta}_{\mathrm{pov},s})$ for
    one state.
    Small points without labels denote states with $N < 40$.  The four
    quadrants correspond to different poverty--enrollment patterns: the
    upper-left quadrant (shaded) contains 48 of 51 states (94\%),
    indicating a near-universal reversal pattern under hierarchical shrinkage.
    The dashed lines indicate the population-average coefficients
    $(\alpha_{\mathrm{pov}},\;\beta_{\mathrm{pov}})$. The near-zero
    cross-margin correlation ($\hat{\varrho} \approx 0.021$) implies that the
    two margins respond to poverty through largely independent
    mechanisms.}\label{fig:cross-margin}
\end{figure}


\subsection{Marginal effect decomposition}\label{sec:ame}

This subsection demonstrates the marginal effect decomposition---an
analytical framework enabled by the two-part structure---by translating
logit-scale coefficients into probability-scale quantities, building on
the log-scale decomposition in~\cref{prop:lae-lie}.  The decomposition
separates each covariate's total effect into extensive (access) and
intensive (share) components, a partition available only under the
two-part specification.  \Cref{tab:ame-decomp} reports the results.

\paragraph{From log to probability scale.}
\Cref{prop:lae-lie} decomposes the marginal effect on the
log-expected-enrollment scale, where the two channels are exactly
additive.  To translate to the probability scale, apply the product
rule to $E[Y_i/n_i] = q_i\,h_i$:
\begin{equation}\label{eq:ame-bridge}
  \frac{\partial\,(q_i\,h_i)}{\partial x_{i,k}}
  = \underbrace{h_i\,q_i(1-q_i)\,\tilde{\alpha}_{k,s}}_{\text{extensive}}
  \;+\;
  \underbrace{q_i\,h_i\,\varepsilon_{h,i}\,(1-\mu_i)\,
             \tilde{\beta}_{k,s}}_{\text{intensive}}
  \;=\; q_i\,h_i\bigl(\LAE_k + \LIE_k\bigr).
\end{equation}
The $\AME$ entries in~\cref{tab:ame-decomp} average each component
over providers and Cholesky-calibrated posterior
draws~$\btheta^{*(m)}$ (\cref{thm:cholesky}), mapping the log-scale
decomposition of~\cref{prop:lae-lie} onto the percentage-point scale.

\paragraph{Poverty.}
On the extensive margin, the $\AME$ of poverty is
$-0.029$ (95\% CI: $[-0.038, -0.020]$): a one-SD increase
in community poverty (${\approx}\,8.3$ percentage points)
reduces the probability of serving any infants by 2.9
percentage points.  On the intensive margin, $\AME = +0.015$
(95\% CI: $[+0.009, +0.021]$): among centers that do serve
infants, the same poverty increase raises the IT share by
1.5 percentage points.  The total $\AME$ is $-0.014$
(95\% CI: $[-0.027, -0.001]$), with the extensive margin
accounting for 66.0\% of the total absolute effect: the access
barrier dominates, but the intensity response offsets roughly
half of it.  This decomposition---unavailable under any
single-equation specification---illustrates the analytical
value of the two-part structure (\cref{prop:lae-lie}): the
hurdle identifies the access barrier as the dominant channel,
a finding that a single-equation model would collapse into a
muted net effect of $-1.4$~percentage points.

\paragraph{Other covariates.}
A notable pattern in~\cref{tab:ame-decomp} is that all four covariates
exhibit opposing extensive and intensive effects---the reversal pattern is
not unique to poverty. Urbanicity increases access ($+3.9$ pp) but decreases
intensity ($-0.8$ pp), with the extensive margin accounting for 83.6\% of
the total. The Black population share increases access ($+4.3$ pp) with a
negligible intensity effect (Ext.\ Share $= 93.9\%$). Hispanic composition
stands out: the extensive $\AME$ is effectively zero ($+0.001$), so the
entire effect operates through the intensive margin, where higher Hispanic
composition is associated with a lower IT share ($-1.6$ pp). This
intensive-margin-dominated pattern contrasts sharply with the
extensive-margin-dominated patterns for the other covariates. In every case,
a single-equation model would report only the net total effect, concealing
the offsetting channels---precisely the decomposition that the hurdle
framework and the $\AME$ formulas in \cref{prop:lae-lie} are designed to
reveal.

\begin{table}[t]
\centering
\caption{Average marginal effect ($\AME$) decomposition from model M3b.
  Effects are on the probability scale (change per one-SD increase in the
  covariate). ``Ext.\ Share'' is the percentage of the total absolute
  effect attributable to the extensive (access) margin, computed as
  $|\text{Extensive}|/(|\text{Extensive}| + |\text{Intensive}|)$.
  All four covariates
  exhibit opposing extensive and intensive effects.  The 95\% intervals
  are 2.5th and 97.5th posterior percentiles of each $\AME$ component
  evaluated at the Cholesky-calibrated
  draws~$\btheta^{*(m)}$~(\cref{thm:cholesky}).}\label{tab:ame-decomp}
\smallskip
\begin{tabular}{@{}lcccc@{}}
\toprule
Covariate & Extensive & Intensive & Total & Ext.\ Share \\
\midrule
Poverty & $-0.029$ & $+0.015$ & $-0.014$ & 66.0\% \\
 & $[-0.038,\,-0.020]$ & $[+0.009,\,+0.021]$ & $[-0.027,\,-0.001]$ & \\[3pt]
Urban & $+0.039$ & $-0.008$ & $+0.032$ & 83.6\% \\
 & $[+0.034,\,+0.045]$ & $[-0.014,\,-0.001]$ & $[+0.022,\,+0.041]$ & \\[3pt]
Black & $+0.043$ & $-0.003$ & $+0.040$ & 93.9\% \\
 & $[+0.035,\,+0.051]$ & $[-0.009,\,+0.003]$ & $[+0.029,\,+0.052]$ & \\[3pt]
Hispanic & $+0.001$ & $-0.016$ & $-0.016$ & 2.8\% \\
 & $[-0.009,\,+0.010]$ & $[-0.021,\,-0.010]$ & $[-0.027,\,-0.004]$ & \\
\bottomrule
\end{tabular}
\end{table}


\subsection{Policy moderation}\label{sec:policyresults}

The reversal rate and the geographic clustering documented in
\cref{sec:statevar} raise a natural question: can observable CCDF policies
explain why the reversal is strong in some states and absent in
others?  The $\bGamma_k$ matrices in M3b decompose the state-level
variation in covariate effects into policy-explained and residual
components (\cref{prop:poverty-expansion}).  The full $40$-element
$\bGamma$ matrices appear in Supplementary Material~D
(\cref{smd:gamma}); here we summarize three key patterns.

First, tiered reimbursement attenuates the extensive-margin poverty
barrier ($\hat{\gamma}_{1,\mathrm{pov},\mathrm{TR}} = +0.495$,
$\Pr(>0) = 0.962$) and raises baseline intensive-margin enrollment
($\hat{\gamma}_{2,\mathrm{int},\mathrm{TR}} = +0.271$,
$\Pr(>0) = 0.985$).  Second, IT rate add-ons raise baseline
participation ($\hat{\gamma}_{1,\mathrm{int},\mathrm{IT}} = +0.728$,
$\Pr(>0) = 0.996$) but paradoxically \emph{strengthen} the poverty
gradient on the extensive margin
($\hat{\gamma}_{1,\mathrm{pov},\mathrm{IT}} = -0.488$,
$\Pr(<0) = 0.982$), suggesting that the add-on primarily benefits
moderate-poverty areas.  Third, IT add-ons are most effective at closing
demographic gaps in infant enrollment, particularly for centers in Black
communities on the intensive margin
($\hat{\gamma}_{2,\mathrm{Black},\mathrm{IT}} = +0.256$,
$\Pr(>0) = 1.000$).

The three observable policy variables explain between 4\% and 16\% of
the between-state variation in poverty coefficients.  These associations
are cross-sectional and do not support causal claims; nonetheless, the
$\bGamma$ decomposition illustrates how the hierarchical structure in
\cref{sec:model} partitions state heterogeneity into policy-attributable
and residual components.

%% file: Figures/T_m3b_comparison.tex
\begin{table}[t]
\centering
\caption{Comparison of unweighted (M3b) and survey-weighted (M3b-W) fixed-effect
  posterior means. ``Shift'' is M3b-W minus M3b. ``Shift/SE'' normalizes the shift
  by the sandwich standard error, providing a scale-free measure of the weighting
  impact. ``Wald 95\% CI'' is the sandwich-corrected interval for M3b-W (\cref{thm:cholesky}).
  Asterisks indicate significance at the 5\% level. All covariates are standardized.}
\label{tab:m3b-comparison}
\smallskip
\small
\begin{adjustbox}{max width=\textwidth}
\begin{tabular}{@{}l rr rr l@{}}
\toprule
Parameter & M3b & M3b-W & Shift & Shift/SE & Wald 95\% CI \\
\midrule
\multicolumn{6}{@{}l}{\textit{Extensive margin ($\balpha$)}} \\[2pt]
Intercept            & $+0.558$ & $+0.764$ & $+0.206$ & $+4.64$ & $[0.677,\; 0.851]^{*}$ \\
Poverty              & $-0.220$ & $-0.324$ & $-0.103$ & $-1.97$ & $[-0.426,\; -0.221]^{*}$ \\
Urban                & $+0.264$ & $+0.442$ & $+0.178$ & $+5.19$ & $[0.375,\; 0.509]^{*}$ \\
Black                & $+0.248$ & $+0.478$ & $+0.230$ & $+5.20$ & $[0.392,\; 0.565]^{*}$ \\
Hispanic             & $-0.146$ & $+0.006$ & $+0.151$ & $+2.94$ & $[-0.095,\; 0.106]$ \\[4pt]
\multicolumn{6}{@{}l}{\textit{Intensive margin ($\bbeta$)}} \\[2pt]
Intercept            & $-0.096$ & $-0.242$ & $-0.146$ & $-9.65$ & $[-0.272,\; -0.213]^{*}$ \\
Poverty              & $+0.040$ & $+0.090$ & $+0.051$ & $+2.68$ & $[0.053,\; 0.127]^{*}$ \\
Urban                & $-0.055$ & $-0.047$ & $+0.008$ & $+0.39$ & $[-0.087,\; -0.007]^{*}$ \\
Black                & $+0.102$ & $-0.017$ & $-0.119$ & $-6.51$ & $[-0.052,\; 0.019]$ \\
Hispanic             & $+0.060$ & $-0.097$ & $-0.157$ & $-9.35$ & $[-0.130,\; -0.064]^{*}$ \\[4pt]
\multicolumn{6}{@{}l}{\textit{Overdispersion}} \\[2pt]
$\log\kappa$         & $+1.661$ & $+1.919$ & $+0.257$ & $+8.80$ & $[1.861,\; 1.976]^{*}$ \\
\bottomrule
\end{tabular}
\end{adjustbox}
\end{table}

%% file: section6.tex

\section{Discussion}\label{sec:discussion}


\paragraph{Summary.}
This paper develops a hierarchical hurdle beta-binomial (HBB) framework
for survey-weighted bounded counts with structural zeros, extending the
pseudo-posterior methodology of~\citet{WilliamsSavitsky2021} to two-part
hierarchical models.  The two primary methodological
contributions---cross-margin covariance identification
(\cref{prop:sigma12-nonid}) and $\DER$-guided sandwich correction
(\cref{thm:cholesky})---are discussed below alongside generalizability,
practical recommendations, and limitations.


\paragraph{Generalizability.}
The HBB framework developed in~\cref{sec:model} applies whenever the
outcome is a bounded discrete proportion---a count
$Y_i \in \{0, 1, \ldots, n_i\}$ with a known denominator---embedded
in a hierarchical structure with structural zeros.  Five domains
illustrate the breadth of applicability beyond childcare enrollment.
(i)~In \emph{dental epidemiology}, $Y_i$ is the number of carious
surfaces out of $n_i$ surfaces examined, with structural zeros for
caries-free individuals whose zero counts reflect disease absence
rather than low risk~\citep{BandyopadhyayEtAl2011}.
(ii)~In \emph{species occupancy modeling}, $Y_i$ is the number of
detection events out of $n_i$ replicate surveys at a site, with
structural zeros at sites where the species is truly
absent~\citep{MacKenzieEtAl2002}.
(iii)~In \emph{hospital bed management}, $Y_i$ is the number of beds
occupied out of $n_i$ staffed beds on a ward, with structural zeros
for wards temporarily closed to admissions due to staffing shortages
or infection control measures.
(iv)~In \emph{insurance enrollment}, $Y_i$ is the number of enrollees
out of $n_i$ eligible individuals within a plan--region cell, with
structural zeros for plan--region combinations in which the plan is
not offered.
(v)~In \emph{survey item endorsement}, $Y_i$ is the number of items
endorsed out of $n_i$ items administered on a subscale, with
structural zeros for respondents to whom the construct does not apply
(e.g., substance-use items for lifetime abstainers).
Each domain exhibits the three features that jointly motivate the HBB
specification: bounded support with a known denominator,
extra-binomial variation within the active population, and a mass of
structural zeros generated by a qualitatively distinct mechanism from
the count process.  In every case, the hurdle separates the structural
participation decision from the intensity conditional on
participation, and the beta-binomial kernel accommodates
overdispersion relative to the binomial baseline.

\paragraph{Inference contributions.}
Beyond the modeling framework, this paper makes two specific
contributions to inference.  First,
\cref{prop:sigma12-nonid} establishes that the observation-level
conditional likelihood is block-diagonal across margins, so the
cross-margin covariance $\bSigma_{12}$ (\cref{sec:svc}) is a
hierarchical estimand whose effective sample size is~$S$, not~$N$.
While a frequentist could in principle target $\bSigma_{12}$ via REML
on the marginal likelihood (\cref{rem:cond-vs-marg}), unrestricted estimation of a
$q \times q$ cross-covariance from $S = 51$ groups is fragile when
$2q = 10$; \cref{prop:sigma12-id} shows that the $\LKJ$ prior provides
finite-sample regularization, yielding posterior concentration at rate
$O(S^{-1/2})$.
Second, the sandwich correction pipeline with $\DER$-guided block-wise
application (\cref{thm:cholesky}) extends the pseudo-posterior
framework of~\citet{WilliamsSavitsky2021} to two-part hierarchical
models, nesting the scalar design-effect multiplier
of~\citet{GhosalEtAl2020} as a special case when the $\DER$ is
constant across parameters (\cref{eq:der}).


\paragraph{Practical recommendations.}
The combined evidence from the application (\cref{sec:application})
and simulation (\cref{sec:simulation}) suggests three guidelines for
applied researchers fitting Bayesian hierarchical models to complex
survey data.
First, \emph{test for informativeness before deciding on weighting}.
Following~\citet{Pfeffermann1993}, regress the sampling weight on the
outcome conditional on the full model covariate vector and test whether
the weight coefficient is significant; conditioning on the full
covariate set is essential, because a univariate test conflates
informativeness with confounding~\citep{PfeffermannEtAl1998}. For
two-part models, test each margin separately, as the design can be
non-informative for one margin and informative for the other
(\cref{sec:data}).

Second, \emph{report both naive and sandwich-corrected confidence
intervals alongside the $\DER$}.
The empirical $\DER$ range in this study is 1.14--4.18
(\cref{tab:fixed-effects}), and the simulation width ratios of
1.22--2.06 (\cref{tab:sim-results}) provide independent validation.
The linear interpolation heuristic overestimates the $\DER$ at
intermediate shrinkage levels by up to 50\%; the exact conjugate
formula in Supplementary Material~C (\cref{app:survey}) provides a
more accurate benchmark.

Third, \emph{for hierarchical variance components, report model-based
posterior intervals and conduct prior sensitivity analysis}.
The sandwich correction is structurally inapplicable to
parameters---such as the state-effect standard deviations $\tau$ and
the cross-margin correlation $\rho$---that do not produce
individual-level score contributions (\cref{sec:simresults}), and the
simulation confirms that the width ratio is identically 1.00 for these
parameters. We conduct a sensitivity analysis in
\cref{smd:lkj-sensitivity}, refitting under $\LKJ(\eta)$ with
$\eta \in \{1,\ldots,8\}$: the cross-margin correlation remains near
zero ($\hat{\rho} \in [0.10,\,0.19]$) and all fixed-effect estimates
are stable, confirming that the posterior is data-dominated.
For the 3--5 largest clusters, a sensitivity analysis comparing
block-wise and joint Cholesky correction is also recommended
(\cref{rem:blockwise}; Supplementary Material~C).


\paragraph{Limitations and future directions.}
Several limitations merit discussion.
The weight ratio $w_{\max}/w_{\min} = 462$ exceeds
$N^{1/2} \approx 82.4$ (\cref{sec:survey}), so the sufficient
condition~C3 for the Bernstein--von Mises theorem under informative
sampling is not comfortably satisfied; this motivates the sandwich
correction as a finite-sample robustness measure rather than an
asymptotic guarantee, and underscores the importance of the
simulation study in~\cref{sec:simulation} as an independent
check on coverage properties.
The beta-binomial component conditions on $n_i$ as a known upper
bound, yet total enrollment may itself respond to community
characteristics---a form of outcome endogeneity.  Joint modeling of
$(Y_i, n_i)$ would address this concern but requires a simultaneous
specification for the total-enrollment and IT-share processes, an
extension we leave to future work.
The model assumes a single concentration parameter $\kappa$ shared
across all states; state-varying $\kappa_s$ could capture
heterogeneous overdispersion, but identification of each $\kappa_s$
requires sufficient within-state replication of the denominator
values~$n_i$ (\cref{smb:prop-mu-kappa-id}).
Finally, the sandwich correction delivers design-consistent inference
for fixed effects but does not extend to hierarchical variance
components such as $\bSigma_{12}$; extending design-consistent
inference to these parameters remains an open problem, with
replicate-weight approaches offering a potential but computationally
costly avenue.

\paragraph{Alternative survey approaches.}
The pseudo-posterior framework adopted here is not the only route to
survey-weighted Bayesian inference.  Multilevel pseudo-maximum
likelihood~\citep[MPML;][]{RabeHeskethSkrondal2006} provides a
frequentist counterpart that integrates sampling weights directly into
the marginal likelihood, avoiding the pseudo-likelihood construction.
Fully model-based approaches that include the sampling mechanism as part
of the probability model~\citep{SiEtAl2020} offer theoretical
advantages when the design is informative but require additional
specification of the selection process.  Finally, balanced repeated
replicate (BRR) or jackknife replicate-weight
methods~\citep{RaoWu1988} provide design-based variance estimates
without distributional assumptions, though the computational cost of
refitting Bayesian hierarchical models under each set of replicate
weights is currently prohibitive.  Comparing these alternatives with
the pseudo-posterior sandwich approach is an important direction for
future work.

\paragraph{Conclusion.}
This paper introduces a Bayesian hierarchical hurdle beta-binomial
framework for survey-weighted bounded counts with structural zeros.
Applied to the 2019 NSECE, the model reveals a poverty reversal---community
poverty reduces access to infant and toddler care but increases
enrollment intensity among participating centers---that is
near-universal across states under hierarchical shrinkage (48 of 51)
and whose extensive margin accounts for approximately two-thirds of
the total effect.  The sandwich correction and DER diagnostic provide
a practical toolkit for population-level inference in Bayesian
hierarchical models fitted to complex survey data.  The companion R
package \textsf{hurdlebb}~\citep{hurdlebb2026} and replication
repository make the methodology immediately accessible to applied
researchers.


%% file: osm_body.tex


\section{Mathematical Properties of the Hurdle Beta-Binomial Model}
\label{app:math-properties}

\input{sm_a}


\section{Identification and Fisher Information}
\label{app:identification}

\input{sm_b}


\section{Survey-Weighted Inference}
\label{app:survey}

\input{sm_c}


\section{Extended Empirical Results}
\label{app:extended-results}

\input{sm_d}


\section{Simulation and Computational Details}
\label{app:sim-computation}

\input{sm_e}

%% file: sm_a.tex

This appendix collects the complete proofs and supporting results for the
mathematical properties of the hurdle beta-binomial model discussed in
\cref{sec:model}. The beta-binomial distribution in the mean--precision
parameterization follows~\citet{Griffiths1973} and~\citet{Prentice1986};
the hurdle construction follows~\citet{Mullahy1986} and~\citet{Cragg1971}.
We maintain the notation established in~\cref{sec:notation}
throughout.


\subsection{Proof of \texorpdfstring{\cref{thm:monotonicity}}{Theorem~2.1}}
\label{sma:proof-monotonicity}

We give the complete proof that the intensity function
$h(\mu) = \mu / (1 - p_0)$ is strictly increasing in $\mu$ for every
$n \ge 2$, $\kappa > 0$. Recall that
$p_0(n,\mu,\kappa) = \prod_{j=0}^{n-1}[(1-\mu)\kappa + j]/(\kappa + j)$
is the zero probability of the beta-binomial distribution, as defined
in~\eqref{eq:p0}.

\paragraph{Auxiliary quantities.}
Write $b = (1-\mu)\kappa$ and define
\begin{equation}\label{sma:eq-Lambda}
  \Lambda
  = \kappa \sum_{j=0}^{n-1} \frac{1}{(1-\mu)\kappa + j}
  > 0,
  \qquad
  \Lambda_2
  = \kappa^2 \sum_{j=0}^{n-1} \frac{1}{\bigl[(1-\mu)\kappa + j\bigr]^2}
  > 0.
\end{equation}
Note that $\partial p_0/\partial\mu = -p_0\,\Lambda$ (obtained by
differentiating $\ln p_0 = \sum_{j=0}^{n-1}\ln[(1-\mu)\kappa + j]
- \ln[\kappa + j]$) and $\partial\Lambda/\partial\mu = \Lambda_2$
(since $\partial b/\partial\mu = -\kappa$).

\begin{lemma}[Sum-of-reciprocals inequality]\label{sma:lem-reciprocals}
  Let $b > 0$, $n \ge 1$, and set
  $S_1 = \sum_{j=0}^{n-1}(b+j)^{-1}$ and
  $S_2 = \sum_{j=0}^{n-1}(b+j)^{-2}$, so that
  $\Lambda = \kappa\,S_1$ and $\Lambda_2 = \kappa^2\,S_2$.
  \begin{enumerate}[label=\textup{(\roman*)}]
    \item For $n = 1$: $S_1^2 = S_2$, so $\Lambda^2 = \Lambda_2$.
    \item For $n \ge 2$: $S_1^2 > S_2$, so $\Lambda^2 > \Lambda_2$.
  \end{enumerate}
\end{lemma}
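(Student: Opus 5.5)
The plan is to expand the square of $S_1$ directly and compare it term by term with $S_2$. Write $a_j = (b+j)^{-1} > 0$ for $j = 0,\ldots,n-1$, so that $S_1 = \sum_j a_j$ and $S_2 = \sum_j a_j^2$. Expanding gives
\begin{equation*}
  S_1^2 = \sum_{j=0}^{n-1} a_j^2 + 2\sum_{0 \le j < k \le n-1} a_j a_k
        = S_2 + 2\sum_{0 \le j < k \le n-1} a_j a_k .
\end{equation*}
The whole proof then reduces to identifying the sign of the cross-term sum. That sum is an elementary positivity statement and needs no earlier result.

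For part (i), $n = 1$, the index set $\{(j,k) : 0 \le j < k \le 0\}$ is empty. The cross-term sum therefore vanishes and $S_1^2 = S_2$, i.e. $S_1^2 = b^{-2} = S_2$. Multiplying by $\kappa^2$ gives $\Lambda^2 = \kappa^2 S_1^2 = \kappa^2 S_2 = \Lambda_2$, using the identifications $\Lambda = \kappa S_1$ and $\Lambda_2 = \kappa^2 S_2$ from~\eqref{sma:eq-Lambda}.

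For part (ii), $n \ge 2$, the pair $(j,k) = (0,1)$ lies in the index set. Every term $a_j a_k$ is strictly positive because $b > 0$ makes each $b + j > 0$. Hence the cross-term sum is at least $2a_0a_1 = 2/[b(b+1)] > 0$, which yields $S_1^2 > S_2$. Multiplying by $\kappa^2 > 0$ preserves the strict inequality, so $\Lambda^2 > \Lambda_2$.

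The only point needing care is the hypothesis $b > 0$, which is what guarantees positivity of the $a_j$. In the application $b = (1-\mu)\kappa$ with $\mu \in (0,1)$ and $\kappa > 0$, so this holds. One could alternatively invoke Cauchy--Schwarz, as the main-text sketch of \cref{thm:monotonicity} does. However, that route gives $S_1^2 \le nS_2$, which is the wrong direction for this comparison. The direct expansion above is the correct argument, so there is no real obstacle beyond bookkeeping.
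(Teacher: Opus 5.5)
Your proof is correct and is the same as the paper's appendix proof: expand $S_1^2 = S_2 + 2\sum_{0\le i<j\le n-1}(b+i)^{-1}(b+j)^{-1}$, note that the cross-term sum is empty for $n=1$ and strictly positive for $n\ge 2$, and multiply by $\kappa^2$. Your side remark is also right: Cauchy--Schwarz only gives the upper bound $S_1^2 \le n S_2$, so the main-text sketch is wrong to credit $\Lambda^2 > \Lambda_2$ to Cauchy--Schwarz, and the direct expansion is what actually proves it.
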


\begin{proof}
  Write $a_j = (b + j)^{-1}$.  Expanding the square of the sum,
  \[
    S_1^2
    = \Bigl(\sum_{j=0}^{n-1} a_j\Bigr)^{\!2}
    = \sum_{j=0}^{n-1} a_j^2
      + 2\!\!\sum_{0 \le i < j \le n-1}\!\! a_i\, a_j
    = S_2
      + 2\!\!\sum_{0 \le i < j \le n-1}\!\!
        \frac{1}{(b+i)(b+j)}.
  \]
  For $n = 1$ the cross-product sum is empty, giving $S_1^2 = S_2$.
  For $n \ge 2$ every term $1/[(b+i)(b+j)]$ is strictly positive, so
  $S_1^2 > S_2$.
\end{proof}

\begin{proof}[Proof of~\cref{thm:monotonicity}]
  Define the numerator function
  \begin{equation}\label{sma:eq-Phi}
    \Phi(\mu)
    = \bigl(1 - p_0(\mu)\bigr) - \mu\, p_0(\mu)\,\Lambda(\mu).
  \end{equation}
  From the quotient rule applied to $h = \mu\,(1-p_0)^{-1}$ together with
  $\partial p_0 / \partial\mu = -p_0\, \Lambda$, one obtains
  \begin{equation}\label{sma:eq-dh-numerator}
    \frac{\partial h}{\partial \mu}
    = \frac{\Phi(\mu)}{(1 - p_0)^2}.
  \end{equation}
  Since $(1 - p_0)^2 > 0$ for $\mu \in (0,1)$, the sign of
  $\partial h / \partial\mu$ equals the sign of $\Phi(\mu)$.  The proof
  proceeds in three steps.

  \medskip\noindent\textit{Step~1: Boundary value.}\enspace
  At $\mu = 0$, every factor in~\eqref{eq:p0} equals one, so $p_0(0) = 1$.
  Hence $\Phi(0) = (1 - 1) - 0 \cdot 1 \cdot \Lambda(0) = 0$.

  \medskip\noindent\textit{Step~2: Derivative of\/ $\Phi$.}\enspace
  Differentiating~\eqref{sma:eq-Phi} with the product rule gives
  \[
    \Phi'(\mu)
    = p_0\,\Lambda
      - \bigl[p_0\,\Lambda
        + \mu\,\tfrac{d}{d\mu}(p_0\,\Lambda)\bigr]
    = -\mu\,\frac{d}{d\mu}\bigl(p_0\,\Lambda\bigr).
  \]
  Now $d(p_0\,\Lambda)/d\mu
  = (\partial p_0/\partial\mu)\,\Lambda + p_0\,(\partial\Lambda/\partial\mu)
  = -p_0\,\Lambda^2 + p_0\,\Lambda_2
  = p_0(\Lambda_2 - \Lambda^2)$,
  and therefore
  \begin{equation}\label{sma:eq-Phi-deriv}
    \Phi'(\mu)
    = -\mu\, p_0\bigl(\Lambda_2 - \Lambda^2\bigr)
    = \mu\, p_0\bigl(\Lambda^2 - \Lambda_2\bigr).
  \end{equation}

  \medskip\noindent\textit{Step~3: Application of
  \cref{sma:lem-reciprocals}.}\enspace
  \begin{itemize}[nosep]
    \item \emph{Case $n = 1$.}
      \cref{sma:lem-reciprocals}(i) gives $\Lambda^2 = \Lambda_2$, so
      $\Phi'(\mu) = 0$ identically.  Combined with $\Phi(0) = 0$, we obtain
      $\Phi(\mu) = 0$ for all $\mu \in (0,1)$ and hence
      $\partial h / \partial\mu = 0$.  Equivalently, $p_0 = 1 - \mu$ when
      $n = 1$, so $h(\mu) = \mu / (1 - (1 - \mu)) = 1$ identically.
    \item \emph{Case $n \ge 2$.}
      \cref{sma:lem-reciprocals}(ii) gives $\Lambda^2 > \Lambda_2$.
      Since $\mu > 0$ and $p_0 > 0$ on $(0,1)$,
      equation~\eqref{sma:eq-Phi-deriv} yields $\Phi'(\mu) > 0$ for all
      $\mu \in (0,1)$.
  \end{itemize}
  Since $\Phi(0) = 0$ and $\Phi$ is strictly increasing on $(0,1)$, we
  conclude $\Phi(\mu) > 0$ for all $\mu \in (0,1)$, and therefore
  \[
    \frac{\partial h}{\partial\mu}
    = \frac{\Phi(\mu)}{(1 - p_0)^2} > 0. \qedhere
  \]
\end{proof}


\subsection{Higher moments of the truncated beta-binomial}
\label{sma:higher-moments}

\paragraph{Notation.}
Throughout this subsection, $Y \sim \BetaBin(n,\mu,\kappa)$ denotes the
standard (untruncated) beta-binomial with $a = \mu\kappa$,
$b = (1-\mu)\kappa$, and
$Y^+ \sim \ZTBB(n,\mu,\kappa)$ denotes the zero-truncated version.  We
write $p_0 = \Prob(Y = 0)$,
$V_{\mathrm{BB}} = \Var(Y) = n\mu(1-\mu)(n+\kappa)/(1+\kappa)$, and
$h = \mu/(1-p_0)$.


\begin{lemma}[Second factorial moment]\label{sma:lem-second-factorial}
  For $Y \sim \BetaBin(n,\mu,\kappa)$,
  \begin{equation}\label{sma:eq-second-factorial}
    \boxed{\E\bk{Y(Y-1)} = n(n-1) \cdot \frac{\mu(\mu\kappa + 1)}{\kappa + 1}.}
  \end{equation}
\end{lemma}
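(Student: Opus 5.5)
The plan is to exploit the beta mixture representation of the beta-binomial rather than manipulating the probability mass function~\eqref{eq:bb-pmf} directly. Write $Y \mid p \sim \Bin(n,p)$ with $p \sim \mathrm{Beta}(a,b)$, where $a = \mu\kappa$ and $b = (1-\mu)\kappa$. This representation is equivalent to~\eqref{eq:bb-pmf}: integrate $\binom{n}{y}p^y(1-p)^{n-y}$ against the Beta$(a,b)$ density and use the definition of $B(\cdot,\cdot)$. We will state it in one line as the starting point.

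Next, condition on $p$ and use the binomial factorial moment $\E[Y(Y-1)\mid p] = n(n-1)p^2$. This is a standard identity, obtained for instance by writing $Y$ as a sum of $n$ indicators and counting ordered pairs of distinct successes. By the tower property,
\[
\E[Y(Y-1)] = n(n-1)\,\E[p^2].
\]

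The remaining step is the second raw moment of the beta distribution:
\[
\E[p^2] = \frac{B(a+2,b)}{B(a,b)} = \frac{a(a+1)}{(a+b)(a+b+1)}.
\]
This follows from $\Gamma(x+1) = x\Gamma(x)$ applied twice. Substituting $a = \mu\kappa$ and $a+b = \kappa$ gives $\mu\kappa(\mu\kappa+1)/[\kappa(\kappa+1)] = \mu(\mu\kappa+1)/(\kappa+1)$, which is~\eqref{sma:eq-second-factorial}.

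There is no genuine obstacle; the only point requiring care is the gamma-function cancellation. As a consistency check, we will verify that
\[
\E[Y(Y-1)] + \E[Y] - (\E Y)^2
\]
reproduces the variance in~\eqref{eq:bb-moments}. A direct computation gives $n\mu(1-\mu)(n+\kappa)/(1+\kappa)$, as required.
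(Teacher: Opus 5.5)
Your proof is correct and follows the paper's argument essentially step for step. Both use the beta-mixture representation, the conditional identity $\E[Y(Y-1)\mid p]=n(n-1)p^2$ with the tower property, and the beta second moment $a(a+1)/[(a+b)(a+b+1)]$, then substitute $a=\mu\kappa$ and $a+b=\kappa$; the paper also follows the lemma with the same variance consistency check you propose.
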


\begin{proof}
  The beta-binomial arises as a beta
  mixture~\citep{Griffiths1973}:
  let $p \sim \Beta(a,b)$ with $a = \mu\kappa$,
  $b = (1\!-\!\mu)\kappa$, and $Y \mid p \sim \Bin(n,p)$.
  Since $\E[Y(Y\!-\!1) \mid p] = n(n\!-\!1)p^2$,
  the tower property gives
  \[
    \E\bk{Y(Y-1)} = n(n-1)\,\E[p^2].
  \]
  The second moment of $p \sim \Beta(a,b)$ is
  $\E[p^2] = a(a+1)/[(a+b)(a+b+1)]$, using the rising factorial identity
  $\E[p^k] = a^{(k)}/(a+b)^{(k)}$ for $k = 2$.  Substituting
  $a = \mu\kappa$, $a + b = \kappa$:
  \[
    \E\bk{Y(Y-1)}
    = n(n-1) \cdot \frac{\mu\kappa(\mu\kappa + 1)}{\kappa(\kappa+1)}
    = n(n-1) \cdot \frac{\mu(\mu\kappa + 1)}{\kappa+1}. \qedhere
  \]
\end{proof}

\paragraph{Consistency check.}
From~\eqref{sma:eq-second-factorial}, the second moment is
$\E[Y^2] = \E[Y(Y-1)] + \E[Y] = n(n-1)\mu(\mu\kappa+1)/(\kappa+1) + n\mu$.
We confirm $\Var(Y) = \E[Y^2] - (n\mu)^2$:
\[
  \E[Y^2] - n^2\mu^2
  = \frac{n(n-1)\mu(\mu\kappa+1) + n\mu(\kappa+1) - n^2\mu^2(\kappa+1)}
         {\kappa+1}.
\]
Expanding the numerator factor:
$n\mu\bk{(n-1)(\mu\kappa+1) + (\kappa+1) - n\mu(\kappa+1)}$.  The bracket
simplifies as
\[
  (n-1)\mu\kappa + (n-1) + (\kappa+1) - n\mu\kappa - n\mu
  = (1-\mu)(n + \kappa).
\]
Therefore
$\Var(Y) = n\mu(1-\mu)(n+\kappa)/(\kappa+1) = V_{\mathrm{BB}}$.
\enspace$\square$


\begin{lemma}[General $r$-th factorial moment]\label{sma:lem-rth-factorial}
  For $Y \sim \BetaBin(n,\mu,\kappa)$,
  \begin{equation}\label{sma:eq-rth-factorial}
    \E\!\left[\prod_{j=0}^{r-1}(Y-j)\right]
    = \frac{n^{(r)} \cdot (\mu\kappa)^{(r)}}{\kappa^{(r)}},
  \end{equation}
  where $c^{(r)} = c(c+1)\cdots(c+r-1) = \Gamma(c+r)/\Gamma(c)$ is the
  rising factorial.
\end{lemma}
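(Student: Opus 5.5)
The plan is to repeat the argument of \cref{sma:lem-second-factorial} with general $r$, using the beta-mixture representation of \citet{Griffiths1973}. Let $p \sim \Beta(a,b)$ with $a = \mu\kappa$, $b = (1-\mu)\kappa$, and $Y \mid p \sim \Bin(n,p)$. By the tower property,
\[
  \E\!\left[\prod_{j=0}^{r-1}(Y-j)\right]
  = \E\!\left[\E\!\left(\prod_{j=0}^{r-1}(Y-j)\,\Big|\,p\right)\right].
\]
The proof then reduces to two standard facts: the binomial factorial moment and the beta raw moment.

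\emph{Step 1: binomial factorial moment.} For $Y \mid p \sim \Bin(n,p)$ and $0 \le r \le n$,
\[
  \E\!\left[\prod_{j=0}^{r-1}(Y-j)\,\Big|\,p\right] = n(n-1)\cdots(n-r+1)\,p^r .
\]
To prove this, write $\prod_{j=0}^{r-1}(y-j)\binom{n}{y} = n(n-1)\cdots(n-r+1)\binom{n-r}{y-r}$. Then reindex the sum over $y \ge r$ and apply the binomial theorem to $(p + (1-p))^{n-r}$. The terms with $y < r$ vanish because the product contains the factor $(y-y) = 0$. For $r > n$ the product vanishes identically on the support, so both sides are zero.

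\emph{Step 2: beta raw moment.} For $p \sim \Beta(a,b)$,
\[
  \E[p^r] = \frac{B(a+r,b)}{B(a,b)} = \frac{\Gamma(a+r)\,\Gamma(a+b)}{\Gamma(a)\,\Gamma(a+b+r)} = \frac{a^{(r)}}{(a+b)^{(r)}} .
\]
Substituting $a = \mu\kappa$ and $a+b = \kappa$ gives $(\mu\kappa)^{(r)}/\kappa^{(r)}$. Multiplying by the Step 1 factor yields \eqref{sma:eq-rth-factorial}.

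\emph{Main obstacle: the combinatorial prefactor.} The delicate point is bookkeeping rather than analysis. The factor multiplying $(\mu\kappa)^{(r)}/\kappa^{(r)}$ is the \emph{falling} factorial $n(n-1)\cdots(n-r+1) = n!/(n-r)!$, not the rising factorial $\Gamma(n+r)/\Gamma(n)$. I would therefore read $n^{(r)}$ in the statement as the falling factorial, consistent with the case $r=2$ in \eqref{sma:eq-second-factorial}, and state this convention explicitly in the proof; with a rising factorial the identity is false already for $r=2$. As a final check, taking $r=1$ recovers $\E[Y] = n\mu$ and taking $r=2$ recovers \cref{sma:lem-second-factorial}.
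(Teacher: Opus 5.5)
Your proposal is correct and follows the paper's own proof: beta--binomial mixture, the binomial factorial moment, and the beta raw moment $a^{(r)}/(a+b)^{(r)}$, with the details the paper's one-line argument leaves out now written in. You are also right that the $n$-factor must be the falling factorial $n!/(n-r)!$; the paper's statement and its proof both write $n^{(r)}$ under the rising-factorial convention, and that reading contradicts \cref{sma:lem-second-factorial} already at $r=2$.
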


\begin{proof}
  Follows from
  $\E[\prod_{j=0}^{r-1}(Y-j) \mid p] = n^{(r)}p^r$ for the binomial, and
  $\E[p^r] = (\mu\kappa)^{(r)}/\kappa^{(r)}$ for the beta.
\end{proof}


\begin{proposition}[Truncation identity for functions vanishing at zero]
  \label{sma:prop-trunc-identity}
  For any function $g$ with $g(0) = 0$,
  \begin{equation}\label{sma:eq-trunc-identity}
    \E\bk{g(Y) \mid Y > 0} = \frac{\E\bk{g(Y)}}{1 - p_0}.
  \end{equation}
  In particular,
  $\E[Y^k \mid Y > 0] = \E[Y^k]/(1-p_0)$ for all $k \ge 1$.
\end{proposition}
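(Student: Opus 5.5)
The identity follows directly from the definition of conditional expectation on the event $\{Y>0\}$. I will write
$\E[g(Y)\mid Y>0] = \E[g(Y)\,\one(Y>0)]/\Prob(Y>0)$.
This is well defined because $\Prob(Y>0) = 1-p_0 > 0$: by the product form \eqref{eq:p0}, for $\mu\in(0,1)$ and $\kappa>0$ every factor $[(1-\mu)\kappa+j]/(\kappa+j)$ is strictly less than one, so $p_0<1$. The denominator is therefore exactly $1-p_0$, and the task reduces to identifying the numerator.

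For the numerator, I will split $g(Y) = g(Y)\one(Y>0) + g(Y)\one(Y=0)$. The second term equals $g(0)\one(Y=0)$, which vanishes identically by the hypothesis $g(0)=0$. Hence $\E[g(Y)\one(Y>0)] = \E[g(Y)]$. Since $Y$ has finite support $\{0,\ldots,n\}$, all expectations are finite sums, so no integrability question arises.

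Concretely, in terms of the PMF \eqref{eq:ztbb}, the argument reads
\[
\sum_{y=1}^n g(y)\,\frac{f_{\BetaBin}(y)}{1-p_0} = \frac{1}{1-p_0}\sum_{y=0}^n g(y) f_{\BetaBin}(y),
\]
where the $y=0$ term added on the right contributes $g(0)f_{\BetaBin}(0)=0$.

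The particular case follows by taking $g(y)=y^k$ with $k\ge1$, which satisfies $g(0)=0$. The same choice works for factorial powers $g(y)=\prod_{j=0}^{r-1}(y-j)$, which also vanish at zero. Combined with \cref{sma:lem-second-factorial,sma:lem-rth-factorial}, this yields closed-form truncated moments; in particular, $k=1$ recovers $\E[Y\mid Y>0] = n\mu/(1-p_0) = n h$, consistent with \eqref{eq:uncond-mean}.

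There is no real obstacle here. The only point needing care is $1-p_0>0$, which is established above from the product form.
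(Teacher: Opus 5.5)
Your proof is correct and follows essentially the same route as the paper: both use $g(Y)\one(Y=0) = g(0)\one(Y=0) = 0$ to get $\E[g(Y)] = \E[g(Y)\one(Y>0)]$, then divide by $\Prob(Y>0) = 1-p_0$. Your extra check that $p_0<1$ via the product form \eqref{eq:p0} (which needs $n \ge 1$, guaranteed since $n_i=0$ providers are excluded) is a detail the paper leaves implicit.
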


\begin{proof}
  Since $g(Y) \cdot \one(Y = 0) = g(0) \cdot \one(Y = 0) = 0$, we have
  $\E[g(Y)] = \E[g(Y) \cdot \one(Y > 0)]$.  Dividing by
  $\Prob(Y > 0) = 1 - p_0$ gives the result.
\end{proof}


\begin{corollary}[Conditional second factorial moment]
  \label{sma:cor-cond-second-factorial}
  \begin{equation}\label{sma:eq-cond-second-factorial}
    \boxed{\E\bk{Y(Y-1) \mid Y > 0}
    = \frac{n(n-1)\mu(\mu\kappa+1)}{(\kappa+1)(1-p_0)}.}
  \end{equation}
\end{corollary}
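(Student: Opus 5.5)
This corollary is immediate from two results already established: \cref{sma:prop-trunc-identity} and \cref{sma:lem-second-factorial}. I will apply the truncation identity with the specific choice $g(y) = y(y-1)$, and then substitute the closed form of the untruncated second factorial moment.

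First I verify the hypothesis of \cref{sma:prop-trunc-identity}. The function $g(y) = y(y-1)$ satisfies $g(0) = 0 \cdot (-1) = 0$, so the proposition applies and gives
\[
  \E\bk{Y(Y-1) \mid Y > 0} = \frac{\E\bk{Y(Y-1)}}{1 - p_0}.
\]
I also note that $1 - p_0 > 0$ for $\mu \in (0,1)$ and $\kappa > 0$. This holds because each factor in the product form~\eqref{eq:p0} is strictly less than one, so $p_0 < 1$ and the division is legitimate.

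Second, I substitute \eqref{sma:eq-second-factorial}, namely $\E[Y(Y-1)] = n(n-1)\mu(\mu\kappa+1)/(\kappa+1)$, into the numerator. This yields exactly the boxed expression \eqref{sma:eq-cond-second-factorial}.

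No step is a genuine obstacle. The only point needing a word of care is confirming that the truncation identity applies, that is, checking $g(0)=0$ and $p_0<1$. As a sanity check I would look at $n=1$: the right-hand side vanishes because of the factor $n(n-1)$, which agrees with $Y(Y-1)=0$ on the event $Y=1$.
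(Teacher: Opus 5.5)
Your proposal is correct and is essentially the paper's own argument: the corollary comes with no separate proof and is meant as the direct combination of \cref{sma:prop-trunc-identity} (with $g(y)=y(y-1)$) and \cref{sma:lem-second-factorial}. Your added checks that $g(0)=0$ and $p_0<1$, plus the $n=1$ sanity check, supply the small details the paper leaves implicit.
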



\begin{proposition}[Truncated BB variance]
  \label{sma:prop-trunc-var}
  For $Y^+ \sim \ZTBB(n,\mu,\kappa)$,
  \begin{equation}\label{sma:eq-trunc-var}
    \boxed{\Var(Y \mid Y > 0)
    = \frac{V_{\mathrm{BB}}(1-p_0) - n^2\mu^2\,p_0}{(1-p_0)^2}.}
  \end{equation}
\end{proposition}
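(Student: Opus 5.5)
The plan is to compute the first two conditional moments of $Y$ given $Y>0$ from the untruncated ones via \cref{sma:prop-trunc-identity}, and then form the variance as their difference.

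\emph{First moment.} Apply \cref{sma:prop-trunc-identity} with $g(y)=y$ to get $\E[Y\mid Y>0]=n\mu/(1-p_0)=nh$. This is consistent with the intensity function in \eqref{eq:uncond-mean}.

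\emph{Second moment.} Since $g(y)=y^2$ also vanishes at zero, the same identity gives $\E[Y^2\mid Y>0]=\E[Y^2]/(1-p_0)$. The untruncated second moment is $\E[Y^2]=V_{\mathrm{BB}}+n^2\mu^2$. One can check this against \cref{sma:lem-second-factorial} and the consistency check that follows it, though the decomposition $\Var+\text{mean}^2$ is all that is needed.

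\emph{Variance.} Next I will compute
\[
\Var(Y\mid Y>0)=\frac{V_{\mathrm{BB}}+n^2\mu^2}{1-p_0}-\frac{n^2\mu^2}{(1-p_0)^2}.
\]
Putting everything over the common denominator $(1-p_0)^2$ gives the numerator
\[
V_{\mathrm{BB}}(1-p_0)+n^2\mu^2(1-p_0)-n^2\mu^2 = V_{\mathrm{BB}}(1-p_0)-n^2\mu^2p_0,
\]
which is exactly the boxed expression \eqref{sma:eq-trunc-var}.

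The only point needing care is the justification of the truncation identity for $g(y)=y^2$ and the exact cancellation $n^2\mu^2(1-p_0)-n^2\mu^2=-n^2\mu^2p_0$. Neither is a real obstacle; the argument is a short algebraic verification. As a sanity check I will take $n=1$. Then $p_0=1-\mu$, $V_{\mathrm{BB}}=\mu(1-\mu)$, and the numerator becomes $\mu(1-\mu)\mu-\mu^2(1-\mu)=0$. This matches the fact that $Y\mid Y>0$ is degenerate at $1$ when $n=1$, in agreement with $h\equiv1$ from \cref{thm:monotonicity}.
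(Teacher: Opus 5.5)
Your proposal is correct and matches the paper's proof essentially line for line: both apply the truncation identity (\cref{sma:prop-trunc-identity}) to $g(y)=y$ and $g(y)=y^2$, write $\E[Y^2]=V_{\mathrm{BB}}+n^2\mu^2$, and combine over the common denominator $(1-p_0)^2$ using $n^2\mu^2(1-p_0)-n^2\mu^2=-n^2\mu^2p_0$. Your $n=1$ sanity check also coincides with the paper's own limiting-case verification.
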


\begin{proof}
  By the truncation identity (\cref{sma:prop-trunc-identity}),
  \[
    \E[Y^2 \mid Y > 0] = \frac{\E[Y^2]}{1 - p_0},
    \qquad
    \E[Y \mid Y > 0] = \frac{n\mu}{1 - p_0}.
  \]
  Then
  \[
    \Var(Y \mid Y > 0)
    = \frac{\E[Y^2]}{1-p_0} - \frac{n^2\mu^2}{(1-p_0)^2}
    = \frac{V_{\mathrm{BB}} + n^2\mu^2}{1-p_0}
      - \frac{n^2\mu^2}{(1-p_0)^2}.
  \]
  Combining the $n^2\mu^2$ terms over the common denominator $(1-p_0)^2$:
  \[
    = \frac{V_{\mathrm{BB}}}{1-p_0}
      + n^2\mu^2 \cdot \frac{(1-p_0) - 1}{(1-p_0)^2}
    = \frac{V_{\mathrm{BB}}}{1-p_0}
      - \frac{n^2\mu^2\,p_0}{(1-p_0)^2}
    = \frac{V_{\mathrm{BB}}(1-p_0) - n^2\mu^2\,p_0}{(1-p_0)^2}.
    \qedhere
  \]
\end{proof}

\begin{remark}\label{sma:rem-competing-effects}
  The formula~\eqref{sma:eq-trunc-var} reveals two competing effects of
  truncation.  The first term $V_{\mathrm{BB}}/(1-p_0)$ inflates the
  variance by rescaling probabilities.  The second term
  $n^2\mu^2 p_0/(1-p_0)^2$ reduces the variance because the truncated
  distribution loses mass at zero, pulling the mean away from zero and
  tightening the effective spread relative to that shifted mean.
\end{remark}

\paragraph{Alternative forms.}

\emph{Form~A (factored):}
\begin{equation}\label{sma:eq-trunc-var-formA}
  \Var(Y \mid Y > 0)
  = \frac{n\mu}{1-p_0}
    \left[\frac{(1-\mu)(n+\kappa)}{1+\kappa}
          - \frac{n\mu\,p_0}{1-p_0}\right].
\end{equation}
This displays the variance as $\E[Y \mid Y > 0]$ times a ``spread factor.''

\emph{Form~B (via factorial moments):}
\begin{equation}\label{sma:eq-trunc-var-formB}
  \Var(Y \mid Y > 0)
  = \frac{\E[Y(Y-1)]}{1-p_0}
    + \frac{n\mu}{1-p_0}
    - \frac{n^2\mu^2}{(1-p_0)^2}.
\end{equation}

\paragraph{Verification of limiting cases.}

\emph{Case~1: $\kappa \to \infty$ (truncated binomial limit).}
As $\kappa \to \infty$, $V_{\mathrm{BB}} \to n\mu(1-\mu)$ and
$p_0 \to (1-\mu)^n$.  The formula becomes the known variance of the
zero-truncated binomial.  For $n = 1$: $Y \mid Y > 0 = 1$
deterministically, so $\Var = 0$.  In the formula: numerator
${}= \mu(1-\mu) \cdot \mu - \mu^2(1-\mu) = 0$.\enspace$\square$

The remaining limiting cases are verified analogously:
as $p_0 \to 0$ the variance reduces to $V_{\mathrm{BB}}$, and
for $n=1$ both sides equal zero since $Y \mid Y > 0 = 1$
deterministically.\enspace$\square$

\paragraph{Numerical cross-check.}
Consider $n = 5$, $\mu = 0.3$, $\kappa = 2$ (so $a = 0.6$, $b = 1.4$).
Direct computation gives
$p_0 = \prod_{j=0}^{4}(1.4+j)/(2+j) = 0.376992$,
$\E[Y] = 1.5$,
$V_{\mathrm{BB}} = 2.45$, and
$\E[Y^2] = 4.70$.
The truncated moments are
$\E[Y \mid Y > 0] = 1.5/0.623008 = 2.40770$,
$\E[Y^2 \mid Y > 0] = 4.70/0.623008 = 7.54409$, and
$\Var(Y \mid Y > 0) = 7.54409 - 2.40770^2 = 1.74707$.
Form~A gives
$(2.45 \times 0.623008 - 2.25 \times 0.376992)/0.623008^2 = 1.74709$,
agreeing to five significant figures.\enspace$\square$


\subsection{Unconditional variance decomposition}\label{sma:vardecomp}

We derive the unconditional variance of $Y_i$ under the hurdle
beta-binomial model, its variance-to-mean ratio, an overdispersion index
relative to the binomial, and the coefficient of variation. Throughout, we
write $V_{\textup{BB},i} = n_i\mu_i(1-\mu_i)(n_i + \kappa)/(1 + \kappa)$
for the beta-binomial variance~\eqref{eq:bb-moments}, $h_i = \mu_i/(1-p_{0,i})$
for the intensity function~\eqref{eq:uncond-mean}, and
$m_i^{+} = n_i h_i = n_i\mu_i/(1-p_{0,i})$ for the conditional mean given
participation.


\paragraph{Law of total variance.}

\begin{proposition}[Hurdle model variance decomposition]
\label{sma:prop-var-decomp}
Let $Y_i$ follow the $\HBB$ model
\citep[cf.][for the Poisson hurdle analogue]{Mullahy1986}
with participation probability~$q_i$ and
intensity parameters $(n_i, \mu_i, \kappa)$. Define
$Z_i = \one\{Y_i > 0\}$. Then
\begin{equation}\label{sma:eq-var-decomp}
  \Var(Y_i)
  = q_i \cdot \Var(Y_i \mid Y_i > 0)
    + q_i(1-q_i)\bigl[\E(Y_i \mid Y_i > 0)\bigr]^2.
\end{equation}
\end{proposition}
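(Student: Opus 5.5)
We apply the law of total variance, conditioning on the participation indicator $Z_i = \one\{Y_i > 0\}$. Under the hurdle specification of \cref{def:hbb}, $Z_i \sim \Bern(q_i)$. Conditional on $Z_i = 0$ the response is degenerate at zero. Conditional on $Z_i = 1$ it follows $\ZTBB(n_i,\mu_i,\kappa)$. Because $Y_i > 0$ exactly when $Z_i = 1$, the event $\{Z_i = 1\}$ coincides with $\{Y_i > 0\}$. Hence $\E(Y_i \mid Y_i>0)$ and $\Var(Y_i \mid Y_i>0)$ in the statement are the conditional moments of the zero-truncated beta-binomial. These are $m_i^{+} = n_i\mu_i/(1-p_{0,i})$ and the expression in \cref{sma:prop-trunc-var}.

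The computation has two pieces. First, the expected conditional variance is
\[
\E[\Var(Y_i\mid Z_i)] = (1-q_i)\cdot 0 + q_i\,\Var(Y_i\mid Y_i>0),
\]
because $Y_i \mid Z_i = 0$ has zero variance. Second, $\E(Y_i \mid Z_i)$ equals $Z_i\, m_i^{+}$, a scaled Bernoulli variable. Its variance is therefore
\[
\Var[\E(Y_i\mid Z_i)] = (m_i^{+})^2\, q_i(1-q_i).
\]
Adding the two pieces gives \eqref{sma:eq-var-decomp}.

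As a sanity check, we can compute $\E[Y_i^2] - (\E Y_i)^2$ directly instead. The truncated moments give $\E[Y_i^2] = q_i\,\E[Y_i^2\mid Y_i>0]$ and $\E[Y_i] = q_i\,m_i^{+}$, which agrees with \eqref{eq:uncond-mean}. Then
\[
\E[Y_i^2] - (\E Y_i)^2 = q_i\bigl(\Var^{+} + (m^{+})^2\bigr) - q_i^2 (m^{+})^2,
\]
where $\Var^{+}$ and $m^{+}$ denote the conditional variance and mean. This rearranges to the same identity.

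There is no real obstacle here. The only point that needs care is stating explicitly that conditioning on $Y_i > 0$ is the same as conditioning on the hurdle indicator, so that the truncated-distribution moments may be used. This holds because the hurdle places all zero mass in the $Z_i = 0$ branch. If desired, the closed form then follows by substituting \cref{sma:prop-trunc-var} and $m_i^{+} = n_i h_i$.
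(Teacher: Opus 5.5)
Your proof is correct and follows the paper's argument: you apply the law of total variance conditioned on the Bernoulli indicator $Z_i$, obtaining $\E[\Var(Y_i\mid Z_i)] = q_i\,\Var(Y_i\mid Y_i>0)$ and $\Var(\E[Y_i\mid Z_i]) = q_i(1-q_i)(m_i^{+})^2$. Your direct-moment sanity check and your explicit remark that $\{Z_i=1\}=\{Y_i>0\}$ (so the truncated moments apply) are sound additions that the paper leaves implicit.
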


\begin{proof}
Apply the law of total variance with respect to the Bernoulli variable~$Z_i$:
\[
  \Var(Y_i)
  = \E\bigl[\Var(Y_i \mid Z_i)\bigr]
    + \Var\bigl(\E[Y_i \mid Z_i]\bigr).
\]
\emph{First term.}  When $Z_i = 0$, $Y_i = 0$ almost surely, so
$\Var(Y_i \mid Z_i = 0) = 0$.  When $Z_i = 1$,
$\Var(Y_i \mid Z_i = 1) = \Var(Y_i \mid Y_i > 0)$.  Hence
\[
  \E\bigl[\Var(Y_i \mid Z_i)\bigr]
  = q_i\,\Var(Y_i \mid Y_i > 0).
\]
\emph{Second term.}  The conditional expectation $\E[Y_i \mid Z_i]$ takes
value~$0$ with probability $1-q_i$ and $m_i^{+} = n_i\mu_i/(1-p_{0,i})$
with probability~$q_i$.  Its variance is therefore
$q_i(1-q_i)(m_i^{+})^2$.
\end{proof}

\begin{corollary}[Expanded form]\label{sma:cor-var-expanded}
Substituting the truncated variance from
\cref{sma:prop-trunc-var} into~\eqref{sma:eq-var-decomp}:
\begin{equation}\label{sma:eq-var-expanded}
  \Var(Y_i)
  = \frac{q_i\, V_{\textup{BB},i}}{1-p_{0,i}}
    - \frac{q_i\, n_i^2 \mu_i^2\, p_{0,i}}{(1-p_{0,i})^2}
    + \frac{q_i(1-q_i)\, n_i^2 \mu_i^2}{(1-p_{0,i})^2}.
\end{equation}
Combining the terms involving $n_i^2\mu_i^2/(1-p_{0,i})^2$ yields the
compact form
\begin{equation}\label{sma:eq-var-compact}
  \Var(Y_i)
  = \frac{q_i\, V_{\textup{BB},i}}{1-p_{0,i}}
    + \frac{q_i\, n_i^2\mu_i^2\,(1-q_i-p_{0,i})}{(1-p_{0,i})^2}.
\end{equation}
\end{corollary}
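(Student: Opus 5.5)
The corollary is pure algebra. I will substitute the truncated variance of \cref{sma:prop-trunc-var} into the decomposition of \cref{sma:prop-var-decomp}, and then collect terms.

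The first step uses \cref{sma:prop-trunc-var}, which gives
\[
\Var(Y_i \mid Y_i>0) = \frac{V_{\textup{BB},i}}{1-p_{0,i}} - \frac{n_i^2\mu_i^2 p_{0,i}}{(1-p_{0,i})^2}.
\]
This is the intermediate line in that proposition's proof, before the final common-denominator step. Multiplying by $q_i$ yields the first two terms of \eqref{sma:eq-var-expanded}.

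The second step handles the between-group term. In \eqref{sma:eq-var-decomp} it equals $q_i(1-q_i)(m_i^+)^2$. By the truncation identity (\cref{sma:prop-trunc-identity}) with $g(y)=y$, or directly from \eqref{eq:uncond-mean}, we have $m_i^+ = n_i\mu_i/(1-p_{0,i})$. Squaring gives the third term of \eqref{sma:eq-var-expanded}, which completes the expanded form.

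For the compact form \eqref{sma:eq-var-compact}, I factor $q_i n_i^2\mu_i^2/(1-p_{0,i})^2$ out of the second and third terms. What remains inside is $-p_{0,i} + (1-q_i) = 1-q_i-p_{0,i}$. The first term is left unchanged.

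There is no real obstacle here. The only care needed is to use the un-combined form of the truncated variance, so that the $p_{0,i}$ and $(1-q_i)$ contributions sit over the same denominator $(1-p_{0,i})^2$. As a sanity check, I will verify that setting $q_i=1$ recovers \cref{sma:prop-trunc-var}.
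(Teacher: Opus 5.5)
Your proposal is correct and follows the paper's own route: substitute the un-combined truncated variance $V_{\textup{BB},i}/(1-p_{0,i}) - n_i^2\mu_i^2 p_{0,i}/(1-p_{0,i})^2$ and $m_i^{+} = n_i\mu_i/(1-p_{0,i})$ into \eqref{sma:eq-var-decomp}, then factor $q_i n_i^2\mu_i^2/(1-p_{0,i})^2$ out of the last two terms. Your $q_i = 1$ check, which recovers \cref{sma:prop-trunc-var}, is a valid consistency check.
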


\begin{remark}\label{sma:rem-qi-vs-p0}
The quantities $q_i$ and $p_{0,i}$ are distinct: $q_i$ is the hurdle
participation probability, whereas $p_{0,i}$ is the zero probability of the
beta-binomial component. In general
$q_i \neq 1 - p_{0,i}$; the quantity $(1 - q_i - p_{0,i})$ may be positive,
negative, or zero, so the second term in~\eqref{sma:eq-var-compact} can
either inflate or reduce the unconditional variance.
\end{remark}

\begin{corollary}[In terms of the unconditional mean]
\label{sma:cor-var-uncond-mean}
Recall from~\eqref{eq:uncond-mean} that $\E[Y_i] = q_i\,n_i\,h_i$. Then
\begin{equation}\label{sma:eq-var-uncond-mean}
  \Var(Y_i)
  = q_i\,\Var(Y_i \mid Y_i > 0)
    + \frac{1-q_i}{q_i}\,\bigl(\E[Y_i]\bigr)^2.
\end{equation}
The between-component term is proportional to the squared mean, scaled by
the odds of non-participation $(1-q_i)/q_i$.
\end{corollary}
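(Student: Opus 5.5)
The plan is a direct substitution into \cref{sma:prop-var-decomp}. The key observation is that the between-component term there, $q_i(1-q_i)[\E(Y_i \mid Y_i>0)]^2$, can be rewritten using the unconditional mean from \eqref{eq:uncond-mean}. The within-component term $q_i\,\Var(Y_i\mid Y_i>0)$ is left untouched, so no appeal to \cref{sma:prop-trunc-var} is needed.

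\emph{Step 1: express the conditional mean through $\E[Y_i]$.} By \eqref{eq:uncond-mean}, $\E[Y_i] = q_i\,n_i\,h_i = q_i\,m_i^{+}$, where $m_i^{+} = \E(Y_i \mid Y_i>0) = n_i\mu_i/(1-p_{0,i})$. This identification of $m_i^+$ as the conditional mean follows from \cref{sma:prop-trunc-identity} with $g(y)=y$. Since $q_i\in(0,1)$, we can divide and write $m_i^{+} = \E[Y_i]/q_i$.

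\emph{Step 2: substitute.} The between-component term becomes
\[
q_i(1-q_i)\bigl(\E[Y_i]/q_i\bigr)^2 = \frac{1-q_i}{q_i}\bigl(\E[Y_i]\bigr)^2 ,
\]
which gives \eqref{sma:eq-var-uncond-mean}. The interpretive sentence about the odds of non-participation is then immediate.

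\emph{Main obstacle.} There is essentially none. The only point to check is that $q_i>0$, so the division is legitimate; this is guaranteed because $q_i\in(0,1)$ by \cref{def:hbb}. It is also worth confirming that the $m_i^+$ appearing in \cref{sma:prop-var-decomp} is the same conditional mean that enters $\E[Y_i]=q_i n_i h_i$. Both equal $n_i\mu_i/(1-p_{0,i})$, so this consistency check is routine.
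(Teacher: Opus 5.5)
Your proposal is correct and is exactly the argument the paper intends; the paper states this corollary without a separate proof, directly after \cref{sma:prop-var-decomp}. The argument is to substitute $\E(Y_i \mid Y_i>0) = n_i h_i = \E[Y_i]/q_i$ into the between-component term $q_i(1-q_i)[\E(Y_i \mid Y_i>0)]^2$. Your two checks, that $q_i>0$ and that $m_i^{+}$ is the same conditional mean that appears in \eqref{eq:uncond-mean}, are the only points needing care, and both are routine.
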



\paragraph{Variance-to-mean ratio.}

\begin{proposition}[Variance ratio]\label{sma:prop-VR}
Define $\textup{VR}_i = \Var(Y_i)/\E[Y_i]$.  Then
\begin{equation}\label{sma:eq-VR}
  \boxed{
    \textup{VR}_i
    = \frac{(1-\mu_i)(n_i+\kappa)}{1+\kappa}
      - \frac{n_i\,\mu_i\,p_{0,i}}{1-p_{0,i}}
      + (1-q_i)\,n_i\,h_i.
  }
\end{equation}
\end{proposition}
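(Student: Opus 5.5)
The plan is a direct division of the variance decomposition by the unconditional mean. I will start from \cref{sma:prop-var-decomp}, which gives $\Var(Y_i) = q_i\,\Var(Y_i \mid Y_i>0) + q_i(1-q_i)(m_i^{+})^2$ with $m_i^{+} = n_i h_i = n_i\mu_i/(1-p_{0,i})$. I will then divide by $\E[Y_i] = q_i n_i h_i = q_i m_i^{+}$ from \eqref{eq:uncond-mean}. The factor $q_i$ cancels in both terms, leaving
\[
  \textup{VR}_i = \frac{\Var(Y_i\mid Y_i>0)}{m_i^{+}} + (1-q_i)\,m_i^{+}.
\]
The second piece is already $(1-q_i)\,n_i h_i$, which is the third term of \eqref{sma:eq-VR}.

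For the first piece I will use Form~A of the truncated variance, \eqref{sma:eq-trunc-var-formA}. That form writes $\Var(Y_i\mid Y_i>0)$ as $m_i^{+}$ times the spread factor
\[
  \frac{(1-\mu_i)(n_i+\kappa)}{1+\kappa} - \frac{n_i\mu_i p_{0,i}}{1-p_{0,i}}.
\]
Dividing by $m_i^{+}$ therefore returns exactly the first two terms of \eqref{sma:eq-VR}.

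The only point needing care is checking Form~A against \cref{sma:prop-trunc-var}. To do this I will factor $n\mu/(1-p_0)$ out of $[V_{\mathrm{BB}}(1-p_0) - n^2\mu^2 p_0]/(1-p_0)^2$ and use $V_{\mathrm{BB}} = n\mu(1-\mu)(n+\kappa)/(1+\kappa)$ from \eqref{eq:bb-moments}. This is a one-line algebraic identity. The numerical cross-check in the paper already supports it.

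No genuine obstacle is expected. The only things to watch are the well-definedness conditions $q_i>0$, $\mu_i\in(0,1)$ and $n_i\ge1$, which together ensure $p_{0,i}<1$ and $\E[Y_i]>0$, so that all the divisions are legitimate.
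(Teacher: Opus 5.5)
Your proposal is correct and follows essentially the same route as the paper: divide the hurdle variance decomposition by $\E[Y_i] = q_i n_i h_i$, then substitute the truncated variance of \cref{sma:prop-trunc-var}. The paper starts from the equivalent restatement in \cref{sma:cor-var-uncond-mean}, and your use of Form~A simply does in one step the term-by-term simplification that the paper carries out with $h_i = \mu_i/(1-p_{0,i})$.
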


\begin{proof}
Using $\E[Y_i] = q_i\,n_i\,h_i$ and~\cref{sma:cor-var-uncond-mean}:
\[
  \textup{VR}_i
  = \frac{\Var(Y_i \mid Y_i > 0)}{n_i\,h_i}
    + (1-q_i)\,n_i\,h_i.
\]
For the first term, substitute the truncated variance
(\cref{sma:prop-trunc-var}) and simplify using
$h_i = \mu_i/(1-p_{0,i})$:
\[
  \frac{V_{\textup{BB},i}}{(1-p_{0,i})\,n_i\,h_i}
  = \frac{(1-\mu_i)(n_i+\kappa)}{1+\kappa},
  \qquad
  \frac{n_i\,\mu_i^2\,p_{0,i}}{(1-p_{0,i})^2\,h_i}
  = \frac{n_i\,\mu_i\,p_{0,i}}{1-p_{0,i}}. \qedhere
\]
\end{proof}

\begin{remark}\label{sma:rem-VR-interpretation}
The three terms in~\eqref{sma:eq-VR} have clear interpretations:
\begin{enumerate}[label=\textup{(\roman*)}]
  \item $(1-\mu_i)(n_i+\kappa)/(1+\kappa)$: the beta-binomial variance
    inflation factor~\citep{Prentice1986}, which exceeds unity for all $n_i > 1$;
  \item $-n_i\,\mu_i\,p_{0,i}/(1-p_{0,i})$: a correction from
    zero-truncation that reduces the variance ratio, reflecting the
    mass removed at $Y = 0$;
  \item $(1-q_i)\,n_i\,h_i$: the between-component contribution from the
    hurdle, which increases the variance ratio whenever $q_i < 1$.
\end{enumerate}
For Poisson data, $\textup{VR} = 1$ (equidispersion). In the NSECE data,
$\textup{VR}_i > 1$ consistently, reflecting both overdispersion and the
hurdle mechanism.
\end{remark}


\paragraph{Overdispersion index relative to the binomial.}

\begin{definition}[Overdispersion index]\label{sma:def-OD}
For a proportion $Y_i/n_i$, define
\begin{equation}\label{sma:eq-OD}
  \textup{OD}_i
  = \frac{\Var(Y_i)}{n_i\,\pi_i(1-\pi_i)},
  \qquad
  \text{where }
  \pi_i = q_i\,h_i = \frac{\E[Y_i]}{n_i}.
\end{equation}
\end{definition}

The denominator $n_i\,\pi_i(1-\pi_i)$ is the variance of a
$\Bin(n_i, \pi_i)$ random variable with the same unconditional mean
proportion, so $\textup{OD}_i$ measures overdispersion relative to the
binomial benchmark.

\paragraph{Limiting cases.}

\begin{itemize}[nosep]
  \item \emph{No hurdle, $p_0 \to 0$.} Then $q_i = 1$, $h_i \to \mu_i$,
    and $\Var(Y_i) \to V_{\textup{BB},i}$, so
    $\textup{OD}_i \to (n_i+\kappa)/(1+\kappa)$, the standard beta-binomial
    overdispersion factor~\citep{Prentice1986}.
  \item \emph{Binomial limit ($\kappa \to \infty$, $q_i = 1$).}
    $V_{\textup{BB},i} \to n_i\mu_i(1-\mu_i)$ and the hurdle is inactive, so
    $\textup{OD}_i \to 1$.
\end{itemize}


\paragraph{Coefficient of variation.}

\begin{proposition}[Conditional CV]\label{sma:prop-CV-cond}
The squared coefficient of variation of $Y_i$ conditional on participation
is
\begin{equation}\label{sma:eq-CV-cond}
  \boxed{
    \textup{CV}^2(Y_i \mid Y_i > 0)
    = \frac{(1-\mu_i)(n_i+\kappa)}{n_i\,\mu_i(1+\kappa)}\,(1-p_{0,i})
      - p_{0,i}.
  }
\end{equation}
\end{proposition}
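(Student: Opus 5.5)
The plan is to compute $\textup{CV}^2 = \Var(Y_i \mid Y_i>0)/\bigl[\E(Y_i\mid Y_i>0)\bigr]^2$ directly from the truncated moments established above, with no new probabilistic input. For the numerator I will use Form~A, \eqref{sma:eq-trunc-var-formA}, which writes the truncated variance as $m_i^{+}$ times a spread factor. Here $m_i^{+} = n_i\mu_i/(1-p_{0,i})$ is the conditional mean obtained from the truncation identity (\cref{sma:prop-trunc-identity}). For the denominator I will use $(m_i^{+})^2$.

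Dividing Form~A by $(m_i^{+})^2$ cancels one factor of $m_i^{+}$, leaving
\[
\textup{CV}^2 = \frac{1-p_{0,i}}{n_i\mu_i}\left[\frac{(1-\mu_i)(n_i+\kappa)}{1+\kappa} - \frac{n_i\mu_i\,p_{0,i}}{1-p_{0,i}}\right].
\]
I will then distribute the prefactor across the bracket. In the first term this gives $(1-\mu_i)(n_i+\kappa)(1-p_{0,i})/[n_i\mu_i(1+\kappa)]$, which is the first term of \eqref{sma:eq-CV-cond}. In the second term the factors $n_i\mu_i$ and $1-p_{0,i}$ cancel exactly, leaving $-p_{0,i}$.

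As a cross-check, I will note the equivalent route via \cref{sma:prop-VR}: the conditional variance-to-mean ratio is the first two terms of \eqref{sma:eq-VR}, and dividing it by $m_i^{+}$ reproduces the same expression. I will also verify two limits. When $n_i=1$, $p_{0,i}=1-\mu_i$ and the formula gives $(1-\mu_i)\mu_i/\mu_i - (1-\mu_i) = 0$, consistent with the truncated variable being degenerate at $1$. When $p_{0,i}\to 0$, the formula reduces to the usual beta-binomial $\textup{CV}^2$.

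There is no real obstacle here. The only step needing care is the algebraic cancellation in the second term, which is where the minus sign and the $(1-p_{0,i})$ factors must be tracked; the numerical example $n=5$, $\mu=0.3$, $\kappa=2$ from the cross-check above can confirm it, since both sides should equal $1.74707/2.40770^2\approx 0.3014$.
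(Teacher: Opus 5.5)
Your proposal is correct and follows essentially the same route as the paper: both divide the truncated variance of \cref{sma:prop-trunc-var} by $(m_i^{+})^2$ and simplify. The only cosmetic difference is that the paper substitutes the boxed form $[V_{\mathrm{BB}}(1-p_{0,i}) - n_i^2\mu_i^2 p_{0,i}]/(1-p_{0,i})^2$, whereas you use the equivalent factored Form~A; your $n_i=1$ and $p_{0,i}\to 0$ limit checks and the numerical value $\approx 0.3014$ are also consistent with \eqref{sma:eq-CV-cond}.
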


\begin{proof}
By definition,
$\textup{CV}^2 = \Var(Y_i \mid Y_i > 0) / [\E(Y_i \mid Y_i > 0)]^2$.
Substituting the truncated variance (\cref{sma:prop-trunc-var}) and the
conditional mean $m_i^{+} = n_i\mu_i/(1-p_{0,i})$:
\begin{align*}
  \textup{CV}^2
  &= \frac{V_{\textup{BB},i}(1-p_{0,i}) - n_i^2\mu_i^2\,p_{0,i}}
         {n_i^2\mu_i^2}
  = \frac{V_{\textup{BB},i}}{n_i^2\mu_i^2}\,(1-p_{0,i}) - p_{0,i} \\
  &= \frac{(1-\mu_i)(n_i+\kappa)}{n_i\,\mu_i(1+\kappa)}\,(1-p_{0,i})
    - p_{0,i}. \qedhere
\end{align*}
\end{proof}

\begin{proposition}[Unconditional CV]\label{sma:prop-CV-uncond}
The squared coefficient of variation of $Y_i$ (unconditional) satisfies
\begin{equation}\label{sma:eq-CV-uncond}
  \boxed{
    \textup{CV}^2(Y_i)
    = \frac{\textup{CV}^2(Y_i \mid Y_i > 0) + 1 - q_i}{q_i}.
  }
\end{equation}
\end{proposition}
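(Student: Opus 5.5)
The plan is to express both the unconditional variance and the unconditional mean through the conditional moments given participation, and then take the ratio. By \cref{sma:cor-var-uncond-mean},
\[
  \Var(Y_i) = q_i\,\Var(Y_i \mid Y_i > 0) + \frac{1-q_i}{q_i}\bigl(\E[Y_i]\bigr)^2,
\]
and by \eqref{eq:uncond-mean}, $\E[Y_i] = q_i\,m_i^{+}$ with $m_i^{+} = \E(Y_i \mid Y_i > 0) = n_i h_i$.

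First, divide the variance identity by $(\E[Y_i])^2 = q_i^2 (m_i^{+})^2$. The between-component term becomes exactly $(1-q_i)/q_i$.

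Second, rewrite the within-component term as
\[
  \frac{q_i\,\Var(Y_i\mid Y_i>0)}{q_i^2 (m_i^{+})^2} = \frac{1}{q_i}\,\textup{CV}^2(Y_i \mid Y_i > 0),
\]
using the definition of the conditional coefficient of variation from \cref{sma:prop-CV-cond}. The closed form in \eqref{sma:eq-CV-cond} is not needed; only the definition of $\textup{CV}^2(Y_i\mid Y_i>0)$ as variance over squared conditional mean is used.

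Third, add the two pieces over the common denominator $q_i$ to obtain \eqref{sma:eq-CV-uncond}.

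There is no real obstacle here. The one point needing care is that the conditional mean in the CV is the truncated mean $m_i^{+}$, not $n_i\mu_i$. This is precisely what makes $\E[Y_i] = q_i m_i^{+}$ factor cleanly. As a sanity check, I would note that $q_i = 1$ recovers the conditional CV, and that $q_i \to 0$ makes the CV blow up, which is consistent with hurdle-induced dispersion.
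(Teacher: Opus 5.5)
Your proposal is correct and follows the paper's proof: both divide the law-of-total-variance decomposition by $[\E(Y_i)]^2 = q_i^2 (n_i h_i)^2$ and identify the two resulting terms as $\textup{CV}^2(Y_i \mid Y_i > 0)/q_i$ and $(1-q_i)/q_i$. The only cosmetic difference is your starting point. You use the corollary's form of the between-component term, $\frac{1-q_i}{q_i}(\E[Y_i])^2$, whereas the paper uses $q_i(1-q_i)(n_i h_i)^2$ from the variance decomposition, and these are the same quantity.
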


\begin{proof}
From $\Var(Y_i) = q_i\,\Var(Y_i \mid Y_i > 0) + q_i(1-q_i)(n_i h_i)^2$
(\cref{sma:prop-var-decomp}) and $[\E(Y_i)]^2 = q_i^2(n_i h_i)^2$:
\[
  \textup{CV}^2(Y_i)
  = \frac{\textup{CV}^2(Y_i \mid Y_i > 0)}{q_i}
    + \frac{1-q_i}{q_i}
  = \frac{\textup{CV}^2(Y_i \mid Y_i > 0) + 1 - q_i}{q_i}. \qedhere
\]
\end{proof}

\begin{remark}\label{sma:rem-CV-PPC}
The coefficient of variation is useful for posterior predictive checking
because: \textup{(i)}~it is scale-free, enabling comparison across
providers with different~$n_i$; \textup{(ii)}~the additive $(1-q_i)/q_i$
term in~\eqref{sma:eq-CV-uncond} reveals how the hurdle magnifies the CV
when participation is low; and \textup{(iii)}~observed sample CVs can be
compared against the posterior distribution of theoretical CVs, providing a
diagnostic that simultaneously probes the mean, variance, and
zero-inflation structure of the model.
\end{remark}


\subsection{Analytic properties of the intensity function}\label{sma:intensity}

This subsection collects analytic results for the intensity function
$h(\mu) = \mu/(1-p_0)$ that underpin the elasticity formula
(\cref{prop:elasticity}), the boundary behavior discussed in
\cref{rem:boundary}, and the log-scale decomposition of marginal effects.
We continue to write $b = (1-\mu)\kappa$ and use
$\Lambda$, $\Lambda_2$ as defined in~\eqref{sma:eq-Lambda}.


\paragraph{Second derivative of $p_0$.}

\begin{lemma}[Second derivative of $p_0$]\label{sma:lem-d2p0}
  For all $n \ge 1$, $\mu \in (0,1)$, and $\kappa > 0$,
  \begin{equation}\label{sma:eq-d2p0}
    \frac{\partial^2 p_0}{\partial\mu^2}
    = p_0\bigl(\Lambda^2 - \Lambda_2\bigr).
  \end{equation}
\end{lemma}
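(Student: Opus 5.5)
The plan is to differentiate $p_0$ twice, reusing the two first-derivative identities already established in the proof of \cref{thm:monotonicity}: $\partial p_0/\partial\mu = -p_0\,\Lambda$ and $\partial\Lambda/\partial\mu = \Lambda_2$.

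\emph{Step 1 (first derivative).} Write $\ln p_0 = \sum_{j=0}^{n-1}\bigl\{\ln[(1-\mu)\kappa+j] - \ln(\kappa+j)\bigr\}$. All arguments are positive for $\mu\in(0,1)$ and $\kappa>0$, so this is legitimate. Differentiating term by term gives $\partial p_0/\partial\mu = -p_0\Lambda$, since each summand contributes $-\kappa/[(1-\mu)\kappa+j]$.

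\emph{Step 2 (derivative of $\Lambda$).} With $b=(1-\mu)\kappa$, each term $\kappa(b+j)^{-1}$ has derivative $-\kappa(b+j)^{-2}\cdot(\partial b/\partial\mu) = \kappa^2(b+j)^{-2}$. Summing over $j$ gives $\partial\Lambda/\partial\mu=\Lambda_2$.

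\emph{Step 3 (product rule).} Differentiate $\partial p_0/\partial\mu = -p_0\Lambda$ once more:
\[
\frac{\partial^2 p_0}{\partial\mu^2} = -\frac{\partial p_0}{\partial\mu}\,\Lambda - p_0\,\frac{\partial\Lambda}{\partial\mu} = p_0\Lambda^2 - p_0\Lambda_2,
\]
which is the stated identity $p_0(\Lambda^2-\Lambda_2)$. This is the same computation as $d(p_0\Lambda)/d\mu = p_0(\Lambda_2-\Lambda^2)$ in Step 2 of the proof of \cref{thm:monotonicity}, with the sign flipped.

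There is no genuine obstacle. The only point needing care is the sign of $\partial b/\partial\mu=-\kappa$, which is applied twice and must yield $+\Lambda_2$ rather than $-\Lambda_2$.

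As a consistency check, for $n=1$ we have $p_0 = 1-\mu$, so the second derivative is $0$. This agrees with \cref{sma:lem-reciprocals}(i), which gives $\Lambda^2=\Lambda_2$ in that case. Part (ii) of the same lemma then shows that $p_0$ is strictly convex in $\mu$ for $n\ge2$.
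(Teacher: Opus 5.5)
Your proof is correct and follows the same route as the paper: both differentiate $\partial p_0/\partial\mu = -p_0\Lambda$ once more by the product rule and use $\partial\Lambda/\partial\mu = \Lambda_2$ (from $\partial b/\partial\mu = -\kappa$) to get $p_0(\Lambda^2-\Lambda_2)$. Your $n=1$ consistency check is a harmless addition.
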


\begin{proof}
  Starting from $\partial p_0/\partial\mu = -p_0\,\Lambda$, differentiate
  by the product rule:
  \[
    \frac{\partial^2 p_0}{\partial\mu^2}
    = -\frac{\partial p_0}{\partial\mu}\cdot\Lambda
      - p_0\cdot\frac{\partial\Lambda}{\partial\mu}
    = p_0\,\Lambda^2 - p_0\,\Lambda_2
    = p_0\bigl(\Lambda^2 - \Lambda_2\bigr).
  \]
  Here we used $\partial\Lambda/\partial\mu = \Lambda_2$ (since
  $\partial b/\partial\mu = -\kappa$).
\end{proof}

\begin{corollary}[Convexity of $p_0$ in $\mu$]\label{sma:cor-p0-convex}
  For $n \ge 2$:
  $\partial^2 p_0/\partial\mu^2 > 0$, so $p_0$ is strictly convex
  in~$\mu$. For $n = 1$: $p_0 = 1 - \mu$ is linear in~$\mu$.
\end{corollary}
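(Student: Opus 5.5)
The plan is to read the convexity claim off \cref{sma:lem-d2p0} and then split on $n$, reusing \cref{sma:lem-reciprocals}. By \cref{sma:lem-d2p0}, $\partial^2 p_0/\partial\mu^2 = p_0(\Lambda^2-\Lambda_2)$ for every $n\ge1$, so the sign of the second derivative is the sign of $p_0(\Lambda^2-\Lambda_2)$.

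First we check that $p_0>0$ on $(0,1)$. With $b=(1-\mu)\kappa>0$, every factor $(b+j)/(\kappa+j)$ in \eqref{eq:p0} is strictly positive, and hence so is the product.

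For $n\ge2$ we invoke \cref{sma:lem-reciprocals}(ii), which applies because $b>0$, and gives $\Lambda^2>\Lambda_2$. Multiplying by $p_0>0$ yields $\partial^2p_0/\partial\mu^2>0$ throughout $(0,1)$. A $C^2$ function with strictly positive second derivative on an open interval is strictly convex there, so no endpoint behaviour needs to be examined.

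For $n=1$ the product in \eqref{eq:p0} has the single factor $j=0$, so $p_0=(1-\mu)\kappa/\kappa=1-\mu$, which is linear. As a consistency check, \cref{sma:lem-reciprocals}(i) gives $\Lambda^2=\Lambda_2$, so the second derivative vanishes identically, as \cref{sma:lem-d2p0} predicts.

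There is no real obstacle here. The only point needing care is confirming the positivity of $b$ and of $p_0$, so that \cref{sma:lem-reciprocals} applies and the sign is preserved when multiplying by $p_0$.
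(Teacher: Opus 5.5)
Your proposal is correct and follows the paper's proof essentially verbatim: you combine the formula $\partial^2 p_0/\partial\mu^2 = p_0(\Lambda^2-\Lambda_2)$ with the sum-of-reciprocals inequality and $p_0>0$ for $n\ge 2$, and read off linearity for $n=1$. Your explicit checks that $b>0$ and $p_0>0$, and your direct computation of $p_0 = 1-\mu$ from the single-factor product, simply spell out what the paper leaves implicit.
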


\begin{proof}
  Immediate from~\cref{sma:lem-d2p0} and
  \cref{sma:lem-reciprocals}: for $n \ge 2$,
  $\Lambda^2 > \Lambda_2$ and $p_0 > 0$, so the second derivative is
  strictly positive. For $n = 1$, $\Lambda^2 = \Lambda_2$, giving
  $\partial^2 p_0/\partial\mu^2 = 0$.
\end{proof}


\paragraph{Second derivative of $h$.}

\begin{proposition}[Second derivative of $h$]\label{sma:prop-d2h}
  Write $D = 1 - p_0$. Then
  \begin{equation}\label{sma:eq-d2h}
    \frac{\partial^2 h}{\partial\mu^2}
    = \frac{p_0}{D^3}
      \Bigl[-2\Lambda D
            + \mu\Lambda^2(1+p_0)
            - \mu\Lambda_2 D\Bigr].
  \end{equation}
\end{proposition}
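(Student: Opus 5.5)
The plan is to differentiate the first-derivative formula already established in the proof of \cref{thm:monotonicity} rather than starting again from $h = \mu/(1-p_0)$. Write $D = 1-p_0$, so that \eqref{sma:eq-dh-numerator} gives $\partial h/\partial\mu = \Phi(\mu)/D^2$ with $\Phi = D - \mu p_0\Lambda$. Applying the quotient rule once more gives
\[
  \frac{\partial^2 h}{\partial\mu^2}
  = \frac{\Phi'\,D^2 - \Phi\cdot 2D\,D'}{D^4}
  = \frac{\Phi'\,D - 2\Phi\,D'}{D^3},
\]
so the task reduces to substituting known expressions for $\Phi'$ and $D'$.

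Both ingredients are available from earlier results. First, $D' = -\partial p_0/\partial\mu = p_0\Lambda$, using the identity $\partial p_0/\partial\mu = -p_0\Lambda$ from the auxiliary quantities preceding \cref{sma:lem-reciprocals}. Second, $\Phi' = \mu p_0(\Lambda^2-\Lambda_2)$, which is exactly \eqref{sma:eq-Phi-deriv}. Its derivation used only $\partial\Lambda/\partial\mu = \Lambda_2$ and holds for every $n\ge1$; equivalently, it follows from \cref{sma:lem-d2p0}. Substituting these, the numerator becomes
\[
  \mu p_0(\Lambda^2-\Lambda_2)D - 2(D-\mu p_0\Lambda)\,p_0\Lambda
  = p_0\bigl[\mu\Lambda^2 D - \mu\Lambda_2 D - 2\Lambda D + 2\mu p_0\Lambda^2\bigr].
\]

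The last step is to collect the two $\mu\Lambda^2$ terms: $\mu\Lambda^2(D + 2p_0) = \mu\Lambda^2(1+p_0)$, because $D = 1-p_0$. Factoring out $p_0/D^3$ then yields \eqref{sma:eq-d2h}.

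There is no genuine obstacle here; the whole argument is a quotient-rule computation. The one place to be careful is the sign bookkeeping in $D' = +p_0\Lambda$, since it enters through the $-2\Phi D'$ term. As a sanity check, the case $n=1$ should give zero: there $p_0 = 1-\mu$, $D=\mu$, $\Lambda = 1/(1-\mu)$ and $\Lambda_2=\Lambda^2$, so the bracket becomes $-2\Lambda\mu + \mu\Lambda^2(2-\mu) - \mu^2\Lambda^2 = 2\mu\Lambda\bigl[-1 + \Lambda(1-\mu)\bigr] = 0$. This agrees with $h\equiv1$ from \cref{thm:monotonicity}.
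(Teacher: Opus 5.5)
Your proposal is correct and follows the paper's proof essentially step for step: you apply the quotient rule to $\Phi/D^2$, substitute $D' = p_0\Lambda$ and $\Phi' = \mu p_0(\Lambda^2-\Lambda_2)$, and simplify using $D + 2p_0 = 1 + p_0$. Your $n=1$ sanity check also agrees with the paper's own verification for that case.
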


\begin{proof}
  From $\partial h/\partial\mu = N/D^2$ where
  $N = \Phi(\mu) = D - \mu\,p_0\,\Lambda$
  (see~\eqref{sma:eq-Phi}), apply the quotient rule:
  \[
    \frac{\partial^2 h}{\partial\mu^2}
    = \frac{N'D - 2N D'}{D^3},
  \]
  where $D' = -\partial p_0/\partial\mu = p_0\,\Lambda > 0$.

  \medskip\noindent\textit{Computing $N'$.}\enspace
  Since $N = (1-p_0) - \mu\,p_0\,\Lambda$, we have
  \[
    N' = p_0\,\Lambda
         - \bigl[p_0\,\Lambda
           + \mu\cdot\tfrac{d}{d\mu}(p_0\,\Lambda)\bigr]
       = -\mu\cdot\frac{d}{d\mu}(p_0\,\Lambda)
       = \mu\,p_0\bigl(\Lambda^2 - \Lambda_2\bigr),
  \]
  using $d(p_0\Lambda)/d\mu = p_0(\Lambda_2 - \Lambda^2)$
  from~\eqref{sma:eq-Phi-deriv}.

  \medskip\noindent\textit{Assembling.}\enspace
  \begin{align*}
    N'D - 2N D'
    &= \mu\,p_0(\Lambda^2 - \Lambda_2)D
       - 2\bigl(D - \mu\,p_0\,\Lambda\bigr)(p_0\,\Lambda) \\
    &= \mu\,p_0(\Lambda^2 - \Lambda_2)D
       - 2D\,p_0\,\Lambda + 2\mu\,p_0^2\Lambda^2 \\
    &= p_0\bigl[-2\Lambda D
       + \mu(\Lambda^2 - \Lambda_2)D
       + 2\mu\,p_0\,\Lambda^2\bigr].
  \end{align*}
  Noting that
  $(\Lambda^2 - \Lambda_2)D + 2p_0\Lambda^2
  = \Lambda^2(D + 2p_0) - \Lambda_2 D
  = \Lambda^2(1 + p_0) - \Lambda_2 D$
  yields the stated form.
\end{proof}

\paragraph{Verification ($n = 1$).}
When $n = 1$, $p_0 = 1 - \mu$, $D = \mu$,
$\Lambda = 1/(1-\mu)$, $\Lambda_2 = \Lambda^2$, and $h \equiv 1$, so
$\partial^2 h/\partial\mu^2 = 0$. In the formula: $N = D - \mu\,p_0\,\Lambda
= \mu - \mu(1-\mu)/(1-\mu) = 0$ identically, so $N' = 0$ and the
numerator vanishes.\enspace$\square$


\paragraph{Monotonicity of $p_0$ in all parameters.}

\begin{proposition}[Monotonicity of $p_0$ in $\mu$]\label{sma:prop-p0-mu}
  $\partial p_0/\partial\mu = -p_0\,\Lambda < 0$.
  Increasing $\mu$ strictly decreases the zero probability.
\end{proposition}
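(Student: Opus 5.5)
The derivative formula $\partial p_0/\partial\mu = -p_0\,\Lambda$ has already been noted in \cref{sma:proof-monotonicity}. I will nonetheless re-derive it explicitly from the product form~\eqref{eq:p0} and then establish the sign, so the argument is self-contained.

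For $\mu\in(0,1)$ and $\kappa>0$, every factor $[(1-\mu)\kappa+j]/(\kappa+j)$ is strictly positive, since $(1-\mu)\kappa>0$. Hence $p_0>0$ and logarithms are legitimate, giving
\[
\ln p_0 = \sum_{j=0}^{n-1}\Bigl\{\ln\bigl[(1-\mu)\kappa+j\bigr]-\ln(\kappa+j)\Bigr\}.
\]
The denominators $\kappa+j$ do not depend on $\mu$. Each numerator term contributes $-\kappa/[(1-\mu)\kappa+j]$ to the derivative, because $\partial[(1-\mu)\kappa]/\partial\mu=-\kappa$. Summing these contributions gives $\partial\ln p_0/\partial\mu=-\Lambda$ with $\Lambda$ as in~\eqref{sma:eq-Lambda}. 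Multiplying by $p_0$ then yields $\partial p_0/\partial\mu=-p_0\Lambda$.

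For the sign, note that $\Lambda$ is $\kappa>0$ times a nonempty sum (since $n\ge1$) of reciprocals of positive numbers. So $\Lambda>0$, and together with $p_0>0$ this gives $\partial p_0/\partial\mu<0$ throughout $(0,1)$. Strict decrease of $p_0$ in $\mu$ then follows from the mean value theorem.

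There is no real obstacle in this argument. The only point needing care is to confirm positivity of every factor, which is what justifies taking logarithms and makes the derivative strictly negative rather than merely nonpositive. This holds for every $n\ge1$, including $n=1$, where $p_0=1-\mu$ directly.
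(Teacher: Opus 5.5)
Your proof is correct and follows the paper's route: the paper also obtains $\partial p_0/\partial\mu=-p_0\Lambda$ by differentiating $\ln p_0$ from the product form~\eqref{eq:p0}, and reads off the sign from $p_0>0$ and $\Lambda>0$. Your explicit check that every factor is positive (which justifies taking logarithms) and your note that the case $n=1$ is covered are small but worthwhile additions.
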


\begin{proof}
  Established in~\cref{sma:proof-monotonicity} (auxiliary computation
  preceding the proof of~\cref{thm:monotonicity}).
\end{proof}

\begin{proposition}[Monotonicity of $p_0$ in $\kappa$]\label{sma:prop-p0-kappa}
  For $n \ge 2$ and $\mu \in (0,1)$:
  \begin{equation}\label{sma:eq-dp0-dkappa}
    \frac{\partial p_0}{\partial\kappa}
    = -p_0\,\mu\sum_{j=1}^{n-1}\frac{j}{(b+j)(\kappa+j)} < 0.
  \end{equation}
  For $n = 1$: $\partial p_0/\partial\kappa = 0$ (since
  $p_0 = 1 - \mu$ is independent of~$\kappa$).
\end{proposition}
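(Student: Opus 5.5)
We differentiate $\ln p_0$ with respect to $\kappa$ term by term, following the same pattern used to obtain $\partial p_0/\partial\mu = -p_0\Lambda$ in \cref{sma:proof-monotonicity}. Writing $b = (1-\mu)\kappa$, we have
\[
  \ln p_0 = \sum_{j=0}^{n-1}\bigl[\ln(b+j) - \ln(\kappa+j)\bigr].
\]
Since $\partial b/\partial\kappa = 1-\mu$, the $j$-th summand has $\kappa$-derivative
\[
  \frac{1-\mu}{b+j} - \frac{1}{\kappa+j}.
\]
Put this over the common denominator $(b+j)(\kappa+j)$. The numerator is
\[
  (1-\mu)(\kappa+j) - (b+j) = (1-\mu)\kappa + (1-\mu)j - (1-\mu)\kappa - j = -\mu j.
\]
This is the key simplification: the $\kappa$ terms cancel exactly because $b=(1-\mu)\kappa$. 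Hence
\[
  \frac{\partial \ln p_0}{\partial\kappa} = -\mu\sum_{j=0}^{n-1}\frac{j}{(b+j)(\kappa+j)},
\]
and the $j=0$ term vanishes, so the sum effectively starts at $j=1$. Multiplying by $p_0$ gives \eqref{sma:eq-dp0-dkappa}.

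For the sign claim when $n\ge 2$, we use $\mu\in(0,1)$ and $\kappa>0$. These give $b>0$, and $p_0>0$ since every factor in \eqref{eq:p0} is positive. The sum over $j=1,\dots,n-1$ is nonempty and each term $j/[(b+j)(\kappa+j)]$ is strictly positive, so the derivative is strictly negative.

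For $n=1$, the product in \eqref{eq:p0} has the single factor $b/\kappa = 1-\mu$, which does not depend on $\kappa$. This agrees with the formula, whose sum is empty. Note that the differentiation step requires $p_0>0$ (it does, as shown above).

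There is no real obstacle here. The only step needing care is the algebraic cancellation in the numerator, which gives the clean factor $-\mu j$; it should be checked explicitly, since it is what makes the $j=0$ term drop out.
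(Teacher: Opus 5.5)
Your proposal is correct and follows essentially the same route as the paper: differentiate $\log p_0 = \sum_{j=0}^{n-1}[\log(b+j) - \log(\kappa+j)]$ term by term using $\partial b/\partial\kappa = 1-\mu$, simplify each numerator to $-\mu j$ so that the $j=0$ term drops out, and multiply by $p_0$. Your explicit checks of the strict sign for $n \ge 2$ (nonempty sum, $b>0$, $p_0>0$) and of the $n=1$ case (the single factor $b/\kappa = 1-\mu$ does not depend on $\kappa$) add detail the paper's proof leaves implicit.
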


\begin{proof}
  Differentiating
  $\log p_0 = \sum_{j=0}^{n-1}[\log(b+j) - \log(\kappa+j)]$
  with $\partial b/\partial\kappa = 1-\mu$:
  \[
    \frac{\partial\log p_0}{\partial\kappa}
    = \sum_{j=0}^{n-1}
      \left[\frac{1-\mu}{b+j} - \frac{1}{\kappa+j}\right].
  \]
  For each term, the numerator is
  $(1-\mu)(\kappa+j) - (b+j) = (1-\mu)(\kappa+j) - (1-\mu)\kappa - j = -\mu j$.
  Hence
  \[
    \frac{\partial\log p_0}{\partial\kappa}
    = -\mu\sum_{j=1}^{n-1}\frac{j}{(b+j)(\kappa+j)} < 0,
  \]
  where the $j = 0$ term vanishes. Multiplying by $p_0 > 0$ gives the
  stated formula.
\end{proof}

Increasing $\kappa$ (reducing overdispersion) concentrates mass near
$n\mu > 0$, pulling probability away from zero.

\begin{proposition}[Monotonicity of $p_0$ in $n$]\label{sma:prop-p0-n}
  For all $n \ge 1$, $\mu \in (0,1)$, and $\kappa > 0$:
  \begin{equation}\label{sma:eq-p0-ratio-n}
    \frac{p_0(n+1)}{p_0(n)}
    = \frac{b+n}{\kappa+n} < 1.
  \end{equation}
  Hence $p_0$ is strictly decreasing in~$n$.
\end{proposition}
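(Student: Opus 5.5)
The plan is to work directly from the product formula~\eqref{eq:p0}, viewing $p_0$ as a function of $n$ with $\mu$ and $\kappa$ held fixed. Write $b=(1-\mu)\kappa$ as before, so that
\[
  p_0(n)=\prod_{j=0}^{n-1}\frac{b+j}{\kappa+j}.
\]
Passing from $n$ to $n+1$ appends exactly one new factor, the one with index $j=n$, and leaves the first $n$ factors unchanged. Because every factor is strictly positive ($b>0$ for $\mu\in(0,1)$, and $\kappa>0$), $p_0(n)>0$. We may therefore divide, and the common factors cancel to give
\[
  \frac{p_0(n+1)}{p_0(n)}=\frac{b+n}{\kappa+n},
\]
which is the identity~\eqref{sma:eq-p0-ratio-n}.

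For the inequality, note that $\mu>0$ gives $b=(1-\mu)\kappa<\kappa$, and hence $b+n<\kappa+n$. The denominator $\kappa+n$ is positive, so the ratio is strictly less than $1$. It is also strictly positive, so $p_0(n+1)<p_0(n)$ for every $n\ge 1$, and strict monotone decrease in $n$ follows by induction on $n$.

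The argument has no real obstacle. The only points to state explicitly are the positivity of $b$ and $\kappa$, which justifies the division, and the strictness of $b<\kappa$, which requires $\mu>0$. It is also worth recording the interpretation: the new factor is the conditional probability that the $(n+1)$-th trial is a non-IT outcome, given that all preceding trials were. This mirrors the interpretation given after~\eqref{eq:p0}.
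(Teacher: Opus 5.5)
Your proposal is correct and follows the paper's own proof: it appends the single factor with index $j=n$ to the product form, obtains the ratio $(b+n)/(\kappa+n)$, and concludes that the ratio is below one because $b=(1-\mu)\kappa<\kappa$. You also note explicitly that $p_0(n)>0$, which is needed to divide and which the paper leaves implicit.
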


\begin{proof}
  From the product form~\eqref{eq:p0},
  $p_0(n+1) = p_0(n)\cdot(b+n)/(\kappa+n)$.
  Since $b = (1-\mu)\kappa < \kappa$ for $\mu \in (0,1)$, the ratio
  $(b+n)/(\kappa+n) < 1$.
\end{proof}

\begin{remark}[Polynomial decay of $p_0$ in~$n$]\label{sma:rem-p0-decay}
  The consecutive ratio $(b+n)/(\kappa+n) \to 1$ as $n \to \infty$,
  so $p_0$ decays \emph{polynomially} in~$n$, not exponentially.
  Specifically, by Stirling's approximation for the ratio of rising
  factorials, $p_0(n) \sim C(\mu,\kappa)\,n^{-\mu\kappa}$ as
  $n \to \infty$.  By contrast, the binomial zero probability
  $(1-\mu)^n$ decays exponentially.  This slower decay means that
  even for moderately large group sizes ($n = 50$--$100$), the
  structural-zero probability $p_0$ remains non-negligible when $\kappa$
  is small, reinforcing the importance of the hurdle component
  \citep[see also][for a spatial BB application where this phenomenon
  is empirically relevant]{BandyopadhyayEtAl2011}.
\end{remark}


\paragraph{Log-derivative and strict monotonicity of $\log h$.}

\begin{proposition}[Log-derivative of $h$]\label{sma:prop-logderiv}
  \begin{equation}\label{sma:eq-logderiv}
    \frac{\partial(\log h)}{\partial\mu}
    = \frac{1}{\mu} - \frac{p_0\,\Lambda}{1-p_0}.
  \end{equation}
\end{proposition}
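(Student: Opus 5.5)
Since $h = \mu/(1-p_0)$ is a quotient of strictly positive quantities for $\mu\in(0,1)$, the plan is to work with $\log h = \log\mu - \log(1-p_0)$ and differentiate term by term, rather than go through the quotient-rule expression~\eqref{sma:eq-dh-numerator}. The first term contributes $1/\mu$. For the second, the chain rule gives $\partial\log(1-p_0)/\partial\mu = -(\partial p_0/\partial\mu)/(1-p_0)$.

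The only input needed is the identity $\partial p_0/\partial\mu = -p_0\Lambda$. It was recorded in the auxiliary computations before the proof of \cref{thm:monotonicity} and restated in \cref{sma:prop-p0-mu}. It follows from differentiating $\log p_0 = \sum_{j=0}^{n-1}[\log((1-\mu)\kappa+j) - \log(\kappa+j)]$ with $\partial b/\partial\mu = -\kappa$. Substituting it gives
\[
\frac{\partial\log(1-p_0)}{\partial\mu} = \frac{p_0\Lambda}{1-p_0},
\]
and subtracting this from $1/\mu$ yields~\eqref{sma:eq-logderiv}.

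As a consistency check, I would multiply by $h$ and confirm agreement with~\eqref{sma:eq-dh-numerator}. Indeed, $h\,\partial\log h/\partial\mu = [(1-p_0) - \mu p_0\Lambda]/(1-p_0)^2 = \Phi/(1-p_0)^2$. Multiplying instead by $\mu$ recovers $\varepsilon_h = 1-\omega$ from \cref{prop:elasticity}.

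There is no real obstacle. The only point needing care is that $1-p_0>0$ for $\mu\in(0,1)$, since $p_0<1$ whenever some factor $((1-\mu)\kappa+j)/(\kappa+j)<1$, which holds already at $j=0$. This guarantees the logarithm is defined and the formula is valid for every $n\ge1$. For $n=1$ the two terms cancel, giving $\partial\log h/\partial\mu=0$, consistent with $h\equiv1$.
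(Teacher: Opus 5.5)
Your proposal is correct and follows the paper's own proof: you write $\log h = \log\mu - \log(1-p_0)$, differentiate term by term, and substitute $\partial p_0/\partial\mu = -p_0\Lambda$. Your extra checks are accurate but not needed for the result: recovering $\Phi/(1-p_0)^2$, recovering the elasticity $1-\omega$, noting that $1-p_0>0$, and the $n=1$ cancellation.
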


\begin{proof}
  Since $\log h = \log\mu - \log(1-p_0)$, we differentiate each term:
  $d(\log\mu)/d\mu = 1/\mu$, and
  $d\log(1-p_0)/d\mu = (-dp_0/d\mu)/(1-p_0) = p_0\Lambda/(1-p_0)$.
  Subtracting gives the result.
\end{proof}

\begin{theorem}[$\log h$ is strictly increasing]
  \label{sma:thm-logh-increasing}
  For $n \ge 2$:
  \begin{equation}\label{sma:eq-logh-increasing}
    \frac{\partial(\log h)}{\partial\mu} > 0
    \qquad\textup{for all } \mu \in (0,1).
  \end{equation}
\end{theorem}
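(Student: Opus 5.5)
The quickest route is to reduce the statement to \cref{thm:monotonicity}, which has already been proved in full. Since $h(\mu) = \mu/(1-p_0) > 0$ for $\mu \in (0,1)$, the chain rule gives $\partial(\log h)/\partial\mu = h^{-1}\,\partial h/\partial\mu$. For $n \ge 2$ the numerator is strictly positive by \cref{thm:monotonicity}, and $h^{-1}$ is strictly positive, so the product is positive and \eqref{sma:eq-logh-increasing} follows at once.

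I would also give a direct verification that uses \cref{sma:prop-logderiv}, since that expression feeds the elasticity formula. Multiplying \eqref{sma:eq-logderiv} by $\mu(1-p_0) > 0$ gives
\[
\mu(1-p_0)\,\frac{\partial(\log h)}{\partial\mu} = (1-p_0) - \mu\,p_0\,\Lambda = \Phi(\mu),
\]
with $\Phi$ as in \eqref{sma:eq-Phi}. The claim is therefore equivalent to $\Phi(\mu) > 0$ on $(0,1)$, and we can reuse the three-step argument of \cref{sma:proof-monotonicity}:
\begin{enumerate}[label=(\roman*)]
\item $\Phi(0) = 0$, because $p_0(0) = 1$;
\item $\Phi'(\mu) = \mu\,p_0(\Lambda^2 - \Lambda_2)$, by \eqref{sma:eq-Phi-deriv};
\item for $n \ge 2$, \cref{sma:lem-reciprocals}(ii) gives $\Lambda^2 > \Lambda_2$, so $\Phi' > 0$ on $(0,1)$.
\end{enumerate}
Together these show $\Phi > 0$ on $(0,1)$.

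The only point needing care is the boundary argument. $\Phi$ and $p_0$ are defined through the product formula \eqref{eq:p0}, which extends continuously (indeed smoothly) to $\mu = 0$ because $b = \kappa > 0$ there. So we can integrate $\Phi'$ from $0$, or equivalently apply the mean value theorem on $[0,\mu]$. We also need $p_0 > 0$ and $1 - p_0 > 0$ on the open interval. The first holds because every factor in \eqref{eq:p0} is positive. The second follows from \cref{sma:prop-p0-mu} together with $p_0(0) = 1$.

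As a byproduct, we obtain $\omega = \mu p_0\Lambda/(1-p_0) < 1$ in \cref{prop:elasticity}, since $\varepsilon_h = \mu\,\partial(\log h)/\partial\mu = \Phi/(1-p_0) > 0$.
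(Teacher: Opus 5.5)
Your proposal is correct and matches the paper's argument: the paper likewise rewrites the log-derivative condition from \cref{sma:prop-logderiv} as $\Phi(\mu) > 0$ and appeals to \cref{thm:monotonicity}. That is exactly your second route, and your chain-rule shortcut $\partial(\log h)/\partial\mu = h^{-1}\,\partial h/\partial\mu$ is the same reduction in one line; your checks that $p_0$ and $\Lambda$ extend continuously to $\mu = 0$ and that $0 < p_0 < 1$ on $(0,1)$ supply details the paper leaves implicit.
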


\begin{proof}
  The condition $\partial(\log h)/\partial\mu > 0$ is equivalent to
  $1/\mu > p_0\Lambda/(1-p_0)$, which rearranges to
  $(1-p_0) > \mu\,p_0\,\Lambda$, i.e.,
  $\Phi(\mu) > 0$ (see~\eqref{sma:eq-Phi}). This was established in
  \cref{thm:monotonicity}.
\end{proof}

\begin{remark}[Decomposition of the log-derivative]
  \label{sma:rem-logderiv-decomp}
  The log-derivative decomposes as
  \[
    \frac{\partial(\log h)}{\partial\mu}
    = \underbrace{\frac{1}{\mu}}_{\text{direct effect}}
    \;-\;
    \underbrace{\frac{p_0\,\Lambda}{1-p_0}}_{\text{truncation attenuation}}.
  \]
  The direct effect represents the proportional increase in the numerator
  $\mu$. The truncation attenuation represents the proportional increase
  in the denominator $1-p_0$. Both terms are positive, but the direct
  effect always dominates because $\Phi(\mu) > 0$
  (\cref{thm:monotonicity}), which in turn follows from the strict
  convexity of~$p_0$ (\cref{sma:cor-p0-convex}).
\end{remark}


\paragraph{Elasticity.}

\begin{definition}[Elasticity of $h$ with respect to $\mu$]
  \label{sma:def-elasticity}
  \begin{equation}\label{sma:eq-elasticity-def}
    \varepsilon_h(\mu)
    = \frac{\partial h}{\partial\mu}\cdot\frac{\mu}{h}
    = \mu\cdot\frac{\partial(\log h)}{\partial\mu}.
  \end{equation}
\end{definition}

\begin{proposition}[Elasticity formula and bounds]
  \label{sma:prop-elasticity-bounds}
  \begin{equation}\label{sma:eq-elasticity-formula}
    \boxed{
      \varepsilon_h(\mu)
      = 1 - \omega(\mu),
      \qquad
      \omega(\mu) = \frac{\mu\, p_0\, \Lambda}{1-p_0}.
    }
  \end{equation}
  Moreover:
  \begin{enumerate}[label=\textup{(\roman*)}]
    \item For $n = 1$: $\varepsilon_h = 0$ (since $h \equiv 1$).
    \item For $n \ge 2$:
      $\varepsilon_h \in (0,1)$ for all $\mu \in (0,1)$.
      The intensity function $h$ is inelastic with respect to~$\mu$.
  \end{enumerate}
  This is the full proof of\/ \cref{prop:elasticity} in the main text.
\end{proposition}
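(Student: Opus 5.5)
The formula follows directly from the log-derivative in \cref{sma:prop-logderiv}. By \cref{sma:def-elasticity}, $\varepsilon_h = \mu\,\partial(\log h)/\partial\mu$. Multiplying \eqref{sma:eq-logderiv} by $\mu$ gives
\[
  \varepsilon_h(\mu) = 1 - \frac{\mu\,p_0\,\Lambda}{1-p_0} = 1 - \omega(\mu),
\]
which establishes \eqref{sma:eq-elasticity-formula} for every $n\ge 1$. The only conditions needed are $\mu\in(0,1)$, $\kappa>0$ and $1-p_0>0$. The last one holds because each factor in \eqref{eq:p0} is strictly less than one when $\mu>0$.

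\emph{Case $n=1$.} Here $p_0 = 1-\mu$ and $\Lambda = 1/(1-\mu)$. Substituting, $\omega = \mu(1-\mu)(1-\mu)^{-1}/\mu = 1$, so $\varepsilon_h = 0$. This agrees with $h\equiv 1$ from \cref{thm:monotonicity}, and equivalently with $\Phi\equiv 0$ in Step~3 of its proof.

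\emph{Case $n\ge 2$: upper bound.} We have $\omega>0$ because $\mu$, $p_0$ and $\Lambda$ are strictly positive, with $\Lambda>0$ by \eqref{sma:eq-Lambda}. This gives $\varepsilon_h<1$.

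\emph{Case $n\ge 2$: lower bound.} We need $\omega<1$, i.e.\ $\mu\,p_0\,\Lambda < 1-p_0$. Multiplying through by the positive quantity $1-p_0$, this is exactly $\Phi(\mu)>0$ for the function $\Phi$ in \eqref{sma:eq-Phi}. That inequality was proved in \cref{thm:monotonicity} using \cref{sma:lem-reciprocals}(ii), and it is also restated as \cref{sma:thm-logh-increasing}. Hence $\varepsilon_h = \mu\,\partial(\log h)/\partial\mu > 0$, and so $\varepsilon_h\in(0,1)$.

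There is no real obstacle here. The substantive inequality $\Phi>0$ has already been established, so the only care needed is in the positivity bookkeeping, namely $p_0\in(0,1)$ on the open interval $\mu\in(0,1)$. That is what justifies dividing by $1-p_0$ and concluding strict inequalities.
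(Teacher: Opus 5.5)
Your proposal is correct and follows the paper's proof essentially line for line. You multiply the log-derivative of \cref{sma:prop-logderiv} by $\mu$ to get $\varepsilon_h = 1-\omega$, obtain $\varepsilon_h<1$ from positivity of $\mu$, $p_0$ and $\Lambda$, and obtain $\varepsilon_h>0$ from $\Phi(\mu)>0$ as established in \cref{thm:monotonicity} (restated as \cref{sma:thm-logh-increasing}); your explicit $n=1$ substitution giving $\omega=1$ and your check that $p_0\in(0,1)$ are correct and simply spell out steps the paper leaves implicit.
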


\begin{proof}
  From~\cref{sma:prop-logderiv},
  \[
    \varepsilon_h
    = \mu\left[\frac{1}{\mu} - \frac{p_0\,\Lambda}{1-p_0}\right]
    = 1 - \frac{\mu\, p_0\,\Lambda}{1-p_0}
    = 1 - \omega.
  \]
  By~\cref{sma:thm-logh-increasing}, $\Phi(\mu) > 0$ for $n \ge 2$,
  which gives $\omega < 1$ and hence $\varepsilon_h > 0$.
  Since $\omega > 0$ (all factors are positive), $\varepsilon_h < 1$.
\end{proof}

\begin{remark}[Inelastic interpretation]\label{sma:rem-inelastic}
  A one-percent increase in $\mu$ leads to less than a one-percent
  increase in~$h$, because increasing $\mu$ simultaneously increases
  $1 - p_0$ (the denominator of~$h$), partially offsetting the numerator
  effect.
\end{remark}

\begin{proposition}[Limiting behavior of $\varepsilon_h$]
  \label{sma:prop-elasticity-limits}
  \mbox{}
  \begin{enumerate}[label=\textup{(\alph*)}]
    \item As $\mu \to 0^+$ (for $n \ge 2$):
      $\varepsilon_h \to 0^+$. The truncation attenuation $\omega \to 1^-$.
    \item As $\mu \to 1^-$:
      $\varepsilon_h \to 1 - C_\kappa$, where
      $C_\kappa = \prod_{j=1}^{n-1}j/(\kappa+j)$ is the constant from
      \cref{sma:prop-boundary-one}.  Since
      $p_0 \sim (1-\mu)\,C_\kappa$ and
      $\Lambda \sim 1/(1-\mu) + O(1)$, we get
      $p_0\,\Lambda \to C_\kappa$ and hence
      $\omega \to C_\kappa$.
      For large $n$ or large~$\kappa$, $C_\kappa \approx 0$ and
      the limit approaches~$1$.
    \item As $\kappa \to \infty$ (binomial limit):
      $\varepsilon_h \to 1 - n\mu(1-\mu)^{n-1}\bigl/\bigl[1-(1-\mu)^n\bigr]$.
    \item As $\kappa \to 0$ (maximum overdispersion):
      $p_0 \to 1 - \mu$, $\Lambda \to 1/(1-\mu)$, so $\omega \to 1$ and
      $\varepsilon_h \to 0$. In this limit $h \to 1$; the intensity function
      is flat.
    \item As $n \to \infty$:
      $p_0 \to 0$, so $\omega \to 0$ and $\varepsilon_h \to 1$. The
      truncation correction becomes negligible.
  \end{enumerate}
\end{proposition}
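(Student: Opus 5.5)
The plan is to work directly from the closed form $\omega(\mu) = \mu\,p_0\,\Lambda/(1-p_0)$ in \cref{sma:prop-elasticity-bounds}. In each regime I compute the limits of $p_0$ and $\Lambda$ from the product form \eqref{eq:p0} and the sum \eqref{sma:eq-Lambda}, with $b=(1-\mu)\kappa$, and then pass to the limit in $\omega$. The bound $\varepsilon_h\in(0,1)$ for $n\ge 2$, already proved in \cref{sma:prop-elasticity-bounds}, supplies the one-sided qualifiers ($0^+$, $1^-$) once the limits themselves are found. Where the statement refers to \cref{sma:prop-boundary-one}, I will use only the explicit constant $C_\kappa=\prod_{j=1}^{n-1}j/(\kappa+j)$.

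For (a), as $\mu\to0^+$ we have $p_0\to1$ and $\Lambda\to\Lambda(0)=\kappa\sum_{j=0}^{n-1}(\kappa+j)^{-1}$, which is finite and positive, so $\omega$ is a $0/0$ form. Using $\partial p_0/\partial\mu=-p_0\Lambda$, a first-order Taylor expansion gives $1-p_0=\mu\Lambda(0)+O(\mu^2)$, and the numerator is $\mu\Lambda(0)+O(\mu^2)$ as well. Hence $\omega\to1$, and since $\omega<1$ the convergence is from below. Equivalently, this is L'H\^opital applied to $\mu p_0\Lambda$ over $1-p_0$.

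For (b), as $\mu\to1^-$ we have $b\to0$. I split off the $j=0$ factor: $p_0=(b/\kappa)\prod_{j\ge1}(b+j)/(\kappa+j)=(1-\mu)\bigl(C_\kappa+o(1)\bigr)$. Similarly $\Lambda=\kappa/b+\kappa\sum_{j\ge1}(b+j)^{-1}=1/(1-\mu)+O(1)$. Therefore $p_0\Lambda\to C_\kappa$ while $1-p_0\to1$, so $\omega\to C_\kappa$. The remark about large $n$ or $\kappa$ follows because each factor $j/(\kappa+j)<1$ and the product tends to $0$.

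For (c), as $\kappa\to\infty$ each factor $(b+j)/(\kappa+j)\to1-\mu$, so $p_0\to(1-\mu)^n$. Each term $\kappa/(b+j)\to1/(1-\mu)$, so $\Lambda\to n/(1-\mu)$. Substituting gives $\omega\to n\mu(1-\mu)^{n-1}/[1-(1-\mu)^n]$.

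For (d), as $\kappa\to0$ the $j=0$ factor equals $1-\mu$ identically, while the factors with $j\ge1$ tend to $1$, so $p_0\to1-\mu$. The $j=0$ term of $\Lambda$ equals $1/(1-\mu)$ identically and the remaining terms are $O(\kappa)$, so $\Lambda\to1/(1-\mu)$. Then $\omega\to\mu(1-\mu)(1-\mu)^{-1}/\mu=1$. Since $h=\mu/(1-p_0)\to\mu/\mu$, we also get $h\to1$.

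For (e), I expect the main obstacle: $p_0\to0$, but $\Lambda$ diverges like $\kappa\log n$, so the product needs a rate. I would avoid the Stirling asymptotics of \cref{sma:rem-p0-decay} and bound directly. Take logs and use $\log(1-x)\le -x$:
\[
\log p_0=\sum_{j<n}\log\Bigl(1-\tfrac{\mu\kappa}{\kappa+j}\Bigr)\le-\mu\kappa\sum_{j<n}(\kappa+j)^{-1}.
\]
This gives $p_0\le c\,n^{-\mu\kappa}$ for some constant $c$. Combined with $\Lambda\le \kappa/b+\kappa\log(1+n/b)$, we get $p_0\Lambda=O(n^{-\mu\kappa}\log n)\to0$. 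Since $1-p_0\to1$, it follows that $\omega\to0$ and $\varepsilon_h\to1$.
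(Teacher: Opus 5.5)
Your proposal is correct and follows the same route as the paper, which gives no separate proof but justifies each part inline by substituting the limits of $p_0$ and $\Lambda$ into $\omega=\mu p_0\Lambda/(1-p_0)$. Your rate argument in (e) is a genuine improvement, because the paper's ``$p_0\to 0$, so $\omega\to 0$'' ignores that $\Lambda$ grows like $\kappa\log n$; the same harmonic-sum estimate, $C_\kappa=\prod_{j=1}^{n-1}\bigl(1-\kappa/(\kappa+j)\bigr)\le\exp\bigl(-\kappa\sum_{j=1}^{n-1}(\kappa+j)^{-1}\bigr)$, is also what justifies $C_\kappa\to 0$ for large $n$ in (b), where ``each factor is below $1$'' would not suffice on its own.
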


\paragraph{Numerical verification.}
For $n = 10$, $\kappa = 10$, $\mu = 0.3$: the analytic formula gives
$\varepsilon_h = 0.734979$, and centered finite differences yield
$0.734979$. Agreement to six decimal places.\enspace$\square$


\paragraph{Boundary analysis.}

\begin{proposition}[Behavior of $h$ as $\mu \to 0^+$]
  \label{sma:prop-boundary-zero}
  For $n \ge 2$ and $\kappa > 0$:
  \begin{equation}\label{sma:eq-boundary-zero}
    h(\mu) \to \frac{1}{\Lambda_0}
    \quad \textup{as } \mu \to 0^+,
  \end{equation}
  where $\Lambda_0 = \kappa\sum_{j=0}^{n-1}1/(\kappa+j)$.
\end{proposition}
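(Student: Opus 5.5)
Because $h = \mu/(1-p_0)$ has the indeterminate form $0/0$ at $\mu = 0$ (by Step~1 of the proof of \cref{thm:monotonicity}, $p_0(0)=1$), the plan is to compute the limit by a first-order expansion of $1-p_0$ in $\mu$ near $0$, i.e.\ essentially by l'H\^opital's rule.

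\textbf{Step 1 (smoothness at the endpoint).} For $n \ge 2$ and $\kappa>0$, the product form \eqref{eq:p0} shows that $p_0(\mu) = \prod_{j=0}^{n-1}[(1-\mu)\kappa+j]/(\kappa+j)$ is a polynomial in $\mu$. It is therefore smooth on a neighborhood of $\mu=0$, and all factors are strictly positive there, including at $\mu = 0$ (factor values $(\kappa+j)/(\kappa+j)=1$). In particular, the formula $\partial p_0/\partial\mu = -p_0\Lambda$ from \cref{sma:proof-monotonicity} stays valid up to and including $\mu=0$, with $\Lambda(\mu) = \kappa\sum_j[(1-\mu)\kappa+j]^{-1}$ continuous there.

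\textbf{Step 2 (evaluate the derivative at zero).} At $\mu = 0$ we have $p_0(0) = 1$ and $\Lambda(0) = \kappa\sum_{j=0}^{n-1}(\kappa+j)^{-1} = \Lambda_0$. Hence
\[
\frac{d}{d\mu}\bigl(1-p_0\bigr)\Big|_{\mu=0} = p_0(0)\,\Lambda(0) = \Lambda_0 > 0 .
\]

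\textbf{Step 3 (take the limit).} By l'H\^opital's rule, or equivalently from the expansion $1-p_0(\mu) = \Lambda_0\mu + O(\mu^2)$,
\[
h(\mu) = \frac{\mu}{\Lambda_0\mu + O(\mu^2)} \longrightarrow \frac{1}{\Lambda_0}.
\]

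The only delicate point is the interchange at the boundary, i.e.\ confirming that $\Lambda$ stays finite as $\mu\to0^+$. This holds because the $j=0$ term of $\Lambda(0)$ is $\kappa/\kappa = 1$; the singularity of $\Lambda$ sits at $\mu = 1$, not at $\mu = 0$.

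As a sanity check, $\Lambda_0 > 1$ for $n\ge2$, so the limit lies in $(0,1)$. It also agrees with \cref{sma:prop-elasticity-limits}(a): there $\omega = \mu p_0\Lambda/(1-p_0) \to 1\cdot\Lambda_0/\Lambda_0 = 1$. For $n = 1$ the formula gives $\Lambda_0 = 1$, hence $h \to 1$, which is consistent with $h \equiv 1$.
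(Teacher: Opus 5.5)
Your proof is correct and follows the paper's argument essentially verbatim: both use $p_0(0)=1$ and $p_0'(0) = -\Lambda_0$ to get $1-p_0 = \mu\Lambda_0 + O(\mu^2)$, and hence $h \to 1/\Lambda_0$. Your remarks that $p_0$ is a polynomial in $\mu$ and that $\Lambda(0)$ is finite simply make explicit what the paper's Taylor step takes for granted.
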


\begin{proof}
  Taylor expansion: $p_0(\mu) = 1 - \mu\Lambda_0 + O(\mu^2)$
  (from $p_0(0) = 1$ and $p_0'(0) = -\Lambda_0$). Then
  $1 - p_0 = \mu\Lambda_0 + O(\mu^2)$, and
  $h = \mu\bigl/\bigl(\mu\Lambda_0 + O(\mu^2)\bigr) \to 1/\Lambda_0$.
\end{proof}

For example, when $n = 2$ and $\kappa = 1$:
$\Lambda_0 = 1/1 + 1/2 = 3/2$, so $h(0^+) = 2/3$.
For $n = 10$ and $\kappa = 1$:
$\Lambda_0 = H_{10} \approx 2.929$, so $h(0^+) \approx 0.341$.
As noted in~\cref{rem:boundary}, this limit is strictly less than
unity whenever $n \ge 2$, confirming that the truncation adjustment
has a non-vanishing effect even as $\mu \to 0^+$.

\begin{proposition}[Behavior of $h$ as $\mu \to 1^-$]
  \label{sma:prop-boundary-one}
  For $n \ge 1$ and $\kappa > 0$:
  \begin{equation}\label{sma:eq-boundary-one}
    h(\mu) \to 1 \quad \textup{as } \mu \to 1^-,
    \qquad
    1 - h(\mu) \sim (1-\mu)(1 - C_\kappa),
  \end{equation}
  where $C_\kappa = \prod_{j=1}^{n-1}j/(\kappa+j) \in (0,1)$.
\end{proposition}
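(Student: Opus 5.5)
The plan is to expand $p_0$ near $\mu = 1$ and substitute into $h = \mu/(1-p_0)$. Set $\epsilon = 1-\mu$, so $b = (1-\mu)\kappa = \epsilon\kappa$.

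\emph{Expansion of $p_0$.} Split off the $j=0$ factor of the product form~\eqref{eq:p0}:
\[
  p_0 = \frac{\epsilon\kappa}{\kappa}\prod_{j=1}^{n-1}\frac{\epsilon\kappa + j}{\kappa + j}
      = \epsilon\prod_{j=1}^{n-1}\frac{\epsilon\kappa + j}{\kappa + j}.
\]
Each remaining factor tends to $j/(\kappa+j)$ as $\epsilon \to 0$ and is smooth in $\epsilon$. Hence
\[
  p_0 = \epsilon\, C_\kappa\,\bigl(1 + O(\epsilon)\bigr),
  \qquad
  C_\kappa = \prod_{j=1}^{n-1}\frac{j}{\kappa + j}.
\]
Every factor $j/(\kappa+j)$ lies in $(0,1)$ because $\kappa > 0$. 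So $C_\kappa \in (0,1)$ for $n \ge 2$, while for $n=1$ the product is empty and $C_\kappa = 1$. For $n=1$ we already know $h \equiv 1$, so both claims hold trivially: $1-h = 0 = \epsilon(1-C_\kappa)$. I will state the $n=1$ case separately, since the stated range $C_\kappa\in(0,1)$ assumes $n \ge 2$.

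\emph{Limit and rate.} Write $h = (1-\epsilon)/(1-p_0)$. Then
\[
  1 - h = \frac{(1-p_0) - (1-\epsilon)}{1-p_0} = \frac{\epsilon - p_0}{1 - p_0}.
\]
Substituting the expansion gives numerator $\epsilon(1 - C_\kappa) + O(\epsilon^2)$ and denominator $1 + O(\epsilon)$. Therefore
\[
  1 - h = \epsilon(1 - C_\kappa) + O(\epsilon^2),
\]
which yields both $h \to 1$ and $1 - h \sim (1-\mu)(1 - C_\kappa)$. The equivalence $\sim$ needs $1 - C_\kappa > 0$, which holds for $n \ge 2$.

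The only point needing care is making the $O(\epsilon)$ remainder in the product explicit and uniform. I would handle it by taking logarithms:
\[
  \sum_{j=1}^{n-1}\log\!\left(1 + \frac{\epsilon\kappa}{j}\right) = O(\epsilon),
\]
since the sum is finite and each $j \ge 1$. Alternatively, note that the product is a polynomial in $\epsilon$ with constant term $C_\kappa$. Everything else is routine algebra, and no earlier result beyond~\eqref{eq:p0} and the $n=1$ case of \cref{thm:monotonicity} is needed.
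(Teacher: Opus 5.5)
Your proof is correct and follows the same route as the paper: factor the $j=0$ term out of the product form to get $p_0 \sim (1-\mu)C_\kappa$, then substitute into $h$ and read off $1-h \sim (1-\mu)(1-C_\kappa)$. Your explicit $O(\epsilon^2)$ remainder makes the substitution step rigorous, where the paper only manipulates $\sim$ informally. Treating $n=1$ separately is also right, because there $C_\kappa = 1$: the stated range $C_\kappa \in (0,1)$, and the nondegenerate equivalence, actually require $n \ge 2$.
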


\begin{proof}
  As $\mu \to 1$, $b = (1-\mu)\kappa \to 0$, and the product
  form~\eqref{eq:p0} gives $p_0 \sim (1-\mu)\,C_\kappa$ with
  $C_\kappa = \prod_{j=1}^{n-1}j/(\kappa+j)$.
  Then
  \[
    h = \frac{\mu}{1-p_0}
    \sim \frac{1 - (1-\mu)}{1 - (1-\mu)C_\kappa},
  \]
  and
  \[
    1 - h
    \sim \frac{(1-\mu)(1-C_\kappa)}{1-(1-\mu)C_\kappa}
    \sim (1-\mu)(1-C_\kappa)
  \]
  as $\mu \to 1^-$.
\end{proof}


\paragraph{Elasticity monotonicity in $\kappa$.}

\begin{conjecture}[Monotonicity of $\varepsilon_h$ in $\kappa$]
  \label{sma:conj-elasticity-kappa}
  For fixed $\mu \in (0,1)$ and $n \ge 2$, the elasticity
  $\varepsilon_h$ is increasing in~$\kappa$. Equivalently, the
  attenuation ratio $\omega$ is decreasing in~$\kappa$.
\end{conjecture}

The following proof sketch provides supporting evidence short of a rigorous proof.

\medskip\noindent\textit{Proof sketch.}\enspace
As $\kappa$ increases, $p_0$ decreases (\cref{sma:prop-p0-kappa}),
and the dominant effect on $\omega = \mu\,p_0\,\Lambda/(1-p_0)$ is the
decrease in the numerator factor~$p_0$.
The boundary behavior confirms monotonicity:
$\varepsilon_h \to 0$ as $\kappa \to 0$
(\cref{sma:prop-elasticity-limits}(d)) and
$\varepsilon_h \to 1 - n\mu(1-\mu)^{n-1}/[1-(1-\mu)^n] > 0$ as
$\kappa \to \infty$ (\cref{sma:prop-elasticity-limits}(c)).
A closed-form proof requires showing that the decrease in $p_0$ dominates
the concurrent decrease in~$\Lambda$ uniformly over $(0,1)$.

\begin{remark}[Numerical verification of the conjecture]
  \label{sma:rem-conj-numerical}
  We evaluated $\partial\varepsilon_h/\partial\kappa$ via centered
  finite differences on a grid of $100 \times 100$ values of
  $(\mu,\kappa) \in (0.01, 0.99) \times (0.01, 100)$ for each of
  $n \in \{2, 5, 10, 20, 50, 100\}$. In every case the derivative is
  strictly positive, supporting the conjecture. No counterexample has been
  found.
\end{remark}


\subsection{Log-scale marginal effect decomposition}\label{sma:logdecomp}

\cref{prop:lae-lie} established that the marginal effect of covariate~$x_k$
on log expected enrollment decomposes exactly as $\LAE_k + \LIE_k$,
with multipliers $(1-q_i)$ and $(1-\mu_i)\varepsilon_{h,i}$ respectively.
This subsection provides three supplementary results: a correction remark
on the sign structure of the $\LIE$, a sensitivity analysis of the
decomposition ratio with respect to~$\kappa$, and a threshold
characterization of the poverty reversal regime.


\begin{remark}[Sign of the truncation attenuation in the $\LIE$]
  \label{sma:rem-lie-sign}
  The correct $\LIE$ has a \emph{minus} sign in the truncation attenuation:
  \begin{equation}\label{sma:eq-lie-corrected}
    \LIE_k
    = \biggl[\frac{1}{\mu_i}
      - \frac{p_{0,i}\,\Lambda_i}{1-p_{0,i}}\biggr]
      \mu_i(1-\mu_i)\,\tilde{\beta}_{k,s}
    = (1-\mu_i)(1-\omega_i)\,\tilde{\beta}_{k,s},
  \end{equation}
  where $\omega_i = \mu_i\,p_{0,i}\,\Lambda_i/(1-p_{0,i})$ is the
  attenuation ratio from~\cref{sma:prop-elasticity-bounds}. The
  attenuation is \emph{subtracted}, not added, because increasing~$\mu$
  simultaneously increases $1-p_0$ (the denominator of~$h$), which
  attenuates---not amplifies---the response of~$h$ to~$\mu$.
\end{remark}


\begin{proposition}[Decomposition ratio]\label{sma:prop-DR}
  Define the decomposition ratio
  \begin{equation}\label{sma:eq-DR}
    \textup{DR}(\kappa)
    = \frac{|\LAE_k|}{|\LAE_k| + |\LIE_k|}
    = \frac{(1-q)\,|\tilde{\alpha}_k|}
           {(1-q)\,|\tilde{\alpha}_k|
            + (1-\mu)\,\varepsilon_h\,|\tilde{\beta}_k|}.
  \end{equation}
  Then:
  \begin{enumerate}[label=\textup{(\alph*)}]
    \item As $\kappa \to 0$:
      $\varepsilon_h \to 0$, $\LIE \to 0$, $\textup{DR} \to 1$.
      Maximum overdispersion renders the intensity channel inoperative.
    \item As $\kappa \to \infty$:
      $\varepsilon_h \to \varepsilon_h^{\textup{binom}} \in (0,1)$,
      $\textup{DR} \to \textup{DR}^{\textup{binom}} \in (0,1)$.
    \item Conditional on~\cref{sma:conj-elasticity-kappa},
      $\textup{DR}$ is decreasing in~$\kappa$
      (since $\varepsilon_h$ is increasing in~$\kappa$).
  \end{enumerate}
\end{proposition}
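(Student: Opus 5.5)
The plan is to treat $\textup{DR}$ as a function of $\kappa$ only through $\varepsilon_h$, holding $q$, $\mu$, $n$, and the coefficients $\tilde{\alpha}_k,\tilde{\beta}_k$ fixed. Indeed, $q$ depends only on the extensive predictor and $\mu$ only on the intensive predictor; neither involves $\kappa$. Write $A=(1-q)|\tilde{\alpha}_k|$ and $B=(1-\mu)|\tilde{\beta}_k|$. Then
\[
\textup{DR}=\frac{A}{A+B\,\varepsilon_h(\mu;\kappa)}=g(\varepsilon_h),
\qquad g(t)=\frac{A}{A+Bt}.
\]
Assume $A>0$ and $B>0$. If $B=0$, then $\textup{DR}\equiv 1$ and every claim is trivial or degenerate; I will note this case rather than treat it. The function $g$ is continuous and strictly decreasing on $[0,\infty)$, with $g(0)=1$, and $g(t)\in(0,1)$ for $t>0$. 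All three parts then reduce to statements about $\varepsilon_h$ as a function of $\kappa$, composed with $g$.

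For (a), I will invoke \cref{sma:prop-elasticity-limits}(d): as $\kappa\to0$, $p_0\to1-\mu$ and $\Lambda\to1/(1-\mu)$, so $\omega\to1$ and $\varepsilon_h\to0$. To justify these limits quickly, use the product form \eqref{eq:p0}. The $j=0$ factor equals $1-\mu$ for every $\kappa$, and for $j\ge1$ each factor tends to $j/j=1$. In $\Lambda$, the $j=0$ term is $\kappa/((1-\mu)\kappa)=1/(1-\mu)$, and the remaining terms are $O(\kappa)$. Then $\omega=\mu p_0\Lambda/(1-p_0)\to \mu\cdot 1/\mu=1$. Continuity of $g$ at $0$ gives $\textup{DR}\to1$, and $\LIE=(1-\mu)\varepsilon_h\tilde{\beta}_k\to0$.

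For (b), I will use \cref{sma:prop-elasticity-limits}(c). As $\kappa\to\infty$, each factor of $p_0$ tends to $1-\mu$, so $p_0\to(1-\mu)^n$ and $\Lambda\to n/(1-\mu)$. Hence $\varepsilon_h\to\varepsilon_h^{\textup{binom}}=1-n\mu(1-\mu)^{n-1}/[1-(1-\mu)^n]$. The step needing care is showing that this limit lies strictly in $(0,1)$. It is $<1$ because $\omega^{\textup{binom}}>0$. It is $>0$ because $1-(1-\mu)^n=\mu\sum_{j=0}^{n-1}(1-\mu)^j>n\mu(1-\mu)^{n-1}$ for $n\ge2$, since each summand except the last strictly exceeds $(1-\mu)^{n-1}$. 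This is the binomial analogue of \cref{thm:monotonicity}. Then $\textup{DR}\to g(\varepsilon_h^{\textup{binom}})\in(0,1)$.

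For (c), conditional on \cref{sma:conj-elasticity-kappa}, $\kappa\mapsto\varepsilon_h$ is increasing. Composing with the strictly decreasing $g$ makes $\textup{DR}$ decreasing in $\kappa$, and strictly so if the conjecture holds strictly. Equivalently, $\partial\textup{DR}/\partial\kappa=-AB\,(\partial\varepsilon_h/\partial\kappa)/(A+B\varepsilon_h)^2\le0$. The only genuine obstacle in the whole result is the conjecture itself, which is explicitly assumed. Among the unconditional parts, the main point is the uniform limit bookkeeping in (a), namely confirming $p_0\to1-\mu$ rather than $1$, so that $1-p_0$ stays bounded away from zero.
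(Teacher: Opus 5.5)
Your proposal is correct and follows the paper's own argument. The paper likewise obtains (a) and (b) directly from the limits in \cref{sma:prop-elasticity-limits}(d) and (c), and gets (c) from the fact that the numerator $(1-q)|\tilde{\alpha}_k|$ is free of $\kappa$ while the denominator increases with $\kappa$ through $\varepsilon_h$; your check that $\varepsilon_h^{\textup{binom}}\in(0,1)$ for $n\ge 2$ (via $1-(1-\mu)^n=\mu\sum_{j=0}^{n-1}(1-\mu)^j$) and your standing assumption $A,B>0$ supply details the paper leaves implicit.
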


\begin{proof}
  Properties (a) and (b) follow directly from
  \cref{sma:prop-elasticity-limits}(d) and (c), respectively.
  Property~(c) follows because the numerator of~\eqref{sma:eq-DR} is
  constant in~$\kappa$ while the denominator is increasing in~$\kappa$
  through the factor~$\varepsilon_h$.
\end{proof}

\cref{sma:tab-DR} illustrates the sensitivity of the decomposition ratio to
$\kappa$, evaluated at representative parameter values
($n = 10$, $\mu = 0.30$, $q = 0.70$,
$|\tilde{\alpha}| = 0.5$, $|\tilde{\beta}| = 0.8$).

\begin{table}[ht]
\centering
\caption{Sensitivity of the decomposition ratio to~$\kappa$.
  Parameters: $n = 10$, $\mu = 0.30$, $q = 0.70$,
  $|\tilde{\alpha}| = 0.5$, $|\tilde{\beta}| = 0.8$.}
\label{sma:tab-DR}
\smallskip
\begin{tabular}{rcccc}
\toprule
$\kappa$ & $\varepsilon_h$ & $\omega$ & $|\LIE|$ & DR \\
\midrule
0.01  & 0.008 & 0.992 & 0.005 & 0.970 \\
0.1   & 0.071 & 0.929 & 0.040 & 0.790 \\
1     & 0.343 & 0.657 & 0.192 & 0.439 \\
10    & 0.735 & 0.265 & 0.412 & 0.267 \\
100   & 0.859 & 0.141 & 0.481 & 0.238 \\
1000  & 0.874 & 0.126 & 0.489 & 0.235 \\
\bottomrule
\end{tabular}
\end{table}

\begin{remark}\label{sma:rem-DR-sensitive}
  The DR is highly sensitive to $\kappa$ in the range $(0.1, 10)$.
  In our data context, where $\kappa \approx 5$--$15$, both the access and
  intensity channels contribute meaningfully to the total effect.
\end{remark}


\begin{proposition}[Threshold for access-effect dominance]
  \label{sma:prop-threshold}
  In the poverty reversal regime
  ($\tilde{\alpha}_{\textup{pov},s} < 0$,
   $\tilde{\beta}_{\textup{pov},s} > 0$),
  the total marginal effect on log expected enrollment is negative
  (access dominates) if and only if
  \begin{equation}\label{sma:eq-threshold}
    \boxed{
      \frac{|\tilde{\alpha}_{\textup{pov},s}|}
           {|\tilde{\beta}_{\textup{pov},s}|}
      > \frac{(1-\mu)\,\varepsilon_h}{1-q}.
    }
  \end{equation}
  The right-hand side is increasing in~$q$ and increasing in~$\kappa$
  (through~$\varepsilon_h$). When participation is high ($q$ near~$1$),
  even a small intensive-margin coefficient can dominate; when $q$ is low,
  the extensive margin dominates more easily.
\end{proposition}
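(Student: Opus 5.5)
Proposition~\ref{sma:prop-threshold} follows from the exact additive decomposition of \cref{prop:lae-lie}: the total marginal effect of poverty on $\log\E[Y_i]$ is $\LAE_{\mathrm{pov}}+\LIE_{\mathrm{pov}} = (1-q)\tilde{\alpha}_{\mathrm{pov},s} + (1-\mu)\varepsilon_h\tilde{\beta}_{\mathrm{pov},s}$. In the reversal regime I will write $\tilde{\alpha}_{\mathrm{pov},s}=-|\tilde{\alpha}_{\mathrm{pov},s}|$ and $\tilde{\beta}_{\mathrm{pov},s}=|\tilde{\beta}_{\mathrm{pov},s}|$. Then the total effect is negative if and only if $(1-q)|\tilde{\alpha}_{\mathrm{pov},s}| > (1-\mu)\varepsilon_h|\tilde{\beta}_{\mathrm{pov},s}|$.

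To divide through without reversing the inequality, I need the multipliers to be positive. Since $q\in(0,1)$, $1-q>0$. Since $\mu\in(0,1)$, $1-\mu>0$. For $n\ge 2$, \cref{prop:elasticity} (equivalently \cref{sma:prop-elasticity-bounds}(ii)) gives $\varepsilon_h\in(0,1)$. In the reversal regime $\tilde{\beta}_{\mathrm{pov},s}>0$, so $|\tilde{\beta}_{\mathrm{pov},s}|>0$. Dividing by $(1-q)|\tilde{\beta}_{\mathrm{pov},s}|$ then yields \eqref{sma:eq-threshold}, and every step is reversible, so the equivalence holds. If $n=1$, then $\varepsilon_h=0$ and the right-hand side is $0$, so the criterion holds trivially; I will note this degenerate case.

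For the comparative statics, I treat the right-hand side $R(q,\kappa)=(1-\mu)\varepsilon_h/(1-q)$ as a function of $q$ and $\kappa$, holding $\mu$ and $n$ fixed. Note that $\varepsilon_h$ depends only on $(\mu,n,\kappa)$ and not on $q$. Hence $\partial R/\partial q=(1-\mu)\varepsilon_h/(1-q)^2>0$. Monotonicity in $\kappa$ reduces to monotonicity of $\varepsilon_h$ in $\kappa$.

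\textbf{Main obstacle.} Monotonicity in $\kappa$ is exactly \cref{sma:conj-elasticity-kappa}, which is only conjectured and supported numerically (\cref{sma:rem-conj-numerical}). I would therefore state this part conditionally on the conjecture, in parallel with \cref{sma:prop-DR}(c). I would also attempt a direct proof by showing $\partial\omega/\partial\kappa<0$. The first approach I would try is to write $\omega=\mu\Lambda/(p_0^{-1}-1)$ and compare $\partial_\kappa\log\Lambda$ with $\partial_\kappa\log(p_0^{-1}-1)$, using \cref{sma:prop-p0-kappa} for $\partial_\kappa p_0$. If that fails, I would fall back on the conditional statement. The interpretive sentences (high $q$ lets a small $\tilde{\beta}$ dominate) then just restate that $R\to\infty$ as $q\to1$ and $R\to(1-\mu)\varepsilon_h$ as $q\to0$.
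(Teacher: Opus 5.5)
Your proof of the equivalence is the same as the paper's. The total effect from \cref{prop:lae-lie} is $-(1-q)|\tilde{\alpha}_{\mathrm{pov},s}| + (1-\mu)\varepsilon_h|\tilde{\beta}_{\mathrm{pov},s}|$, and dividing by $(1-q)|\tilde{\beta}_{\mathrm{pov},s}|>0$ gives \eqref{sma:eq-threshold}.

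The paper's proof gives no argument for the comparative statics. You are right that monotonicity in $q$ is immediate, while monotonicity in $\kappa$ is exactly \cref{sma:conj-elasticity-kappa} and so should be stated conditionally, as in \cref{sma:prop-DR}(c). For your direct attempt, note that $\Lambda$ is actually increasing in $\kappa$: each term $\kappa/(b+j)$ has $\kappa$-derivative $j/(b+j)^2\ge 0$. So the decrease in $p_0/(1-p_0)$ must genuinely outweigh this increase, not reinforce it as the paper's sketch suggests.
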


\begin{proof}
  From~\cref{prop:lae-lie}, the total effect is negative iff
  $(1-q)|\tilde{\alpha}| > (1-\mu)\varepsilon_h|\tilde{\beta}|$;
  dividing both sides by $|\tilde{\beta}|(1-q)$ gives
  \eqref{sma:eq-threshold}.
\end{proof}

\cref{sma:tab-threshold} reports the threshold
$(1-\mu)\varepsilon_h/(1-q)$ for parameter combinations spanning the
range of NSECE providers. The elasticity~$\varepsilon_h$ is computed
from the exact formula in~\cref{sma:prop-elasticity-bounds}.

\begin{table}[ht]
\centering
\caption{Access-dominance threshold $(1-\mu)\varepsilon_h/(1-q)$ for
  representative parameter combinations. Values below the empirical
  coefficient ratio $|\alpha_{\textup{pov}}|/|\beta_{\textup{pov}}| = 3.6$
  indicate access dominance. All entries computed with $n = 50$.}
\label{sma:tab-threshold}
\smallskip
\begin{tabular}{llccc}
\toprule
 & & \multicolumn{3}{c}{$\kappa$} \\
\cmidrule(lr){3-5}
$q$ & $\mu$ & 3 & 7 & 15 \\
\midrule
\multirow{3}{*}{0.50}
 & 0.15 & 0.90 & 1.26 & 1.51 \\
 & 0.30 & 1.15 & 1.34 & 1.39 \\
 & 0.45 & 1.04 & 1.10 & 1.10 \\[4pt]
\multirow{3}{*}{0.64}
 & 0.15 & 1.25 & 1.75 & 2.09 \\
 & 0.30 & 1.60 & 1.87 & 1.93 \\
 & 0.45 & 1.44 & 1.52 & 1.53 \\[4pt]
\multirow{3}{*}{0.80}
 & 0.15 & 2.25 & 3.16 & 3.77 \\
 & 0.30 & 2.87 & 3.36 & 3.48 \\
 & 0.45 & 2.60 & 2.74 & 2.75 \\
\bottomrule
\end{tabular}

\medskip
{\footnotesize\textit{Note.} Threshold values are computed from the exact
  elasticity formula~\eqref{sma:eq-elasticity-formula} with
  $p_0 = \prod_{j=0}^{n-1}[(1-\mu)\kappa + j]/(\kappa + j)$
  and $\Lambda = \kappa\sum_{j=0}^{n-1}[(1-\mu)\kappa + j]^{-1}$.
  Values are rounded to two decimal places.}
\end{table}

\begin{remark}\label{sma:rem-threshold-empirical}
  At our posterior estimates
  ($\kappa \approx 7$, $\mu \approx 0.30$, $q \approx 0.64$,
  $n \approx 50$), the threshold is approximately $1.87$
  (\cref{sma:tab-threshold}, row $q = 0.64$, $\mu = 0.30$,
  $\kappa = 7$). Since the empirical coefficient ratio
  $|\alpha_{\textup{pov}}|/|\beta_{\textup{pov}}|
  = 0.324/0.090 \approx 3.6 \gg 1.87$, the access effect dominates
  decisively. For providers near the sample average, the coefficient
  ratio would need to fall below~$1.87$---that is, the extensive-margin
  coefficient would need to be less than twice the intensive-margin
  coefficient---before intensity could dominate. At $q = 0.80$
  (the most extreme participation rate considered), the threshold
  rises to $2.87$--$3.77$, approaching the empirical ratio. This
  confirms that access dominance is most fragile among high-participation
  centers with low~$\mu$ and large~$\kappa$, where the intensity
  elasticity is near unity and the extensive multiplier $(1-q)$ is small.
\end{remark}

%% file: sm_b.tex

This appendix provides the score functions, Fisher information structure,
and the identification proof for the hurdle beta-binomial model deferred
from~\cref{sec:model}. We maintain the notation from~\cref{sec:notation},
with $a = \mu\kappa$, $b = (1-\mu)\kappa$, and the digamma function
$\psi(x) = \Gamma'(x)/\Gamma(x)$.

\input{sm_b1_scores}

\input{sm_b2_fisher}

\input{sm_b3_identification}

%% file: sm_b1_scores.tex

\subsection{Score functions}
\label{smb:scores}

We derive the score functions for each component of the hurdle
beta-binomial log-likelihood.  These expressions serve three roles:
\textup{(i)}~verification of the automatic differentiation gradients
used in Stan~\citep{CarpenterEtAl2017}, \textup{(ii)}~diagnostics via
score magnitude monitoring,
and \textup{(iii)}~the theoretical basis for the sandwich variance
estimator~\citep{Binder1983} in~\cref{app:survey}.
Recall the parameterization $a = \mu\kappa$, $b = (1-\mu)\kappa$, the
zero probability $p_0$ from~\eqref{eq:p0}, and the auxiliary quantities
$\Lambda$, $\Lambda_2$ defined in~\cref{sma:eq-Lambda}.


\begin{proposition}[Extensive-margin score]
\label{smb:prop-score-ext}
  The score of the extensive-margin log-likelihood
  $\ell_i^{(1)} = z_i\log q_i + (1-z_i)\log(1-q_i)$ with respect to
  the linear predictor
  $\eta_i^{(1)} = \bx_i^\t\balpha + \bx_i^{(r)\t}\bdelta_{1,s}$ is
  \begin{equation}\label{smb:eq-score-ext}
    \frac{\partial\ell_i^{(1)}}{\partial\eta_i^{(1)}}
    = z_i - q_i,
  \end{equation}
  where $q_i = \expit(\eta_i^{(1)})$.  The Hessian is non-random:
  \begin{equation}\label{smb:eq-hess-ext}
    \frac{\partial^2\ell_i^{(1)}}{\partial(\eta_i^{(1)})^2}
    = -\,q_i(1-q_i).
  \end{equation}
\end{proposition}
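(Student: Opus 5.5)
The plan is a direct computation: write $\ell_i^{(1)}$ as a function of the linear predictor alone, then differentiate twice using the logistic identities. Throughout, $q_i = \expit(\eta_i^{(1)})$ by the logit link \eqref{eq:eta-ext}.

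\emph{Reparameterize.} Substitute $\log q_i = \eta_i^{(1)} - \log(1+e^{\eta_i^{(1)}})$ and $\log(1-q_i) = -\log(1+e^{\eta_i^{(1)}})$ into $\ell_i^{(1)}$. This gives the canonical exponential-family form
\[
  \ell_i^{(1)} = z_i\,\eta_i^{(1)} - \log\bigl(1+e^{\eta_i^{(1)}}\bigr).
\]

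\emph{Score.} The derivative of $\log(1+e^{\eta})$ is $e^{\eta}/(1+e^{\eta}) = \expit(\eta) = q_i$. Differentiating once therefore yields $z_i - q_i$, which is \eqref{smb:eq-score-ext}.

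\emph{Hessian.} Differentiate again. The term $z_i$ does not depend on $\eta_i^{(1)}$, so its derivative vanishes. The remaining term uses the standard logistic identity $\partial q_i/\partial\eta_i^{(1)} = q_i(1-q_i)$, the same identity already used in the proof of \cref{prop:lae-lie}. This gives $-q_i(1-q_i)$, which is \eqref{smb:eq-hess-ext}. The Hessian involves only $\eta_i^{(1)}$ and not $z_i$, so it is non-random; observed and expected information therefore coincide.

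\emph{Extension to parameters.} If required, the chain rule gives the scores with respect to $\balpha$ and $\bdelta_{1,s}$ as $(z_i-q_i)\bx_i$ and $(z_i-q_i)\bx_i^{(r)}$ respectively. There is no real obstacle in this argument; the only point needing care is stating that the non-randomness of the Hessian follows from the canonical-link form.
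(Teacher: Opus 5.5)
Your proof is correct and is essentially the paper's argument: both are direct logistic calculus resting on $\partial q_i/\partial\eta_i^{(1)} = q_i(1-q_i)$. The only cosmetic difference is the order of operations. You first rewrite $\ell_i^{(1)}$ in canonical form $z_i\eta_i^{(1)} - \log(1+e^{\eta_i^{(1)}})$, whereas the paper applies the chain rule to $z_i\log q_i + (1-z_i)\log(1-q_i)$ directly and then simplifies $z_i(1-q_i) - (1-z_i)q_i$ to $z_i - q_i$. Your form makes the non-randomness of the Hessian slightly more transparent.
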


\begin{proof}
  Standard logistic regression calculus.  Since
  $\partial q_i/\partial\eta_i^{(1)} = q_i(1-q_i)$,
  differentiating $\ell_i^{(1)}$ yields
  $z_i/q_i \cdot q_i(1-q_i) - (1-z_i)/(1-q_i) \cdot q_i(1-q_i)
  = z_i(1-q_i) - (1-z_i)q_i = z_i - q_i$.
  The second derivative is $-q_i(1-q_i)$, which depends only on
  parameters and is therefore non-random.
\end{proof}

\begin{remark}\label{smb:rem-ext-info}
  The non-randomness of~\eqref{smb:eq-hess-ext} implies that the
  observed and expected Fisher information coincide for the extensive
  margin:
  $\cl{I}_{\textup{ext},i} = q_i(1-q_i)$.
  This simplifies the sandwich estimator, since the Hessian bread matrix
  for Part~1 contains no stochastic terms.
\end{remark}


\begin{proposition}[Intensive-margin score for $\mu$]
\label{smb:prop-score-mu}
  Define the (untruncated) beta-binomial score function
  \citep[following the parameterization of][]{Prentice1986}
  \begin{equation}\label{smb:eq-Sbb}
    S_{\textup{BB}}(y;\, \mu,\kappa,n)
    = \kappa\bigl[\psi(y+a) - \psi(n-y+b) - \psi(a) + \psi(b)\bigr],
  \end{equation}
  where $\psi(\cdot)$ is the digamma function and $a = \mu\kappa$,
  $b = (1-\mu)\kappa$.  The score of the zero-truncated
  log-likelihood\/ $\ell_i^{(2+)} = \log f_{\textup{BB}}(y_i)
  - \log(1-p_{0,i})$ with respect to $\mu$ is
  \begin{equation}\label{smb:eq-score-mu}
    \boxed{%
      \frac{\partial\ell_i^{(2+)}}{\partial\mu}
      = S_{\textup{BB}}(y_i)
        \;-\; \frac{p_{0,i}\,\Lambda_i}{1-p_{0,i}}
    }.
  \end{equation}
\end{proposition}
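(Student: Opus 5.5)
The score splits into two pieces, since $\ell_i^{(2+)} = \log f_{\BetaBin}(y_i \mid n_i,\mu,\kappa) - \log(1-p_{0,i})$. I will differentiate each term with respect to $\mu$ separately, holding $\kappa$ and $n_i$ fixed, and then subtract.

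\emph{Beta-binomial term.} Start from the PMF~\eqref{eq:bb-pmf}. Write
\[
\log f_{\BetaBin} = \log\binom{n}{y} + \log B(y+a,\,n-y+b) - \log B(a,b).
\]
Expand each beta function as $\log\Gamma(\cdot)+\log\Gamma(\cdot)-\log\Gamma(\cdot)$.

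The combinatorial factor does not depend on $\mu$. The two Gamma terms in the arguments sum to $n+\kappa$ and $\kappa$ respectively, so the $\log\Gamma(n+\kappa)$ and $\log\Gamma(\kappa)$ pieces are also free of $\mu$.

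The $\mu$-dependence therefore sits only in
\[
\log\Gamma(y+a) + \log\Gamma(n-y+b) - \log\Gamma(a) - \log\Gamma(b).
\]
Apply the chain rule with $\partial a/\partial\mu = \kappa$ and $\partial b/\partial\mu = -\kappa$. This gives
\[
\kappa\bigl[\psi(y+a) - \psi(n-y+b) - \psi(a) + \psi(b)\bigr] = S_{\BetaBin},
\]
which is exactly the definition in~\eqref{smb:eq-Sbb}.

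\emph{Truncation term.} Use $\partial p_0/\partial\mu = -p_0\Lambda$, established in \cref{sma:proof-monotonicity} and restated as \cref{sma:prop-p0-mu}. Then
\[
\frac{\partial}{\partial\mu}\bigl[-\log(1-p_0)\bigr] = \frac{\partial p_0/\partial\mu}{1-p_0} = -\frac{p_0\Lambda}{1-p_0}.
\]
Adding this to the beta-binomial term gives~\eqref{smb:eq-score-mu}.

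\emph{Checks and main obstacle.} The main care point is bookkeeping. I need to confirm that the $\Gamma(n+\kappa)$ and $\Gamma(\kappa)$ terms cancel from the $\mu$-derivative, and I need to track the sign of $\partial b/\partial\mu$. No step is a genuine obstacle.

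For a consistency check, set $y=0$ in $S_{\BetaBin}$. The result should equal $\partial\log p_0/\partial\mu = -\Lambda$. It does, via the digamma recurrence
\[
\psi(n+b) - \psi(b) = \sum_{j=0}^{n-1}(b+j)^{-1}.
\]
This confirms that the digamma form of the score and the product form of $\Lambda$ from~\eqref{sma:eq-Lambda} agree.

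As a final sanity check, the expectation of~\eqref{smb:eq-score-mu} under the ZTBB should be zero. This follows from \cref{sma:prop-trunc-identity} together with $\E[S_{\BetaBin}]=0$ for the untruncated model.
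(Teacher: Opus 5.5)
Your proposal is correct and follows essentially the same route as the paper: split $\ell_i^{(2+)}$ into the untruncated beta-binomial log-PMF and $-\log(1-p_{0,i})$, differentiate the former using $\partial a/\partial\mu=\kappa$ and $\partial b/\partial\mu=-\kappa$, and the latter using $\partial p_0/\partial\mu=-p_0\Lambda$. One small slip in your optional closing check: \cref{sma:prop-trunc-identity} requires $g(0)=0$, which fails for $S_{\textup{BB}}$ because $S_{\textup{BB}}(0)=-\Lambda$, so apply it to $S_{\textup{BB}}+\Lambda$ instead; this gives $\E[S_{\textup{BB}}(Y)\mid Y>0]=p_0\Lambda/(1-p_0)$ and hence a mean-zero truncated score.
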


\begin{proof}
  The decomposition $\ell_i^{(2+)} = \log f_{\textup{BB}}(y_i)
  - \log(1-p_{0,i})$ gives two terms.

  \medskip\noindent\textit{Term~1 (BB score).}\enspace
  Using $\partial a/\partial\mu = \kappa$ and
  $\partial b/\partial\mu = -\kappa$, differentiation of the
  beta-binomial log-PMF yields
  \[
    \frac{\partial\log f_{\textup{BB}}}{\partial\mu}
    = \kappa\bigl[\psi(y+a) - \psi(n-y+b) - \psi(a) + \psi(b)\bigr]
    = S_{\textup{BB}}(y).
  \]

  \medskip\noindent\textit{Term~2 (truncation correction).}\enspace
  Since $\ln p_0 = \sum_{j=0}^{n-1}\ln[(1-\mu)\kappa + j]
  - \ln[\kappa + j]$, we have $\partial p_0/\partial\mu = -p_0\Lambda$
  (see~\cref{sma:eq-Lambda}), and therefore
  \[
    \frac{\partial\log(1-p_0)}{\partial\mu}
    = \frac{p_0\,\Lambda}{1-p_0}
    > 0.
  \]
  Combining, $\partial\ell^{(2+)}/\partial\mu
  = S_{\textup{BB}}(y) - p_0\Lambda/(1-p_0)$.
\end{proof}


\begin{proposition}[Score for $\log\kappa$]
\label{smb:prop-score-kappa}
  The score of the zero-truncated log-likelihood with respect to the
  log-dispersion parameter is
  \begin{equation}\label{smb:eq-score-kappa}
    \boxed{%
    \begin{aligned}
      \frac{\partial\ell_i^{(2+)}}{\partial\log\kappa}
      &= \kappa\Bigl[
           \mu\bigl[\psi(y_i\!+\!a)-\psi(a)\bigr]
           + (1\!-\!\mu)\bigl[\psi(n_i\!-\!y_i\!+\!b)-\psi(b)\bigr]
      \\
      &\qquad\quad
           + \psi(\kappa)-\psi(n_i\!+\!\kappa)
         \Bigr]
       - \frac{p_{0,i}\,\mu\kappa}{1-p_{0,i}}
         \sum_{j=1}^{n_i-1}
         \frac{j}{(b+j)(\kappa+j)}.
    \end{aligned}
    }
  \end{equation}
\end{proposition}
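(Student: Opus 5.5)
The score splits the same way the likelihood does. Write $\ell_i^{(2+)} = \log f_{\BetaBin}(y_i) - \log(1-p_{0,i})$ and differentiate each piece with respect to $\log\kappa$, holding $\mu$ fixed. The chain rule gives $\partial/\partial\log\kappa = \kappa\,\partial/\partial\kappa$, and we have $\partial a/\partial\kappa = \mu$ and $\partial b/\partial\kappa = 1-\mu$.

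For the untruncated piece, write the log-PMF from \eqref{eq:bb-pmf} in gamma-function form:
\[
\log\binom{n}{y} + \log\Gamma(y+a) + \log\Gamma(n-y+b) - \log\Gamma(n+\kappa) - \log\Gamma(a) - \log\Gamma(b) + \log\Gamma(\kappa).
\]
This uses $B(u,v) = \Gamma(u)\Gamma(v)/\Gamma(u+v)$ together with $(y+a)+(n-y+b) = n+\kappa$ and $a+b = \kappa$. Differentiating in $\kappa$ and collecting terms gives
\[
\mu\bigl[\psi(y+a)-\psi(a)\bigr] + (1-\mu)\bigl[\psi(n-y+b)-\psi(b)\bigr] + \psi(\kappa)-\psi(n+\kappa).
\]
Multiplying by $\kappa$ yields the bracketed first line of \eqref{smb:eq-score-kappa}. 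The binomial coefficient does not depend on $\kappa$, so it drops out.

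For the truncation piece, we have
\[
-\frac{\partial\log(1-p_0)}{\partial\log\kappa} = \frac{\kappa}{1-p_0}\,\frac{\partial p_0}{\partial\kappa}.
\]
Substituting the closed form from \cref{sma:prop-p0-kappa}, $\partial p_0/\partial\kappa = -p_0\mu\sum_{j=1}^{n-1} j/[(b+j)(\kappa+j)]$, gives
\[
-\frac{p_0\,\mu\kappa}{1-p_0}\sum_{j=1}^{n-1}\frac{j}{(b+j)(\kappa+j)},
\]
which is exactly the second term of \eqref{smb:eq-score-kappa}. Adding the two pieces completes the proof.

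The main care point is the sign bookkeeping in the truncation term: subtracting $\log(1-p_0)$ produces $+\,\partial_\kappa p_0/(1-p_0)$, and this combines with the negative derivative of $p_0$ to give an overall minus sign. A second point is the $j=0$ cancellation inside \cref{sma:prop-p0-kappa}, which is why the sum starts at $j=1$; that result is taken as given here.

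As a consistency check, for $n=1$ the sum is empty. The digamma terms then reduce via $\psi(x+1)-\psi(x) = 1/x$ to $\kappa[\mu/a - 1/\kappa] = 0$ when $y=1$, so the score vanishes. This agrees with $\kappa$ not being identified when $n=1$, as noted in the discussion of (C3).
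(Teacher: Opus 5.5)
Your proof is correct and takes essentially the same route as the paper: you split $\ell_i^{(2+)}$ into the untruncated beta-binomial log-PMF and the truncation term, apply $\partial/\partial\log\kappa = \kappa\,\partial/\partial\kappa$ with $\partial a/\partial\kappa = \mu$ and $\partial b/\partial\kappa = 1-\mu$, and differentiate $\ln p_0$ termwise. The only difference is that you cite \cref{sma:prop-p0-kappa} for $\partial p_0/\partial\kappa$, where the paper redoes that computation inline. Your $n=1$ sanity check is also correct and is a useful addition.
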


\begin{proof}
  With $\phi = \log\kappa$ and the chain rule
  $\partial/\partial\phi = \kappa\,\partial/\partial\kappa$, the
  untruncated BB log-PMF gives the first line
  via $\partial a/\partial\kappa = \mu$ and
  $\partial b/\partial\kappa = 1-\mu$.  The truncation correction
  requires $\partial\log(1-p_0)/\partial\log\kappa$.  Writing
  $\ln p_0 = \sum_{j=0}^{n-1}\ln(b+j) - \ln(\kappa+j)$ and
  differentiating with respect to $\kappa$ (noting
  $\partial b/\partial\kappa = 1-\mu$) yields the second line after
  rearranging the telescoping sums.
\end{proof}


\begin{proposition}[Expected information for $\mu$]
\label{smb:prop-info-mu}
  The negative expected second derivative of the zero-truncated
  log-likelihood with respect to $\mu$ is
  \begin{multline}\label{smb:eq-info-mu}
    -\E_{Y|Y>0}\!\left[
      \frac{\partial^2\ell^{(2+)}}{\partial\mu^2}
    \right]
    = \frac{\kappa^2}{1-p_0}\,
      \E_{Y\sim\BetaBin}\bigl[\psi_1(Y+a) + \psi_1(n\!-\!Y+b)\bigr] \\
      - \kappa^2\bigl[\psi_1(a)+\psi_1(b)\bigr]
      + T'(\mu),
  \end{multline}
  where $\psi_1(\cdot) = d\psi/dx$ is the trigamma function and
  \begin{equation}\label{smb:eq-Tprime}
    T'(\mu)
    = \frac{p_0\,\Lambda_2}{1-p_0}
      - \frac{p_0\,\Lambda^2}{(1-p_0)^2},
    \qquad
    \Lambda_2
    = \kappa^2 \sum_{j=0}^{n-1}\frac{1}{(b+j)^2}.
  \end{equation}
\end{proposition}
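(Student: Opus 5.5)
The plan is to differentiate the score in \cref{smb:prop-score-mu} once more with respect to $\mu$, holding $\kappa$ and $n$ fixed. This splits $\partial^2\ell^{(2+)}/\partial\mu^2$ into two pieces: a data-dependent beta-binomial piece, and a deterministic truncation piece $-T'(\mu)$, where $T(\mu) = p_0\Lambda/(1-p_0)$. The deterministic piece passes through the expectation unchanged. Only the beta-binomial piece needs the truncated expectation $\E_{Y\mid Y>0}$, and for that I would reduce to an untruncated beta-binomial expectation using \cref{sma:prop-trunc-identity}.

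\emph{Step 1 (beta-binomial piece).} Use $\partial a/\partial\mu=\kappa$ and $\partial b/\partial\mu=-\kappa$, and differentiate $S_{\BetaBin}$ in \eqref{smb:eq-Sbb} term by term. The four digamma terms produce trigamma terms, each carrying a factor $\kappa^2$:
\[
\partial_\mu S_{\BetaBin}(y)
= \kappa^2\bigl[\psi_1(y+a)+\psi_1(n-y+b)\bigr]
 - \kappa^2\bigl[\psi_1(a)+\psi_1(b)\bigr].
\]
The second bracket is non-random.

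\emph{Step 2 (truncation piece).} Use $\partial p_0/\partial\mu=-p_0\Lambda$ and $\partial\Lambda/\partial\mu=\Lambda_2$ from \cref{app:math-properties}. The quotient rule, together with the identity $d(p_0\Lambda)/d\mu = p_0(\Lambda_2-\Lambda^2)$ already used in \eqref{sma:eq-Phi-deriv}, gives
\[
\frac{d}{d\mu}\frac{p_0\Lambda}{1-p_0}
= \frac{p_0\Lambda_2}{1-p_0}-\frac{p_0\Lambda^2}{1-p_0}-\frac{p_0^2\Lambda^2}{(1-p_0)^2}
= \frac{p_0\Lambda_2}{1-p_0}-\frac{p_0\Lambda^2}{(1-p_0)^2}.
\]
This is exactly \eqref{smb:eq-Tprime}. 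Since this term enters the score with a minus sign, it contributes $+T'(\mu)$ to the negative second derivative, independently of $y$.

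\emph{Step 3 (taking the truncated expectation).} Write $g(y)=\psi_1(y+a)+\psi_1(n-y+b)$. Because $g(0)\neq 0$, I would apply \cref{sma:prop-trunc-identity} to the centred function $g(y)-g(0)$, which does vanish at zero. This gives
\[
\E[g(Y)\mid Y>0] = \frac{\E[g(Y)] - p_0\,g(0)}{1-p_0},
\]
where the unconditional expectation is taken under $Y\sim\BetaBin(n,\mu,\kappa)$. The non-random trigamma constants and $T'$ pass through the expectation unchanged. Collecting terms then gives the structure of \eqref{smb:eq-info-mu}: a $\kappa^2/(1-p_0)$ multiple of the untruncated trigamma expectation, the constant $\kappa^2[\psi_1(a)+\psi_1(b)]$, and $T'(\mu)$.

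\emph{Main obstacle: bookkeeping.} I expect the main difficulty to be reconciling this derivation with the displayed form of \eqref{smb:eq-info-mu}. There are two issues.
\begin{itemize}
\item The $y=0$ boundary correction $-p_0 g(0)/(1-p_0)$ must either be absorbed into the displayed expectation, for instance by reading that expectation as $\E[g(Y)\one(Y>0)]$, or be shown to be accounted for elsewhere.
\item The overall sign attached to the trigamma terms must be tracked carefully when the negative of the second derivative is taken.
\end{itemize}
My first attempt would be to write every expectation as $\E[\,\cdot\,\one(Y>0)]/(1-p_0)$ from the outset. This makes the boundary term explicit and lets me check the signs against the $n=1$ case. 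In that case $Y\mid Y>0\equiv 1$, and $p_0=1-\mu$ makes $\ell^{(2+)}$ identically constant in $\mu$, so the information must vanish. This gives a concrete identity that pins down both the signs and the treatment of the $y=0$ term.
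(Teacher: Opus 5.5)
Your Steps 1--3 are correct, and they follow the same route as the paper's proof: differentiate the score \eqref{smb:eq-score-mu}, isolate the deterministic $T'(\mu)$, then pass to an untruncated expectation. The gap is the final step. The ``reconciliation'' with \eqref{smb:eq-info-mu} cannot be carried out, because the displayed identity is false. Both discrepancies you flagged are genuine errors in the display, not bookkeeping. Collecting your own terms gives, with your $g$,
\[
-\E_{Y\mid Y>0}\Bigl[\frac{\partial^2\ell^{(2+)}}{\partial\mu^2}\Bigr]
=\kappa^2\bigl[\psi_1(a)+\psi_1(b)\bigr]
-\frac{\kappa^2}{1-p_0}\,\E_{Y\sim\BetaBin}\bigl[g(Y)\,\one(Y>0)\bigr]
+T'(\mu).
\]
Reading the displayed expectation as $\E[g(Y)\one(Y>0)]$ repairs the boundary term, but no reading repairs the sign. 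Since $\psi_1$ is strictly decreasing, $\partial_\mu S_{\BetaBin}(y)<0$ for every $y\ge 1$, so the beta-binomial block contributes positively to the information. The display attaches the opposite sign to that block while keeping $+T'(\mu)$, so the two pieces of the Hessian are negated inconsistently.

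Your proposed $n=1$ check settles the matter. There $\ell^{(2+)}\equiv 0$, and the corrected formula gives $\kappa^2/a^2-\mu^{-2}=0$, using $\psi_1(a)-\psi_1(a+1)=a^{-2}$ and $T'=-\mu^{-2}$. At $\mu=\tfrac12$, $\kappa=2$ (so $a=b=1$, $p_0=\tfrac12$), however, the right-hand side of \eqref{smb:eq-info-mu} equals $\tfrac{4\pi^2}{3}-12\approx 1.16\neq 0$. For $n=2$ at the same $(\mu,\kappa)$, differentiating $\ell^{(2+)}$ explicitly gives information $\tfrac94$, matching the corrected formula. The display instead gives $\tfrac{2\pi^2}{3}-\tfrac{53}{4}\approx-6.67$, which is impossible: under the truncated model, $-\E[\partial^2_\mu\ell^{(2+)}]=\E[(\partial_\mu\ell^{(2+)})^2]\ge 0$.

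The paper's own proof makes exactly the two slips you anticipated. It applies \cref{sma:prop-trunc-identity} to $g$ even though that identity requires $g(0)=0$. It also carries $+\partial_\mu S_{\BetaBin}$ rather than $-\partial_\mu S_{\BetaBin}$ into the negated Hessian. So instead of trying to match the display, complete your argument by stating and proving the corrected identity above. Equivalently, you can use $\E_{Y\mid Y>0}[g(Y)]=(\E[g(Y)]-p_0\,g(0))/(1-p_0)$ in place of the indicator.
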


\begin{proof}
  Differentiating the score~\eqref{smb:eq-score-mu} with respect to
  $\mu$ produces two contributions.

  \medskip\noindent\textit{The $S_{\textup{BB}}$ term.}\enspace
  Direct differentiation gives
  $\partial S_{\textup{BB}}/\partial\mu
  = \kappa^2[\psi_1(y+a) + \psi_1(n-y+b) - \psi_1(a) - \psi_1(b)]$.

  \medskip\noindent\textit{The truncation term.}\enspace
  Set $T(\mu) = p_0\Lambda/(1-p_0)$.  The quotient rule with
  $\partial p_0/\partial\mu = -p_0\Lambda$ gives
  \begin{align*}
    T'(\mu)
    &= \frac{p_0(\Lambda_2 - \Lambda^2)(1-p_0)
             - p_0^2\Lambda^2}{(1-p_0)^2}
    = \frac{p_0\Lambda_2(1-p_0) - p_0\Lambda^2}{(1-p_0)^2} \\
    &= \frac{p_0\,\Lambda_2}{1-p_0}
       - \frac{p_0\,\Lambda^2}{(1-p_0)^2}.
  \end{align*}

  \medskip\noindent\textit{Expectation under truncation.}\enspace
  Taking expectations under $Y \mid Y > 0$ and converting via the
  identity $\E_{Y|Y>0}[g(Y)] = \E_{Y\sim\BetaBin}[g(Y)]/(1-p_0)$
  yields~\eqref{smb:eq-info-mu}.  The non-stochastic terms
  $-\kappa^2[\psi_1(a)+\psi_1(b)]$ and $T'(\mu)$ pass through the
  expectation unchanged.
\end{proof}


\begin{remark}[Cross-information]\label{smb:rem-cross-info}
  Both $\mu$ and $\kappa$ enter $\ell_i^{(2+)}$, so the
  cross-information $\cl{I}_{\mu,\kappa}$ is nonzero.
  The intensive-margin information block is
  \begin{equation}\label{smb:eq-info-block}
    \cl{I}_{\textup{int}+\kappa}
    = \begin{pmatrix}
        \cl{I}_{\textup{int}}   & \cl{I}_{\textup{int},\kappa} \\
        \cl{I}_{\kappa,\textup{int}} & \cl{I}_{\kappa}
      \end{pmatrix},
  \end{equation}
  which is not further block-diagonal.  The intensive-margin regression
  parameters and $\kappa$ must be estimated jointly, but they are fully
  separated from the extensive-margin parameters by the hurdle structure
  (\cref{sec:model}).
\end{remark}

%% file: sm_b2_fisher.tex

\subsection{Fisher information and block-diagonality}\label{smb:fisher-info}

The individual score functions and information entries derived in
\cref{smb:scores} concern a single linear predictor or scalar parameter.
This subsection establishes the \emph{global} structure of the Fisher
information matrix---in particular, the block-diagonal separation between
the extensive- and intensive-margin parameter blocks---and collects
several ancillary results on information rates and the sandwich convention.


\begin{proposition}[Block-diagonal Fisher information]
\label{smb:prop-block-diag}
  Let\/ $\btheta = (\boldsymbol{\phi}_1^\t,\, \boldsymbol{\phi}_2^\t)^\t$, where
  $\boldsymbol{\phi}_1 = (\balpha^\t, \bdelta_{1,s}^\t)^\t$ collects the
  extensive-margin parameters and
  $\boldsymbol{\phi}_2 = (\bbeta^\t, \bdelta_{2,s}^\t, \log\kappa)^\t$ collects the
  intensive-margin parameters including the dispersion.  Then the expected
  Fisher information for a single observation decomposes as
  \begin{equation}\label{smb:eq-block-diag}
    \boxed{%
      \cl{I}_i(\btheta)
      = \diag\!\bigl(\cl{I}_{\textup{ext},i},\;
        \cl{I}_{\textup{int}+\kappa,i}\bigr).
    }
  \end{equation}
  That is, the cross-information between $\boldsymbol{\phi}_1$ and $\boldsymbol{\phi}_2$ vanishes
  identically.
\end{proposition}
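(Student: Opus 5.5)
The plan is to exploit the additive structure of the single-observation log-likelihood from \eqref{eq:loglik-decomp}. For provider $i$ write
\[
\ell_i(\btheta) = \ell_i^{(1)}(\boldsymbol{\phi}_1) + z_i\,\ell_i^{(2+)}(\boldsymbol{\phi}_2),
\]
where $\ell_i^{(1)} = z_i\log q_i + (1-z_i)\log(1-q_i)$ depends on $\boldsymbol{\phi}_1$ only through $\eta_i^{(1)} = \bx_i\t\balpha + \bx_i^{(r)\t}\bdelta_{1,s[i]}$. The term $\ell_i^{(2+)} = \log f_{\BetaBin}(y_i\mid n_i,\mu_i,\kappa) - \log(1-p_{0,i})$ depends on $\boldsymbol{\phi}_2$ only through $\eta_i^{(2)} = \logit(\mu_i)$ and $\log\kappa$. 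The key observation is that the indicator $z_i = \one(y_i>0)$ is a function of the data alone, not of any parameter. The factor $z_i$ multiplying the intensive part therefore does not couple the two blocks under differentiation.

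The first step will be to compute the off-diagonal block of the observed information, $-\partial^2\ell_i/\partial\boldsymbol{\phi}_1\partial\boldsymbol{\phi}_2\t$. Since $\ell_i^{(1)}$ has no $\boldsymbol{\phi}_2$-dependence and $z_i\ell_i^{(2+)}$ has no $\boldsymbol{\phi}_1$-dependence, each mixed partial vanishes identically, for every $y_i$ and every $\btheta$. Taking expectation over $Y_i$ under the HBB law \eqref{eq:hbb} then gives a zero cross block of $\cl{I}_i(\btheta)$. As a cross-check I will also use the outer-product form $\cl{I}_i = \E[\nabla\ell_i\nabla\ell_i\t]$. The cross block there equals $\E[(z_i-q_i)\,\bx_i^{(\cdot)}\, z_i\, \nabla_{\boldsymbol{\phi}_2}\ell_i^{(2+)\t}]$, using \cref{smb:prop-score-ext}. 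I will condition on $z_i$: given $z_i=1$, $Y_i$ follows $\ZTBB$, and the score of a correctly normalized truncated density has mean zero, so $\E[\nabla_{\boldsymbol{\phi}_2}\ell_i^{(2+)}\mid z_i=1]=0$. The cross block therefore vanishes by the tower property, in agreement with the Hessian computation.

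The diagonal blocks are then identified. The extensive block is $\cl{I}_{\mathrm{ext},i} = q_i(1-q_i)\,\mathbf{d}_i\mathbf{d}_i\t$, where $\mathbf{d}_i = \partial\eta_i^{(1)}/\partial\boldsymbol{\phi}_1$ stacks $\bx_i$ and $\bx_i^{(r)}$ in the slot of state $s[i]$. This follows from \cref{smb:rem-ext-info}. The intensive block is $q_i$ times the truncated-BB information in $(\eta_i^{(2)},\log\kappa)$, pulled back through the chain rule. Its $\mu$-entry comes from \cref{smb:prop-info-mu}, and by \cref{smb:rem-cross-info} it is not further block-diagonal. The factor $q_i = \Prob(z_i=1)$ arises from taking the expectation of $z_i$.

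The main obstacle is one of care rather than depth. One must make sure that no parameter is shared between the blocks, in particular that $\kappa$ does not enter $q_i$ and that $\bdelta_{1,s}$ and $\bdelta_{2,s}$ are treated as distinct coordinates. One must also justify interchanging differentiation and expectation. The interchange is unproblematic because the support $\{0,\dots,n_i\}$ is finite and parameter-free, so the expectation is a finite sum of smooth functions. Note that the hyperparameter $\bSigma_\delta$ does not appear in $\ell_i$ at all, which is consistent with \cref{prop:sigma12-nonid}.
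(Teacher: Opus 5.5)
Your proposal is correct and follows essentially the same route as the paper. You split $\ell_i$ into a part depending only on $\boldsymbol{\phi}_1$ and an intensive part depending only on $\boldsymbol{\phi}_2$, multiplied by the data-determined indicator $z_i$, observe that every mixed partial vanishes pointwise, and take expectations; your outer-product cross-check via the zero-mean truncated score and the explicit factor $q_i$ in the intensive block are correct extras beyond what the paper records.
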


\begin{proof}
  The log-likelihood decomposes as
  $\ell_i = \ell_i^{(1)} + \ell_i^{(2)}$, where
  \[
    \ell_i^{(1)}
    = z_i\log q_i + (1-z_i)\log(1-q_i)
  \]
  depends only on $\boldsymbol{\phi}_1$ through
  $q_i = \expit(\bx_i^\t\balpha + \bx_i^{(r)\t}\bdelta_{1,s})$, and
  \[
    \ell_i^{(2)}
    = z_i\bigl[\log f_{\textup{BB}}(y_i \mid n_i, \mu_i, \kappa)
      - \log(1-p_{0,i})\bigr]
  \]
  depends only on $\boldsymbol{\phi}_2$ through $\mu_i$ and $\kappa$.

  \medskip\noindent\textit{Cross-derivative
  $\partial\ell^{(1)}/\partial\boldsymbol{\phi}_2$.}\enspace
  Since $\ell_i^{(1)}$ does not involve any element of $\boldsymbol{\phi}_2$,
  this derivative is identically zero.

  \medskip\noindent\textit{Cross-derivative
  $\partial\ell^{(2)}/\partial\boldsymbol{\phi}_1$.}\enspace
  Write $\ell_i^{(2)} = z_i \cdot g(\mu_i,\kappa,y_i,n_i)$.
  The indicator $z_i = \one(y_i > 0)$ is a function of the observed
  data, not of parameters, so
  $\partial z_i/\partial\boldsymbol{\phi}_1 = \mathbf{0}$ and hence
  $\partial\ell_i^{(2)}/\partial\boldsymbol{\phi}_1 = \mathbf{0}$.

  \medskip
  Combining,
  $\partial^2\ell_i/\partial\boldsymbol{\phi}_1\,\partial\boldsymbol{\phi}_2^\t = \mathbf{0}$,
  which gives~\eqref{smb:eq-block-diag} after taking expectations.
\end{proof}

\begin{remark}[Hurdle versus zero-inflated models]
\label{smb:rem-hurdle-vs-zi}
  The block-diagonality in~\cref{smb:prop-block-diag} is a general
  structural property of hurdle (two-part) models
  \citep{Mullahy1986,Cragg1971}: the
  participation indicator $z_i = \one(y_i > 0)$ is observed, so
  each component of the likelihood depends on a disjoint parameter
  subset.  In zero-inflated models~\citep[e.g.,][]{Lambert1992}, by
  contrast, the at-risk indicator is
  \emph{latent}, and both the inflation and count parameters enter the
  marginal likelihood of every observation, coupling the two components
  in the Fisher information
  \citep[see][for a comprehensive comparison]{NeelonEtAl2016}.
\end{remark}


\begin{proposition}[Extensive-margin information in parameter space]
\label{smb:prop-ext-paramspace}
  The extensive-margin Fisher information block for provider~$i$ is
  \begin{equation}\label{smb:eq-ext-paramspace}
    \cl{I}_{\textup{ext},i}
    = q_i(1-q_i)\,\mathbf{d}_{1,i}\,\mathbf{d}_{1,i}^\t,
  \end{equation}
  where
  $\mathbf{d}_{1,i}
  = \partial\eta_i^{(1)}/\partial(\balpha^\t,\bdelta_{1,s}^\t)^\t$
  is the gradient of the extensive-margin linear predictor with respect to
  all extensive-margin parameters.
\end{proposition}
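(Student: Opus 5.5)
The plan is a chain-rule computation that reduces the parameter-space information to the scalar information on the linear predictor. I treat $\ell_i^{(1)}$ as a composite function $\boldsymbol{\phi}_1 \mapsto \eta_i^{(1)}(\boldsymbol{\phi}_1) \mapsto \ell_i^{(1)}$. The linear predictor $\eta_i^{(1)} = \bx_i^\t\balpha + \bx_i^{(r)\t}\bdelta_{1,s}$ is affine in $\boldsymbol{\phi}_1 = (\balpha^\t,\bdelta_{1,s}^\t)^\t$. Its gradient is therefore the constant vector $\mathbf{d}_{1,i} = (\bx_i^\t, \bx_i^{(r)\t})^\t$ (with zeros in the coordinates of other states' deviations if those are included), and its Hessian vanishes. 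Block-diagonality (\cref{smb:prop-block-diag}) lets me restrict attention to the $\boldsymbol{\phi}_1$ block, since $\ell_i^{(2)}$ contributes nothing there.

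First, I apply the chain rule to the score using \cref{smb:prop-score-ext}:
\[
\partial\ell_i^{(1)}/\partial\boldsymbol{\phi}_1 = (z_i - q_i)\,\mathbf{d}_{1,i}.
\]
Second, I differentiate again. Because $\mathbf{d}_{1,i}$ does not depend on the parameters, the second-order chain rule has only one surviving term:
\[
\frac{\partial^2\ell_i^{(1)}}{\partial\boldsymbol{\phi}_1\partial\boldsymbol{\phi}_1^\t}
= \frac{\partial^2\ell_i^{(1)}}{\partial(\eta_i^{(1)})^2}\,\mathbf{d}_{1,i}\mathbf{d}_{1,i}^\t
+ \frac{\partial\ell_i^{(1)}}{\partial\eta_i^{(1)}}\,\frac{\partial^2\eta_i^{(1)}}{\partial\boldsymbol{\phi}_1\partial\boldsymbol{\phi}_1^\t}
= -q_i(1-q_i)\,\mathbf{d}_{1,i}\mathbf{d}_{1,i}^\t.
\]
The first equality is the general formula, and the second uses \eqref{smb:eq-hess-ext} together with the vanishing Hessian of the affine map.

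Third, I take the negative expectation. The Hessian is non-random, so expectation changes nothing, consistent with \cref{smb:rem-ext-info}, and this yields \eqref{smb:eq-ext-paramspace}. As a cross-check, the outer-product form $\E[(z_i-q_i)^2]\,\mathbf{d}_{1,i}\mathbf{d}_{1,i}^\t = q_i(1-q_i)\,\mathbf{d}_{1,i}\mathbf{d}_{1,i}^\t$ gives the same result, using $\E[z_i] = q_i$ under the hurdle model.

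There is no real obstacle here. The only point needing care is bookkeeping: I must state explicitly that $\mathbf{d}_{1,i}$ has zero entries for every $\bdelta_{1,s'}$ with $s' \neq s[i]$. This matters when the block is read as a submatrix of the full $(\balpha,\bdelta_{1,1},\ldots,\bdelta_{1,S})$ information. I would also note that the result is rank one per observation, so summing over providers (condition (C2)) is what makes the aggregate information nonsingular.
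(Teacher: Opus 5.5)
Your proof is correct and is essentially the paper's argument: the paper also takes the scalar information $q_i(1-q_i)$ on $\eta_i^{(1)}$ and pushes it through the constant gradient $\mathbf{d}_{1,i}$ via the chain rule, using the score outer-product form you give as your cross-check, and your main Hessian-based line is equally valid because $\eta_i^{(1)}$ is affine in the parameters. One correction concerns your closing aside, which is false: summing over providers does not make the joint $(\balpha,\bdelta_{1,1},\ldots,\bdelta_{1,S})$ information nonsingular, even under (C2), because the shift $\balpha^{(r)}\mapsto\balpha^{(r)}+\mathbf{c}$, $\bdelta_{1,s}\mapsto\bdelta_{1,s}-\mathbf{c}$ for all $s$ leaves every $\eta_i^{(1)}$ unchanged, so that direction is orthogonal to every $\mathbf{d}_{1,i}$; (C2) only gives nonsingularity for the total coefficients $\tilde\balpha_s$, and the remaining $q$-dimensional null direction is resolved by the hierarchical layer, as in Step~1 of the identification proof.
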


\begin{proof}
  By~\cref{smb:rem-ext-info}, the scalar information for the linear
  predictor is $\cl{I}_{\textup{ext},i} = q_i(1-q_i)$.  The chain rule
  $\partial\ell_i^{(1)}/\partial\boldsymbol{\phi}_1
  = (\partial\ell_i^{(1)}/\partial\eta_i^{(1)})\,\mathbf{d}_{1,i}$
  yields the rank-one outer product
  form~\eqref{smb:eq-ext-paramspace}.
\end{proof}


\begin{remark}[Fisher information rates]\label{smb:rem-info-rates}
  The extensive-margin information $q_i(1-q_i)$ is $O(1)$ and
  non-random (\cref{smb:rem-ext-info}).  The intensive-margin
  information involves trigamma sums $\psi_1(a) + \psi_1(b)$ that
  diverge as $\kappa \to 0$; for the empirically relevant range
  $\kappa \in [2.5,\, 26.5]$ the information is well-behaved, but
  the divergent terms underlie the larger design effect ratios
  observed for intensive-margin parameters in
  \cref{sec:simulation,app:survey}.
\end{remark}

%% file: sm_b3_identification.tex

\subsection{Identification}\label{smb:identification}

This subsection provides the complete proof of the identification theorem
stated in~\cref{sec:model} (\cref{thm:identification}). We first restate
the regularity conditions, then develop the five-step proof. Three
supporting propositions follow: the separation of $\mu$ and $\kappa$ in
the truncated beta-binomial (with the corrected $n \ge 3$ threshold), the
likelihood non-identification of the cross-margin covariance
$\bSigma_{12}$, and sufficient conditions for posterior well-concentration.


\paragraph{Regularity conditions.}
The following conditions are required for identification. They match the
statement in~\cref{thm:identification} exactly; we include additional
commentary on the role of each.

\begin{enumerate}[label=\textup{(C\arabic*)},nosep]
  \item \emph{Full rank.}
    The design matrices have full rank:
    $\mathrm{rank}(\mathbf{X}) = P$ and $\mathrm{rank}(\mathbf{V}) = Q$.
    \label{smb:cond-rank}

  \item \emph{State-level replication.}
    Each state $s$ has at least $q$ providers with linearly independent
    covariate vectors $\bx_i^{(r)}$, with at least one zero and at least
    $q$ positive responders.
    \label{smb:cond-replication}

  \item \emph{Trial-size variation.}
    $n_i \ge 3$ for all $i$ with $z_i = 1$, and there exist providers
    $i, j$ with $s[i] = s[j]$ and $n_i \ne n_j$ in each state.
    \label{smb:cond-trial}

  \item \emph{Prior support.}
    The prior $\pi(\btheta)$ has full support on the parameter space.
    \label{smb:cond-prior}
\end{enumerate}

\begin{remark}[Asymmetry in C2]\label{smb:rem-c2-asymmetry}
  The requirements in~\ref{smb:cond-replication} are asymmetric.
  The extensive margin requires at least one zero per state (to identify
  $q_s < 1$), while the intensive margin requires at least $q$ positive
  responders per state (to identify the $q$-dimensional
  $\bdelta_{2,s}$). The asymmetry reflects the different informational
  demands of a univariate binary variable versus a $q$-dimensional
  regression parameter.
\end{remark}


\begin{theorem}[Identification of the HBB model]
\label{smb:thm-identification}
  Under Conditions\/ \textup{(C1)--(C4)}, the full parameter vector
  $\btheta = (\balpha, \bbeta, \kappa, \{\bdelta_s\}, \bSigma_\delta,
  \{\bGamma_k\})$ is identified under the posterior.
\end{theorem}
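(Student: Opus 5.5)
The plan follows the layered structure already sketched in the interpretation after \cref{thm:identification}. I would identify the likelihood-level objects first, then recover the hierarchical decomposition, and finally use (C4) to pass from identification of the data distribution to identification under the posterior. Concretely, I take ``identified'' to mean that distinct parameter values induce distinct joint laws of $(\by,\{\bdelta_s\})$ through the conditional likelihood and the hierarchical layer. Under (C4) the posterior then puts no mass on observationally equivalent sets that fail to collapse to a point.

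\emph{Step 1 (separation of margins).} By the decomposition \eqref{eq:loglik-decomp} and the block-diagonality of \cref{smb:prop-block-diag}, the extensive and intensive margins can be treated separately.

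\emph{Step 2 (extensive margin).} For each state $s$, the Bernoulli law of $z_i$ identifies $q_i=\expit(\bx_i\t\balpha+\bx_i^{(r)\t}\bdelta_{1,s})$ for every provider. I need both outcomes to occur in the state. Condition (C2) supplies at least one zero, and I will read ``at least $q$ positive responders'' as supplying positives. Injectivity of $\logit$ then identifies the linear predictors. The $q$ linearly independent $\bx_i^{(r)}$ in (C2), together with rank$(\mathbf{X})=P$ from (C1), identify the total state coefficients $\tilde{\alpha}_{j,s}$ of \eqref{eq:total-coef}.

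\emph{Step 3 (intensive margin).} Here the work is to show that the zero-truncated beta-binomial family $\{f_{\ZTBB}(\cdot\mid n,\mu,\kappa)\}$ is injective in $(\mu,\kappa)$ when $n\ge3$. I expect this to be the main obstacle. I would first use the truncation identity (\cref{sma:prop-trunc-identity}): the ratios $f_{\ZTBB}(y)/f_{\ZTBB}(1)$ for $y=2,3$ equal ratios of untruncated beta-binomial probabilities, and so are free of $p_0$. By \eqref{eq:bb-pmf}, each ratio is a rational function of $(a,b)$. For example, the ratio at $y=2$ is
\[
\frac{\binom n2}{\binom n1}\cdot\frac{(a+1)}{(n-2+b)},
\]
and the ratio at $y=3$ adds a further factor $(a+2)/(n-3+b)$. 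Two such equations determine $a$ and $b$; checking that they do so uniquely is where I expect care to be needed. This explains why $n\ge3$ is required, consistent with the $n=1,2$ discussion after \cref{thm:identification}. Once $(\mu_i,\kappa)$ is known, $\logit(\mu_i)$ is identified, and the same rank argument as in Step 2 identifies $\tilde{\beta}_{j,s}$. Variation in $n_i$ within a state (the second part of (C3)) serves as a backup: $p_0$ depends on $n$ through the factor $(b+n)/(\kappa+n)$ of \cref{sma:prop-p0-n}, which separates $\kappa$ from $\mu$ by a different route if the two-ratio argument is degenerate.

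\emph{Step 4 (hierarchical decomposition).} With the total coefficients $\{\tilde{\alpha}_{\cdot,s},\tilde{\beta}_{\cdot,s}\}$ identified for all $s$, write $\tilde{\boldsymbol\alpha}^{(r)}_s=\balpha^{(r)}+\bGamma_1\bv_s+\bepsilon_{1,s}$, and likewise for the intensive margin. This is a multivariate linear model in $\bv_s$ with mean-zero Gaussian errors $\bepsilon_s$. Rank$(\mathbf{V})=Q$ identifies the regression coefficients, that is, the intercept column absorbed into $\balpha^{(r)},\bbeta^{(r)}$ together with $\bGamma_k$. The covariance of the residuals then identifies $\bSigma_\varepsilon$, including the cross block $\bSigma_{12}$. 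This step rests on the Gaussian hierarchical layer, not on the likelihood, in line with \cref{prop:sigma12-nonid}. The $\LKJ$ and half-normal priors have full support by (C4), and I would invoke \cref{prop:sigma12-id} for concentration of the posterior.
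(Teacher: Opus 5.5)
Your route is essentially the paper's. The conditional likelihood identifies the total state coefficients margin by margin, the $p_0$-free probability ratios identify $(\mu,\kappa)$ when $n\ge 3$, and a between-state regression is then supposed to recover $\balpha^{(r)}$, $\bGamma_k$ and $\bSigma_\varepsilon$. The uniqueness worry in your Step~3 is easily closed. Set $r_2=(a+1)/(b+n-2)$ and $r_3=(a+2)/(b+n-3)$, both observable from the rescaled consecutive ratios. The equations $a+1=r_2(b+n-2)$ and $a+2=r_3(b+n-3)$ are linear in $(a,b)$, with determinant proportional to $r_3-r_2=(a+b+n-1)/[(b+n-2)(b+n-3)]>0$. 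So you never need the fallback via variation in $n_i$.

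\textbf{The genuine gap is your Step~4.}

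\emph{The non-identified direction.} In this model $\bv_s$ contains an intercept ($Q=4$: a constant plus three policy instruments). Writing $\mathbf{e}_l$ for the $l$-th standard basis vector of $\R^Q$, the between-state mean of $\tilde{\balpha}_s=\balpha^{(r)}+\bdelta_{1,s}$ is $(\balpha^{(r)}+\bGamma_1\mathbf{e}_1)+\sum_{l=2}^{Q}v_{s,l}\,\bGamma_1\mathbf{e}_l$. Take any $\mathbf{c}\in\R^q$ and apply the map $\balpha^{(r)}\mapsto\balpha^{(r)}+\mathbf{c}$, $\bGamma_1\mathbf{e}_1\mapsto\bGamma_1\mathbf{e}_1-\mathbf{c}$, $\bdelta_{1,s}\mapsto\bdelta_{1,s}-\mathbf{c}$. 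This leaves every $\tilde{\balpha}_s$ and every residual $\bepsilon_{1,s}=\bdelta_{1,s}-\bGamma_1\bv_s$ unchanged. Hence both the conditional likelihood and the hierarchical density of $\{\bdelta_s\}$ are invariant, and only the top-level priors distinguish the two points. The same holds for $\bbeta^{(r)}$ and $\bGamma_2$. Because $q=P$ in M3b, this affects every coordinate of $\balpha$ and $\bbeta$.

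\emph{Why your Step~4 does not close it.} Your phrase ``the intercept column absorbed into $\balpha^{(r)}$'' states exactly this fact. The regression of $\tilde{\balpha}_s$ on $\bv_s$, together with the centering $\E[\bepsilon_{1,s}]=\mathbf{0}$, identifies only the sum $\balpha^{(r)}+\bGamma_1\mathbf{e}_1$. The condition $\mathrm{rank}(\mathbf{V})=Q$ cannot split that sum; you would need $\mathrm{rank}[\mathbf{1}_S,\mathbf{V}]=Q+1$, which fails by construction. Defining identification through the joint law of $(\by,\{\bdelta_s\})$ hides the problem, because with $\bdelta_s$ observed $\balpha^{(r)}=\tilde{\balpha}_s-\bdelta_{1,s}$ would be trivial. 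But $\bdelta_s$ is latent and is itself part of $\btheta$. Condition (C4) does not rescue the step either: a full-support proper prior only regularizes the flat direction.

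\emph{The paper's proof.} The paper's Step~2 makes the same OLS-uniqueness claim and has the same hole. It acknowledges the overlap only in \cref{smc:rem-intercept-id} and in the discussion around \cref{smd:tab-gamma}. There the intercept columns of $\bGamma_k$ have posterior SD $\approx 0.93$ and are described as softly identified by the prior.

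\emph{How to obtain a true statement.} One option is to drop the constant from $\bv_s$ (or impose $\bGamma_k\mathbf{e}_1=\mathbf{0}$) and strengthen (C1) to $\mathrm{rank}[\mathbf{1}_S,\mathbf{V}]=Q+1$. Your Step~4 then goes through as written. The other option is to weaken the conclusion so that it claims identification only of $\balpha^{(r)}+\bGamma_1\mathbf{e}_1$, $\bbeta^{(r)}+\bGamma_2\mathbf{e}_1$, the policy columns $\bGamma_k\mathbf{e}_l$ for $l\ge 2$, the residuals $\bepsilon_s$, $\kappa$, and $\bSigma_\varepsilon$.
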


\begin{proof}
  The argument separates two layers of identification: (i)~the
  within-state likelihood identifies \emph{state-specific total
  coefficients} and~$\kappa$; (ii)~the hierarchical prior structure
  resolves the additive decomposition of these totals into
  population-average effects, policy moderators, and residual state
  deviations. The proof has six steps.

  \medskip\noindent\textit{Step~1: Extensive-margin total state
  coefficients.}\enspace
  The marginal distribution of $z_i = \one(Y_i > 0)$ is
  $\Bern(q_i)$ with
  \[
    q_i = \expit\bigl(\bx_i\t\balpha
      + \bx_i^{(r)\t}\bdelta_{1,s[i]}\bigr).
  \]
  Partition $\bx_i = (\bx_i^{(r)\t},\;\bx_i^{(f)\t})\t$ and
  correspondingly $\balpha = (\balpha^{(r)\t},\;\balpha^{(f)\t})\t$,
  where $\bx_i^{(r)} \in \R^q$ collects the covariates with
  state-varying coefficients and $\bx_i^{(f)} \in \R^{P-q}$ collects
  the fixed-only covariates.%
  \footnote{When $q = P$ (as in models M2/M3), the fixed-only
  partition is empty and the argument applies with $\balpha^{(f)}$
  and $\bx_i^{(f)}$ vacuous.}
  Define the \emph{total extensive-margin state coefficient}
  (cf.~\eqref{eq:total-coef})
  \begin{equation}\label{smb:eq-total-ext}
    \tilde\balpha_s \;\coloneqq\; \balpha^{(r)} + \bdelta_{1,s}
    \;\in\; \R^q,
  \end{equation}
  so that the linear predictor becomes
  $\logit(q_i)
  = \bx_i^{(f)\t}\balpha^{(f)}
    + \bx_i^{(r)\t}\tilde\balpha_{s[i]}$.
  The parameters
  $\boldsymbol{\phi} \coloneqq (\balpha^{(f)},\,
  \tilde\balpha_1,\ldots,\tilde\balpha_S)
  \in \R^{(P-q) + Sq}$ enter the model through the
  effective design matrix
  \[
    \mathbf{Z}
    = \bigl[\mathbf{X}^{(f)},\;
      \mathbf{D}_1\mathbf{X}_1^{(r)},\;\ldots,\;
      \mathbf{D}_S\mathbf{X}_S^{(r)}\bigr]
    \in \R^{N \times [(P-q) + Sq]},
  \]
  where $\mathbf{D}_s$ selects providers in state~$s$ and
  $\mathbf{X}_s^{(r)}$ is the within-state submatrix of
  $\bx_i^{(r)}$ values. Condition~\ref{smb:cond-replication} ensures
  $\mathrm{rank}(\mathbf{X}_s^{(r)}) = q$ for each $s$, so the
  state-specific blocks contribute $Sq$ independent columns.
  Condition~\ref{smb:cond-rank} ensures
  $\mathrm{rank}(\mathbf{X}) = P$, so the $(P-q)$ fixed-only columns
  are not in the span of the state-specific blocks. Hence
  $\mathrm{rank}(\mathbf{Z}) = (P-q) + Sq$, and by the injectivity of
  logit, $\balpha^{(f)}$ and
  $\{\tilde\balpha_s\}_{s=1}^S$ are identified from the
  extensive-margin likelihood.

  \emph{However}, $\balpha^{(r)}$ and $\bdelta_{1,s}$ are
  \emph{not} separately identified from the likelihood alone:
  for any $\mathbf{c} \in \R^q$, the shift
  $\balpha^{(r)} \mapsto \balpha^{(r)} + \mathbf{c}$,\;
  $\bdelta_{1,s} \mapsto \bdelta_{1,s} - \mathbf{c}$ for all $s$
  preserves every $\tilde\balpha_s$ and hence the entire likelihood.
  The resolution of this $q$-dimensional additive ambiguity is
  the task of Step~2.

  \medskip\noindent\textit{Step~2: Hierarchical decomposition on the
  extensive margin.}\enspace
  The hierarchical model specifies
  $\bdelta_{1,s} = \bGamma_1\bv_s + \bepsilon_{1,s}$ with
  $\bepsilon_{1,s} \sim \Norm_q(\mathbf{0},
  \bSigma_{\varepsilon,1})$.
  Substituting into~\eqref{smb:eq-total-ext},
  \[
    \tilde\balpha_s
    = \balpha^{(r)} + \bGamma_1\bv_s + \bepsilon_{1,s}.
  \]
  Since $\{\tilde\balpha_s\}$ are identified from Step~1 and
  $\{\bv_s\}$ are observed, this is a multivariate regression of
  $\tilde\balpha_s$ on $\bv_s$ with intercept $\balpha^{(r)}$.
  Under $\mathrm{rank}(\mathbf{V}) = Q$ (\ref{smb:cond-rank}) and $S > Q$,
  the ordinary least-squares solution uniquely determines the
  slope matrix $\bGamma_1$ and the intercept $\balpha^{(r)}$.
  Importantly, the identification of $\balpha^{(r)}$ relies on the
  hierarchical centering convention $\E[\bepsilon_{1,s}] = \mathbf{0}$:
  this structural assumption pins down the intercept of the
  between-state regression, resolving the shift ambiguity that the
  within-state likelihood leaves free.
  With $\balpha^{(r)}$ identified, the state deviations follow as
  $\bdelta_{1,s} = \tilde\balpha_s - \balpha^{(r)}$ and the
  residuals as
  $\bepsilon_{1,s} = \bdelta_{1,s} - \bGamma_1\bv_s$.

  \medskip\noindent\textit{Step~3: Intensive-margin total state
  coefficients and $\kappa$.}\enspace
  Conditional on $z_i = 1$ (i.e., $Y_i > 0$), the distribution of
  $Y_i$ is $\ZTBB(n_i, \mu_i, \kappa)$ with
  $\mu_i = \expit(\bx_i\t\bbeta +
  \bx_i^{(r)\t}\bdelta_{2,s[i]})$.
  By~\cref{smb:prop-mu-kappa-id} below, the pair $(\mu_i, \kappa)$ is
  identified from the truncated likelihood whenever $n_i \ge 3$
  (\ref{smb:cond-trial}). Once $\{\mu_i\}$ are pinned down for
  all positive responders, partition
  $\bbeta = (\bbeta^{(r)\t},\;\bbeta^{(f)\t})\t$ and define
  the \emph{total intensive-margin state coefficient}
  \begin{equation}\label{smb:eq-total-int}
    \tilde\bbeta_s \;\coloneqq\; \bbeta^{(r)} + \bdelta_{2,s}
    \;\in\; \R^q,
  \end{equation}
  so that
  $\logit(\mu_i) = \bx_i^{(f)\t}\bbeta^{(f)}
    + \bx_i^{(r)\t}\tilde\bbeta_{s[i]}$.
  An argument identical to Step~1---using the injectivity of logit,
  the effective design matrix, and the rank conditions
  \ref{smb:cond-rank}--\ref{smb:cond-replication}---identifies
  $\bbeta^{(f)}$ and $\{\tilde\bbeta_s\}_{s=1}^S$ from the
  intensive-margin likelihood. As in the extensive margin,
  $\bbeta^{(r)}$ and $\bdelta_{2,s}$ are individually identified
  only through the hierarchical layer.

  \medskip\noindent\textit{Step~4: Hierarchical decomposition on the
  intensive margin.}\enspace
  Parallel to Step~2, the hierarchical specification
  $\bdelta_{2,s} = \bGamma_2\bv_s + \bepsilon_{2,s}$ yields the
  between-state regression
  $\tilde\bbeta_s = \bbeta^{(r)} + \bGamma_2\bv_s
  + \bepsilon_{2,s}$.
  Under the same rank conditions, $\bGamma_2$ and $\bbeta^{(r)}$
  are uniquely determined, whence
  $\bdelta_{2,s} = \tilde\bbeta_s - \bbeta^{(r)}$ and
  $\bepsilon_{2,s} = \bdelta_{2,s} - \bGamma_2\bv_s$.

  \medskip\noindent\textit{Step~5: Identification of
  $\bSigma_\varepsilon$.}\enspace
  The residual state effects $\bepsilon_s =
  (\bepsilon_{1,s}\t, \bepsilon_{2,s}\t)\t$ are identified from
  Steps~1--4 for $s = 1,\ldots,S$. These $S = 51$ independent draws from
  $\Norm_{2q}(\mathbf{0}, \bSigma_\varepsilon)$ identify the covariance
  matrix $\bSigma_\varepsilon$ provided that $S > 2q$ (so that the
  sample covariance has full rank). In our application, $q = 5$, so
  $S = 51 > 2q = 10$. The full covariance matrix $\bSigma_\delta$ is
  then reconstructed from $\bGamma_k$, $\bSigma_\varepsilon$, and the
  variance of $\bv_s$ via the identity
  $\Var(\bdelta_{k,s}) = \bGamma_k\Var(\bv_s)\bGamma_k\t +
  \bSigma_{\varepsilon,k}$.

  \medskip\noindent\textit{Step~6: No label-switching ambiguity.}\enspace
  Unlike mixture models where component labels are exchangeable, the
  hurdle model~\citep{Mullahy1986} has an \emph{observed} partition:
  $z_i = \one(Y_i > 0)$
  is a deterministic function of the data. Part~1 (extensive) and Part~2
  (intensive) operate on distinct, observable subsets. No permutation of
  component labels preserves the likelihood, so there is no
  label-switching invariance.
\end{proof}


\begin{proposition}[Identification of $\mu$ and $\kappa$ in
  $\ZTBB$]\label{smb:prop-mu-kappa-id}
  For $n \ge 3$, the parameters $(\mu, \kappa)$ are identified from the
  full likelihood of the zero-truncated beta-binomial: if
  \begin{equation}\label{smb:eq-ztbb-id-hypothesis}
    f_{\ZTBB}(y \mid n, \mu_1, \kappa_1)
    = f_{\ZTBB}(y \mid n, \mu_2, \kappa_2)
    \quad \textup{for all } y \in \{1,\ldots,n\},
  \end{equation}
  then $(\mu_1, \kappa_1) = (\mu_2, \kappa_2)$.
\end{proposition}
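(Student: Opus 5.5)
The plan is to eliminate the unknown truncation constant $1-p_0$ by working with ratios of consecutive probabilities. Under~\eqref{smb:eq-ztbb-id-hypothesis}, every ratio $f_{\ZTBB}(y+1)/f_{\ZTBB}(y)$ for $1\le y\le n-1$ agrees across the two parameter values. All $f_{\ZTBB}(y)$ are strictly positive for $a,b>0$, so these ratios are well defined. By~\eqref{eq:ztbb} the truncation factor cancels, so each ratio equals the corresponding untruncated beta-binomial ratio. From~\eqref{eq:bb-pmf} and $B(x+1,z-1)/B(x,z)=x/(z-1)$, this ratio is
\[
  \frac{f_{\BetaBin}(y+1)}{f_{\BetaBin}(y)}
  = \frac{n-y}{y+1}\cdot\frac{y+a}{n-y-1+b},
  \qquad a=\mu\kappa,\; b=(1-\mu)\kappa .
\]
The combinatorial prefactor does not depend on the parameters. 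Hence the hypothesis implies that $R_y(a,b)\coloneqq (y+a)/(n-y-1+b)$ takes the same value at $(a_1,b_1)$ and $(a_2,b_2)$ for every $y=1,\ldots,n-1$.

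The assumption $n\ge 3$ enters here: it guarantees at least two such ratios, namely $y=1$ and $y=2$. Let $r_1=R_1$ and $r_2=R_2$ denote the common values. Cross-multiplying turns each equation into one that is linear in the unknowns $(a,b)$:
\[
  a - r_1 b = r_1(n-2)-1,\qquad a - r_2 b = r_2(n-3)-2 .
\]
Both $(a_1,b_1)$ and $(a_2,b_2)$ solve this $2\times 2$ system. Its determinant is $r_1-r_2$, so the system has a unique solution as soon as $r_1\neq r_2$, which gives $(a_1,b_1)=(a_2,b_2)$.

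The only substantive step, and the one I expect to need care, is verifying that $r_1\ne r_2$ for every admissible $(a,b)$. I will compute
\[
  (1+a)(n-3+b)-(2+a)(n-2+b) = -(n-1)-a-b < 0 .
\]
This gives $r_1<r_2$ strictly: the ratios are strictly increasing in $y$, reflecting the log-concavity structure of the beta-binomial. Finally, the map $(\mu,\kappa)\mapsto(a,b)=(\mu\kappa,(1-\mu)\kappa)$ is a bijection from $(0,1)\times(0,\infty)$ onto $(0,\infty)^2$, with inverse $\kappa=a+b$ and $\mu=a/(a+b)$. Therefore $(\mu_1,\kappa_1)=(\mu_2,\kappa_2)$.

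As a remark, for $n=2$ only the single ratio $R_1$ is available. It leaves a one-dimensional family of $(a,b)$ values, which is consistent with the discussion following \cref{thm:identification}.
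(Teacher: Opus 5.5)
Your proof is correct and is essentially the paper's argument: consecutive ratios cancel $1-p_0$, your ratios at $y=1,2$ are exactly the paper's $r_2,r_3$, and the same computation $(1+a)(n-3+b)-(2+a)(n-2+b)=-(n-1+a+b)$ shows the linear system in $(a,b)$ has a unique solution, after which $(a,b)\mapsto(\mu,\kappa)$ is inverted. One small correction to a side remark: the strict increase of $(y+a)/(n-y-1+b)$ in $y$ reflects log-convexity of $y\mapsto B(y+a,n-y+b)$, not log-concavity of the beta-binomial (which fails, e.g., for $a,b<1$), though this plays no role in the proof.
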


\begin{proof}
  We use a constructive argument based on consecutive probability ratios.
  Write $a = \mu\kappa$ and $b = (1-\mu)\kappa$ for the shape parameters
  of the (untruncated) beta-binomial.

  \medskip\noindent\textit{Step~1: Truncation cancels in consecutive
  ratios.}\enspace
  Since $f_{\ZTBB}(y \mid n, \mu, \kappa) = f_{\BetaBin}(y \mid n,
  \mu, \kappa) / (1 - p_0)$ for $y \ge 1$, the consecutive ratio
  \[
    R_y \;\coloneqq\; \frac{f_{\ZTBB}(y)}{f_{\ZTBB}(y-1)}
    = \frac{f_{\BetaBin}(y)}{f_{\BetaBin}(y-1)}
    = \frac{n - y + 1}{y}\cdot\frac{a + y - 1}{b + n - y},
    \qquad y = 2, \ldots, n,
  \]
  depends only on $(a, b, n)$ and not on $p_0$.

  \medskip\noindent\textit{Step~2: Two ratios determine $(a,b)$ when
  $n \ge 3$.}\enspace
  For $n \ge 3$, the ratios $R_2$ and $R_3$ yield two independent
  equations. Define
  \[
    r_2 = \frac{2}{n-1}\,R_2 = \frac{a + 1}{b + n - 2},
    \qquad
    r_3 = \frac{3}{n-2}\,R_3 = \frac{a + 2}{b + n - 3}.
  \]
  Cross-multiplying gives the linear system
  \begin{align*}
    a + 1 &= r_2\,(b + n - 2), \\
    a + 2 &= r_3\,(b + n - 3).
  \end{align*}
  Subtracting the first from the second:
  $1 = (r_3 - r_2)\,b + r_3(3 - n) - r_2(2 - n)$,
  so
  \begin{equation}\label{smb:eq-b-from-ratios}
    b = \frac{r_2(n-2) - r_3(n-3) + 1}{r_3 - r_2},
  \end{equation}
  and then $a = r_2(b + n - 2) - 1$. The denominator $r_3 - r_2 \ne 0$
  for all $a, b > 0$, since
  \[
    r_3 - r_2
    = \frac{(a+2)(b+n-2) - (a+1)(b+n-3)}{(b+n-2)(b+n-3)}
    = \frac{a + b + n - 1}{(b+n-2)(b+n-3)}
    > 0.
  \]
  Thus $(a, b)$ is uniquely determined by the observed truncated
  distribution, and $\mu = a/(a+b)$, $\kappa = a + b$.

  \medskip\noindent\textit{Step~3: $n = 2$ boundary.}\enspace
  For $n = 2$, only the single ratio $R_2$ is available, giving one
  equation in two unknowns: identification fails.
  See~\cref{smb:rem-n-equals-2} for the explicit one-dimensional
  manifold of equivalent parameters.
\end{proof}

\begin{remark}[The $n = 2$ boundary case]\label{smb:rem-n-equals-2}
  For $n = 2$, the truncated support $\{1, 2\}$ provides only one free
  probability for two unknowns $(\mu, \kappa)$, so identification fails.
  The consecutive-ratio proof makes this precise: only $R_2$ is available,
  leaving a one-dimensional manifold of equivalent parameters. This
  motivates the requirement $n_i \ge 3$ in~\textup{(C3)}, satisfied by
  all 4{,}392 positive responders in the NSECE ($\min n_i = 4$).
\end{remark}

\begin{remark}[Role of trial-size variation]
\label{smb:rem-trial-variation}
  The requirement in~\textup{(C3)} that $n_i \ne n_j$ within each state
  provides identification strength beyond minimal sufficiency: observations
  sharing $\mu_i$ but differing in $n_i$ yield differential information
  about $\kappa$ via the nonlinear dependence of $p_0$ on $n$.
\end{remark}


\begin{proposition}[Likelihood non-identification of
  $\bSigma_{12}$]\label{smb:prop-sigma12-nonid}
  The off-diagonal block
  $\bSigma_{12} = \Cov(\bepsilon_{1,s}, \bepsilon_{2,s})$ is not
  identified by the likelihood alone
  \citep[cf.~the analogous discussion
  in][Section~3.1]{GhosalEtAl2020}. The log-likelihood $\ell(\btheta)$
  is completely invariant to $\bSigma_{12}$ conditional on the realized
  values $\{\bepsilon_{1,s}, \bepsilon_{2,s}\}_{s=1}^S$.
\end{proposition}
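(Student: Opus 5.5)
I will prove the statement by writing the conditional likelihood explicitly as a function of the realized residuals and showing that $\bSigma_{12}$ never appears in it. Under M3b, the linear predictors \eqref{eq:eta-ext}--\eqref{eq:eta-int} depend on the state deviations only through $\bdelta_{k,s} = \bGamma_k\bv_s + \bepsilon_{k,s}$ (\cref{def:gamma}). Hence, conditional on $\{\bepsilon_{1,s},\bepsilon_{2,s}\}_{s=1}^S$, each $q_i$ is a function of $(\balpha,\bGamma_1,\bepsilon_{1,s[i]})$ alone, and each $\mu_i$ is a function of $(\bbeta,\bGamma_2,\bepsilon_{2,s[i]})$ alone. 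By \eqref{eq:loglik-decomp}, the log-likelihood $\ell(\btheta)$ is a function of $\{q_i\}$, $\{\mu_i\}$, $\kappa$ and the data only.

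The first step is to observe that $\bSigma_\varepsilon$, and in particular its off-diagonal block $\bSigma_{12}$, enters the joint model only through the density $p(\{\bepsilon_s\}\mid\bSigma_\varepsilon)=\prod_s \Norm_{2q}(\bepsilon_s;\mathbf 0,\bSigma_\varepsilon)$. It does not enter the observation density. Therefore, for any two values $\bSigma_{12}$ and $\bSigma_{12}'$ for which the resulting full matrices remain positive definite, we have $\ell(\btheta)=\ell(\btheta')$ whenever $\btheta$ and $\btheta'$ agree in every coordinate except $\bSigma_{12}$. This is exactly the claimed complete invariance. It follows that the likelihood cannot distinguish distinct values of $\bSigma_{12}$, so $\bSigma_{12}$ is not identified from it.

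I will then reinforce this with the information-theoretic version. I will cite \cref{smb:prop-block-diag} (equivalently \cref{prop:sigma12-nonid}): the Fisher information is block-diagonal across margins, and the row and column of the information matrix corresponding to $\bSigma_{12}$ are identically zero, since $\partial\ell/\partial\bSigma_{12}\equiv 0$. For completeness I will also note the contrast with \cref{rem:cond-vs-marg}: after integrating out $\bepsilon$, the marginal likelihood does involve $\bSigma_{12}$. The claim is therefore explicitly about the conditional likelihood, and identification comes through Step~5 of \cref{smb:thm-identification}.

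The main obstacle is not technical. It is being precise about what ``likelihood'' means, so the statement is not contradicted by the marginal-likelihood remark. I will handle this by stating at the outset that $\ell$ denotes $\log p(\mathbf y\mid \balpha,\bbeta,\kappa,\{\bGamma_k\},\{\bepsilon_s\})$.
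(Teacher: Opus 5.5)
Your proposal is correct and is essentially the paper's argument: conditional on the realized $\{\bepsilon_s\}$, the covariance $\bSigma_\varepsilon$ (and hence $\bSigma_{12}$) enters the model only through the hierarchical density $\prod_s \Norm_{2q}(\bepsilon_s;\mathbf{0},\bSigma_\varepsilon)$ and never through the observation density, so $\ell$ is invariant to it. The paper frames this through the margin-wise decomposition of \cref{smb:prop-block-diag}, but block-diagonality is only supplementary here, since the absence of $\bSigma_\varepsilon$ from $p(\mathbf{y}\mid\cdot)$ is the whole proof; your bookkeeping is also slightly more precise than the paper's, because you list $\bGamma_k$ among the arguments of $q_i$ and $\mu_i$ and state explicitly that $\ell$ is the conditional log-likelihood.
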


\begin{proof}
  By the block-diagonal Fisher information
  (\cref{smb:prop-block-diag}), the log-likelihood decomposes as
  $\ell = \ell_{\mathrm{ext}} + \ell_{\mathrm{int}}$, where
  $\ell_{\mathrm{ext}}$ depends on $(\balpha, \bdelta_{1,\cdot})$ and
  $\ell_{\mathrm{int}}$ depends on
  $(\bbeta, \bdelta_{2,\cdot}, \kappa)$. Neither component involves
  $\bSigma_\varepsilon$: the covariance enters the model solely through
  the hierarchical prior
  \[
    \bepsilon_s
    = \begin{pmatrix}\bepsilon_{1,s}\\\bepsilon_{2,s}\end{pmatrix}
    \sim \Norm_{2q}\!\left(\mathbf{0},\;
      \begin{pmatrix}
        \bSigma_{\varepsilon,1} & \bSigma_{12}\\
        \bSigma_{21} & \bSigma_{\varepsilon,2}
      \end{pmatrix}\right).
  \]
  Conditional on the realized $\{\bepsilon_s\}$, the likelihood is a
  fixed function of $(\balpha, \bbeta, \kappa)$ alone, and
  $\bSigma_{12}$ drops out entirely.
\end{proof}

\begin{remark}[$\bSigma_{12}$ as residual covariance]
\label{smb:rem-sigma12-residual}
  The cross-margin covariance
  $\bSigma_{12} = \Cov(\bepsilon_{1,s},\, \bepsilon_{2,s})$ represents
  the \emph{residual} cross-margin covariance after removing the
  observed policy effects via $\bGamma_1\bv_s$ and $\bGamma_2\bv_s$.
  The \emph{marginal} cross-margin relationship between participation
  and intensity includes both this residual channel and the direct
  policy-moderated channel through $\bGamma_k$. Specifically,
  \[
    \Cov(\bdelta_{1,s},\, \bdelta_{2,s})
    = \bGamma_1\,\Var(\bv_s)\,\bGamma_2\t + \bSigma_{12},
  \]
  so a finding of $\bSigma_{12} \approx \mathbf{0}$ does not imply
  that the two margins are unrelated---only that policies fully account
  for their cross-margin covariation.
\end{remark}


\begin{proposition}[Well-concentration of
  $\bSigma_{12}$]\label{smb:prop-sigma12-wellconc}
  \leavevmode\\
  The posterior for $\bSigma_{12}$ is well-concentrated when the
  following conditions hold:
  \begin{enumerate}[label=\textup{(\roman*)}]
    \item $S > 2q + 1$, so that the sample covariance of the
      $\{\bepsilon_s\}$ has a positive-definite posterior mode;
    \item the individual state effects $\bepsilon_{1,s}$ and
      $\bepsilon_{2,s}$ are themselves well-estimated, which requires the
      state-level replication condition\/ \textup{(C2)}; and
    \item the $\LKJ(\eta)$ prior
      \citep{LewandowskiEtAl2009} on the correlation matrix
      $\mathbf{R}_\varepsilon$ with $\eta \ge 1$ provides
      regularization, shrinking the posterior toward independence when the
      data are weakly informative about cross-margin dependence.
  \end{enumerate}
  For our model with $q = 5$: the requirement $S > 2q + 1 = 11$ is
  amply satisfied by $S = 51$. The posterior standard deviation of each
  element of $\bSigma_{12}$ decreases at rate $O(S^{-1/2}) \approx 0.14$.
\end{proposition}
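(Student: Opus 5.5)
The strategy is to reduce the posterior for $\bSigma_{12}$ to a posterior for a covariance matrix built from $S$ approximately observed Gaussian vectors, and then borrow the standard finite-dimensional asymptotics for a normal covariance. There are three steps: a reduction to a two-level problem, a normal--inverse-Wishart-type concentration argument, and a perturbation step for the fact that the state effects are only estimated.

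\emph{Step 1 (reduction).} By \cref{smb:prop-sigma12-nonid} and \cref{smb:prop-block-diag}, the likelihood depends on $\bSigma_\varepsilon$ only through the latent residuals $\{\bepsilon_s\}_{s=1}^S$. Hence
\[
\pi(\bSigma_\varepsilon \mid \by) = \int \pi(\bSigma_\varepsilon \mid \{\bepsilon_s\})\, \pi(\{\bepsilon_s\} \mid \by)\, d\{\bepsilon_s\}.
\]
In this mixture, the inner conditional is proportional to $\prod_s \Norm_{2q}(\bepsilon_s; \mathbf{0}, \bSigma_\varepsilon)$ times the $\HalfNormal$--$\LKJ(\eta)$ prior. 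Condition (ii) enters here through Steps 1--4 of the proof of \cref{smb:thm-identification}. Under (C2), each $\bepsilon_s$ is a function of identified quantities whose within-state posterior variance is $O(N_s^{-1})$. So the outer mixing distribution concentrates around the realized residuals at a rate much faster than $S^{-1/2}$, since $N_s \ge 17$.

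\emph{Step 2 (conditional concentration).} Fix the realized $\{\bepsilon_s\}$ and write $\hat{\bSigma} = S^{-1}\sum_s \bepsilon_s\bepsilon_s\t$. Condition (i), $S > 2q+1$, makes $\hat{\bSigma}$ almost surely positive definite. The Gaussian log-likelihood $-\tfrac{S}{2}[\log\det\bSigma + \operatorname{tr}(\bSigma^{-1}\hat{\bSigma})]$ is then uniquely maximized at $\hat{\bSigma}$, with curvature of order $S$. The $\LKJ(\eta)$ density with $\eta \ge 1$ is bounded and log-concave in the correlation entries, and the $\HalfNormal$ prior on $\boldsymbol{\tau}$ is smooth and positive, so the log-prior is $O(1)$ while the log-likelihood is $O(S)$. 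A Laplace, finite-dimensional Bernstein--von Mises argument in the $2q(2q+1)/2$ free parameters therefore gives an asymptotically normal posterior. Its covariance is the inverse Wishart-type information, whose entry for $[\bSigma_{12}]_{jk}$ is $(\Sigma_{jj}\Sigma_{kk}+\Sigma_{jk}^2)/S$, so the posterior standard deviation is $O(S^{-1/2})$. The prior contributes only an $O(S^{-1})$ shift, which is the regularizing shrinkage toward independence described in (iii).

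\emph{Step 3 (integrating out).} Combine the conditional bound with the concentration of the mixing distribution via the law of total variance:
\[
\Var(\Sigma_{12,jk}\mid\by) = \E[\Var(\cdot\mid\bepsilon)] + \Var(\E[\cdot\mid\bepsilon]).
\]
The first term is $O(S^{-1})$ from Step 2. For the second, $\E[\cdot\mid\bepsilon] \approx \hat{\bSigma}_{12}$ is a smooth quadratic function of the $\bepsilon_s$. Its variance under the posterior of the residuals is $O(S^{-1}\max_s N_s^{-1})$, which is dominated by the first term. Taking square roots gives the stated $O(S^{-1/2})$, and substituting $S=51$ reproduces the numerical value $\approx 0.14$.

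\emph{Main obstacle.} The difficult step is making Step 3 rigorous. The posteriors of the residuals are not independent across states, because they share $\balpha,\bbeta,\kappa,\bGamma_k$ and are shrunk through $\bSigma_\varepsilon$ itself, so the two levels are coupled. The first approach I would try is to condition additionally on the fixed effects and $\bGamma_k$, which are identified at rate $O(S^{-1/2})$ or faster. Their estimation error then perturbs $\hat{\bSigma}$ only at lower order. Since the statement only claims ``well-concentrated'' and a rate, I would present this as a heuristic asymptotic argument in $S$ with $\min_s N_s \to \infty$, rather than as a nonasymptotic bound.
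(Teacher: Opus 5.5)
Your argument follows the same route as the paper, which gives no proof of this proposition beyond Step~5 of the identification proof and the remark on frequentist versus Bayesian identification. That justification treats the identified residuals $\bepsilon_1,\ldots,\bepsilon_S$ as $S$ i.i.d.\ draws from $\Norm_{2q}(\mathbf{0},\bSigma_\varepsilon)$, invokes $S>2q$ for a full-rank sample covariance, and reads off the rate $S^{-1/2}$ from the effective sample size $S$. Your Step~2 is a careful version of exactly this. The asymptotic variance $(\Sigma_{jj}\Sigma_{kk}+\Sigma_{jk}^2)/S$ is right once the three entries are read as $[\bSigma_{11}]_{jj}$, $[\bSigma_{22}]_{kk}$ and $[\bSigma_{12}]_{jk}$. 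Your Step~3, which accounts for the residuals being estimated rather than observed, goes beyond anything in the paper.

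One step is wrong as written. You claim the posterior of the residuals concentrates ``much faster than $S^{-1/2}$, since $N_s\ge 17$''. But $17<51$, so $N_s^{-1/2}\approx 0.24$ exceeds $S^{-1/2}\approx 0.14$. Moreover, in the small states, with binary extensive-margin outcomes and $q=5$ state-varying coefficients, $\bepsilon_{1,s}$ is largely prior-determined; the paper itself notes that the DER is near one there because the prior dominates. Hence the term $\Var(\E[\Sigma_{12,jk}\mid\bepsilon]\mid\by)$ in your decomposition is not dominated by $\E[\Var(\Sigma_{12,jk}\mid\bepsilon)\mid\by]$; the two are of the same order.

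You do not need domination. Suppose the residuals' posterior variances are merely bounded and their cross-state dependence is weak. Then $\E[\Sigma_{12,jk}\mid\bepsilon]\approx S^{-1}\sum_s\varepsilon_{1,s,j}\varepsilon_{2,s,k}$ is an average of $S$ weakly dependent terms of bounded variance. So this term is also $O(S^{-1})$, and the rate survives with a larger constant. This repair also lets you drop the assumption $\min_s N_s\to\infty$, which is the wrong regime for a statement about growth in $S$ with small states.

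For the cross-state coupling you flag as the main obstacle, a cleaner device is to integrate each $\bepsilon_s$ out state by state. The $S$ state blocks of data are then independent given $(\balpha,\bbeta,\kappa,\bGamma_k,\bSigma_\varepsilon)$. A standard finite-dimensional Bernstein--von Mises argument in $S$ then gives the $O(S^{-1/2})$ posterior standard deviation directly. Condition (C2) is needed only to keep each state's marginal information about $\bSigma_{12}$ bounded away from zero on average. This is the marginal-likelihood formulation the paper alludes to in its remark on conditional versus marginal likelihood.
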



\begin{remark}[Frequentist versus Bayesian
  identification]\label{smb:rem-freq-vs-bayes}
  The identification result in~\cref{smb:thm-identification} is stated
  for the posterior, not the frequentist maximum likelihood estimator.
  Under the Bayesian framework, identification means the posterior
  concentrates on the true parameter as $N \to \infty$
  \citep[cf.~the Bernstein--von~Mises theorem under
  misspecification;][]{KleijnVanDerVaart2012}, which requires
  both \emph{likelihood identification} (the sampling model pins down
  $\btheta$) and \emph{prior identification} (the prior does not place
  mass on non-identified subspaces). Steps~1 and~3 establish
  likelihood identification of the total state coefficients
  $\{\tilde\balpha_s\}$, $\{\tilde\bbeta_s\}$, and~$\kappa$;
  Steps~2 and~4 show that the hierarchical centering convention
  $\E[\bepsilon_{k,s}] = \mathbf{0}$ resolves the additive shift
  ambiguity and identifies the individual components
  $(\balpha, \bbeta, \bdelta_s, \bGamma_k)$;
  Step~5 establishes likelihood identification for the
  within-margin blocks $\bSigma_{\varepsilon,1}$ and
  $\bSigma_{\varepsilon,2}$.

  The cross-margin block $\bSigma_{12}$ is the exception:
  \cref{smb:prop-sigma12-nonid} shows that it receives zero
  \emph{conditional} likelihood information. As discussed in
  \cref{rem:cond-vs-marg}, the marginal likelihood obtained by
  integrating over $\bdelta_s$ does couple the margins and in principle
  identifies $\bSigma_{12}$, but the conditional formulation adopted
  here is natural for MCMC computation.  In either framing, the
  $\LKJ(\eta)$ prior provides essential regularization: frequentist
  REML estimation of a $2q \times 2q$ covariance from $S = 51$ groups
  is numerically fragile (boundary solutions, singular Hessians).
  This is not a deficiency of the model but rather a structural
  consequence of the hurdle factorization
  (\cref{smb:prop-block-diag}). The posterior for $\bSigma_{12}$
  concentrates at rate $O(S^{-1/2})$---determined by the $S = 51$
  state-level observations---rather than at the parametric rate
  $O(N^{-1/2})$ available to the remaining parameters. This slower
  concentration rate is the price of learning about the cross-margin
  dependence structure, which we view as justified by the substantive
  importance of understanding whether states that restrict access also
  intensify conditional enrollment.
\end{remark}

%% file: sm_c.tex

This appendix develops the survey design theory underlying the sandwich
variance estimator and the Cholesky affine transformation described in
\cref{sec:model}. We begin with the formal pseudo-likelihood framework
and the proof that the pseudo-posterior is proper
(\cref{thm:propriety}), establish the Bernstein--von Mises conditions
and their finite-sample diagnostic for the NSECE weights, derive the
cluster-robust sandwich estimator and its block structure, provide the
full proof of the Cholesky affine transformation
(\cref{thm:cholesky}), and close with the design effect ratio
classification and diagnostics. Throughout, we use the score functions
and Fisher information results from~\cref{app:identification}
(Supplementary Material~B) and the notation established in
\cref{sec:model}.

\input{sm_c1_pseudo}

\input{sm_c2_propriety}

\input{sm_c3_bvm}

\input{sm_c4_sandwich}

\input{sm_c5_cholesky}

\input{sm_c6_der}

%% file: sm_c1_pseudo.tex

\subsection{Pseudo-likelihood and separability}
\label{smc:pseudo-lik}

We summarize the pseudo-likelihood structure
from~\cref{sec:survey}.
By~\cref{prop:weighted-sep}, the decomposition
$\ell^{(w)}
 = \ell_{\mathrm{ext}}^{(w)}
 \allowbreak+ \ell_{\mathrm{int}}^{(w)}$
holds with weights
$\tilde{w}_i = w_i N / \sum_j w_j$.
We record three clarifying remarks.

\begin{remark}[Weight normalization]\label{smc:rem-normalization}
  Two normalization conventions appear in the survey literature.
  The \emph{sum-to-$N$} convention, $\tilde{w}_i = w_i N / \sum_j w_j$
  so that $\sum_i \tilde{w}_i = N$, is adopted throughout and is standard
  in the Bayesian pseudo-posterior literature
  \citep{SavitskyToth2016,WilliamsSavitsky2021}.
  The alternative \emph{sum-to-$\hat{M}$} convention retains unnormalized
  weights $\tilde{w}_i = w_i$, which for the NSECE gives
  $\sum_i w_i \approx 119{,}593$.  Under this convention the
  pseudo-likelihood resembles roughly $119{,}593$ observations rather
  than $N = 6{,}809$, leading to extreme overconcentration of the
  pseudo-posterior and credible intervals that are far too narrow.
  The sum-to-$N$ normalization ensures that the pseudo-likelihood is
  comparable in scale to an ordinary likelihood based on $N$ observations.
\end{remark}

\begin{remark}[Weighted score functions]\label{smc:rem-scores}
  The observation-level score functions for the extensive and intensive
  margins, along with their Fisher information and sign verification,
  are developed in~\cref{smb:scores} (Supplementary Material~B).
  The weighted versions used here simply multiply each per-observation
  score by $\tilde{w}_i$; the separability of
  \cref{prop:weighted-sep} guarantees that the weighted
  extensive-margin scores depend only on $(\balpha, \bdelta_{1,\cdot})$
  and the weighted intensive-margin scores depend only on
  $(\bbeta, \bdelta_{2,\cdot}, \kappa)$.
\end{remark}

\begin{remark}[Significance of separability under weighting]
\label{smc:rem-separability-significance}
  The separability result in~\cref{prop:weighted-sep} is not
  guaranteed for all survey adjustments.  Approaches that modify the
  likelihood structure---such as multilevel weight smoothing
  \citep{RabeHeskethSkrondal2006} or conditional likelihood
  methods---can break the additive structure.  Observation-level
  exponentiation $[f_i]^{\tilde{w}_i}$ preserves it precisely because
  $\tilde{w}_i$ is a scalar multiplier in the log scale and does
  not couple the two parameter blocks.  This separability is exploited
  in the sandwich estimator (\cref{sec:survey}), where the
  block-diagonality of $\mathbf{H}_{\mathrm{obs}}$ follows directly
  from the additive decomposition.
\end{remark}

%% file: sm_c2_propriety.tex

\subsection{Pseudo-posterior propriety}
\label{smc:propriety}

We now provide the complete proof of~\cref{thm:propriety}, which was
stated with a sketch in~\cref{sec:survey}.

\begin{proof}[Proof of \textup{\cref{thm:propriety}}]
  We verify that the pseudo-posterior
  \[
    \pi^{(w)}(\btheta \mid \by) \propto
    \exp\bigl[\ell^{(w)}(\btheta)\bigr]\,\pi(\btheta)
  \]
  is integrable over the parameter space.

  \textup{\textbf{Step 1} (Boundedness of the pseudo-likelihood).}
  For each observation $i$, the HBB probability satisfies
  $f_{\HBB}(y_i \mid \btheta) \in (0, 1]$ for all $\btheta$ in the
  interior of the parameter space.  Since $\tilde{w}_i > 0$, we have
  $f_i^{\tilde{w}_i} \le 1$ whenever $f_i \le 1$.  Therefore
  \[
    \exp\bigl[\ell^{(w)}(\btheta)\bigr]
    = \prod_{i=1}^{N} f_i^{\tilde{w}_i}
    \le 1
    \qquad \text{for all } \btheta.
  \]

  \textup{\textbf{Step 2} (Properness of the prior).}
  Each component of the prior $\pi(\btheta)$ is a proper distribution:
  \begin{itemize}[nosep]
    \item $\balpha, \bbeta \sim \Norm(\mathbf{0},\; 2^2\,\mathbf{I}_P)$
      \citep{GelmanEtAl2008}: proper with finite variance;
    \item $\boldsymbol{\tau} \sim \HalfNormal(0, 1)$ componentwise:
      proper on $\R_{>0}$;
    \item $\mathbf{R}_\varepsilon \sim \LKJ(2)$
      \citep{LewandowskiEtAl2009}: proper over the space of
      $2q \times 2q$ correlation matrices;
    \item $\log\kappa \sim \Norm(2,\; 1.5^2)$: proper.
  \end{itemize}
  The joint prior is a product of proper distributions, hence proper:
  $\int \pi(\btheta)\,d\btheta = 1$.

  \textup{\textbf{Step 3} (Integrability).}
  Combining Steps~1 and~2:
  \[
    \int \exp\bigl[\ell^{(w)}(\btheta)\bigr]\,\pi(\btheta)\,d\btheta
    \;\le\;
    \int 1 \cdot \pi(\btheta)\,d\btheta
    = 1 < \infty.
  \]
  Moreover, $\exp[\ell^{(w)}(\btheta)] > 0$ on the interior of the
  parameter space (since all probability terms $f_i$ are strictly
  positive there), so the pseudo-posterior is non-degenerate.
\end{proof}

\begin{remark}[Weight-invariance of propriety]
\label{smc:rem-weight-invariance}
  The proof above does not depend on the specific values or
  normalization of the weights---only on their positivity and
  finiteness.  Propriety therefore holds regardless of whether the
  weights are normalized to sum to~$N$, to sum to the population size
  $\hat{M}$, or under any other positive scaling.  However, the
  \emph{shape} and \emph{concentration} of the pseudo-posterior depend
  critically on the normalization convention, as discussed in
  \cref{smc:rem-normalization}: the sum-to-$\hat{M}$ convention
  produces extreme overconcentration, while the sum-to-$N$ convention
  preserves the information content of the actual sample.
\end{remark}

\begin{remark}[Contrast with improper priors]
\label{smc:rem-improper-priors}
  With improper (flat) priors on the regression coefficients,
  propriety would require the pseudo-likelihood to be integrable on
  its own---that is, $\int \exp[\ell^{(w)}(\btheta)]\,d\btheta < \infty$.
  In this setting the effective sample size (not the nominal sample
  size) governs integrability.  With $\ESS \approx 1{,}797$ and
  $\dim(\btheta) \le 606$, integrability would likely hold, but the
  formal verification is more delicate: it requires confirming that the
  weighted Hessian has full rank at the pseudo-MLE for every parameter
  block.  The use of proper priors sidesteps this issue entirely,
  ensuring propriety for \emph{any} positive weight configuration
  without conditions on the data.
\end{remark}

%% file: sm_c3_bvm.tex

\subsection{Bernstein--von Mises approximation}
\label{smc:bvm}

The sandwich correction in~\cref{sec:survey} rests on a
Bernstein--von Mises (BvM) theorem for pseudo-posteriors: as
$N \to \infty$, the pseudo-posterior converges to a normal distribution
centered at the pseudo-MLE with variance given by the inverse observed
Hessian, $\mathbf{H}_{\mathrm{obs}}^{-1}$.  We state the precise result, verify its regularity
conditions for the HBB model, and provide a finite-sample
diagnostic for the NSECE weight configuration that motivates the
sandwich correction as a robustness measure.


\begin{theorem}[Design-consistent BvM for the pseudo-posterior]
\label{smc:thm-bvm}
  Suppose the following regularity conditions hold\textup{:}
  \begin{enumerate}[label=\textup{(C\arabic*)},nosep]
    \item \label{smc:C1}%
      \textup{Smoothness.}
      The log-density $\log f_{\HBB}(y \mid \btheta)$ is twice
      continuously differentiable in $\btheta$ on the interior of the
      parameter space $\Theta$.
    \item \label{smc:C2}%
      \textup{Interior parameter.}
      The true superpopulation parameter $\btheta_0$ lies in the
      interior of $\Theta$.
    \item \label{smc:C3}%
      \textup{Weight regularity.}
      The ratio of the largest to smallest normalized weight satisfies
      $\max_i \tilde{w}_i / \min_i \tilde{w}_i = O(N^{\delta})$ for
      some $\delta < 1/2$ as $N \to \infty$.
    \item \label{smc:C4}%
      \textup{Information regularity.}
      The weighted Hessian $\mathbf{H}_{\mathrm{obs}}(\btheta_0)$ is
      positive definite, with eigenvalues bounded away from zero and
      infinity.
    \item \label{smc:C5}%
      \textup{Prior regularity.}
      The prior density $\pi(\btheta)$ is continuous and strictly
      positive in a neighborhood of $\btheta_0$.
  \end{enumerate}
  Then, as $N \to \infty$\textup{,}
  \begin{equation}\label{smc:eq-bvm}
    \sup_{B \in \cl{B}}
    \bigl|
      \pi^{(w)}(\btheta \in B \mid \by)
      - \Phi_{\hat{\btheta}^{(w)},\,\mathbf{H}_{\mathrm{obs}}^{-1}}(B)
    \bigr|
    \xrightarrow{P_{\btheta_0}} 0,
  \end{equation}
  where $\cl{B}$ is the class of Borel-measurable convex sets,
  $\hat{\btheta}^{(w)}$ is the pseudo-MLE,
  $\Phi_{\bmu,\bSigma}$ denotes the CDF of
  $\Norm(\bmu, \bSigma)$, and $\mathbf{H}_{\mathrm{obs}}$ is the
  observed Hessian of the pseudo-log-likelihood evaluated at
  $\hat{\btheta}^{(w)}$.
\end{theorem}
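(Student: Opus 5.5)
We will follow the standard Laplace-approximation proof of the Bernstein--von Mises theorem, adapted to the weighted pseudo-likelihood in the spirit of \citet{SavitskyToth2016}. We work in the local parameterization $\btheta = \hat{\btheta}^{(w)} + \mathbf{H}_{\mathrm{obs}}^{-1/2}\mathbf{u}$ and show that the pseudo-posterior density of $\mathbf{u}$ converges in total variation to the standard normal density. Total variation convergence implies the supremum over convex Borel sets in \eqref{smc:eq-bvm}.

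\textbf{Step 1: consistency of the pseudo-MLE.} We treat $N^{-1}\ell^{(w)}(\btheta)$ as a Horvitz--Thompson estimator of the population log-likelihood average. By \cref{prop:weighted-sep}, it splits into the extensive and intensive pieces, which we handle separately. Uniform convergence of each piece to its population limit on compacts follows from a weighted law of large numbers. The variance of that law of large numbers is controlled by $\sum_i \tilde{w}_i^2/N^2 \le (\max_i\tilde{w}_i/\min_i\tilde{w}_i)^2/N = O(N^{2\delta-1})\to 0$, using \ref{smc:C3} together with the sum-to-$N$ normalization. The population limit is identified and uniquely maximized at $\btheta_0$, by \cref{thm:identification} and \cref{smb:prop-mu-kappa-id}. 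Combined with \ref{smc:C2}, this gives $\hat{\btheta}^{(w)}\to\btheta_0$ in probability.

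\textbf{Step 2: local quadratic expansion.} We Taylor expand $\ell^{(w)}$ to second order around $\hat{\btheta}^{(w)}$, where the score vanishes. By \ref{smc:C1} the remainder equals $\tfrac12 \mathbf{u}^\t\mathbf{H}_{\mathrm{obs}}^{-1/2}\{\mathbf{H}_{\mathrm{obs}}(\bar{\btheta})-\mathbf{H}_{\mathrm{obs}}(\hat{\btheta}^{(w)})\}\mathbf{H}_{\mathrm{obs}}^{-1/2}\mathbf{u}$. Continuity of the Hessian, the eigenvalue bounds in \ref{smc:C4}, and a weighted uniform law of large numbers for second derivatives make this remainder $o_P(1)$ uniformly on $\|\mathbf{u}\|\le M_N$ for some $M_N\to\infty$. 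The closed-form Hessians of \cref{smb:prop-score-ext} and \cref{smb:prop-info-mu} make the continuity and envelope bounds explicit, using trigamma bounds valid on compacts with $\kappa>0$. On the same neighbourhood, prior continuity and positivity (\ref{smc:C5}) let $\pi(\btheta)/\pi(\hat{\btheta}^{(w)})\to1$ uniformly.

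\textbf{Step 3: tail control and conclusion.} We must show that the pseudo-posterior mass outside the $M_N$-ball vanishes. We split the complement into an intermediate shell and a far region. On the intermediate shell, \ref{smc:C4} gives $\ell^{(w)}(\btheta)-\ell^{(w)}(\hat{\btheta}^{(w)})\le -c\|\mathbf{u}\|^2$. On the far region, Step 1's separation gives a drop of order $N\epsilon$. Against that drop, the prior is proper and the pseudo-likelihood is bounded by one (as in the proof of \cref{thm:propriety}), so the far-region mass is $e^{-cN\epsilon}$ times a bounded quantity. Dominated convergence then yields $L^1$ convergence of the normalized density to $\phi(\mathbf{u})$, and hence \eqref{smc:eq-bvm}.

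The main obstacle is Step 1 together with the stochastic remainder in Step 2, because the weights are random and clustered. The weighted averages are design-based, not i.i.d., so a law of large numbers with rate needs PSU-level independence. We would first bound the variance of $N^{-1}\sum_i\tilde{w}_i g(y_i,\btheta)$ via the stratified cluster form underlying \eqref{eq:J-cluster}. This gives $O(\max_c |\text{PSU}_c|\cdot N^{2\delta-1})$, which vanishes provided PSU sizes stay bounded. That bounded-PSU-size requirement is an additional implicit assumption we would state explicitly.

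The random effects $\bdelta_s$ raise a second issue: with $S$ fixed and $N_s\to\infty$, they are effectively fixed parameters, so we apply the argument conditionally on them with $\dim\btheta$ fixed. The growth condition $\delta<1/2$ is exactly what is needed for this argument to go through.
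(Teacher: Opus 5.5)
The paper does not prove this theorem. It attributes the result to \citet{KleijnVanDerVaart2012}, \citet{SavitskyToth2016} and \citet{WilliamsSavitsky2021} and only checks (C1)--(C5) informally. Your Laplace-type argument (consistency, local quadratic expansion, tail control, total variation) is therefore the natural way to fill it in, but for the HBB parameter it breaks at Step~1.

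You cite \cref{thm:identification} for unique maximization of the limit of $N^{-1}\ell^{(w)}$ at $\btheta_0$. That theorem is identification \emph{under the posterior}, not under the likelihood:
\begin{itemize}
\item Step~1 of its proof exhibits the likelihood-flat direction $\balpha^{(r)}\mapsto\balpha^{(r)}+\mathbf{c}$, $\bdelta_{1,s}\mapsto\bdelta_{1,s}-\mathbf{c}$ for all $s$ (and its intensive analogue). This direction is removed only by the prior centering $\E[\bepsilon_{k,s}]=\mathbf{0}$.
\item $\bSigma_\delta$ and $\bGamma_k$ do not enter $\ell^{(w)}$ at all (\cref{prop:sigma12-nonid}).
\item (C1)--(C5) contain no global identifiability hypothesis to fall back on, since (C4) is purely local.
\end{itemize}
So for the hierarchical $\btheta$ the limit criterion has no unique maximizer, and $\hat{\btheta}^{(w)}$ is not unique. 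The Hessian of $\ell^{(w)}$ in the joint $(\balpha,\bbeta,\kappa,\{\bdelta_s\})$ coordinates is singular, so (C4) fails. With $S$ fixed, the pseudo-posterior along these directions is governed by the hierarchical prior and does not contract at the $\mathbf{H}_{\mathrm{obs}}^{-1}$ scale. The paper's own diagnostic shows exactly this: $(\bSigma_{\mathrm{MCMC}})_{pp}/(\mathbf{H}_{\mathrm{obs}}^{-1})_{pp}>600$ for the fixed effects.

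Your closing remedy does not repair this. If the $\bdelta_s$ are kept as free parameters of fixed dimension, the same flat direction defeats Step~1. If you condition on them, you obtain a BvM for $\pi^{(w)}(\balpha,\bbeta,\kappa\mid\by,\{\bdelta_s\})$, which is not \eqref{smc:eq-bvm} for the joint pseudo-posterior. (The conditional reading is what the paper implicitly uses: it calls $\mathbf{H}_{\mathrm{obs}}^{-1}$ the ``conditional precision'' and checks (C4) only on the $\balpha$-block via $\mathrm{rank}(\mathbf{X})=P$.) You must either state the result conditionally, or restrict it to a likelihood-identified parameter. One such choice is the total state coefficients $\tilde\balpha_s,\tilde\bbeta_s$, any fixed-only coefficients, and $\kappa$, with every $N_s\to\infty$ and the hyperparameters integrated out; the induced prior is then continuous and positive near the truth. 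Bear in mind that when $q=P$, as in M2/M3, that block contains no population-average coefficient.

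Two smaller points.

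First, Step~1 gives uniform convergence only on compacts, but the far region in Step~3 is not compact ($\kappa\to0$ or $\infty$, $\|\bbeta\|\to\infty$). The drop $\ell^{(w)}(\btheta)-\ell^{(w)}(\hat{\btheta}^{(w)})\le -cN$ there needs its own argument, such as a Wald-type compactification or uniformly consistent tests. Concavity settles the logistic block, but the zero-truncated beta-binomial block needs an explicit boundary analysis. The bound $e^{\ell^{(w)}}\le1$ from \cref{thm:propriety} cannot substitute. The normalizing integral is only of order $e^{\ell^{(w)}(\hat{\btheta}^{(w)})}N^{-d/2}$, which is itself exponentially small.

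Second, you are right that (C1)--(C5) contain no design hypotheses. You must add $\tilde w_i\propto1/\pi_i$ and some control of joint inclusion (e.g.\ independent PSUs of bounded size), and (C4) must be read for $N^{-1}\mathbf{H}_{\mathrm{obs}}$. But $\delta<1/2$ is not ``exactly what is needed''. Since $\sum_i\tilde w_i=N$ forces $\min_i\tilde w_i\le1$, you get $\max_i\tilde w_i\le R_N$ (the weight ratio), and hence $\sum_i\tilde w_i^2/N^2\le R_N/N=O(N^{\delta-1})$. Your argument therefore only uses $\delta<1$. The threshold $1/2$ is what the crude Lindeberg bound $\max_i\tilde w_i^2/\sum_i\tilde w_i^2\le R_N^2/N$ for a score CLT would need; that matters for the sandwich, not for the BvM.
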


The result is a pseudo-posterior analogue of the classical BvM theorem
\citep{KleijnVanDerVaart2012}, extended to the survey-weighted setting by
\citet{SavitskyToth2016} and~\citet{WilliamsSavitsky2021}.  The key
difference from the standard BvM is condition~\ref{smc:C3}, which
constrains how rapidly the weight ratio may grow with sample size.
Importantly, the \emph{pseudo-posterior variance}
$\mathbf{H}_{\mathrm{obs}}^{-1}$ differs from the
\emph{design-consistent sampling variance} of $\hat{\btheta}^{(w)}$,
\[
  \hat{\btheta}^{(w)}
  \;\dot{\sim}\;
  \Norm\!\bigl(\btheta_0,\;
  \mathbf{V}_{\mathrm{sand}}\bigr),
  \qquad
  \mathbf{V}_{\mathrm{sand}}
  = \mathbf{H}_{\mathrm{obs}}^{-1}\,
    \mathbf{J}_{\mathrm{cluster}}\,
    \mathbf{H}_{\mathrm{obs}}^{-1},
\]
as defined in~\eqref{eq:V-sand}. Under informative sampling or model
misspecification,
$\mathbf{J}_{\mathrm{cluster}} \neq \mathbf{H}_{\mathrm{obs}}$,
so naive credible intervals based on $\mathbf{H}_{\mathrm{obs}}^{-1}$
are miscalibrated as a measure of frequentist uncertainty.  This
mismatch is precisely what motivates the sandwich correction developed
in~\cref{smc:sandwich}: \cref{smc:cor-coverage} below quantifies
the resulting coverage distortion, and
\cref{thm:cholesky} provides the operational calibration.


\paragraph{Verification.}
We verify each condition for the HBB model and the NSECE data.

\medskip\noindent\textit{\ref{smc:C1}: Smoothness.}\enspace
The HBB log-density involves $\log\Gamma$ functions (through the
beta-binomial PMF), the logistic link
$\sigma(\eta) = 1/(1+e^{-\eta})$, and the zero probability $p_0$
(a finite sum of $\log\Gamma$ terms).  Each component is $C^{\infty}$
on the interior of the parameter space, so the composite
$\log f_{\HBB}$ is smooth in $\btheta$.  \checkmark

\medskip\noindent\textit{\ref{smc:C2}: Interior parameter.}\enspace
The parameter space is
$\Theta = \R^P \times \R^P \times \R^{2QR} \times
\mathbb{S}_{++}^{2Q} \times \R_{>0}$,
and any finite parameter vector with $\kappa > 0$ and
$\bSigma_\varepsilon \succ 0$ lies in the interior.  \checkmark

\medskip\noindent\textit{\ref{smc:C3}: Weight regularity.}\enspace
This condition requires a dedicated analysis; see
\cref{smc:prop-weight-diagnostic} below.

\medskip\noindent\textit{\ref{smc:C4}: Information regularity.}\enspace
By~\cref{smb:prop-block-diag}, the Fisher information is
block-diagonal between the extensive and intensive margins.  For the
extensive block, positive definiteness follows from $\operatorname{rank}(\mathbf{X}) = P$
and $q_i \in (0,1)$ for all $i$ (standard logistic regression theory).
For the intensive block, separation of $\mu$ from $\kappa$ requires
variation in the denominators $n_i$
(\cref{smb:prop-mu-kappa-id}), which is satisfied empirically
($n_i$ ranges from~1 to~378 among positive responders).  \checkmark

\medskip\noindent\textit{\ref{smc:C5}: Prior regularity.}\enspace
The priors specified in~\cref{sec:priors} are all continuous and
strictly positive in the interior:
$\Norm(\mathbf{0},\, 2^2\,\mathbf{I}_P)$ on regression coefficients~\citep{GelmanEtAl2008},
$\HalfNormal(0,1)$ on scale parameters,
$\LKJ(2)$ on the correlation matrix~\citep{LewandowskiEtAl2009}, and
$\Norm(2,\, 1.5^2)$ on $\log\kappa$.  \checkmark


\begin{proposition}[Finite-sample weight diagnostic]
\label{smc:prop-weight-diagnostic}
  For the NSECE 2019 data with $N = 6{,}785$ and
  $w_{\max}/w_{\min} = 462$\textup{,} the weight ratio exceeds
  $N^{1/2} \approx 82.4$\textup{,} indicating that the sufficient
  condition \ref{smc:C3} is not comfortably satisfied for this design.
  The computation
  \begin{equation}\label{smc:eq-delta}
    \delta
    = \frac{\log 462}{\log 6{,}785}
    = \frac{6.136}{8.822}
    \approx 0.695
    > 0.5
  \end{equation}
  is a finite-sample diagnostic, not a valid asymptotic big-$O$ test
  of condition \ref{smc:C3}.  This motivates the sandwich variance
  correction as a robustness measure.
\end{proposition}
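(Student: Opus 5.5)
The statement to be proved is \cref{smc:prop-weight-diagnostic}. Its content is computational plus one interpretive claim, so the proof has three short steps.

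\emph{Step 1: compare the ratio with $N^{1/2}$.} I would take $N = 6{,}785$ and $w_{\max}/w_{\min} = 462$ directly from \cref{tab:data-summary}. Then I would compute $N^{1/2} = \sqrt{6785}$. Since $82^2 = 6724 < 6785 < 83^2 = 6889$, we have $N^{1/2} \approx 82.4 < 462$. Hence the ratio exceeds $N^{1/2}$.

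\emph{Step 2: normalization.} Normalization does not affect the ratio. The map $w_i \mapsto \tilde{w}_i = w_i N/\sum_j w_j$ is a common positive rescaling, so $\max_i \tilde{w}_i/\min_i \tilde{w}_i = w_{\max}/w_{\min}$. This lets the raw weight range stand in for the normalized quantity appearing in \ref{smc:C3}.

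\emph{Step 3: the exponent $\delta$.} I would define $\delta$ by the identity $w_{\max}/w_{\min} = N^{\delta}$, so $\delta = \log 462/\log 6785$. Natural logarithms give $\log 462 \approx 6.136$ and $\log 6785 \approx 8.822$, so $\delta \approx 0.695$. This is strictly above $1/2$, and equivalently so, since $\delta > 1/2$ holds if and only if $462 > N^{1/2}$, which is Step 1.

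\emph{Main obstacle: interpretation.} The only delicate point is the caveat in the statement. Condition \ref{smc:C3} is an asymptotic rate statement about a sequence of designs indexed by $N$. A single realized design cannot verify or falsify a big-$O$ bound, because any finite ratio is $O(N^{\delta'})$ for a suitable constant. I would make this explicit and present $\delta$ only as the exponent that would hold if the observed ratio scaled exactly as a power of $N$. So the finite-sample value $\delta \approx 0.695$ indicates only that the sufficient condition of \cref{smc:thm-bvm} is not comfortably met.

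The concluding sentence then follows. Because \cref{smc:thm-bvm} gives only a sufficient condition, the diagnostic casts doubt on the pseudo-posterior normal approximation with variance $\mathbf{H}_{\mathrm{obs}}^{-1}$ as a description of frequentist uncertainty. That motivates replacing it with $\mathbf{V}_{\mathrm{sand}}$ via \cref{thm:cholesky} as a robustness measure, with no asymptotic guarantee asserted.
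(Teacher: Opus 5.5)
Your proposal is correct and follows essentially the same route as the paper: compare $462$ with $N^{1/2} \approx 82.4$, compute $\delta = \log 462/\log 6785 \approx 0.695$, and then stress (as the paper's accompanying interpretation remark does) that C3 is only a sufficient asymptotic condition that a single realized design cannot test, which is why the sandwich correction is adopted as a robustness measure. Your Step~2 adds a small but worthwhile clarification that the paper leaves implicit: the common rescaling $w_i \mapsto \tilde{w}_i$ gives $\max_i \tilde{w}_i/\min_i \tilde{w}_i = w_{\max}/w_{\min}$, which matters because C3 is stated for normalized weights while the diagnostic uses raw ones.
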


\begin{remark}[Interpretation of the diagnostic]
\label{smc:rem-weight-interpretation}
  Four observations clarify the scope of
  \cref{smc:prop-weight-diagnostic}.
  \begin{enumerate}[label=\textup{(\alph*)},nosep]
    \item Condition~\ref{smc:C3} is \emph{sufficient}, not necessary.
      The BvM approximation may remain adequate even when this bound is
      not met, provided the high-weight observations do not dominate
      the pseudo-likelihood.
    \item The fraction of observations with extreme weights is small:
      the top $1\%$ (weights $w_i > 100$) comprises fewer than 70
      observations out of $6{,}785$.  The diagnostic is driven by a
      small number of outlying weights, not a systematic pattern.
    \item Weight trimming can restore~\ref{smc:C3}.  Capping at the
      99th percentile reduces $w_{\max}$ to approximately~100, giving
      $\delta \approx 0.522$; capping at the 95th percentile
      ($w_{\max} \approx 50$) yields $\delta \approx 0.443 < 0.5$,
      restoring \ref{smc:C3}.  Weight trimming is examined as a
      sensitivity analysis in~\cref{app:extended-results}.
    \item In practice, the sandwich correction
      \eqref{eq:V-sand} provides the appropriate variance adjustment
      regardless of whether~\ref{smc:C3} holds exactly: it
      automatically accounts for the inflated variance due to extreme
      weights.
  \end{enumerate}
\end{remark}


\begin{corollary}[Coverage miscalibration of naive credible intervals]
\label{smc:cor-coverage}
  Under the conditions of~\cref{smc:thm-bvm}, define\textup{:}
  \begin{itemize}[nosep]
    \item $\mathrm{CI}_{\mathrm{naive}}(\alpha)$\textup{:} the
      $\alpha$-level credible interval from the pseudo-posterior
      marginal, with variance
      $[\mathbf{H}_{\mathrm{obs}}]^{-1}$\textup{;}
    \item $\mathrm{CI}_{\mathrm{sand}}(\alpha)
      = \hat{\theta}^{(w)} \pm z_{\alpha/2}\sqrt{V_{\mathrm{sand}}}$\textup{:}
      the sandwich-corrected interval.
  \end{itemize}
  Then\textup{:}
  \begin{enumerate}[label=\textup{(\alph*)},nosep]
    \item The naive interval has asymptotic coverage
      $\Phi(z_{\alpha/2}\, r) - \Phi(-z_{\alpha/2}\, r)$, where
      $r = \sqrt{\lambda_{\mathbf{H}} / \lambda_{\mathrm{sand}}}$ is
      the ratio of the naive to the sandwich standard deviation and
      $\Phi$ denotes the standard normal CDF.
    \item When the sampling is non-informative and the model is
      correctly specified, $\mathbf{J}_{\mathrm{cluster}} \approx
      \mathbf{H}_{\mathrm{obs}}$, so $r \approx 1$ and the naive
      interval has approximately correct coverage.
    \item Under informative sampling or model misspecification,
      $r \neq 1$.  If $r < 1$ \textup{(}naive variance too
      small\textup{)}, the interval is anti-conservative
      \textup{(}undercoverage\textup{)}---the typical case when the
      pseudo-posterior overconcentrates.  If $r > 1$, the interval is
      conservative.
  \end{enumerate}
\end{corollary}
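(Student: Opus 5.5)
The plan is to reduce everything to a one-dimensional Gaussian calculation. Fix a coordinate $p$ and write $\lambda_{\mathbf{H}} = [\mathbf{H}_{\mathrm{obs}}^{-1}]_{pp}$ and $\lambda_{\mathrm{sand}} = [\mathbf{V}_{\mathrm{sand}}]_{pp}$. Two ingredients are combined. The first is \cref{smc:thm-bvm}, which controls the \emph{posterior} side. The second is the design-based asymptotic normality $\hat{\btheta}^{(w)} \dot\sim \Norm(\btheta_0, \mathbf{V}_{\mathrm{sand}})$ displayed after that theorem, which controls the \emph{sampling} side.

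\emph{Step 1 (form of the naive interval).} Convex sets include slabs $\{\theta_p \le t\}$, so \cref{smc:thm-bvm} gives that the marginal pseudo-posterior CDF of $\theta_p$ is uniformly within $o_P(1)$ of the $\Norm(\hat\theta^{(w)}_p, \lambda_{\mathbf{H}})$ CDF. Since the normal quantile function is continuous, the equal-tailed $\alpha$-level credible interval then has endpoints $\hat\theta^{(w)}_p \pm z_{\alpha/2}\sqrt{\lambda_{\mathbf{H}}}\,(1+o_P(1))$. If the interval is centered at the posterior mean rather than at the pseudo-MLE, I would note that the two differ by $o_P(N^{-1/2})$ under the same conditions, so the centering does not matter to first order.

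\emph{Step 2 (coverage).} The naive interval covers $\theta_{0,p}$ iff $|\hat\theta^{(w)}_p - \theta_{0,p}|/\sqrt{\lambda_{\mathrm{sand}}} \le z_{\alpha/2}\sqrt{\lambda_{\mathbf{H}}/\lambda_{\mathrm{sand}}}\,(1+o_P(1))$. The left side converges in distribution to $|Z|$ with $Z\sim\Norm(0,1)$, by the sandwich CLT. The ratio $r=\sqrt{\lambda_{\mathbf{H}}/\lambda_{\mathrm{sand}}}$ should be read as its probability limit, using consistency of $\mathbf{H}_{\mathrm{obs}}$ and $\mathbf{J}_{\mathrm{cluster}}$ for their design-based targets together with \ref{smc:C4}, which keeps the eigenvalues bounded away from $0$ and $\infty$. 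Slutsky's lemma then gives coverage $\to \Pr(|Z|\le z_{\alpha/2} r) = \Phi(z_{\alpha/2} r)-\Phi(-z_{\alpha/2} r)$. This proves (a).

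\emph{Steps 3--4 (parts (b) and (c)).} For (b), I would invoke the information equality. Under non-informative sampling and a correctly specified model, the weighted score has design-and-model variance equal to the weighted expected negative Hessian, so $\mathbf{J}_{\mathrm{cluster}} \approx \mathbf{H}_{\mathrm{obs}}$. Hence $\mathbf{V}_{\mathrm{sand}} \approx \mathbf{H}_{\mathrm{obs}}^{-1}$ and $r\approx 1$, so the coverage is approximately $1-\alpha$. For (c), the map $r \mapsto 2\Phi(z_{\alpha/2} r)-1$ is strictly increasing, so $r<1$ gives coverage below nominal and $r>1$ gives coverage above nominal.

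The main obstacle is Step 2's coupling of two different kinds of asymptotics. \cref{smc:thm-bvm} is a statement in $P_{\btheta_0}$-probability about the posterior, while coverage is a frequentist probability over repeated samples. The approach is to make the $o_P(1)$ quantile error from Step 1 uniform and then absorb it via Slutsky. The statement in (b) that $\mathbf{J}_{\mathrm{cluster}} \approx \mathbf{H}_{\mathrm{obs}}$ must also be justified under the weighted normalization. With weights normalized to sum to $N$, the equality holds only approximately, up to a factor like $\E[\tilde w_i^2]/\E[\tilde w_i]$. I would state (b) as exact when the weights are constant and approximate otherwise.
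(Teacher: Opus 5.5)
Your proposal is correct and follows the paper's route. The paper's proof simply combines the BvM limit $\Norm(\hat{\btheta}^{(w)},\mathbf{H}_{\mathrm{obs}}^{-1})$ for the pseudo-posterior with the sandwich normality $\hat{\btheta}^{(w)}\,\dot{\sim}\,\Norm(\btheta_0,\mathbf{V}_{\mathrm{sand}})$ and reads the coverage off the ratio of marginal standard deviations; your quantile-continuity and Slutsky steps make explicit what the paper leaves implicit.

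Your caveat on (b) is well founded but should be sharpened, not softened. With sum-to-$N$ weights and per-unit information unrelated to the weights, $\E[\mathbf{J}]$ exceeds $\mathbf{H}_{\mathrm{obs}}$ by the factor $\sum_i\tilde{w}_i^2/\sum_i\tilde{w}_i=1+\mathrm{CV}^2(w)$, which is the Kish design effect (cf.\ \cref{smc:rem-equal-weights}). So even under non-informative sampling $r\approx 1/\sqrt{1+\mathrm{CV}^2(w)}<1$, and (b) holds only for (nearly) constant weights, not ``approximately'' for general weights. This is exactly the E-WT undercoverage the paper reports in scenario S0.
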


\begin{proof}
  The BvM theorem (\cref{smc:thm-bvm}) implies that the
  pseudo-posterior converges to
  $\Norm(\hat{\btheta}^{(w)},\, \mathbf{H}_{\mathrm{obs}}^{-1})$,
  while the correct sampling distribution of $\hat{\btheta}^{(w)}$ is
  approximately $\Norm(\btheta_0,\, \mathbf{V}_{\mathrm{sand}})$.
  The ratio of marginal standard deviations determines the coverage
  distortion.
\end{proof}

\begin{remark}[Calibration via the Cholesky transformation]
\label{smc:rem-cholesky-calibration}
  \cref{thm:cholesky} provides the operational calibration
  strategy: the affine transformation
  \eqref{eq:cholesky-transform} rescales each MCMC draw so that
  the adjusted posterior has covariance
  $\mathbf{V}_{\mathrm{sand}}$ while preserving the posterior mean.
  This replaces the naive variance $\mathbf{H}_{\mathrm{obs}}^{-1}$
  with the design-consistent sandwich variance, correcting the
  coverage distortion quantified in~\cref{smc:cor-coverage}.  The
  block-wise implementation---applying the transformation separately
  to fixed effects, random effects, and hyperparameters---is discussed
  in~\cref{sec:survey}; the full proof appears in
  \cref{smc:cholesky} below.
\end{remark}

%% file: sm_c4_sandwich.tex

\subsection{Sandwich variance estimator}
\label{smc:sandwich}

The main text (\cref{sec:survey}) defines the bread matrix
$\mathbf{H}_{\mathrm{obs}}$~\eqref{eq:H-obs}, the cluster-robust meat
matrix $\mathbf{J}_{\mathrm{cluster}}$~\eqref{eq:J-cluster}, and the
sandwich variance $\mathbf{V}_{\mathrm{sand}}$~\eqref{eq:V-sand}.  This
subsection provides the detailed derivation: the block-diagonal
structure of~$\mathbf{H}_{\mathrm{obs}}$, the non-block-diagonal
structure of~$\mathbf{J}_{\mathrm{cluster}}$, the cluster-robust
aggregation formula with stratification~\citep{Binder1983}, and the partial absorption of
within-state clustering by the state-varying coefficients.  Throughout,
we use the score functions from~\cref{smb:scores} and the sandwich
convention defined below.


\paragraph{Bread matrix.}
The bread matrix is the negative weighted observed Hessian at the
posterior mean $\hat{\btheta}$,
\begin{equation}\label{smc:eq-H}
  \mathbf{H}_{\mathrm{obs}}
  = -\sum_{i=1}^{N}\tilde{w}_i\,
    \frac{\partial^2\log f_{\HBB}(y_i \mid \btheta)}
         {\partial\btheta\,\partial\btheta\t}
    \bigg|_{\btheta = \hat{\btheta}}.
\end{equation}
By the separability of the hurdle log-likelihood
(\cref{smc:rem-separability-significance}) and the block-diagonal Fisher
information (\cref{smb:prop-block-diag}), the bread inherits a two-block
structure:

\begin{proposition}[Block-diagonal bread matrix]
\label{smc:prop-H-blockdiag}
  The observed Hessian decomposes as
  \begin{equation}\label{smc:eq-H-block}
    \mathbf{H}_{\mathrm{obs}}
    = \begin{pmatrix}
        \mathbf{H}_{\textup{ext}} & \mathbf{0} \\
        \mathbf{0} & \mathbf{H}_{\textup{int}}
      \end{pmatrix},
  \end{equation}
  where\textup{:}
  \begin{enumerate}[label=\textup{(\roman*)},nosep]
    \item The extensive block is the weighted logistic information
    \begin{equation}\label{smc:eq-H-ext}
      \mathbf{H}_{\textup{ext}}
      = \sum_{i=1}^{N}\tilde{w}_i\,q_i(1-q_i)\,
        \mathbf{d}_{1,i}\,\mathbf{d}_{1,i}\t,
    \end{equation}
    with $\mathbf{d}_{1,i}
    = \partial\eta_i^{(1)}/\partial(\balpha\t,\bdelta_{1,s[i]}\t)\t$.
    This matrix is non-stochastic: it depends on $q_i = q_i(\hat{\btheta})$
    but not on the observed data $z_i$.

    \item The intensive block is the weighted negative Hessian of the
    truncated beta-binomial log-likelihood,
    \begin{equation}\label{smc:eq-H-int}
      \mathbf{H}_{\textup{int}}
      = -\sum_{i \in \cl{I}_1}\tilde{w}_i\,
        \frac{\partial^2 \ell_i^{(2+)}}
             {\partial\boldsymbol{\phi}_2\,\partial\boldsymbol{\phi}_2\t}
        \bigg|_{\hat{\btheta}},
    \end{equation}
    where $\boldsymbol{\phi}_2
    = (\bbeta\t,\log\kappa,\bdelta_{2,s[i]}\t)\t$,
    $\cl{I}_1 = \{i : y_i > 0\}$, and
    $\ell_i^{(2+)} = \log f_{\textup{BB}}(y_i \mid n_i,\mu_i,\kappa)
     - \log(1-p_{0,i})$.
    This block depends on the observed counts $y_i$.
  \end{enumerate}
\end{proposition}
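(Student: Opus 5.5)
The plan is to start from the additive decomposition of the weighted pseudo-log-likelihood in \cref{prop:weighted-sep}. Each observation's contribution splits as $\tilde{w}_i\log f_{\HBB}(y_i\mid\btheta) = \tilde{w}_i\ell_i^{(1)} + \tilde{w}_i z_i\,\ell_i^{(2+)}$. Here $\ell_i^{(1)}$ depends only on $\boldsymbol{\phi}_1=(\balpha,\bdelta_{1,\cdot})$, and $\ell_i^{(2+)}$ depends only on $\boldsymbol{\phi}_2=(\bbeta,\log\kappa,\bdelta_{2,\cdot})$. The weights $\tilde{w}_i$ and the indicators $z_i$ are data, not parameters. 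Differentiating twice therefore gives
\[
\frac{\partial^2}{\partial\boldsymbol{\phi}_1\,\partial\boldsymbol{\phi}_2\t}\bigl[\tilde{w}_i\log f_{\HBB}(y_i\mid\btheta)\bigr]=\mathbf{0}
\]
for every $i$ and every $\btheta$, exactly as in the proof of \cref{smb:prop-block-diag}. Summing over $i$ and evaluating at $\hat{\btheta}$ yields the zero off-diagonal blocks in \eqref{smc:eq-H-block}.

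For part (i), I would use the chain rule through the linear predictor $\eta_i^{(1)}$. Because $\eta_i^{(1)}$ is linear in $\boldsymbol{\phi}_1$ with gradient $\mathbf{d}_{1,i}$, the second derivative has no term involving $\partial^2\eta_i^{(1)}/\partial\boldsymbol{\phi}_1\partial\boldsymbol{\phi}_1\t$. It therefore equals $(\partial^2\ell_i^{(1)}/\partial(\eta_i^{(1)})^2)\,\mathbf{d}_{1,i}\mathbf{d}_{1,i}\t$. By \eqref{smb:eq-hess-ext} the scalar factor is $-q_i(1-q_i)$. Multiplying by $-\tilde{w}_i$ and summing gives \eqref{smc:eq-H-ext}. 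The non-stochastic claim follows from the same display: the $z_i$ term in the score $z_i-q_i$ is killed by the second differentiation. This is the content of \cref{smb:rem-ext-info}. The entries thus depend on the data only through the covariates and $\hat{\btheta}$, and not through $z_i$, although $\hat{\btheta}$ itself is a function of the data.

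For part (ii), the intensive block is, by construction, the weighted negative Hessian of the sum of $z_i\ell_i^{(2+)}$. Restricting to $\cl{I}_1=\{i:z_i=1\}$ just drops the zero terms, which gives \eqref{smc:eq-H-int}. To show that this block depends on the observed counts, I would point to the explicit second derivatives from \cref{smb:prop-score-mu,smb:prop-info-mu}. The $\mu\mu$ entry contains $\kappa^2[\psi_1(y_i+a)+\psi_1(n_i-y_i+b)]$, and it genuinely varies with $y_i$. This contrasts with the extensive case and reflects the fact that observed and expected information differ for the beta-binomial.

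The main subtlety is not any calculation but the bookkeeping for the state deviations. The vector $\boldsymbol{\phi}_2$ must include $\bdelta_{2,s}$ for all states, so $\mathbf{d}_{2,i}$ is nonzero only in the coordinates for $s[i]$. I would handle this by writing both blocks over the stacked vectors $(\balpha,\bdelta_{1,1},\dots,\bdelta_{1,S})$ and $(\bbeta,\log\kappa,\bdelta_{2,1},\dots,\bdelta_{2,S})$, padding each gradient with zeros. This keeps the block-diagonal argument uniform. I would also remark that the Hessian excludes the prior, so the hierarchical density coupling $\bdelta_{1,s}$ and $\bdelta_{2,s}$ through $\bSigma_{12}$ does not enter $\mathbf{H}_{\mathrm{obs}}$.
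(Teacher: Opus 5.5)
Your proposal is correct and follows the paper's own argument. The per-observation cross-derivatives between $\boldsymbol{\phi}_1$ and $\boldsymbol{\phi}_2$ vanish (as in \cref{smb:prop-block-diag}), positive weighting and summation preserve the zero blocks, and the extensive block follows from the chain rule through the linear predictor with scalar curvature $q_i(1-q_i)$. Your extra remarks fill in points the paper's brief proof leaves implicit: the stacked-state bookkeeping, the exclusion of prior curvature (so $\bSigma_{12}$ cannot couple the blocks), the non-stochastic claim, and the trigamma terms showing that $\mathbf{H}_{\textup{int}}$ depends on $y_i$.
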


\begin{proof}
  By~\cref{smb:prop-block-diag}, the cross-derivative
  $\partial^2\ell_i/\partial\boldsymbol{\phi}_1\,\partial\boldsymbol{\phi}_2\t
  = \mathbf{0}$
  for every observation~$i$.  Multiplying by $\tilde{w}_i > 0$ and
  summing preserves this zero block, giving~\eqref{smc:eq-H-block}.
  The extensive-margin form~\eqref{smc:eq-H-ext} follows from
  $-\partial^2\ell_i^{(1)}/\partial(\eta_i^{(1)})^2 = q_i(1-q_i)$
  (\cref{smb:prop-ext-paramspace}) and the chain rule.
\end{proof}


\paragraph{Meat matrix and cross-margin coupling.}
The individual-level meat matrix is
\begin{equation}\label{smc:eq-J-ind}
  \mathbf{J}_{\textup{ind}}
  = \sum_{i=1}^{N}\tilde{w}_i^2\,\mathbf{s}_i\,\mathbf{s}_i\t,
\end{equation}
where $\mathbf{s}_i
= \partial\log f_{\HBB}(y_i \mid \btheta)/\partial\btheta
  \big|_{\hat{\btheta}}$
is the observation-level score.

\begin{remark}[Non-block-diagonal meat]
\label{smc:rem-J-nonblockdiag}
  Unlike the bread, the meat matrix is \emph{not} block-diagonal between
  the extensive and intensive margins.  The cross-block is
  \begin{equation}\label{smc:eq-J-cross}
    \mathbf{J}^{\textup{ext,int}}
    = \sum_{i=1}^{N}\tilde{w}_i^2\,
      \mathbf{s}_i^{\textup{ext}}\,
      (\mathbf{s}_i^{\textup{int}})\t.
  \end{equation}
  For $y_i = 0$, the intensive score $\mathbf{s}_i^{\textup{int}}
  = \mathbf{0}$, so the cross-product vanishes.  But for $y_i > 0$,
  both $\mathbf{s}_i^{\textup{ext}} = (1-q_i)\mathbf{d}_{1,i}$ and
  $\mathbf{s}_i^{\textup{int}} \neq \mathbf{0}$, so the cross-block
  receives a nonzero contribution from every positive observation.  In
  consequence, the sandwich variance
  $\mathbf{V}_{\mathrm{sand}}
  = \mathbf{H}_{\mathrm{obs}}^{-1}\mathbf{J}\mathbf{H}_{\mathrm{obs}}^{-1}$
  is not block-diagonal even though $\mathbf{H}_{\mathrm{obs}}$ is.  This
  cross-margin coupling is a general feature of sandwich estimators in
  hurdle models: a single PSU contributes weighted scores from both
  margins simultaneously.
\end{remark}


\begin{remark}[Reduction under equal weights]
\label{smc:rem-equal-weights}
  Under uniform weights $\tilde{w}_i = 1$ and correct model
  specification, the information identity gives
  $\E[\mathbf{J}_{\textup{ind}}] = \mathbf{H}_{\mathrm{obs}}$
  \citep[see][for the pseudo-posterior analogue]{WilliamsSavitsky2021}, so
  $\mathbf{V}_{\mathrm{sand}} \to \mathbf{H}_{\mathrm{obs}}^{-1}$ and
  the sandwich reduces to the model-based variance.  For the extensive
  margin, $\E[\mathbf{J}_{\textup{ext}}] = \mathbf{H}_{\textup{ext}}$
  holds exactly, because the logistic Hessian is non-random (see
  \cref{smb:rem-ext-info}; under unequal survey weights the identity
  breaks because $\mathbf{H}$ involves $\tilde{w}_i$ while
  $\E[\mathbf{J}]$ involves $\tilde{w}_i^2$).  For the intensive margin, the identity holds only
  in expectation and the finite-sample discrepancy between
  $\mathbf{J}_{\textup{int}}$ and $\mathbf{H}_{\textup{int}}$ is
  nonzero.
\end{remark}


\paragraph{Cluster-robust aggregation.}
The individual-level $\mathbf{J}_{\textup{ind}}$ treats observations as
independent.  In the NSECE stratified cluster design
\citep{NSECEProjectTeam2022}, providers within
the same primary sampling unit (PSU) share neighborhood
characteristics, inducing within-PSU correlation.  The cluster-robust
meat matrix replaces individual scores with PSU-level aggregates,
centered within strata.

\begin{definition}[PSU-level score totals]
\label{smc:def-psu-scores}
  For stratum $h = 1,\ldots,H$ and PSU $c = 1,\ldots,C_h$ within
  stratum~$h$, define
  \begin{equation}\label{smc:eq-s-psu}
    \bar{\mathbf{s}}_{hc}
    = \sum_{i \in \textup{PSU}(h,c)} \tilde{w}_i\,\mathbf{s}_i,
  \end{equation}
  where $\textup{PSU}(h,c) = \{i : \texttt{vstratum}[i]=h,\;
  \texttt{vpsu}[i]=c\}$.  The stratum mean is
  \begin{equation}\label{smc:eq-s-stratum}
    \bar{\mathbf{s}}_h
    = \frac{1}{C_h}\sum_{c=1}^{C_h}\bar{\mathbf{s}}_{hc}.
  \end{equation}
\end{definition}

\begin{proposition}[Cluster-robust meat matrix]
\label{smc:prop-cluster-robust}
  Under the stratified cluster design with $H$~strata and
  $C_h$~PSUs in stratum~$h$, the cluster-robust meat matrix is
  \begin{equation}\label{smc:eq-J-cluster}
    \boxed{%
      \mathbf{J}_{\mathrm{cluster}}
      = \sum_{h=1}^{H}\frac{C_h}{C_h-1}
        \sum_{c=1}^{C_h}
        (\bar{\mathbf{s}}_{hc} - \bar{\mathbf{s}}_h)\,
        (\bar{\mathbf{s}}_{hc} - \bar{\mathbf{s}}_h)\t.
    }
  \end{equation}
  This estimator accounts for\textup{:}
  \textup{(i)}~within-PSU correlation, by aggregating scores to PSU
  totals\textup{;}
  \textup{(ii)}~stratification, by centering PSU scores within strata
  so that only within-stratum variation across PSUs contributes\textup{;}
  \textup{(iii)}~unequal cluster sizes, through the weight-aggregated
  PSU totals $\bar{\mathbf{s}}_{hc}$\textup{;}
  \textup{(iv)}~finite-sample degrees of freedom, via the factor
  $C_h/(C_h-1)$.
\end{proposition}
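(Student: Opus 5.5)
The plan is to treat \cref{smc:prop-cluster-robust} as a design-based variance calculation for the weighted estimating equation, in the spirit of \citet{Binder1983}. First I would fix the linearization. The pseudo-MLE (approximated by the posterior mean) solves $\sum_i \tilde{w}_i\mathbf{s}_i(\btheta) = \mathbf{0}$. A first-order Taylor expansion about the census parameter gives $\hat{\btheta}-\btheta_0 \approx \mathbf{H}_{\mathrm{obs}}^{-1}\sum_i \tilde{w}_i\mathbf{s}_i(\btheta_0)$. The sandwich variance is therefore determined once we have a design-consistent estimator of $\Var_{\mathrm{design}}\bigl(\sum_i\tilde{w}_i\mathbf{s}_i\bigr)$, and this estimator is exactly the meat. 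Next I would regroup the weighted score total by design units, writing $\sum_i \tilde{w}_i\mathbf{s}_i=\sum_h\sum_c \bar{\mathbf{s}}_{hc}$ with the PSU totals of \cref{smc:def-psu-scores}. This step makes property (i) automatic: any within-PSU dependence of the scores is absorbed inside $\bar{\mathbf{s}}_{hc}$. Property (iii) also follows, because unequal PSU sizes and weights enter only through these totals.

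The core step is the standard with-replacement (ultimate cluster) approximation. Strata are sampled independently, and within stratum $h$ the $C_h$ PSU totals are treated as i.i.d.\ draws. The variance of the grand total then decomposes as $\sum_h \Var\bigl(\sum_c \bar{\mathbf{s}}_{hc}\bigr)=\sum_h C_h\,\Var(\bar{\mathbf{s}}_{hc})$. Within each stratum I would estimate $\Var(\bar{\mathbf{s}}_{hc})$ by the unbiased sample covariance $(C_h-1)^{-1}\sum_c(\bar{\mathbf{s}}_{hc}-\bar{\mathbf{s}}_h)(\bar{\mathbf{s}}_{hc}-\bar{\mathbf{s}}_h)\t$. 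Centering at the stratum mean rather than at zero is essential. Within a single stratum the score totals need not have mean zero; only their sum across all strata vanishes at $\hat{\btheta}$. Centering at the stratum mean therefore removes between-stratum variation, which gives (ii). The unbiasedness identity $\E\sum_c(\bar{\mathbf{s}}_{hc}-\bar{\mathbf{s}}_h)(\cdot)\t=(C_h-1)\Var(\bar{\mathbf{s}}_{hc})$ is what produces the factor $C_h/(C_h-1)$ after multiplying by $C_h$, which is (iv). Summing over $h$ yields the boxed formula \eqref{smc:eq-J-cluster}. A short remark would show that it reduces to the unstratified form \eqref{eq:J-cluster} when $H=1$ and $\mathbf{s}$ is evaluated at a root of the score equation.

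The main obstacle is justifying the with-replacement and independence idealization for the actual NSECE design, which is multistage and sampled without replacement. My first approach would be to state these as the working assumptions, citing \citet{Binder1983}, and to note that the estimator is conservative relative to the without-replacement variance, since second-stage contributions are absorbed into the PSU totals. A second, smaller point is that the scores are evaluated at $\hat{\btheta}$ rather than $\btheta_0$. I would dispose of this by consistency of $\hat{\btheta}$ together with the $C^1$ smoothness of the scores established under \ref{smc:C1}. Finally, any stratum with $C_h=1$ would have to be excluded or collapsed, because the factor $C_h/(C_h-1)$ is undefined there.
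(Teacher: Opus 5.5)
Your proposal is correct and follows essentially the same route as the paper: independence across strata, the ultimate-cluster (with-replacement, finite-population correction dropped) approximation within strata, and the standard unbiased within-stratum covariance estimator scaled by $C_h$ to give the factor $C_h/(C_h-1)$. The linearization step you add only makes explicit why this variance is the meat. One small caveat concerns your side remark: with $H=1$ and scores summing to zero, the stratified form reduces to $\frac{C}{C-1}\sum_c \bar{\mathbf{s}}_c\bar{\mathbf{s}}_c\t$, so it matches \eqref{eq:J-cluster} only up to the factor $C/(C-1)$, not exactly.
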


\begin{proof}
  The design-based variance of the weighted score total under
  stratified sampling decomposes as
  \[
    \Var_{\textup{design}}\!\Bigl(\sum_i\tilde{w}_i\,\mathbf{s}_i\Bigr)
    = \sum_{h=1}^{H}
      \Var_h\!\Bigl(\sum_{c=1}^{C_h}\bar{\mathbf{s}}_{hc}\Bigr),
  \]
  because sampling across strata is independent.  Within stratum~$h$,
  PSUs are sampled without replacement from the population of
  $C_h^{\ast}$~PSUs.  Under the ``ultimate cluster'' assumption that
  $C_h / C_h^{\ast} \approx 0$ (negligible PSU sampling fraction),
  the finite-population correction is dropped and the within-stratum
  variance estimator is
  \[
    \widehat{\Var}_h
    = \frac{C_h}{C_h-1}
      \sum_{c=1}^{C_h}
      (\bar{\mathbf{s}}_{hc}-\bar{\mathbf{s}}_h)\,
      (\bar{\mathbf{s}}_{hc}-\bar{\mathbf{s}}_h)\t,
  \]
  the standard unbiased estimator of the variance of a total from
  $C_h$~units.  Summing over strata gives
  $\mathbf{J}_{\mathrm{cluster}}$.
\end{proof}

\begin{remark}[Degrees of freedom]
\label{smc:rem-dof}
  The cluster-robust estimator has
  $\sum_{h=1}^{H}(C_h - 1) = 415 - 30 = 385$ effective degrees of
  freedom.  With $C_h$ ranging from~5 to~36 (median~$\approx 14$), each
  stratum contributes at least~4 degrees of freedom.  This comfortably
  exceeds the rule-of-thumb threshold of $\sum_h(C_h - 1) > 30$ for
  stable sandwich variance estimation.
\end{remark}

\begin{remark}[Individual versus cluster-level $\mathbf{J}$]
\label{smc:rem-J-comparison}
  When there is no within-PSU correlation conditional on covariates,
  $\mathbf{J}_{\mathrm{cluster}}$ and $\mathbf{J}_{\textup{ind}}$
  converge to the same population quantity.  In practice, within-PSU
  correlation inflates $\mathbf{J}_{\mathrm{cluster}}$ relative to
  $\mathbf{J}_{\textup{ind}}$; the parameter-specific ratio
  $\diag(\mathbf{V}_{\mathrm{sand}})
  / \diag(\mathbf{H}_{\mathrm{obs}}^{-1}
           \mathbf{J}_{\textup{ind}}
           \mathbf{H}_{\mathrm{obs}}^{-1})$
  provides an estimate of the cluster design effect.
  The design effect ratio (DER) classification and diagnostics are
  developed in~\cref{smc:der} below.
\end{remark}


\begin{proposition}[Partial absorption of clustering by SVCs]
\label{smc:prop-svc-absorption}
  The state-varying coefficients $\bepsilon_s$ partially absorb
  within-state clustering.  For any two providers $i,j$ in the same
  state~$s$ and the same PSU,
  \begin{equation}\label{smc:eq-svc-absorption}
    \E_{\bepsilon_s}\!\bigl[
      \Cov(s_i,\,s_j \mid \bepsilon_s)
    \bigr]
    \;\leq\;
    \Cov(s_i,\,s_j),
  \end{equation}
  with equality if and only if $\Cov(\E[s_i \mid \bepsilon_s],\,
  \E[s_j \mid \bepsilon_s]) = 0$.
\end{proposition}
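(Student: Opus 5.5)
The plan is to read \eqref{smc:eq-svc-absorption} as a consequence of the law of total covariance, conditioning on the shared state effect. Fix two providers $i,j$ in the same state $s$ and the same PSU, and let $s_i,s_j$ be any fixed scalar component of the observation-level scores $\mathbf{s}_i,\mathbf{s}_j$ from \cref{smb:scores}, evaluated at $\hat{\btheta}$. Since both providers load on the same $\bepsilon_s$, I will write
\[
  \Cov(s_i,s_j)
  = \E_{\bepsilon_s}\!\bigl[\Cov(s_i,s_j \mid \bepsilon_s)\bigr]
  + \Cov\bigl(\E[s_i \mid \bepsilon_s],\,\E[s_j \mid \bepsilon_s]\bigr).
\]
This identity holds with no further assumptions. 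The inequality is then equivalent to nonnegativity of the second term. The equality clause is immediate: equality holds exactly when that term vanishes.

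The substantive step is therefore to show
\[
  \Cov\bigl(g_i(\bepsilon_s),\,g_j(\bepsilon_s)\bigr) \ge 0,
  \qquad g_i(\bepsilon) \coloneqq \E[s_i \mid \bepsilon_s=\bepsilon].
\]
I would compute $g_i$ explicitly on each margin. On the extensive margin, \cref{smb:prop-score-ext} gives
\[
  g_i(\bepsilon) = \bigl[\expit\bigl(\eta_i^{(1)} + \bx_i^{(r)\t}\bepsilon_1\bigr) - q_i\bigr]\,d_{1,i,k},
\]
which is a monotone function of the linear index $\bx_i^{(r)\t}\bepsilon_1$. On the intensive margin, \cref{thm:monotonicity} makes $\E[Y\mid Y>0]$ increasing in $\mu$, so the conditional mean of the $\mu$-score is likewise monotone in $\bx_i^{(r)\t}\bepsilon_2$. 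Since $\bepsilon_s$ is Gaussian, I would then apply the Harris--FKG (association) inequality for Gaussian vectors: two coordinatewise comonotone functions of a Gaussian vector with nonnegatively correlated coordinates have nonnegative covariance. As a cross-check, the delta-method linearization gives
\[
  \Cov(g_i,g_j) \approx \nabla g_i\t\,\bSigma_\varepsilon\,\nabla g_j,
\]
which is positive whenever the gradients point the same way, i.e.\ whenever $\bx_i^{(r)}$ and $\bx_j^{(r)}$ are aligned under the $\bSigma_\varepsilon$ inner product.

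I expect this comonotonicity step to be the main obstacle. In full generality the covariance of two different functions of the same random vector can be negative. Examples are opposite-signed covariate entries $d_{1,i,k}$ and $d_{1,j,k}$, or a negative cross-margin correlation paired with mixed score components. So I expect to add an explicit qualifier: either restrict to score components sharing the same sign of the design loading, or assume providers in the same PSU have covariate vectors in the same orthant, which is plausible for neighbourhood-level covariates. Under that restriction the inequality follows from Harris--FKG.

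Finally, I would note the consequence for $\mathbf{J}_{\mathrm{cluster}}$. The between-state component $\Cov(g_i,g_j)$ is exactly the part of the within-PSU score correlation that the SVCs capture. What remains in the meat is only the expected conditional covariance.
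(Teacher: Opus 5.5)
Your decomposition is exactly the paper's. Its proof writes the law of total covariance and then simply asserts that $\Cov(\E[s_i\mid\bepsilon_s],\E[s_j\mid\bepsilon_s])$ is non-negative. You are right that this is not automatic. In fact the proposition as stated is false, so the qualifier you plan to add is necessary, not a weakness of your proof.

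A counterexample already arises in the random-intercept model. Take the $\alpha_{\mathrm{pov}}$-component of the extensive score, $s_i=(z_i-\hat q_i)\,x_{i,\mathrm{pov}}$, for two providers in the same PSU with $x_{i,\mathrm{pov}}>0>x_{j,\mathrm{pov}}$. Write $c_i$ for the fixed part of the linear predictor. Then $\E[s_i\mid\varepsilon_s]=\bigl(\expit(c_i+\varepsilon_s)-\hat q_i\bigr)x_{i,\mathrm{pov}}$ is strictly increasing in $\varepsilon_s$, while $\E[s_j\mid\varepsilon_s]$ is strictly decreasing. The between-state term is therefore strictly negative. If $z_i,z_j$ are conditionally independent given $\varepsilon_s$, this gives $\Cov(s_i,s_j)<0=\E[\Cov(s_i,s_j\mid\varepsilon_s)]$, the reverse of the claimed inequality.

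Your restricted version needs some repairs.

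\emph{The comonotonicity hypothesis.} Your ``either/or'' is too weak for Harris--FKG. Same-sign loadings alone do not make $g_i,g_j$ coordinatewise comonotone. The same-orthant condition additionally needs the sign-adjusted $\bSigma_\varepsilon$ to have non-negative entries (Pitt's theorem). You can bypass this. Each $g_i$ depends on $\bepsilon_s$ only through one Gaussian index, $u_i=\bx_i^{(r)\t}\bepsilon_{1,s}$ or $\bx_i^{(r)\t}\bepsilon_{2,s}$. So it suffices that the two monotone responses point the same way and $\Cov(u_i,u_j)\ge 0$. A bivariate normal with non-negative correlation is positively quadrant dependent, and Hoeffding's covariance identity then gives $\Cov(g_i,g_j)\ge 0$. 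This is precisely the alignment your delta-method check identifies; for mixed-margin pairs it becomes a sign condition on $\bx_i^{(r)\t}\bSigma_{12}\bx_j^{(r)}$.

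\emph{The $\mu$-score.} \cref{thm:monotonicity} only makes the truncated \emph{mean} increasing in $\mu$, whereas $S_{\textup{BB}}(y)$ is a nonlinear increasing function of $y$. You need stochastic dominance of $\ZTBB(n,\mu,\kappa)$ in $\mu$. This does hold: $\Beta(\mu\kappa,(1-\mu)\kappa)$ is likelihood-ratio increasing in $\mu$, the binomial kernel is TP$_2$, and zero-truncation preserves the likelihood-ratio order.

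\emph{The $\log\kappa$ score.} It is generally not monotone in $y$ and has no design loading, so it lies outside your restriction.

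\emph{A sign-free route.} The conclusion the paper actually wants for $\mathbf{J}_{\mathrm{cluster}}$ needs no sign condition if you work with PSU totals rather than pairs. For a PSU within state $s$, apply the law of total variance to $\bar{\mathbf{s}}_{hc}=\sum_{i\in\mathrm{PSU}(h,c)}\tilde w_i\mathbf{s}_i$. This gives $\Var(\bar{\mathbf{s}}_{hc})\succeq\E[\Var(\bar{\mathbf{s}}_{hc}\mid\bepsilon_s)]$ in the Loewner order, with no assumptions on loadings or on $\bSigma_\varepsilon$.
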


\begin{proof}
  By the law of total covariance,
  \[
    \Cov(s_i,\,s_j)
    = \E\!\bigl[\Cov(s_i,\,s_j \mid \bepsilon_s)\bigr]
    + \Cov\!\bigl(
        \E[s_i \mid \bepsilon_s],\;
        \E[s_j \mid \bepsilon_s]
      \bigr).
  \]
  The second term is non-negative, so
  $\Cov(s_i,s_j) \geq \E[\Cov(s_i,s_j \mid \bepsilon_s)]$.
  At the posterior mean $\hat{\btheta}$ (which includes state-specific
  $\hat{\bepsilon}_s$), the plug-in conditional covariance is an estimate
  of $\Cov(s_i,s_j \mid \bepsilon_s)$; hence
  $\mathbf{J}_{\mathrm{cluster}}$ evaluated at $\hat{\btheta}$ will
  be smaller than under a model without SVCs.
\end{proof}

\begin{remark}[Residual clustering]
\label{smc:rem-residual-clustering}
  The inequality~\eqref{smc:eq-svc-absorption} shows that SVCs reduce
  but do not eliminate within-PSU correlation.  The residual correlation
  reflects neighborhood-level effects not captured by the five
  state-varying covariates (poverty, race/ethnicity, urbanicity): local
  labor market conditions, municipal zoning, co-located service
  providers, and other spatially structured unobservables.  The sandwich
  correction with cluster-robust~$\mathbf{J}_{\mathrm{cluster}}$ remains
  necessary to account for this residual design effect.
\end{remark}

%% file: sm_c5_cholesky.tex

\subsection{Cholesky affine transformation}
\label{smc:cholesky}

The sandwich variance $\mathbf{V}_{\mathrm{sand}}$~\eqref{eq:V-sand}
provides the design-consistent covariance, but the MCMC draws
$\{\btheta^{(m)}\}_{m=1}^M$ from the pseudo-posterior have empirical
covariance $\bSigma_{\mathrm{MCMC}} \approx \mathbf{H}_{\mathrm{obs}}^{-1}$,
not $\mathbf{V}_{\mathrm{sand}}$.  In a frequentist setting one would simply
report $\mathbf{V}_{\mathrm{sand}}$ as the variance estimator; in the
Bayesian framework, we require the posterior draws themselves to have the
correct covariance so that credible intervals are properly calibrated
\citep{WilliamsSavitsky2021}.
\cref{thm:cholesky} in the main text states the affine transformation
that achieves this.  We now provide the full proof.


\begin{proof}[Proof of \textup{\cref{thm:cholesky}}]
  Define centered draws
  $\mathbf{u}^{(m)} = \btheta^{(m)} - \hat{\btheta}$, so that
  $\bSigma_{\mathrm{MCMC}}
  = M^{-1}\sum_{m=1}^{M}\mathbf{u}^{(m)}(\mathbf{u}^{(m)})\t$.
  Set
  \begin{equation}\label{smc:eq-A}
    \mathbf{A}
    = \mathbf{L}_{\mathrm{sand}}\,\mathbf{L}_{\mathrm{MCMC}}^{-1},
  \end{equation}
  where $\mathbf{L}_{\mathrm{MCMC}}
  = \mathrm{chol}(\bSigma_{\mathrm{MCMC}})$ and
  $\mathbf{L}_{\mathrm{sand}}
  = \mathrm{chol}(\mathbf{V}_{\mathrm{sand}})$ are lower-triangular
  Cholesky factors.  The transformation~\eqref{eq:cholesky-transform} is
  then $\btheta^{*(m)} = \hat{\btheta} + \mathbf{A}\,\mathbf{u}^{(m)}$.

  \medskip\noindent
  \textup{(a) Mean preservation.}
  \[
    \frac{1}{M}\sum_{m=1}^{M}\btheta^{*(m)}
    = \hat{\btheta}
      + \mathbf{A}\,\Bigl(\frac{1}{M}\sum_{m=1}^{M}\mathbf{u}^{(m)}\Bigr)
    = \hat{\btheta} + \mathbf{A}\cdot\mathbf{0}
    = \hat{\btheta}.
  \]

  \noindent
  \textup{(b) Covariance recovery.}
  \begin{align}
    \frac{1}{M}\sum_{m=1}^{M}
      (\btheta^{*(m)} - \hat{\btheta})\,
      (\btheta^{*(m)} - \hat{\btheta})\t
    &= \mathbf{A}\,\bSigma_{\mathrm{MCMC}}\,\mathbf{A}\t
      \notag \\
    &= \mathbf{L}_{\mathrm{sand}}\,
       \mathbf{L}_{\mathrm{MCMC}}^{-1}\,
       \mathbf{L}_{\mathrm{MCMC}}\,
       \mathbf{L}_{\mathrm{MCMC}}\t\,
       \mathbf{L}_{\mathrm{MCMC}}^{-\t}\,
       \mathbf{L}_{\mathrm{sand}}\t
      \notag \\
    &= \mathbf{L}_{\mathrm{sand}}\,\mathbf{I}\,
       \mathbf{L}_{\mathrm{sand}}\t
     = \mathbf{V}_{\mathrm{sand}}.
      \label{smc:eq-cholesky-var}
  \end{align}

  \noindent
  \textup{(c) Affine equivariance.}
  The standardized draws
  $\mathbf{z}^{(m)} = \mathbf{L}_{\mathrm{MCMC}}^{-1}\mathbf{u}^{(m)}$
  satisfy
  \[
    M^{-1}\sum_m \mathbf{z}^{(m)}(\mathbf{z}^{(m)})\t = \mathbf{I}.
  \]
  The transformation maps
  \[
    \btheta^{*(m)} - \hat{\btheta}
    = \mathbf{L}_{\mathrm{sand}}\,\mathbf{z}^{(m)},
  \]
  which is a linear rescaling of the standardized draws.  The
  Wald confidence intervals---computed as
  $\hat{\theta}_p \pm z_{\alpha/2}\sqrt{[\mathbf{V}_{\mathrm{sand}}]_{pp}}$---depend
  only on the marginal variances of~$\btheta^*$ and are therefore
  invariant to the choice of~$\mathbf{A}$.  Marginal quantiles of
  the transformed draws may differ from those of the original
  draws when~$\mathbf{A}$ is non-diagonal.
\end{proof}


\begin{remark}[Block-wise application]
\label{smc:rem-blockwise}
  In practice the transformation~\eqref{eq:cholesky-transform} is applied
  block-wise rather than to the full
  $\approx\!616$-dimensional vector simultaneously.  The three
  parameter classes are listed below; the numerical Cholesky procedure
  operates on blocks~(i) and~(ii) only, since block~(iii) receives no
  correction
  (\cref{rem:blockwise} in the main text).  The rationale follows from
  \cref{smc:prop-H-blockdiag} and~\cref{smc:rem-J-nonblockdiag}\textup{:}
  \begin{enumerate}[label=\textup{(\roman*)},nosep]
    \item \emph{Fixed effects}
      $\boldsymbol{\xi}
      = (\balpha\t,\bbeta\t,\log\kappa)\t \in \R^{11}$\textup{:}
      full Cholesky correction.  The $\DER$ ranges from $1.14$ to
      $4.18$ (mean $2.11$), and the corresponding confidence-interval
      inflation factors are $\sqrt{\DER} \in [1.07, 2.04]$.
    \item \emph{State random effects}
      $\bepsilon_s \in \R^{10}$, $s = 1,\ldots,S$\textup{:} partial
      correction.  The posterior variance of $\bepsilon_s$ mixes the
      likelihood information from providers in state~$s$ (affected by
      survey weights) with the shrinkage toward the prior mean
      $\bGamma_k\mathbf{v}_s$ (unaffected).  When the prior
      dominates (small states), $\DER_s \approx 1$ and no correction
      is needed; when the likelihood dominates (large states),
      $\DER_s$ can exceed~$3$ and the correction is substantial.
    \item \emph{Hyperparameters}
      $(\bGamma_k, \bSigma_\varepsilon)$\textup{:} no correction.
      The sandwich estimator is not applicable to variance components
      estimated from between-state variation; the DER concept does not
      extend to these parameters.
  \end{enumerate}
\end{remark}


\begin{remark}[Approximation quality of the block-wise correction]
\label{smc:rem-blockwise-quality}
  The block-wise application ignores the cross-block covariance between
  fixed effects $\boldsymbol{\xi}$ and random effects $\bepsilon_s$ in
  $\mathbf{V}_{\mathrm{sand}}$.  The approximation is essentially exact
  when the prior dominates ($\ESS_s \ll q$), since $\bepsilon_s$ is then
  nearly independent of the data.  When data dominate ($\ESS_s \gg q$),
  the cross-block covariance can be non-negligible; the sensitivity
  comparison in~\cref{app:extended-results} addresses this case.
\end{remark}


\begin{remark}[MCMC versus design-consistent shrinkage]
\label{smc:rem-mcmc-vs-design}
  The shrinkage factor
  $B_s^{(w)}
  = \bSigma_\varepsilon(\bSigma_\varepsilon + \mathbf{V}_s^{-1})^{-1}$
  describes the \emph{design-consistent} (post-sandwich-correction)
  posterior behavior, not the raw MCMC pseudo-posterior.  The MCMC
  sampler sees precision proportional to
  $W_s = \sum_{i \in s}\tilde{w}_i$, which exceeds the effective
  sample size $\ESS_s = W_s^2 / \sum_{i \in s}\tilde{w}_i^2$ whenever
  within-state weights are heterogeneous.  The Cholesky affine
  transformation~\eqref{eq:cholesky-transform} maps the MCMC output to
  the design-consistent target, inflating the variance by the factor
  $\DER_s$ (\cref{smc:der}).  The gap $W_s - \ESS_s$ quantifies the
  over-concentration of the raw pseudo-posterior relative to the
  design-aware target\textup{:} large gaps (common in states with a
  few heavily weighted providers) signal that the naive MCMC credible
  intervals are too narrow and that the sandwich correction is most
  important.
\end{remark}


\begin{remark}[Non-Gaussian posterior shapes]
\label{smc:rem-nongaussian}
  The Cholesky transformation is an affine operation that preserves
  only first and second moments.  If the pseudo-posterior is
  substantially non-Gaussian (skewed or multimodal), the transformed
  draws will have the correct covariance but may misrepresent the
  posterior shape.  For the HBB model, the fixed effects
  $(\balpha,\bbeta)$ are expected to be approximately Gaussian by the
  Bernstein--von Mises result (\cref{smc:thm-bvm}) given
  $N = 6{,}809$; the log-dispersion $\log\kappa$ is also
  well-approximated by a normal.  For variance components
  ($\bSigma_\varepsilon$), the posterior may exhibit right-skewness,
  but these parameters are not subject to the Cholesky correction
  (Block~3 above).
\end{remark}

\begin{remark}[Positive definiteness requirements]
\label{smc:rem-posdef}
  The transformation requires both $\bSigma_{\mathrm{MCMC}}$ and
  $\mathbf{V}_{\mathrm{sand}}$ to be positive definite so that their
  Cholesky factors exist.  For $\bSigma_{\mathrm{MCMC}}$, positive
  definiteness holds almost surely when the number of MCMC draws
  satisfies $M \gg \dim(\boldsymbol{\xi}) = 11$; in our application
  $M = 4{,}000$ post-warmup draws.  For
  $\mathbf{V}_{\mathrm{sand}}$, positive definiteness requires the
  effective degrees of freedom to exceed the parameter dimension:
  $\sum_{h=1}^{H}(C_h - 1) = 386 > 11$ (\cref{smc:rem-dof}), which
  is easily satisfied.
\end{remark}

%% file: sm_c6_der.tex

\subsection{Design effect ratio classification and diagnostics}
\label{smc:der}

The sandwich variance $\mathbf{V}_{\mathrm{sand}}$~\eqref{eq:V-sand}
and the Cholesky transformation~\eqref{eq:cholesky-transform} need not
be applied uniformly to every parameter.  The design effect ratio
(DER), defined in the main text~\eqref{eq:der}, provides a
parameter-specific diagnostic for the magnitude of the design effect:
\begin{equation}\label{smc:eq-der-recall}
  \DER_{p}
  = \frac{[\mathbf{V}_{\mathrm{sand}}]_{pp}}
         {[\mathbf{H}_{\mathrm{obs}}^{-1}]_{pp}}.
\end{equation}
When $\DER_p \approx 1$, the model-based variance already
approximates the design-consistent variance and no correction is
needed; when $\DER_p$ is substantially larger than~1, the Cholesky
correction is essential for calibrated inference.


\begin{proposition}[DER classification of parameters]
\label{smc:prop-der-classification}
  The parameters of the HBB model fall into three categories with
  respect to the Cholesky correction\textup{:}
  \begin{enumerate}[label=\textup{(\roman*)},nosep]
    \item \emph{Fixed effects}
      $\boldsymbol{\xi}
      = (\balpha\t,\bbeta\t,\log\kappa)\t$\textup{:}
      full correction required.  $\DER_p$ reflects the joint effect of
      unequal weights and within-PSU clustering on the pseudo-likelihood
      information.  Empirically, $\DER_p \in [1.14,\,4.18]$
      (mean~$2.11$), consistent with the Kish $\DEFF = 3.79$.

    \item \emph{State random effects}
      $\bepsilon_s$\textup{:} partial correction.
      The posterior variance of $\bepsilon_s$ blends
      likelihood-based information (affected by the design) with
      shrinkage toward the prior mean $\bGamma_k\bv_s$ (unaffected).
      States in which the prior dominates ($\ESS_s \ll q$) have
      $\DER_s \approx 1$; states in which the data dominate
      ($\ESS_s \gg q$) may have $\DER_s > 3$.

    \item \emph{Hyperparameters}
      $(\bGamma_k, \bSigma_\varepsilon)$\textup{:} no correction.
      The sandwich estimator is not applicable to variance components
      estimated from between-state variation; the DER concept does not
      extend to these parameters. Posterior intervals for
      hyperparameters are therefore model-based summaries whose
      calibration depends on correct specification of the hierarchical
      distribution.
  \end{enumerate}
\end{proposition}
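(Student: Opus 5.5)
The plan is to exploit the exact formula for $\DER_p$ in \eqref{smc:eq-der-recall}. For each class, I would ask two things: whether parameter $p$ carries an individual-level score contribution, and how the bread $\mathbf{H}_{\mathrm{obs}}$ and meat $\mathbf{J}_{\mathrm{cluster}}$ relate on the corresponding coordinates. The classification then follows from where $p$ enters the weighted pseudo-log-likelihood \eqref{eq:pseudo-loglik}. There are three possibilities: directly through every observation's linear predictor, through a state-restricted subset of observations combined with a Gaussian prior, or only through the hierarchical prior.

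For (i), I would use \cref{smc:prop-H-blockdiag} and \cref{smc:prop-cluster-robust}. The coordinates of $\boldsymbol{\xi}$ receive score contributions $\tilde{w}_i\mathbf{s}_i$ from all $N$ providers, so both $[\mathbf{H}_{\mathrm{obs}}]_{pp}$ and $[\mathbf{J}_{\mathrm{cluster}}]_{pp}$ are nondegenerate. Under the equal-weight, correct-specification benchmark of \cref{smc:rem-equal-weights}, $\mathbf{J}\approx\mathbf{H}$, so $\DER_p\approx 1$. Under unequal weights, $\mathbf{H}$ scales with $\sum\tilde{w}_i$ while $\mathbf{J}$ scales with $\sum\tilde{w}_i^2$ plus within-PSU cross terms. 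A one-parameter heuristic then gives $\DER_p\approx \DEFF_{\mathrm{Kish}}\times(\text{cluster factor})$, with the parameter-specific deviation coming from how $\mathbf{d}_{1,i}$ or the intensive score loads on heavily weighted PSUs. Because this is generically bounded away from 1, full correction via \cref{thm:cholesky} is required. The numerical range $[1.14,4.18]$ is simply read off \cref{tab:fixed-effects}.

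For (ii), I would use a local Gaussian (conjugate) approximation for one state. Let $\mathbf{H}_s$ be the weighted likelihood precision of $\bepsilon_s$ from providers in state $s$, and let $\bSigma_\varepsilon^{-1}$ be the prior precision. The pseudo-posterior variance is then $(\mathbf{H}_s+\bSigma_\varepsilon^{-1})^{-1}$. The design-consistent analogue replaces the likelihood part by the state-level sandwich $\mathbf{H}_s^{-1}\mathbf{J}_s\mathbf{H}_s^{-1}$, propagated through the shrinkage factor $B_s^{(w)}$ of \cref{smc:rem-mcmc-vs-design}. Writing the resulting ratio as a convex interpolation between 1 (prior term) and $\DER^{\mathrm{lik}}_s$ (likelihood term), with weight governed by $\ESS_s$ relative to $q$, gives $\DER_s\to 1$ as $\ESS_s/q\to 0$ and $\DER_s\to\DER_s^{\mathrm{lik}}$ (possibly $>3$) as $\ESS_s/q\to\infty$.

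For (iii), I would argue structurally. $\bSigma_\varepsilon$ (and $\bGamma_k$ once $\bdelta$ is conditioned on) does not appear in $\log f_{\HBB}(y_i\mid\btheta)$, by \cref{smb:prop-sigma12-nonid} and its analogue for the diagonal blocks. Every individual score component $\partial\log f_{\HBB}/\partial\bSigma_\varepsilon$ therefore vanishes, so the corresponding rows of both $\mathbf{H}_{\mathrm{obs}}$ and $\mathbf{J}_{\mathrm{cluster}}$ are zero. The ratio \eqref{smc:eq-der-recall} is then $0/0$, undefined, and no sandwich exists.

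The main obstacle is step (ii). The interpolation between the two regimes is only heuristic, because the posterior of $\bepsilon_s$ is not exactly Gaussian and $\mathbf{H}_s$ is matrix-valued. I would first prove the claim in a scalar ($q=1$) normal–normal model, where $\DER_s$ is an explicit monotone function of the likelihood-to-prior precision ratio. I would then argue the matrix case componentwise via the limits, rather than seeking a sharp bound.
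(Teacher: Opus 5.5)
Your parts (i) and (ii) follow the paper's proof essentially step for step. Part (i) rests on the fixed-effect blocks of $\mathbf{H}_{\mathrm{obs}}$ and $\mathbf{J}_{\mathrm{cluster}}$ depending directly on the weights $\tilde{w}_i$ and on PSU aggregation, with the numbers read off \cref{tab:fixed-effects}. Part (ii) rests on the normal--normal conjugate split of a state effect's posterior precision into a design-affected data part and a design-free prior part.

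The real difference is in (iii). The paper argues that $\bGamma_k$ and $\bSigma_\varepsilon$ are learned only from the between-state variation of the $S=51$ state effects, which the within-state design does not touch. You argue instead that, with $\bdelta_s$ as the latent variable, these parameters enter only the prior. Their rows of $\mathbf{H}_{\mathrm{obs}}$ and $\mathbf{J}_{\mathrm{cluster}}$ therefore vanish, and \eqref{smc:eq-der-recall} is undefined.

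Your criterion is sharper, and it explains something the paper's does not. For $q=P$, $\balpha$ is also pinned down only through the hierarchical layer (Step~2 of the proof of \cref{smb:thm-identification}). The between-state argument taken alone would therefore put $\balpha$ in class (iii), whereas your criterion puts it in class (i) because it has nonzero scores given $\bdelta_s$.

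The cost is that your criterion is a property of the parameterization, not of the model. With $\bepsilon_s$ as the latent variable, $\bGamma_1$ has genuine individual scores $(z_i-q_i)\,\bx_i^{(r)}\bv_{s[i]}^{\top}$, and $\bGamma_2$ analogously. Under the non-centered parameterization that is actually sampled ($\bz_s$ latent), $\boldsymbol{\tau}$ and $\mathbf{L}_\varepsilon$ have individual scores as well. You therefore need to fix the centered $\bdelta_s$ parameterization throughout, and read class (ii) as $\bdelta_s$ shrinking toward $\bGamma_k\bv_s$. That is in fact how the statement describes it.

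In (ii), state explicitly which ratio you are calling $\DER_s$. What you compute is the design-consistent posterior variance over the pseudo-posterior variance, with $\bSigma_\varepsilon^{-1}$ in both. That is not \eqref{smc:eq-der-recall} with the prior-free bread \eqref{eq:H-obs}. Used literally, that formula cancels the prior and returns the likelihood-only ratio $[\mathbf{H}_s^{-1}\mathbf{J}_s\mathbf{H}_s^{-1}]_{pp}/[\mathbf{H}_s^{-1}]_{pp}$ in every state, so the limit $\DER_s\to1$ fails. The paper's proof makes the same switch without saying so.

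Also, the interpolation weight in your scalar model is the shrinkage factor $\sigma_\varepsilon^2/(\sigma_\varepsilon^2+\sigma_e^2/\ESS_s)$. It is governed by $\ESS_s\sigma_\varepsilon^2/\sigma_e^2$, not by $\ESS_s/q$. The ``$\ESS_s\ll q$'' threshold is the statement's wording, not something your calculation delivers.
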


\begin{proof}
  Part~(i) follows because the fixed-effect block of
  $\mathbf{H}_{\mathrm{obs}}$ and $\mathbf{J}_{\mathrm{cluster}}$
  depends directly on the observation-level weights $\tilde{w}_i$ and
  PSU aggregation, so $\DER_p$ is determined by the design effect.
  Part~(ii) follows from the posterior precision decomposition: for
  the normal--normal conjugate approximation, the posterior precision
  of $\bepsilon_s$ is the sum of the data precision (design-affected)
  and the prior precision (design-free).  When the prior precision
  dominates, the posterior variance is insensitive to the design, and
  $\DER_s \approx 1$.
  Part~(iii) holds because $\bGamma_k$ and $\bSigma_\varepsilon$ are
  identified from the between-state variation of the $S = 51$
  state-level random effects.  The survey design affects
  \emph{within-state} estimation but does not alter the between-state
  variance structure; hence the sandwich correction has no natural
  analogue for these parameters.
\end{proof}


\noindent
\textbf{Heuristic C.1} (Linear interpolation for random-effect DER).
\label{smc:heuristic-der-re}%
With shrinkage factor
$\lambda_{s,pp}
= \sigma_\varepsilon^2 / (\sigma_\varepsilon^2 + \sigma_e^2 / \ESS_s)$,
the DER for $\varepsilon_{s,p}$ is approximately
\begin{equation}\label{smc:eq-der-heuristic}
  \DER_{s,p}
  \;\approx\;
  1 + \lambda_{s,pp}\,(\DEFF_s - 1),
\end{equation}
interpolating between $\DER_{s,p} = 1$ (strong shrinkage) and
$\DER_{s,p} = \DEFF_s$ (no shrinkage).

\begin{remark}[Overestimation of the linear heuristic]
\label{smc:rem-heuristic-overestimate}
  {\upshape\textbf{(Correction T2-2.)}}
  The linear interpolation~\eqref{smc:eq-der-heuristic} is a
  \emph{heuristic}, not an exact result.  In the one-dimensional
  normal--normal conjugate model, the exact DER is
  \begin{equation}\label{smc:eq-der-exact}
    \DER_{s,p}^{\textup{exact}}
    = \frac{1 + \lambda_{s,pp}\,\DEFF_s}
           {(1 + \lambda_{s,pp})^2},
  \end{equation}
  which is always $\leq$~\eqref{smc:eq-der-heuristic}.  The
  discrepancy is largest at intermediate shrinkage
  ($\lambda_{s,pp} \approx 0.5$), where the linear heuristic can
  overestimate by up to~50\%.  For practical purposes, the heuristic
  remains useful as a conservative upper bound that quickly
  identifies parameters needing correction; the exact
  formula~\eqref{smc:eq-der-exact} can be computed alongside as a
  more accurate benchmark.
\end{remark}


\begin{proposition}[DER diagnostic thresholds]
\label{smc:prop-der-diagnostic}
  As a post-estimation diagnostic, compute $\DER_p$ for each
  parameter and classify as follows\textup{:}
  \begin{itemize}[nosep]
    \item $\DER_p \in [0.8,\;1.2]$\textup{:}
      \emph{design-insensitive}.  The naive credible interval is
      adequately calibrated; sandwich correction is optional.
    \item $\DER_p \in (1.2,\;1.5]$\textup{:}
      \emph{moderate inflation}.  The confidence-interval width
      increases by $\sqrt{1.5} \approx 22\%$; correction is
      recommended.
    \item $\DER_p > 1.5$\textup{:}
      \emph{design-sensitive}.  The sandwich-corrected variance should
      be used as the primary inferential summary.
  \end{itemize}
  In the NSECE application, all 11 fixed-effect parameters have
  $\DER_p > 1.14$, placing them in the moderate-to-sensitive range.
\end{proposition}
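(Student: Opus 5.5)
The plan is to read \cref{smc:prop-der-diagnostic} as a calibration statement and justify each band through the coverage formula of \cref{smc:cor-coverage}. The thresholds themselves are conventions, not theorems. For a scalar parameter $p$, the naive interval uses variance $[\mathbf{H}_{\mathrm{obs}}^{-1}]_{pp}$ and the sandwich interval uses $[\mathbf{V}_{\mathrm{sand}}]_{pp}$. The ratio of the naive to the sandwich standard deviation is therefore exactly $r_p = \DER_p^{-1/2}$ by \eqref{smc:eq-der-recall}. This single identity links the DER scale to both width and coverage, and I would state it first.

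Next I would derive the width claim. The Wald interval $\hat\theta_p \pm z_{\alpha/2}\sqrt{[\mathbf{V}_{\mathrm{sand}}]_{pp}}$ is wider than the naive interval by the factor $\sqrt{\DER_p}$. This factor is invariant to the choice of transformation matrix, by part~(c) of the proof of \cref{thm:cholesky}. At the upper end of the moderate band this gives $\sqrt{1.5}\approx 1.22$, i.e.\ a 22\% increase, as stated; at $\DER_p=1.2$ it gives about 10\%.

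I would then plug $r_p=\DER_p^{-1/2}$ into \cref{smc:cor-coverage}(a) to get the asymptotic naive coverage $2\Phi(z_{\alpha/2}\DER_p^{-1/2})-1$. At the 90\% level, the band endpoints $0.8$, $1.2$ and $1.5$ give roughly $93\%$, $87\%$ and $82\%$. This justifies the three regimes:
\begin{itemize}
\item $\DER_p\in[0.8,1.2]$: the coverage loss is within about 3 percentage points, comparable to the simulation MCSE of about $2$~pp, so correction is optional.
\item $\DER_p\in(1.2,1.5]$: the loss is several points, so correction is recommended.
\item $\DER_p>1.5$: the loss exceeds 8 points, so the sandwich variance should be the primary inferential summary.
\end{itemize}
The symmetric lower cutoff $0.8$ corresponds to mild conservativeness, which is case $r>1$ of \cref{smc:cor-coverage}(c).

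The final empirical sentence follows by inspecting the DER column of \cref{tab:fixed-effects}: the minimum is $1.14$ ($\beta_{\mathrm{Hisp}}$) and every other entry is at least $1.36$. The main obstacle is not mathematical. It is being honest that the cutoffs are heuristic and that \cref{smc:cor-coverage} is only asymptotic. I would flag this explicitly, citing the finite-sample residual gap documented in Finding~3 of \cref{sec:simresults}.

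There is also an inconsistency to resolve. The proposition puts all 11 parameters in the ``moderate-to-sensitive range'', yet $\DER=1.14$ falls inside the design-insensitive band $[0.8,1.2]$. I would handle this by stating that this one parameter is design-insensitive, or else by qualifying the claim to read ``all but one''.
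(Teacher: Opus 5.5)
The paper gives no argument for this proposition. The three bands are stated as conventions, and the closing sentence is meant as a read-off of the DER column of \cref{tab:fixed-effects}. Your calibration rationale via \cref{smc:cor-coverage} is therefore a genuine addition rather than a reconstruction, and its arithmetic checks out. With $r_p = \DER_p^{-1/2}$, the width factors at $\DER_p = 1.2$ and $1.5$ are $1.095$ and $1.225$. The asymptotic 90\% coverages at $0.8$, $1.2$ and $1.5$ are about $93.4\%$, $86.7\%$ and $82.1\%$. At the boundary $1.5$ the loss is therefore $7.9$ points, not strictly more than $8$. You are also right that the last sentence is false as written. $\beta_{\mathrm{Hisp}}$ has $\DER = 1.14$, which is neither $>1.14$ nor outside $[0.8,1.2]$. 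The correct statement is that ten of the eleven fixed effects are moderate-to-sensitive, while $\beta_{\mathrm{Hisp}}$ is design-insensitive.

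Your write-up needs one qualification. The identity ``naive SD $/$ sandwich SD $= \DER_p^{-1/2}$'' holds only because \cref{smc:cor-coverage} \emph{defines} the naive interval through $\mathbf{H}_{\mathrm{obs}}^{-1}$. That is, it presupposes the BvM approximation $\bSigma_{\mathrm{MCMC}} \approx \mathbf{H}_{\mathrm{obs}}^{-1}$ of \cref{smc:thm-bvm}. The paper's own diagnostics show this approximation fails badly for the hierarchical fixed effects. In \cref{smd:tab-der}, $[\bSigma_{\mathrm{MCMC}}]_{pp}$ exceeds $[\mathbf{H}_{\mathrm{obs}}^{-1}]_{pp}$ by factors of roughly $600$ to $6{,}500$ for every $\alpha_j$ and $\beta_j$. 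So a $\DER_p$ near one certifies only the data-only (conditional) interval, not the MCMC credible interval that the paper actually reports as ``naive.'' For example, $\beta_{\mathrm{Hisp}}$ is classified design-insensitive ($\DER = 1.14$), yet its naive interval $[-1.974,\,1.716]$ is grossly miscalibrated relative to the Wald interval $[-0.130,\,-0.064]$. You should do one of two things. Either state the bands explicitly for the $\mathbf{H}_{\mathrm{obs}}^{-1}$-based interval, or add the side condition that the MCMC variance matches $\mathbf{H}_{\mathrm{obs}}^{-1}$ before concluding that ``sandwich correction is optional.'' One way to check that condition is to compare $\DER_p$ with the $\DER_{\mathrm{mc}}$ column.
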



\begin{remark}[Intercept identifiability under DER correction]
\label{smc:rem-intercept-id}
  {\upshape\textbf{(Correction T2-4.)}}
  Because the policy moderator vector $\bv_s$ includes an intercept
  column (see \cref{sec:policy}), the fixed
  intercept~$\alpha_1$ and the first column of $\bGamma_k$ have
  partially overlapping roles: the state-specific intercept is
  $\alpha_1 + \gamma_{k,1,1} + \varepsilon_{k,1,s}$.  The DER
  for~$\alpha_1$ includes variance attributable to both the design
  effect on the likelihood and the partial confounding with
  $\gamma_{k,1,1}$.  Formally, the decomposition
  \begin{equation}\label{smc:eq-intercept-decomp}
    [\mathbf{V}_{\mathrm{sand}}]_{\alpha_1,\alpha_1}
    = [\mathbf{V}_{\mathrm{sand}}^{\textup{design}}]_{\alpha_1,\alpha_1}
    + [\mathbf{V}^{\textup{id}}]_{\alpha_1,\alpha_1}
  \end{equation}
  is not cleanly separable; the DER should be interpreted with caution
  for intercept parameters.  The soft identification provided by the
  tighter prior $\Norm(0, 0.5^2)$ on the intercept column of
  $\bGamma_k$ (see \cref{sec:priors}) mitigates this issue by
  attributing the cross-state mean to~$\alpha_1$ and treating
  deviations as random effects.
\end{remark}


\begin{remark}[Computational savings from DER classification]
\label{smc:rem-computational-savings}
  The three-way classification implies that the Cholesky
  transformation~\eqref{eq:cholesky-transform} need only be applied
  to the fixed-effect block ($\dim\boldsymbol{\xi} = 11$) and, for
  the largest states, the random-effect blocks.  This reduces the
  dimensionality of the Cholesky factorization from the full parameter
  dimension ($\approx\!616$) to at most $11$--$20$, yielding
  substantial computational savings.  The empirical DER values are
  reported in the main text (\cref{tab:fixed-effects}).
\end{remark}

%% file: sm_d.tex

This appendix presents extended empirical results from the NSECE~2019
application (\cref{sec:application}).  We provide the full survey weight
distribution (\cref{smd:survey-design}), formal informativeness
assessment following~\citet{Pfeffermann1993}
(\cref{smd:informativeness}), the complete set of state-level poverty
coefficients (\cref{smd:states}), the full $40$-element policy
moderation matrix $\bGamma$ (\cref{smd:gamma}), detailed design
effect diagnostics (\cref{smd:design-effects}), a choropleth map of
reversal probabilities (\cref{smd:reversal-map}), and an LKJ prior
sensitivity analysis (\cref{smd:lkj-sensitivity}).  Convergence diagnostics
and posterior predictive checks are available in the replication package
at \url{https://github.com/joonho112/hurdlebb-replication}.

\input{sm_d1_survey_design}

\input{sm_d2_informativeness}

\input{sm_d5_states}

\input{sm_d6_gamma}

\input{sm_d7_design_effects}

\input{sm_d9_reversal_map}

\input{sm_d8_lkj_sensitivity}

%% file: sm_d1_survey_design.tex

\subsection{Survey design characteristics}
\label{smd:survey-design}

The NSECE~2019~\citep{NSECEProjectTeam2022} uses a stratified cluster design with $H = 30$ strata
and $C = 415$ primary sampling units (PSUs), yielding $N = 6{,}785$
center-based providers after excluding observations with missing
covariates.  Survey weights $w_i = 1/\pi_i$ range from~$1$ to~$462$
with a mean of~$17.5$ and a coefficient of variation of~$1.66$.
\Cref{smd:fig-weights} displays the distribution of normalized
weights $\tilde{w}_i = w_i / \bar{w}$, which are strongly right-skewed:
the $99$th percentile ($\tilde{w} = 8.8$) is nearly nineteen times the
median ($\tilde{w} = 0.47$), indicating that a small number of
providers carry disproportionate influence.

The Kish design effect is $\DEFF_{\mathrm{Kish}} = 1 + \mathrm{CV}_w^2
= 3.76$, reducing the effective sample size from $N = 6{,}785$ to
$\ESS_{\mathrm{Kish}} \approx 1{,}803$.  This substantial penalty
motivates the sandwich variance correction and Cholesky calibration
described in~\cref{sec:survey} and~\cref{app:survey}.

\begin{figure}[htbp]
  \centering
  \includegraphics[width=0.85\textwidth]{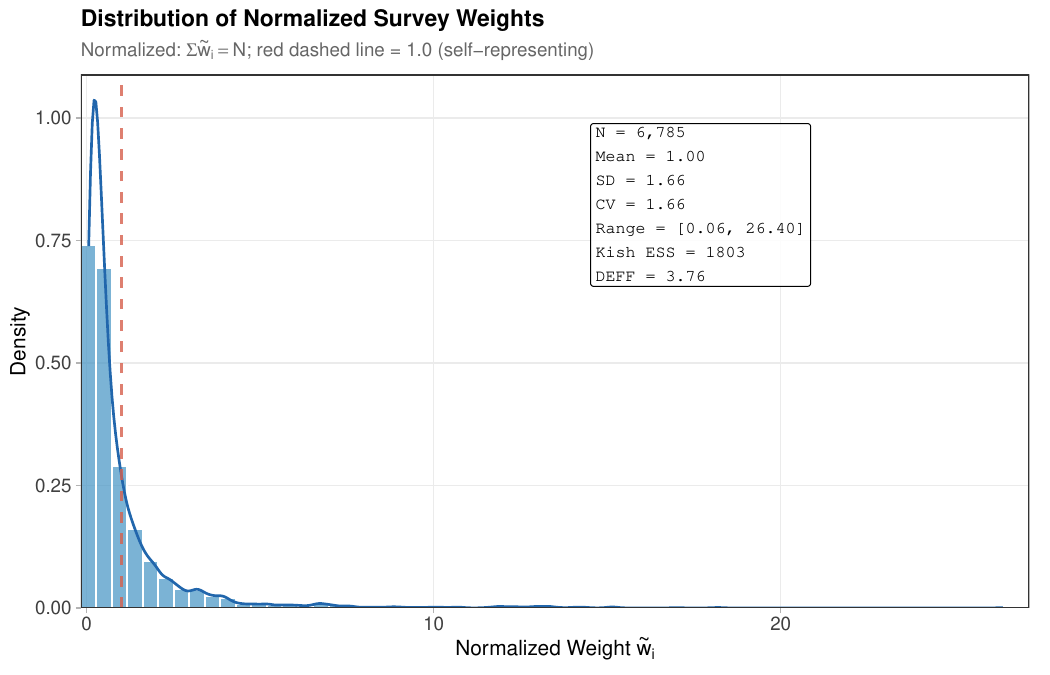}
  \caption{Distribution of normalized survey weights
    $\tilde{w}_i = w_i / \bar{w}$ for $N = 6{,}785$ center-based
    providers.  The red dashed line marks $\tilde{w} = 1.0$
    (self-representing unit).  The distribution is strongly
    right-skewed (CV~$= 1.66$), with Kish $\DEFF = 3.76$ and
    effective sample size $\ESS \approx 1{,}803$.}
  \label{smd:fig-weights}
\end{figure}

%% file: sm_d2_informativeness.tex

\subsection{Informativeness assessment}
\label{smd:informativeness}

A sampling design is \emph{informative} if the distribution of the
outcome among sampled units differs from the superpopulation
distribution, even after conditioning on covariates
\citep{Pfeffermann1993}.  Under informative sampling, unweighted
likelihood-based inference targets a sample-specific, not population,
quantity, and pseudo-likelihood weighting becomes essential for
consistent estimation~\citep{SavitskyToth2016}.  We assess
informativeness of the NSECE~2019 design through three complementary
diagnostics.

{\upshape\textbf{(Correction T3-1.)}}
The original informativeness tests conditioned on a single covariate
(poverty rate).  Since the full model includes $P = 5$ covariates per
margin, omitted variable bias may mask or inflate apparent
informativeness.  We now report both the original univariate and the
corrected full-covariate specifications.

{\upshape\textbf{(Correction T3-2.)}}
The original Hausman comparison used \texttt{lm(weights = w)} for the
weighted estimator, which provides only heteroscedasticity-robust
standard errors.  We now use \texttt{survey::svyglm()} with the full
stratified cluster design, yielding cluster-robust standard errors
that account for within-PSU correlation.

\paragraph{Test~1: Marginal weight--outcome correlations.}
The correlation between $z_i$ (serves infants) and the survey weight
$w_i$ is $\corr(z_i, w_i) = -0.041$; the correlation between
the IT share $Y_i / n_i$ (among servers) and $w_i$ is
$\corr(Y_i/n_i, w_i) = -0.003$.  Both are negligibly small,
suggesting limited \emph{unconditional} informativeness.  However,
marginal correlations can mask conditional dependence when the design
oversamples on covariates correlated with the outcome, motivating the
regression-based test below.

\paragraph{Test~2: Regression-based informativeness test.}
Following~\citet{Pfeffermann1993}, we test $H_0\colon \gamma = 0$ in
the augmented model
\begin{equation}\label{smd:eq-pfeffermann}
  g(y_i) = \bx_i^\t \bbeta + \gamma \log(w_i) + \varepsilon_i,
\end{equation}
where $g(\cdot)$ is the logit link for the extensive margin and the
identity for the intensive margin.  If the design is non-informative,
$\log(w_i)$ carries no additional predictive power for $y_i$ beyond
$\bx_i$, so $\gamma = 0$.

\Cref{smd:tab-pfeffermann} reports the estimated $\hat{\gamma}$ and
its $p$-value under two covariate specifications: (i)~univariate, with
only the standardized poverty rate; and (ii)~full, with all four
community-level predictors (poverty, urbanicity, percent Black, percent
Hispanic).

\begin{table}[htbp]
  \centering
  \caption{Regression-based informativeness test~\citep{Pfeffermann1993}.
    The augmented model \eqref{smd:eq-pfeffermann} adds $\log(w_i)$ to the
    covariate vector; $\hat{\gamma}$ is the coefficient on $\log(w_i)$
    with its two-sided $p$-value.  {\upshape\textbf{(Correction T3-1:)}}
    the full-covariate specification reverses the extensive-margin
    finding from non-significant to significant.}
  \label{smd:tab-pfeffermann}
  \smallskip
  \begin{tabular}{llrr}
    \toprule
    Specification & Margin & $\hat{\gamma}$ & $p$-value \\
    \midrule
    Univariate (poverty only)
      & Extensive  & $0.020$  & $0.369$   \\
      & Intensive  & $0.009$  & $0.002$   \\[4pt]
    Full covariates (T3-1)
      & Extensive  & $0.086$  & $<0.001$  \\
      & Intensive  & $0.006$  & $0.050$   \\
    \bottomrule
  \end{tabular}
\end{table}

Two features of~\cref{smd:tab-pfeffermann} merit comment.  First, under
the univariate specification, the extensive margin appears
non-informative ($p = 0.369$) while the intensive margin is clearly
informative ($p = 0.002$).  Second, adding the remaining three
covariates \emph{reverses} the extensive-margin verdict: $\hat{\gamma}$
increases fourfold (from $0.020$ to $0.086$) and becomes highly
significant ($p < 0.001$).  This reversal arises because the univariate
model confounds the direct weight--outcome relationship with indirect
pathways through the omitted covariates (urbanicity, racial
composition).  When these confounders are included, the residual
association between $\log(w_i)$ and participation becomes detectable.
The intensive margin remains informative under both specifications,
though the coefficient attenuates slightly (from $0.009$ to $0.006$)
as the full model absorbs some of the weight--outcome association.

The finding that both margins are informative under the full
specification reinforces the case for pseudo-likelihood weighting in the
Bayesian hierarchical model and for the sandwich variance correction
developed in~\cref{app:survey}.

\paragraph{Test~3: Hausman-type comparison.}
As a complementary diagnostic, we compare unweighted and weighted
coefficient estimates for each margin.  Under non-informative sampling,
both estimators are consistent for the same population parameters, but
the weighted estimator is less efficient; a large discrepancy signals
that the design distorts the effective covariate--outcome relationship.

{\upshape\textbf{(Correction T3-2:)}} the weighted estimator uses
\texttt{survey::svyglm()} with the full stratified cluster design ($H =
30$ strata, $C = 415$ PSUs), which produces cluster-robust standard
errors.  The Hausman test statistic for each parameter is
\begin{equation}\label{smd:eq-hausman}
  Z_k = \frac{\hat{\beta}_k^{\mathrm{WT}} -
    \hat{\beta}_k^{\mathrm{UW}}}
  {\sqrt{\widehat{\Var}(\hat{\beta}_k^{\mathrm{WT}}) -
    \widehat{\Var}(\hat{\beta}_k^{\mathrm{UW}})}},
\end{equation}
which is asymptotically standard normal under $H_0$ (no
informativeness); the variance difference in the denominator is valid
because the weighted estimator has weakly larger variance under
non-informativeness.

\begin{table}[htbp]
  \centering
  \caption{Hausman-type comparison of unweighted and
    design-weighted coefficient estimates.
    {\upshape\textbf{(Correction T3-2:)}} the weighted estimator uses
    \texttt{survey::svyglm} with cluster-robust standard errors.
    $Z$ is the per-parameter Hausman statistic
    \eqref{smd:eq-hausman}.  Asterisks: ${*}$ denotes $p < 0.10$,
    ${**}$ denotes $p < 0.05$.}
  \label{smd:tab-hausman}
  \smallskip
  \begin{tabular}{llrrrrr}
    \toprule
    Margin & Parameter
      & $\hat{\beta}^{\mathrm{UW}}$ & $\hat{\beta}^{\mathrm{WT}}$
      & $Z$ & $p$-value & \\
    \midrule
    \multirow{5}{*}{Extensive}
      & Intercept & $0.614$  & $0.662$  & $0.82$  & $0.41$ & \\
      & Poverty   & $-0.158$ & $-0.199$ & $-0.64$ & $0.52$ & \\
      & Urban     & $0.176$  & $0.234$  & $2.02$  & $0.044$ & $^{**}$ \\
      & Black     & $-0.044$ & $0.086$  & $1.89$  & $0.059$ & $^{*}$ \\
      & Hispanic  & $-0.043$ & $0.059$  & $1.51$  & $0.13$ & \\[4pt]
    Intensive
      & \multicolumn{5}{l}{\emph{all $p > 0.24$; none significant}} & \\
    \bottomrule
  \end{tabular}
\end{table}

\Cref{smd:tab-hausman} reveals a nuanced pattern.  On the extensive
margin, the poverty coefficient itself does not differ significantly
between estimators ($Z = -0.64$, $p = 0.52$), but the
\emph{urbanicity} coefficient shows a significant shift
($\hat{\beta}^{\mathrm{UW}} = 0.176$ versus
$\hat{\beta}^{\mathrm{WT}} = 0.234$; $Z = 2.02$, $p = 0.044$), and
the Black composition coefficient is marginally significant ($Z = 1.89$,
$p = 0.059$).  This indicates that the NSECE oversampling of
high-need areas primarily distorts the estimated effect of urbanicity
on IT participation---consistent with the fact that the sampling frame
explicitly targets urban and high-poverty PSUs.  On the intensive
margin, no individual parameter shows a significant Hausman
discrepancy (all $p > 0.24$), suggesting that the informativeness
detected by Test~2 is diffuse rather than concentrated on a single
covariate.

\paragraph{Summary.}
The three diagnostics paint a consistent picture.  Marginal correlations
(Test~1) are negligible, but the regression-based test (Test~2) reveals
statistically significant informativeness on \emph{both} margins once
the full covariate vector is conditioned upon---a finding that was
masked in the original univariate specification.  The Hausman comparison
(Test~3) localises the extensive-margin informativeness to the
urbanicity and racial composition covariates rather than poverty itself.
Together, these results justify two design choices in the main
analysis (\cref{sec:application}): (i)~incorporating survey weights
via pseudo-likelihood, and (ii)~applying the sandwich variance
correction (\cref{sec:survey}) and the Cholesky affine calibration
(\cref{app:survey}) to ensure that posterior credible intervals
reflect the true design-adjusted uncertainty.

%% file: sm_d5_states.tex

\subsection{State-level poverty coefficients}
\label{smd:states}

This subsection summarizes the state-level poverty coefficients from
model~M2 (block-diagonal SVC without policy moderators) for all
$S = 51$ states.  The extensive-margin coefficient
$\alpha_{\mathrm{pov},s}$ and intensive-margin coefficient
$\beta_{\mathrm{pov},s}$ are the state-specific effects of a
one-standard-deviation increase in community poverty on the log-odds of
IT participation and the logit of IT intensity, respectively.  The
cross-margin scatter plot in \cref{fig:cross-margin} visualizes the
joint distribution of these estimates.

Of the 51 states, 48 (94\%) have posterior-mean sign patterns consistent
with the classic reversal
($\hat{\alpha}_{\mathrm{pov},s} < 0$ and
$\hat{\beta}_{\mathrm{pov},s} > 0$); this near-universal pattern
reflects hierarchical shrinkage pulling small-state estimates toward
the negative population-average extensive-margin coefficient.
Among these, 15 (29\%) exhibit posterior reversal probabilities
exceeding 0.80, indicating strong evidence beyond the point-estimate
sign, and an additional 21 states have reversal probabilities between
0.60 and 0.80.

\paragraph{States with strongest reversal evidence.}
The five large-sample states with the highest reversal probabilities are
Texas ($N = 578$, $\Pr = 0.989$),
Illinois ($N = 471$, $\Pr = 0.970$),
California ($N = 1{,}110$, $\Pr = 0.937$),
Louisiana ($N = 77$, $\Pr = 0.923$), and
New Jersey ($N = 201$, $\Pr = 0.908$).
In these states, the extensive-margin coefficients
($\alpha_{\mathrm{pov},s} \in [-0.34, -0.13]$) are firmly negative
with 90\% credible intervals excluding zero, while the
intensive-margin coefficients
($\beta_{\mathrm{pov},s} \in [0.07, 0.08]$) are positive with
intervals barely including zero.

\paragraph{Notable exception: New York.}
New York ($N = 558$) stands out as the most prominent exception, with a
\emph{positive} extensive-margin coefficient
($\alpha_{\mathrm{pov}} = 0.101$; 90\% CI $[-0.051, 0.261]$,
$\Pr(\text{reversal}) = 0.143$), suggesting that higher poverty is
associated with \emph{greater} likelihood of serving infants and
toddlers---possibly reflecting New York's substantial public investment
in early childhood programs.  Arizona ($\Pr = 0.40$), Minnesota
($\Pr = 0.44$), and North Dakota ($\Pr = 0.48$) also show weak or
absent reversal patterns.

\paragraph{Small-state shrinkage.}
Small states (e.g., Maine, $N = 18$; Mississippi, $N = 17$; Montana,
$N = 22$) exhibit wider credible intervals and reversal probabilities
closer to the national average ($\approx 0.73$), reflecting the
expected shrinkage toward the hierarchical mean documented
in~\cref{app:survey}.  The full 51-state coefficient table is available
in the replication package at
\url{https://github.com/joonho112/hurdlebb-replication}.

%% file: sm_d6_gamma.tex

\subsection{Full policy moderation parameters}
\label{smd:gamma}

\Cref{smd:tab-gamma} presents the complete set of $40$ elements of the
cross-level moderation matrices $\bGamma_1$ (extensive margin) and
$\bGamma_2$ (intensive margin) from the weighted M3b model.  The main
text (\cref{sec:policyresults}) summarizes three key patterns; here we
provide the full matrices for all $5$~covariates $\times$
$4$~policy dimensions $\times$ $2$~margins.

The four policy dimensions are: (i)~market rate percentile
(MR\_pctile\_std, standardized), measuring how generous each state's
reimbursement rate is relative to provider market rates;
(ii)~TieredReim, an indicator for states with tiered reimbursement
quality systems; (iii)~ITaddon, an indicator for states offering
additional subsidies for infant/toddler care; and (iv)~the intercept
column, capturing residual cross-state variation unrelated to the
three measured policies (the intercept column of $\bGamma_k$
is softly identified by its prior; see \cref{sec:priors}).

Two patterns merit attention beyond those highlighted in the main text.
First, the IT~addon variable shows the most consistent effects across
both margins: it is positively associated with the extensive-margin
intercept ($\gamma = 0.728$, $\Pr(>0) = 0.996$) and the
extensive-margin Hispanic coefficient ($\gamma = 0.750$,
$\Pr(>0) = 0.999$), while being strongly negatively associated with
the extensive-margin poverty coefficient ($\gamma = -0.488$,
$\Pr(>0) = 0.018$).  On the intensive margin, the IT~addon is
associated with a notably higher Black coefficient ($\gamma = 0.256$,
$\Pr(>0) = 1.000$).

Second, the intercept columns for both margins have wide
posteriors (SD~$\approx 0.93$), confirming the soft
identification: the hierarchical prior
($\Norm(0, 0.5^2)$ for the intercept column, tighter than the
$\Norm(0, 1^2)$ for policy columns) constrains these parameters
without driving them to zero.

\begin{table}[htbp]
\centering
\scriptsize
\begin{adjustbox}{max width=\textwidth}
\begin{tabular}{@{}lll rrrr r@{}}
  \toprule
  Margin & Covariate & Policy & Mean & SD & Q025 & Q975 & $\Pr({>}0)$ \\
  \midrule
  \multicolumn{8}{@{}l}{\emph{Extensive margin} ($\bGamma_1$)} \\[2pt]
  ext & intercept & intercept & 0.127 & 0.937 & $-$1.684 & 1.960 & 0.558 \\
  ext & intercept & MR\_pctile & 0.018 & 0.142 & $-$0.268 & 0.297 & 0.543 \\
  ext & intercept & TieredReim & $-$0.413 & 0.359 & $-$1.121 & 0.278 & 0.128 \\
  ext & intercept & ITaddon & 0.728 & 0.285 & 0.182 & 1.297 & 0.996 \\[2pt]
  ext & poverty & intercept & $-$0.044 & 0.933 & $-$1.845 & 1.789 & 0.483 \\
  ext & poverty & MR\_pctile & 0.064 & 0.107 & $-$0.152 & 0.271 & 0.732 \\
  ext & poverty & TieredReim & 0.495 & 0.272 & $-$0.064 & 1.021 & 0.962 \\
  ext & poverty & ITaddon & $-$0.488 & 0.225 & $-$0.917 & $-$0.035 & 0.018 \\[2pt]
  ext & urban & intercept & 0.073 & 0.945 & $-$1.766 & 1.886 & 0.531 \\
  ext & urban & MR\_pctile & 0.011 & 0.062 & $-$0.112 & 0.129 & 0.574 \\
  ext & urban & TieredReim & $-$0.248 & 0.175 & $-$0.601 & 0.094 & 0.076 \\
  ext & urban & ITaddon & $-$0.205 & 0.146 & $-$0.488 & 0.087 & 0.072 \\[2pt]
  ext & black & intercept & 0.085 & 0.929 & $-$1.740 & 1.927 & 0.530 \\
  ext & black & MR\_pctile & $-$0.283 & 0.144 & $-$0.565 & 0.010 & 0.026 \\
  ext & black & TieredReim & $-$0.652 & 0.339 & $-$1.346 & 0.014 & 0.029 \\
  ext & black & ITaddon & 0.365 & 0.276 & $-$0.193 & 0.890 & 0.910 \\[2pt]
  ext & hispanic & intercept & 0.012 & 0.925 & $-$1.799 & 1.850 & 0.509 \\
  ext & hispanic & MR\_pctile & 0.072 & 0.134 & $-$0.198 & 0.331 & 0.712 \\
  ext & hispanic & TieredReim & $-$0.351 & 0.350 & $-$1.052 & 0.318 & 0.157 \\
  ext & hispanic & ITaddon & 0.750 & 0.259 & 0.260 & 1.290 & 0.999 \\
  \midrule
  \multicolumn{8}{@{}l}{\emph{Intensive margin} ($\bGamma_2$)} \\[2pt]
  int & intercept & intercept & $-$0.074 & 0.922 & $-$1.899 & 1.735 & 0.468 \\
  int & intercept & MR\_pctile & $-$0.024 & 0.039 & $-$0.104 & 0.052 & 0.265 \\
  int & intercept & TieredReim & 0.271 & 0.120 & 0.026 & 0.497 & 0.985 \\
  int & intercept & ITaddon & 0.001 & 0.084 & $-$0.158 & 0.173 & 0.488 \\[2pt]
  int & poverty & intercept & 0.019 & 0.935 & $-$1.783 & 1.831 & 0.505 \\
  int & poverty & MR\_pctile & 0.028 & 0.026 & $-$0.024 & 0.079 & 0.865 \\
  int & poverty & TieredReim & $-$0.064 & 0.064 & $-$0.193 & 0.060 & 0.156 \\
  int & poverty & ITaddon & $-$0.087 & 0.051 & $-$0.187 & 0.014 & 0.041 \\[2pt]
  int & urban & intercept & $-$0.015 & 0.932 & $-$1.821 & 1.783 & 0.497 \\
  int & urban & MR\_pctile & 0.027 & 0.017 & $-$0.007 & 0.060 & 0.941 \\
  int & urban & TieredReim & 0.015 & 0.054 & $-$0.090 & 0.122 & 0.612 \\
  int & urban & ITaddon & 0.039 & 0.040 & $-$0.035 & 0.120 & 0.837 \\[2pt]
  int & black & intercept & 0.029 & 0.912 & $-$1.731 & 1.848 & 0.503 \\
  int & black & MR\_pctile & 0.015 & 0.036 & $-$0.056 & 0.082 & 0.662 \\
  int & black & TieredReim & 0.047 & 0.088 & $-$0.126 & 0.224 & 0.706 \\
  int & black & ITaddon & 0.256 & 0.072 & 0.117 & 0.397 & 1.000 \\[2pt]
  int & hispanic & intercept & $-$0.045 & 0.936 & $-$1.853 & 1.811 & 0.482 \\
  int & hispanic & MR\_pctile & $-$0.115 & 0.054 & $-$0.224 & $-$0.010 & 0.016 \\
  int & hispanic & TieredReim & 0.227 & 0.164 & $-$0.088 & 0.554 & 0.920 \\
  int & hispanic & ITaddon & $-$0.135 & 0.108 & $-$0.349 & 0.077 & 0.104 \\
  \bottomrule
\end{tabular}
\end{adjustbox}
\caption{Complete policy moderation parameters (weighted M3b, all
  $40$~elements of $\bGamma_1$ and $\bGamma_2$).  Each row reports the
  posterior mean, standard deviation, 95\% credible interval, and
  posterior probability of a positive effect.  The main text
  (\cref{sec:policyresults}) summarizes three key
  patterns; bold findings correspond to $\Pr({>}0) > 0.95$ or
  $\Pr({>}0) < 0.05$.}
\label{smd:tab-gamma}
\end{table}

%% file: sm_d7_design_effects.tex

\subsection{Design effect diagnostics}
\label{smd:design-effects}

The design effect ratio (DER), defined in \eqref{eq:der} and
classified in~\cref{smc:prop-der-classification}, quantifies the
factor by which the survey design inflates each parameter's variance
beyond the model-based (SRS) variance.  \Cref{smd:tab-der} presents
the full variance decomposition for all 11 fixed-effect parameters,
and~\cref{smd:fig-sandwich} provides a three-panel diagnostic
visualization.

\paragraph{Variance decomposition.}
The three variance quantities in~\cref{smd:tab-der} deserve careful
interpretation.
$[\mathbf{H}_{\mathrm{obs}}^{-1}]_{pp}$ is the model-based variance
obtained from the observed information matrix---the appropriate
benchmark under simple random sampling.
$[\mathbf{V}_{\mathrm{sand}}]_{pp}$ is the sandwich-corrected variance
\citep{Binder1983} that accounts for the complex survey design, and constitutes the
primary inferential summary.
$[\bSigma_{\mathrm{MCMC}}]_{pp}$ is the raw MCMC posterior variance,
which is substantially inflated (by factors of 600--6{,}500) relative
to $\mathbf{H}_{\mathrm{obs}}^{-1}$ because the hierarchical prior
dominates the pseudo-likelihood contribution for fixed effects.

The DER ranges from 1.14 ($\beta_{\mathrm{Hisp}}$) to 4.18
($\alpha_{\mathrm{Urban}}$), with a mean of 2.11.  This is broadly
consistent with the Kish $\DEFF = 3.76$: the DER is systematically
smaller because the state-varying coefficients partially absorb
within-state clustering (\cref{smc:prop-svc-absorption}).  The
corresponding confidence interval inflation factors are
$\sqrt{\DER} \in [1.07, 2.04]$, so the widest Wald intervals are
approximately twice those of a hypothetical SRS design.

\paragraph{Diagnostic figure.}
\Cref{smd:fig-sandwich} decomposes the sandwich correction into three
components.  Panel~(a) shows the hierarchical variance ratio---the ratio of
naive (MCMC-based) to Wald (sandwich-based) confidence interval
widths---on a log scale; all parameters exceed a ratio of~10, and
several exceed~100, confirming that the raw MCMC posterior is
uninformative about the design-consistent uncertainty.  Panel~(b)
compares the DER (based on $\mathbf{H}_{\mathrm{obs}}^{-1}$) with the
``DER\textsubscript{MCMC}'' (based on $\bSigma_{\mathrm{MCMC}}$); the
latter is near zero for all parameters except $\log\kappa$, further
illustrating hierarchical inflation.  Panel~(c) reports the diagonal elements
of the Cholesky scaling matrix $\mathbf{A}$ (\cref{thm:cholesky})
for the fixed-effects block.
For the $\alpha$ and $\beta$ coefficients, the diagonal entries are
all below $0.06$, indicating substantial contraction of the MCMC draws, a direct consequence of the
hierarchically inflated posterior being much wider than the design-consistent
target.

\begin{table}[htbp]
\centering
\footnotesize
\begin{adjustbox}{max width=\textwidth}
\begin{tabular}{@{}l rrr rr rrr@{}}
  \toprule
  Parameter &
  $H_{\mathrm{obs}}^{-1}$ &
  $V_{\mathrm{sand}}$ &
  $\Sigma_{\mathrm{MCMC}}$ &
  DER &
  DER$_{\text{mc}}$ &
  SD($H$) &
  SD(Sand) &
  SD(MCMC) \\
  \midrule
  $\alpha_{\mathrm{Int}}$ & 0.00097 & 0.00198 & 0.946 & 2.04 & 0.002 & 0.031 & 0.044 & 0.972 \\
  $\alpha_{\mathrm{Pov}}$ & 0.00114 & 0.00275 & 0.900 & 2.41 & 0.003 & 0.034 & 0.052 & 0.949 \\
  $\alpha_{\mathrm{Urb}}$ & 0.00028 & 0.00118 & 0.907 & 4.18 & 0.001 & 0.017 & 0.034 & 0.952 \\
  $\alpha_{\mathrm{Blk}}$ & 0.00127 & 0.00195 & 0.947 & 1.54 & 0.002 & 0.036 & 0.044 & 0.973 \\
  $\alpha_{\mathrm{Hisp}}$ & 0.00146 & 0.00265 & 0.920 & 1.82 & 0.003 & 0.038 & 0.051 & 0.959 \\
  $\beta_{\mathrm{Int}}$ & 0.00017 & 0.00023 & 0.860 & 1.36 & 0.000 & 0.013 & 0.015 & 0.927 \\
  $\beta_{\mathrm{Pov}}$ & 0.00022 & 0.00036 & 0.875 & 1.61 & 0.000 & 0.015 & 0.019 & 0.936 \\
  $\beta_{\mathrm{Urb}}$ & 0.00013 & 0.00042 & 0.873 & 3.18 & 0.000 & 0.012 & 0.021 & 0.934 \\
  $\beta_{\mathrm{Blk}}$ & 0.00024 & 0.00033 & 0.845 & 1.38 & 0.000 & 0.016 & 0.018 & 0.919 \\
  $\beta_{\mathrm{Hisp}}$ & 0.00025 & 0.00028 & 0.892 & 1.14 & 0.000 & 0.016 & 0.017 & 0.945 \\
  $\log\kappa$ & 0.00033 & 0.00086 & 0.001 & 2.58 & 1.119 & 0.018 & 0.029 & 0.028 \\
  \bottomrule
\end{tabular}
\end{adjustbox}
\caption{Design effect ratio diagnostics for all 11 fixed-effect
  parameters.  Columns: $H_{\mathrm{obs}}^{-1}$ = model-based variance
  (observed information inverse), $V_{\mathrm{sand}}$ = sandwich
  variance, $\Sigma_{\mathrm{MCMC}}$ = raw MCMC posterior variance,
  DER = $V_{\mathrm{sand}} / H_{\mathrm{obs}}^{-1}$,
  DER$_{\text{mc}}$ = $V_{\mathrm{sand}} / \Sigma_{\mathrm{MCMC}}$,
  and the corresponding standard deviations.  The DER range 1.14--4.18
  (mean 2.11) is consistent with the Kish $\DEFF = 3.76$.}
\label{smd:tab-der}
\end{table}

\begin{figure}[htbp]
  \centering
  \includegraphics[width=\textwidth,height=0.75\textheight,keepaspectratio]{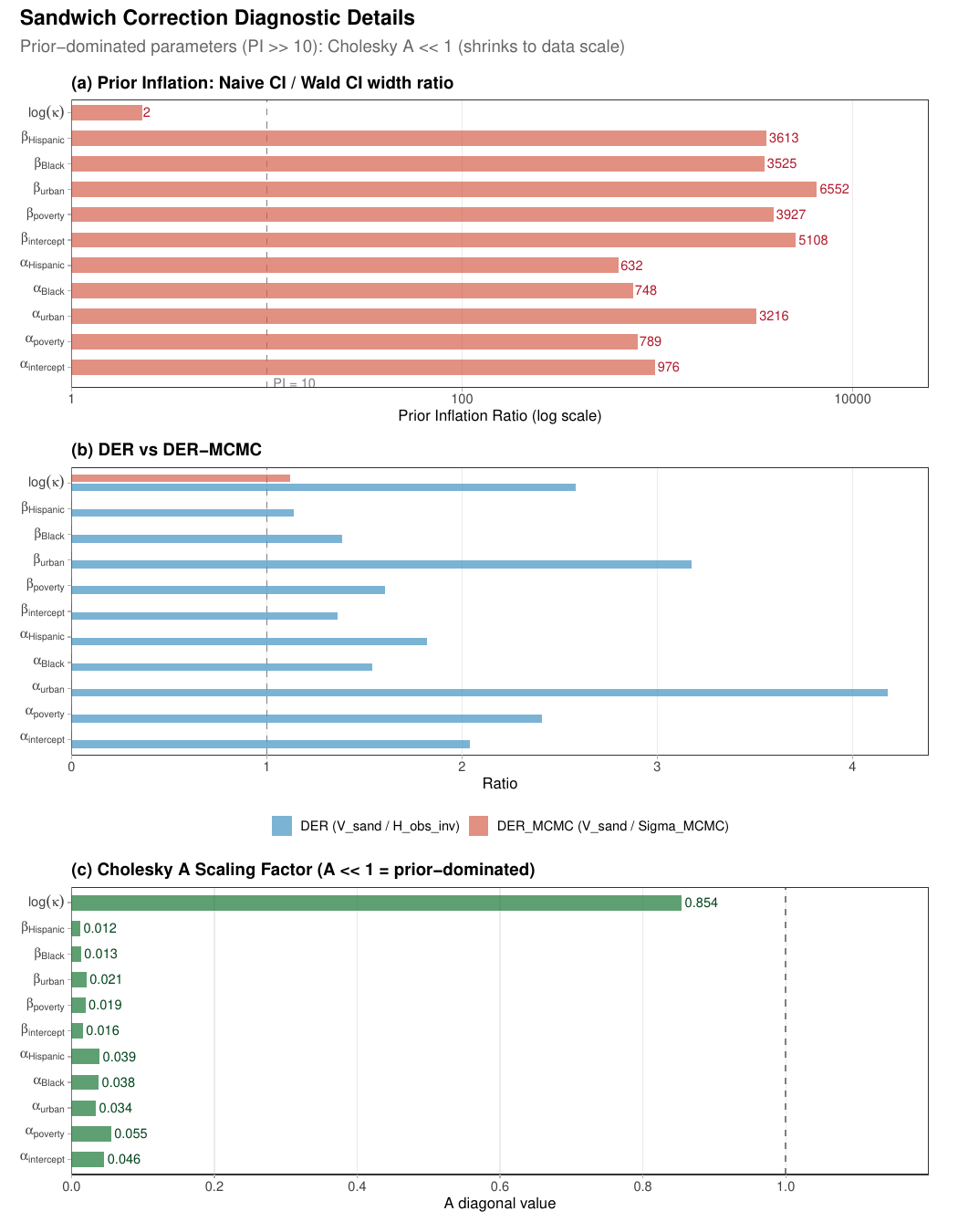}
  \caption{Sandwich correction diagnostic details.
    (a)~Hierarchical variance ratio (log scale): ratio of naive (MCMC-based)
    to Wald (sandwich-based) confidence interval widths.  Values above
    the dashed line at~10 indicate strong hierarchical inflation.
    (b)~DER comparison: design effect ratio computed using
    $\mathbf{H}_{\mathrm{obs}}^{-1}$ (primary) versus
    $\bSigma_{\mathrm{MCMC}}$ (diagnostic only).
    (c)~Cholesky $\mathbf{A}$ diagonal: scaling factors from the
    affine transformation (\cref{thm:cholesky}).  Values $\ll 1$
    indicate that the transformation substantially contracts the MCMC
    draws toward the design-consistent target.}
  \label{smd:fig-sandwich}
\end{figure}

%% file: sm_d9_reversal_map.tex

\subsection{Reversal probability map}
\label{smd:reversal-map}

\Cref{smd:fig-reversal-map} maps the posterior probability of the poverty
reversal for each state:
$\Pr(\tilde{\alpha}_{\mathrm{pov},s} < 0 \text{ and }
\tilde{\beta}_{\mathrm{pov},s} > 0 \mid \text{data})$, computed from the
unweighted M2 model (block-diagonal SVC).  States with high reversal
probability ($\Pr > 0.9$) are concentrated in the South, Southwest, and
large urban states; states with low probability ($\Pr < 0.1$) are
primarily in the upper Midwest and Mountain West.  The remaining states
($0.1 \le \Pr \le 0.9$) tend to be smaller states where the posterior
intervals are wide due to limited sample sizes.

\begin{figure}[htbp]
  \centering
  \includegraphics[width=0.85\textwidth]{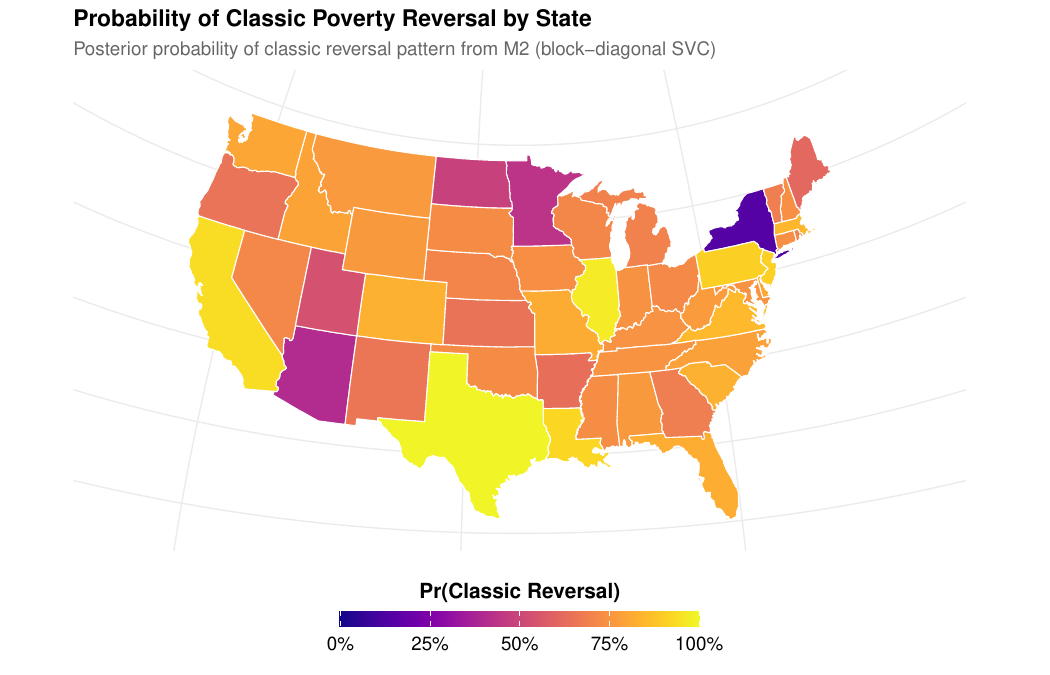}
  \caption{Choropleth map of the posterior reversal probability from the
    unweighted M2 model (block-diagonal SVC):
    $\Pr(\tilde{\alpha}_{\mathrm{pov},s} < 0 \;\cap\;
    \tilde{\beta}_{\mathrm{pov},s} > 0 \mid \text{data})$.  Darker shading
    indicates higher reversal probability.  The population-average
    probability of reversal is 1.000.}\label{smd:fig-reversal-map}
\end{figure}

%% file: sm_d8_lkj_sensitivity.tex

\subsection{LKJ prior sensitivity}
\label{smd:lkj-sensitivity}

To assess sensitivity to the LKJ prior on the $10 \times 10$
random-effect correlation matrix~$\Omega$, we refit M3b-W under
seven concentration parameters
$\eta \in \{1.0,\, 1.5,\, 2.0,\, 3.0,\, 4.0,\, 6.0,\, 8.0\}$.
The baseline specification uses $\eta = 2$, which places a weakly
regularizing distribution favoring moderate correlations; $\eta = 1$
yields a uniform distribution over correlation matrices, while higher
values concentrate mass near the identity.

\cref{tab:lkj-sensitivity} reports posterior summaries for the
cross-margin correlation $\rho_{\mathrm{cross}} = \Omega_{1,6}$,
the poverty coefficients ($\alpha_{\mathrm{pov}}$, $\beta_{\mathrm{pov}}$),
and the intercept random-effect standard deviations
($\tau_1^{\mathrm{ext}}$, $\tau_1^{\mathrm{int}}$).
Across the sevenfold range of~$\eta$, $\hat{\rho}_{\mathrm{cross}}$
varies between $0.097$ and $0.191$, remaining near zero throughout
and well within the posterior uncertainty of any single specification.
The 95\% credible interval includes zero in every case.
The poverty coefficients are equally stable:
$\alpha_{\mathrm{pov}} \in [-0.359,\, -0.310]$ and
$\beta_{\mathrm{pov}} \in [0.074,\, 0.132]$,
confirming that the sign reversal is robust to the LKJ prior.

LOO-CV model fit is nearly identical across specifications
(ELPD spread ${}<5$), with all MCMC diagnostics satisfactory
(the single marginal R-hat exceedance at $\eta = 6$ is $1.011$,
well within practical tolerance).

\cref{smd:fig-lkj-density} overlays the posterior densities of
$\rho_{\mathrm{cross}}$ across all seven specifications.  The
densities are nearly superimposed, confirming that the data
dominate the prior for this parameter despite the moderate
cluster size ($S = 51$).

These results confirm that the near-zero cross-margin correlation
and the poverty reversal pattern are data-driven findings,
not artifacts of the LKJ(2) prior specification.

\input{Figures/ST_lkj_sensitivity}

\begin{figure}[htbp]
  \centering
  \includegraphics[width=0.85\textwidth]{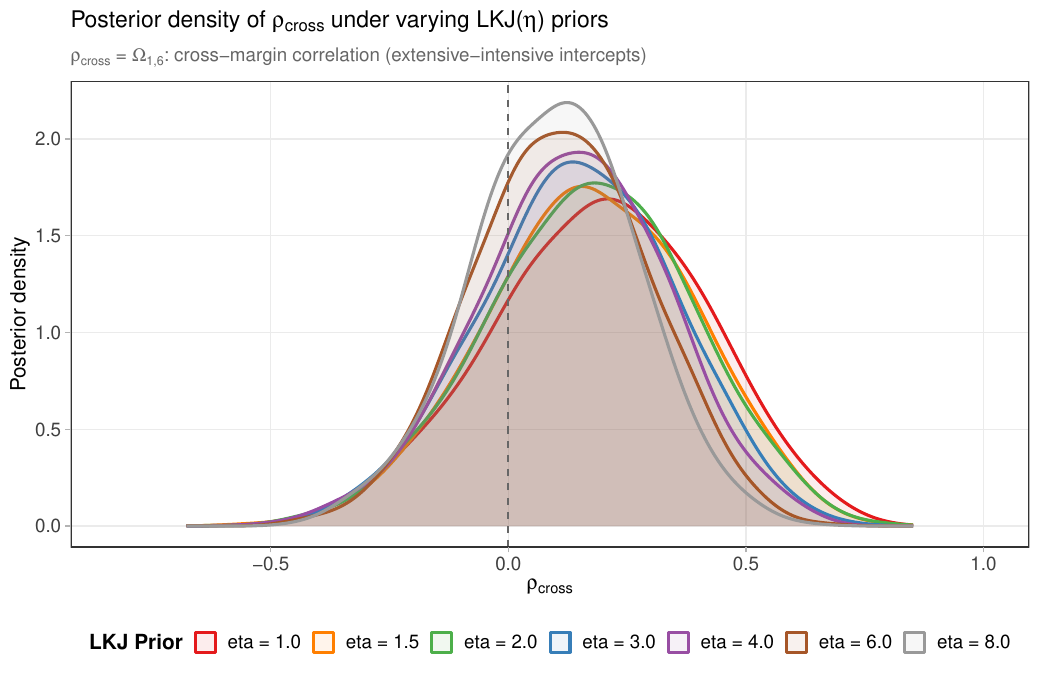}
  \caption{Posterior density of the cross-margin correlation
    $\rho_{\mathrm{cross}} = \Omega_{1,6}$ under seven LKJ
    concentration parameters $\eta \in \{1, 1.5, 2, 3, 4, 6, 8\}$.
    The densities are nearly superimposed, indicating that the
    posterior is data-dominated rather than prior-driven.
    The vertical dashed line marks zero.}
  \label{smd:fig-lkj-density}
\end{figure}

%% file: Figures/ST_lkj_sensitivity.tex
\begin{table}[t]
\centering
\caption{LKJ prior sensitivity analysis. Each row reports posterior summaries from M3b-W fitted with 
  $\mathrm{LKJ}(\eta)$ prior on the $10 \times 10$ random-effect correlation matrix $\Omega$.
  The baseline specification is $\eta = 2$ (uniform over correlations). 
  $\rho_{\mathrm{cross}} = \Omega_{1,6}$ is the cross-margin correlation between extensive and intensive intercepts.
  $\alpha_{\mathrm{pov}}$ and $\beta_{\mathrm{pov}}$ are the poverty coefficients for the extensive and intensive margins.
  $\tau_1^{\mathrm{ext}}$ and $\tau_1^{\mathrm{int}}$ are the intercept random-effect standard deviations.
  All covariates are standardized.}
\label{tab:lkj-sensitivity}
\smallskip
\small
\begin{adjustbox}{max width=\textwidth}
\begin{tabular}{@{}r r@{\hspace{4pt}}l rr rr r c@{}}
\toprule
$\eta$ & $\hat{\rho}_{\mathrm{cross}}$ & [95\% CI] & $\alpha_{\mathrm{pov}}$ & $\beta_{\mathrm{pov}}$ & $\tau_1^{\mathrm{ext}}$ & $\tau_1^{\mathrm{int}}$ & $\widehat{\mathrm{elpd}}_{\mathrm{loo}}$ & Diag.~\\
\midrule
$1.0$ & $0.191$ & $[-0.261,\; 0.604]$ & $-0.322$ & $0.088$ & $0.693$ & $0.149$ & $-20913.0$ & \checkmark \\
$1.5$ & $0.172$ & $[-0.261,\; 0.572]$ & $-0.312$ & $0.074$ & $0.683$ & $0.146$ & $-20915.0$ & \checkmark \\
$2.0^{\star}$ & $0.170$ & $[-0.259,\; 0.573]$ & $-0.320$ & $0.098$ & $0.682$ & $0.144$ & $-20915.7$ & \checkmark \\
$3.0$ & $0.145$ & $[-0.269,\; 0.527]$ & $-0.310$ & $0.132$ & $0.666$ & $0.143$ & $-20917.2$ & \checkmark \\
$4.0$ & $0.134$ & $[-0.274,\; 0.516]$ & $-0.330$ & $0.093$ & $0.653$ & $0.143$ & $-20916.2$ & \checkmark \\
$6.0$ & $0.111$ & $[-0.247,\; 0.460]$ & $-0.359$ & $0.113$ & $0.646$ & $0.142$ & $-20914.8$ & \textbf{!} \\
$8.0$ & $0.097$ & $[-0.256,\; 0.429]$ & $-0.340$ & $0.110$ & $0.639$ & $0.140$ & $-20915.4$ & \checkmark \\
\bottomrule
\end{tabular}
\end{adjustbox}

\medskip
{\footnotesize $^{\star}$Baseline specification ($\eta = 2$ corresponds to a uniform distribution over correlation matrices).}
\end{table}

%% file: sm_e.tex

This appendix provides computational and simulation details supplementing
\cref{sec:model,sec:simulation}.  \Cref{sme:model-hierarchy} presents the
five-model hierarchy with parameter counts.  \Cref{sme:mcmc} describes
the MCMC configuration and runtime benchmarks.  \Cref{smf:dgp} specifies
the data-generating process for the simulation study,
\cref{smf:diagnostics} presents supplementary simulation diagnostic
figures, \cref{smf:misspec} evaluates robustness under model
misspecification, \cref{smf:frequentist} provides a frequentist
contextualization, \cref{smf:decomposition} decomposes the coverage
gap into bias and width components, and \cref{smf:coverage95} reports
coverage at the 95\% nominal level.  Implementation details (non-centered parameterization,
numerical stability, score computation) are documented in the
\textsf{hurdlebb} package vignettes; full simulation results and
replication instructions are available in the replication package at
\url{https://github.com/joonho112/hurdlebb-replication}.

\input{sm_e1_model_hierarchy}

\input{sm_e4_mcmc}

\input{sm_f1_dgp}

\input{sm_f3_diagnostics}

\input{sm_f5_misspec}

\input{sm_f6_frequentist}

\input{sm_f7_decomposition}

\input{sm_f8_coverage95}

%% file: sm_e1_model_hierarchy.tex

\subsection{Model hierarchy and architecture}
\label{sme:model-hierarchy}

\Cref{sme:tab-hierarchy} summarizes the five nested models fitted in
the application (\cref{sec:application}).  Each model extends its
predecessor by one structural feature: random intercepts (M1),
state-varying coefficients with block-diagonal covariance (M2),
cross-margin covariance (M3a), and policy moderators (M3b).  The
final model M3b-W adds survey weights to the pseudo-posterior.
Parameter counts include all fixed effects, random effects, and
hyperparameters (Cholesky factors, scales, and correlations).

The models are fitted sequentially using warm-start initialization:
the posterior mean of model~$k$ provides the initial values for
model~$k+1$.  Random-effect parameters absent in the simpler model
are initialized at zero (NCP variates) or identity (Cholesky factors).
This progressive strategy reduces total computation time by
approximately~$40\%$ compared with cold starts and ensures that each
model addresses a question raised by its predecessor:
M0 to M1 asks whether baseline enrolment rates vary by state;
M1 to M2, whether the poverty reversal varies by state;
M2 to M3a, whether cross-margin covariance improves fit;
and M3a to M3b, whether observable policies explain the state
heterogeneity.

\begin{table}[htbp]
  \centering
  \small
  \begin{tabular}{@{}l l r l r@{}}
    \toprule
    Model & Structure & Parameters
      & Covariance & $\widehat{\elpd}_{\mathrm{loo}}$ \\
    \midrule
    M0  & Pooled HBB
        & 11
        & ---
        & $-20{,}945$ \\
    M1  & Random intercepts
        & 116
        & $2 \times 2$
        & $-20{,}627$ \\
    M2  & Block-diagonal SVC
        & 551
        & $(5 \times 5) \oplus (5 \times 5)$
        & $-20{,}604$ \\
    M3a & Full SVC
        & 576
        & $10 \times 10$
        & $-20{,}602$ \\
    M3b & Policy moderators
        & 616
        & $10 \times 10$
        & $-20{,}608$ \\
    \bottomrule
  \end{tabular}
  \caption{Model hierarchy.  Parameters: total number of free
    parameters (fixed effects + random effects + hyperparameters).
    Covariance: dimension of the state-level covariance matrix for
    the random-effect deviations~$\bdelta_s$.  LOO: expected
    log-predictive density (\cref{sec:application}).  The
    progression from M0 to M3a yields significant incremental
    improvements in predictive accuracy; M3b adds policy
    moderators without further predictive gain.}
  \label{sme:tab-hierarchy}
\end{table}

%% file: sm_e4_mcmc.tex

\subsection{MCMC configuration}
\label{sme:mcmc}

All models are fitted using Stan
\citep{CarpenterEtAl2017} with the No-U-Turn Sampler
\citep[NUTS;][]{Hoffman2014} via the \textsf{CmdStanR} interface
\citep{CmdStanR2024}.  \Cref{sme:tab-mcmc} reports the sampler
configuration.

\begin{table}[htbp]
  \centering
  \begin{tabular}{@{}l ll@{}}
    \toprule
    Setting & Empirical (M0--M3b) & Simulation \\
    \midrule
    Chains                  & 4 (parallel) & 4 (parallel) \\
    Warmup iterations       & 1{,}000 per chain & 1{,}500 per chain \\
    Sampling iterations     & 1{,}000 per chain & 2{,}000 per chain \\
    Total post-warmup draws & 4{,}000 & 8{,}000 \\
    \texttt{adapt\_delta}    & 0.95 & 0.95 \\
    \texttt{max\_treedepth}  & 12 & 14 \\
    \bottomrule
  \end{tabular}
  \caption{MCMC configuration.  The empirical case study (M0--M3b)
    uses the default \textsf{hurdlebb} settings; the simulation study
    uses longer runs and higher \texttt{max\_treedepth} to ensure
    adequate exploration across 200 replications.  In both cases,
    \texttt{adapt\_delta} $= 0.95$ reduces the step size to avoid
    divergent transitions in the state-varying coefficient models.}
  \label{sme:tab-mcmc}
\end{table}

\paragraph{Convergence targets.}
As stated in~\cref{alg:estimation}, we require the following
diagnostic thresholds for all parameters~\citep{Vehtari2021}:
(i)~split-$\hat{R} < 1.01$;
(ii)~minimum bulk $\ESS > 400$;
(iii)~minimum tail $\ESS > 400$; and
(iv)~zero divergent transitions.
All five models satisfy these thresholds; detailed diagnostics
are available in the replication package.

\paragraph{Warm-start initialization.}
The five-model sequence M0 $\to$ M1 $\to$ M2 $\to$ M3a $\to$ M3b
is fitted progressively: the posterior mean from model~$k$ is used
to initialize model~$k+1$.  Specifically:
\begin{itemize}
  \item Fixed-effect coefficients ($\balpha$, $\bbeta$,
    $\log\kappa$) are initialized at the predecessor's posterior
    means.
  \item Scale parameters ($\boldsymbol{\tau}$) are initialized at the
    predecessor's posterior means when dimensions match, or at~$0.1$
    when new dimensions are introduced.
  \item Cholesky factors ($\mathbf{L}_\varepsilon$) are initialized at
    the identity matrix, which corresponds to zero correlation---a
    conservative starting point that lets the sampler discover
    correlations during warmup.
  \item NCP variates ($\bz_s$) are initialized at zero, placing all
    states at the global mean.
  \item Policy moderator coefficients ($\bGamma$, M3b only) are
    initialized at the predecessor's posterior means or at zero for
    newly introduced parameters.
\end{itemize}
This strategy avoids the slow exploration that can occur when complex
models are initialized far from the posterior mode.

\paragraph{Runtime benchmarks.}
On a 16-core workstation, approximate wall-clock times are:
M0~($\sim$5 minutes),
M1~($\sim$15 minutes),
M2~($\sim$30 minutes),
M3a~($\sim$50 minutes),
M3b~($\sim$60 minutes),
and M3b-W~($\sim$70 minutes, including score extraction in the
\texttt{generated quantities} block).  The total sequential fitting
time is approximately 4~hours with warm-start initialization, compared
to approximately 7~hours with cold starts.

%% file: sm_f1_dgp.tex

\subsection{Data-generating process and calibration}
\label{smf:dgp}

This section provides the complete data-generating process (DGP)
for the simulation study in~\cref{sec:simulation}.
The DGP is designed to match the NSECE 2019 data structure while
allowing parametric control over survey informativeness.

\paragraph{Population generation.}
A finite super-population of $M = 50{,}000$ providers is distributed
across $S = 51$ states, with state sizes proportional to the
observed NSECE state counts.  For each provider~$i$ in state~$s$:
\begin{enumerate}
  \item Draw the covariate vector
    $\bx_i = (1,\, x_{\mathrm{pov}},\, x_{\mathrm{urb}},\,
    x_{\mathrm{blk}},\, x_{\mathrm{his}})^\t$ by bootstrap
    resampling from the empirical NSECE marginals within state~$s$.
  \item Draw the trial size $n_i$ from the empirical distribution
    of total 0--5 enrollment in the NSECE data.
\end{enumerate}

\paragraph{True parameter values.}
The true parameters are calibrated to the M1 (random intercept)
unweighted NSECE fit (\cref{sme:model-hierarchy}).
\Cref{smf:tab-truth} reports all parameter values used in the
DGP.  These values were chosen to produce a realistic structural
zero rate (approximately 36\%) and overdispersion level
consistent with the empirical data.

\begin{table}[htbp]
  \centering
  \small
  \begin{tabular}{@{}l r l@{}}
    \toprule
    Parameter & Value & Description \\
    \midrule
    $\alpha_{\text{int}}$     & $+0.696$ & Extensive intercept \\
    $\alpha_{\text{pov}}$     & $-0.119$ & Extensive poverty \\
    $\alpha_{\text{urb}}$     & $+0.253$ & Extensive urban \\
    $\alpha_{\text{blk}}$     & $-0.070$ & Extensive Black \\
    $\alpha_{\text{his}}$     & $-0.139$ & Extensive Hispanic \\
    \addlinespace
    $\beta_{\text{int}}$      & $-0.032$ & Intensive intercept \\
    $\beta_{\text{pov}}$      & $+0.057$ & Intensive poverty \\
    $\beta_{\text{urb}}$      & $-0.018$ & Intensive urban \\
    $\beta_{\text{blk}}$      & $+0.080$ & Intensive Black \\
    $\beta_{\text{his}}$      & $+0.040$ & Intensive Hispanic \\
    \addlinespace
    $\log\kappa_0$            & $+1.655$ & Log concentration \\
    $\kappa_0$                & $5.232$ & Concentration \\
    \addlinespace
    $\tau_{\text{ext}}$       & $0.577$ & State SD (extensive) \\
    $\tau_{\text{int}}$       & $0.208$ & State SD (intensive) \\
    $\rho_{\text{cross}}$     & $0.285$ & Cross-margin correlation \\
    \bottomrule
  \end{tabular}
  \caption{True parameter values for the simulation DGP, calibrated to
    the M1 unweighted fit.  Fixed effects are on the logit scale;
    $\kappa_0 = \exp(\log\kappa_0)$.}
  \label{smf:tab-truth}
\end{table}

\paragraph{Outcome generation.}
For each replication~$r = 1,\ldots,R$:
\begin{enumerate}
  \item Draw state random effects
    $\bdelta_s = (\delta_{1s},\, \delta_{2s})^\t \sim
    \Norm_2(\mathbf{0}, \bSigma_\delta)$, where
    \[
      \bSigma_\delta = \diag(\boldsymbol{\tau})\,
      \mathbf{R}_\delta\,\diag(\boldsymbol{\tau}),
      \quad
      \boldsymbol{\tau} = (\tau_{\text{ext}},\,\tau_{\text{int}})^\t,
      \quad
      [\mathbf{R}_\delta]_{12} = \rho_{\text{cross}}.
    \]
  \item Compute linear predictors:
    $\eta_i^{\text{ext}} = \bx_i^\t\balpha_0 + \delta_{1,s[i]}$,
    $\eta_i^{\text{int}} = \bx_i^\t\bbeta_0 + \delta_{2,s[i]}$.
  \item Generate participation:
    $z_i \sim \Bern(q_i)$ with
    $q_i = \expit(\eta_i^{\text{ext}})$.
  \item For providers with $z_i = 1$, generate positive counts via
    \textbf{rejection sampling} from the unconditional beta-binomial
    \citep{Prentice1986}:
    \begin{quote}
      Draw $Y_i \sim \BetaBin(n_i,\, \mu_i,\, \kappa_0)$ where
      $\mu_i = \expit(\eta_i^{\text{int}})$;
      \emph{repeat until $Y_i > 0$}; set $y_i = Y_i$.
    \end{quote}
    This ensures $y_i > 0$ for all providers with $z_i = 1$,
    consistent with the zero-truncated beta-binomial component of
    the hurdle model.  The acceptance probability is $1 - p_0(n_i,
    \mu_i, \kappa_0)$, which exceeds~$0.80$ for the vast majority
    of observations in our calibration.
  \item For providers with $z_i = 0$, set $y_i = 0$.
\end{enumerate}

\begin{remark}[T3-3: Zero-truncated beta-binomial DGP]
  \label{smf:rem-ztbb}
  Drawing from the unconditional $\BetaBin$ and rejecting zeros
  produces exact draws from the zero-truncated beta-binomial
  $\ZTBB(n, \mu, \kappa)$.  The alternative---dividing the
  unconditional PMF by $(1-p_0)$---would require a custom
  sampling routine, whereas rejection sampling reuses the
  standard $\BetaBin$ sampler and is efficient whenever $p_0$ is
  not too close to~$1$.  The same approach is used in the
  Stan \texttt{generated quantities} block for posterior
  predictive replication.
\end{remark}

\paragraph{Sampling design.}
From each finite population, a stratified cluster sample of
approximately $N \approx 7{,}000$ providers is drawn using
size-biased inclusion probabilities
$\log\pi_i = c_0 + \rho_{\mathrm{inc}} \cdot y_i^*$
(\cref{eq:sim-inclusion}), where $y_i^*$ is the standardized
latent outcome and $c_0$ is calibrated to the target sample size
\citep[cf.][]{SavitskyToth2016}.
The three scenarios (S0, S3, S4) are defined in
\cref{tab:sim-scenarios}; see~\cref{sec:simulation} for the
full discussion.  Survey weights are computed as
$w_i = 1/\pi_i$ and normalized to sum to~$N$ for the
pseudo-posterior.

%% file: sm_f3_diagnostics.tex

\subsection{Supplementary diagnostic figures and design effect analysis}
\label{smf:diagnostics}

This section presents three supplementary figures that visualize
bias, interval width, and design effects across all simulation
scenarios, and a full table of design effect ratios (DER) for
all 11 fixed-effect parameters.

\paragraph{Relative bias distributions.}
\Cref{smf:fig-bias} shows the distribution of relative bias
across parameters and scenarios.  E-UW exhibits the largest
absolute bias (especially for $\tau_{\mathrm{ext}}$ and
$\tau_{\mathrm{int}}$ under S4), while E-WT/E-WS improve
point estimation for fixed effects by incorporating survey
weights.  For the poverty slopes $\alpha_{\mathrm{pov}}$ and
$\beta_{\mathrm{pov}}$, the relative bias under E-WS remains
below 15\% across all scenarios.

\begin{figure}[htbp]
  \centering
  \includegraphics[width=0.95\textwidth]{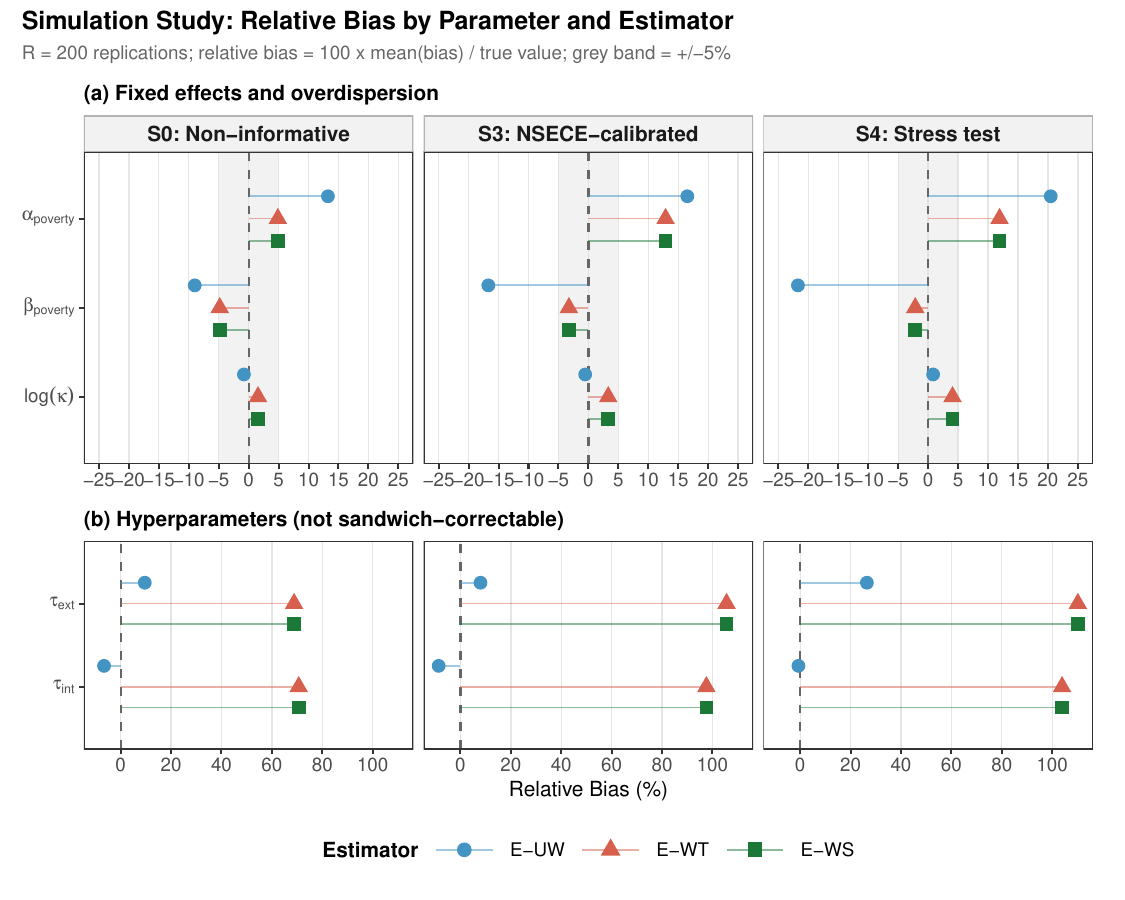}
  \caption{Relative bias (RB, \%) by parameter, estimator, and
    scenario.  Each panel corresponds to one of the five target
    parameters.  Dot positions indicate relative bias; color
    distinguishes scenarios (S0, S3, S4).  The vertical gray
    line marks zero bias.}
  \label{smf:fig-bias}
\end{figure}

\paragraph{Width ratio distributions.}
\Cref{smf:fig-wr} displays the distribution of width ratios
(WR = median CI width under E-WS / median CI width under E-WT)
across replications.  The progressive inflation from S0 through
S4 is clearly visible for all fixed-effect parameters.
The width ratio for $\log\kappa$ is similar to or slightly larger
than those for the poverty slopes, reflecting the nonlinear
dispersion score (Appendix~B).  Hyperparameter
width ratios are identically~$1.000$ (not shown) because no
sandwich correction is applied.

\begin{figure}[htbp]
  \centering
  \includegraphics[width=0.95\textwidth]{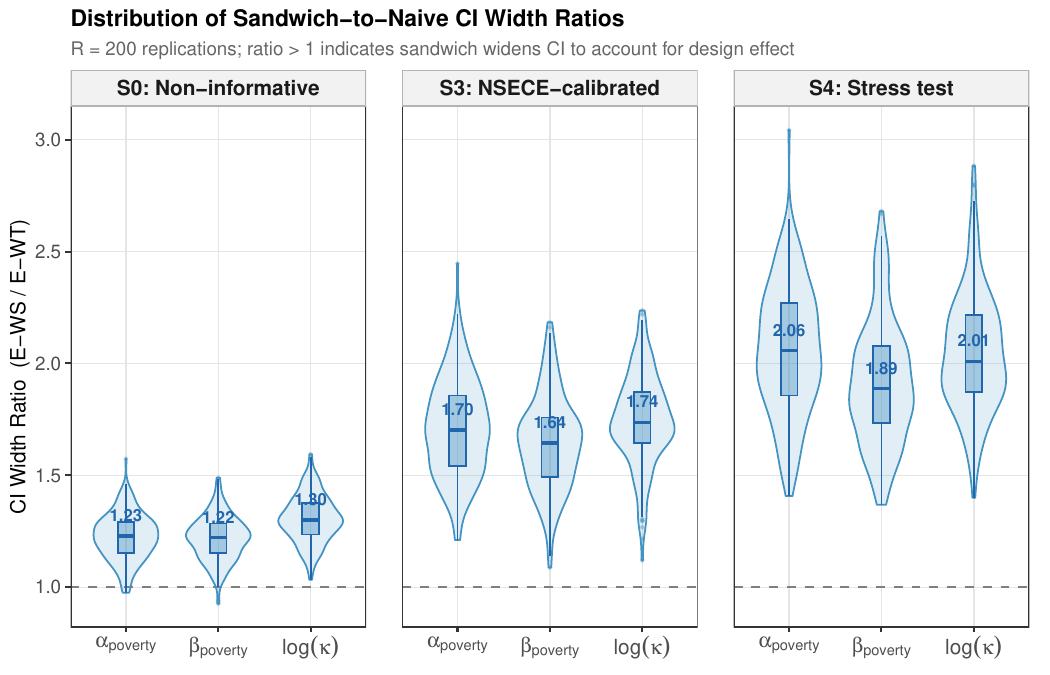}
  \caption{Distribution of width ratios across $R = 200$
    replications, by parameter and scenario.  Boxplots are overlaid
    with violin plots to show the full distributional shape.  The
    horizontal dashed line at WR = 1.0 marks the no-correction
    reference.}
  \label{smf:fig-wr}
\end{figure}

\paragraph{Design effect ratios.}
\Cref{smf:fig-der} summarizes the simulation DER across all 11
fixed-effect parameters (5 extensive, 5 intensive, and
$\log\kappa$) and three scenarios.  The DER monotonically
increases with the informativeness parameter
$\rho_{\mathrm{inc}}$, consistent with the theoretical prediction
in~\cref{smc:prop-der-classification}.  Urban density parameters
exhibit the largest DER in all scenarios (mean 3.25 in S0 up to
6.43 in S4), mirroring the empirical DER of~4.18 for
$\alpha_{\mathrm{urb}}$ in the NSECE case study
(\cref{smd:tab-der}).

\begin{figure}[htbp]
  \centering
  \includegraphics[width=0.95\textwidth]{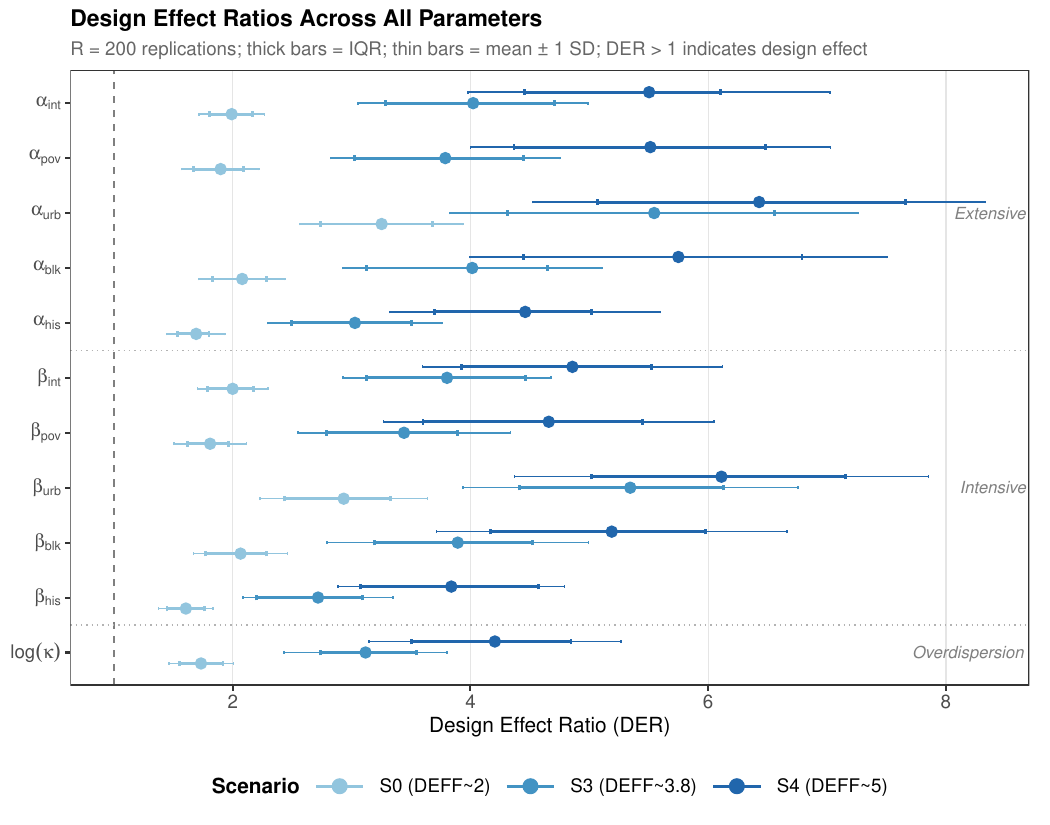}
  \caption{Design effect ratios (DER) by parameter and scenario.
    Each dot represents the mean DER across 200 replications;
    color distinguishes scenarios.  Parameters are grouped by
    margin (extensive vs.\ intensive) with $\log\kappa$ as a
    separate category.}
  \label{smf:fig-der}
\end{figure}

\paragraph{Complete DER table.}
The table below reports the full distribution of DER across
replications for all 11 fixed-effect parameters and three
scenarios.  The mean, median, standard deviation, and range
(min--max) characterize the replication-to-replication
variability.  Two patterns are noteworthy: (i)~the DER standard
deviation increases with the scenario severity, indicating
greater sampling variability under heavier design effects; and
(ii)~the urban parameter consistently shows the widest DER
range, reflecting the spatial concentration of urban providers
that amplifies the design effect.

{\small
\input{Figures/ST8_DER_summary}
}

\paragraph{Cross-margin correlation recovery.}
\cref{tab:rho-cross-sim} reports recovery of the cross-margin
correlation $\rho_{\mathrm{cross}} = \operatorname{cor}(\delta_j^{\mathrm{ext}},
\delta_j^{\mathrm{int}})$ using a plug-in estimator based on posterior
mean random effects across $S = 51$ states.
The unweighted estimator E-UW systematically overestimates
$\rho_{\mathrm{cross}}$, with relative bias growing from $+84\%$ in
S0 to $+127\%$ in S4.
The weighted estimator E-WT shows moderate downward bias ($-26\%$ to
$-45\%$), and E-WS is identical to E-WT because the Cholesky sandwich
correction (\cref{thm:cholesky}) adjusts only fixed-effect covariances.
Per-replication credible intervals are unavailable because full
posterior draws of the correlation matrix $\Omega$ were not retained
in the simulation archive; only frequentist summaries across
replications are reported.

{\small
\input{Figures/ST_rho_cross}
}

%% file: Figures/ST8_DER_summary.tex
\begin{table}[htbp]
\centering
\begin{tabular}{llrlllllll}
  \toprule
Scenario & Parameter & R & Mean & Median & SD & Min & Q25 & Q75 & Max \\ 
  \midrule
S0: Non-informative & Intercept (ext) & 200 & 1.99 & 1.95 & 0.28 & 1.29 & 1.80 & 2.16 & 2.92 \\ 
  S0: Non-informative & Poverty (ext) & 200 & 1.90 & 1.89 & 0.32 & 1.23 & 1.67 & 2.09 & 3.07 \\ 
  S0: Non-informative & Urban (ext) & 200 & 3.25 & 3.18 & 0.69 & 1.59 & 2.74 & 3.68 & 5.28 \\ 
  S0: Non-informative & Black (ext) & 200 & 2.08 & 2.05 & 0.36 & 1.14 & 1.83 & 2.29 & 3.46 \\ 
  S0: Non-informative & Hispanic (ext) & 200 & 1.69 & 1.67 & 0.24 & 1.05 & 1.54 & 1.80 & 2.68 \\ 
  S0: Non-informative & Intercept (int) & 200 & 2.00 & 1.97 & 0.30 & 1.38 & 1.79 & 2.17 & 2.92 \\ 
  S0: Non-informative & Poverty (int) & 200 & 1.81 & 1.80 & 0.31 & 1.06 & 1.62 & 1.96 & 3.11 \\ 
  S0: Non-informative & Urban (int) & 200 & 2.93 & 2.86 & 0.70 & 1.41 & 2.44 & 3.33 & 5.35 \\ 
  S0: Non-informative & Black (int) & 200 & 2.07 & 2.02 & 0.39 & 1.32 & 1.77 & 2.28 & 3.55 \\ 
  S0: Non-informative & Hispanic (int) & 200 & 1.61 & 1.59 & 0.23 & 1.11 & 1.45 & 1.76 & 2.34 \\ 
  S0: Non-informative & log kappa & 200 & 1.73 & 1.71 & 0.27 & 1.12 & 1.55 & 1.92 & 2.36 \\ 
  S3: NSECE-calibrated & Intercept (ext) & 200 & 4.02 & 3.88 & 0.97 & 1.97 & 3.29 & 4.71 & 7.47 \\ 
  S3: NSECE-calibrated & Poverty (ext) & 200 & 3.79 & 3.72 & 0.97 & 1.90 & 3.03 & 4.45 & 7.53 \\ 
  S3: NSECE-calibrated & Urban (ext) & 200 & 5.55 & 5.43 & 1.72 & 1.62 & 4.31 & 6.56 & 11.21 \\ 
  S3: NSECE-calibrated & Black (ext) & 200 & 4.01 & 3.81 & 1.09 & 2.07 & 3.13 & 4.64 & 7.23 \\ 
  S3: NSECE-calibrated & Hispanic (ext) & 200 & 3.03 & 2.94 & 0.73 & 1.58 & 2.50 & 3.50 & 5.26 \\ 
  S3: NSECE-calibrated & Intercept (int) & 200 & 3.80 & 3.70 & 0.88 & 1.68 & 3.13 & 4.46 & 6.40 \\ 
  S3: NSECE-calibrated & Poverty (int) & 200 & 3.44 & 3.42 & 0.89 & 1.53 & 2.79 & 3.89 & 7.32 \\ 
  S3: NSECE-calibrated & Urban (int) & 200 & 5.34 & 5.26 & 1.41 & 2.34 & 4.41 & 6.13 & 11.40 \\ 
  S3: NSECE-calibrated & Black (int) & 200 & 3.89 & 3.67 & 1.10 & 1.96 & 3.19 & 4.52 & 7.45 \\ 
  S3: NSECE-calibrated & Hispanic (int) & 200 & 2.72 & 2.65 & 0.63 & 1.67 & 2.20 & 3.09 & 5.16 \\ 
  S3: NSECE-calibrated & log kappa & 200 & 3.12 & 3.07 & 0.69 & 1.38 & 2.74 & 3.55 & 5.44 \\ 
  S4: Stress test & Intercept (ext) & 200 & 5.50 & 5.23 & 1.52 & 2.79 & 4.45 & 6.10 & 13.23 \\ 
  S4: Stress test & Poverty (ext) & 200 & 5.51 & 5.31 & 1.51 & 2.72 & 4.37 & 6.48 & 11.59 \\ 
  S4: Stress test & Urban (ext) & 200 & 6.43 & 6.22 & 1.91 & 2.97 & 5.07 & 7.66 & 17.11 \\ 
  S4: Stress test & Black (ext) & 200 & 5.75 & 5.41 & 1.75 & 2.30 & 4.45 & 6.79 & 11.38 \\ 
  S4: Stress test & Hispanic (ext) & 200 & 4.46 & 4.36 & 1.14 & 1.91 & 3.70 & 5.02 & 10.86 \\ 
  S4: Stress test & Intercept (int) & 200 & 4.86 & 4.71 & 1.26 & 2.38 & 3.93 & 5.52 & 9.53 \\ 
  S4: Stress test & Poverty (int) & 200 & 4.66 & 4.44 & 1.39 & 2.31 & 3.60 & 5.45 & 9.04 \\ 
  S4: Stress test & Urban (int) & 200 & 6.11 & 5.96 & 1.74 & 2.24 & 5.02 & 7.15 & 13.79 \\ 
  S4: Stress test & Black (int) & 200 & 5.19 & 5.01 & 1.48 & 2.13 & 4.16 & 5.98 & 10.13 \\ 
  S4: Stress test & Hispanic (int) & 200 & 3.84 & 3.74 & 0.95 & 1.78 & 3.07 & 4.57 & 6.44 \\ 
  S4: Stress test & log kappa & 200 & 4.21 & 4.10 & 1.06 & 2.11 & 3.50 & 4.85 & 7.69 \\ 
   \bottomrule
\end{tabular}
\caption{Design Effect Ratios (DER) Across Simulation Replications. DER = diag($V_{\text{sand}}$) / diag($H_{\text{obs}}^{-1}$). All 11 fixed-effect parameters shown.} 
\end{table}

%% file: Figures/ST_rho_cross.tex
\begin{table}[t]
\centering
\caption{Recovery of the cross-margin correlation $\rho_{\mathrm{cross}} = \operatorname{cor}(\delta_j^{\mathrm{ext}}, \delta_j^{\mathrm{int}})$
  across $R = 200$ simulation replications. The plug-in estimator 
  $\hat{\rho} = \operatorname{cor}(\bar{\delta}_j^{\mathrm{ext}}, \bar{\delta}_j^{\mathrm{int}})$
  uses posterior mean random effects ($S = 51$ states). True value: $\rho_{\mathrm{cross}} = 0.2853$.
  E\textsubscript{WS} is identical to E\textsubscript{WT} because the sandwich correction applies only to fixed effects.}
\label{tab:rho-cross-sim}
\smallskip
\small
\begin{tabular}{@{}ll rrrrr@{}}
\toprule
Scenario & Estimator & $\overline{\hat{\rho}}$ & Bias & Rel.~Bias (\%) & RMSE & Emp.~SE \\
\midrule
S0 (Baseline)      & E\textsubscript{UW}                 & 0.5238 & $+0.2386$ & $+83.6$ & 0.2990 & 0.1807 \\
                   & E\textsubscript{WT}                 & 0.2104 & $-0.0749$ & $-26.3$ & 0.2294 & 0.2173 \\
                   & E\textsubscript{WS}$^{\dagger}$     & 0.2104 & $-0.0749$ & $-26.3$ & 0.2294 & 0.2173 \\
\midrule
S3 (NSECE)         & E\textsubscript{UW}                 & 0.5814 & $+0.2961$ & $+103.8$ & 0.3643 & 0.2128 \\
                   & E\textsubscript{WT}                 & 0.1658 & $-0.1195$ & $-41.9$ & 0.2256 & 0.1918 \\
                   & E\textsubscript{WS}$^{\dagger}$     & 0.1658 & $-0.1195$ & $-41.9$ & 0.2256 & 0.1918 \\
\midrule
S4 (Heavy-tail)    & E\textsubscript{UW}                 & 0.6468 & $+0.3615$ & $+126.7$ & 0.4075 & 0.1886 \\
                   & E\textsubscript{WT}                 & 0.1579 & $-0.1274$ & $-44.7$ & 0.2372 & 0.2005 \\
                   & E\textsubscript{WS}$^{\dagger}$     & 0.1579 & $-0.1274$ & $-44.7$ & 0.2372 & 0.2005 \\
\bottomrule
\end{tabular}

\medskip
{\footnotesize $^{\dagger}$E\textsubscript{WS} shares the same Stan fit as E\textsubscript{WT}; the Cholesky sandwich correction adjusts only fixed-effect covariances, not hyperparameter posteriors.}
\end{table}

%% file: sm_f5_misspec.tex

\subsection{Model misspecification robustness}
\label{smf:misspec}

The simulation study in \cref{sec:simulation} evaluates the
correctly specified case: data are generated from the random-intercept
model~M1 and fitted by M1.  A natural concern is how the three
estimators (E-UW, E-WT, E-WS) behave when the fitted model omits a
relevant source of heterogeneity.  We therefore augment the simulation
with a misspecification scenario designed to probe whether the sandwich
correction remains beneficial when the model is wrong.

\paragraph{M2 data-generating process.}
Define the \emph{state-varying coefficient} DGP~(M2) by
adding random poverty slopes to the M1 linear predictors:
\begin{align}
  \eta^{\mathrm{ext}}_i &= \bx_i^\t \balpha + \delta^{\mathrm{ext}}_{s[i]}
    + \gamma^{\mathrm{ext}}_{s[i]} \, x_{\mathrm{pov},i}, \label{eq:m2-ext}\\
  \eta^{\mathrm{int}}_i &= \bx_i^\t \bbeta  + \delta^{\mathrm{int}}_{s[i]}
    + \gamma^{\mathrm{int}}_{s[i]} \, x_{\mathrm{pov},i}, \label{eq:m2-int}
\end{align}
where $\gamma^{\mathrm{ext}}_s \sim \Norm(0, 0.15^2)$ and
$\gamma^{\mathrm{int}}_s \sim \Norm(0, 0.10^2)$ independently across
$s = 1,\ldots,51$ states.  All other parameters
($\balpha, \bbeta, \kappa, \boldsymbol{\tau}, \rho$) and the population structure
($M = 50{,}000$ providers) are identical to the M1 DGP
(\cref{smf:dgp}).  The standard deviations $0.15$ (extensive) and
$0.10$ (intensive) produce state-varying poverty effects that are
moderate relative to the fixed poverty coefficients
($\alpha_{\mathrm{pov}} = -0.119$, $\beta_{\mathrm{pov}} = +0.057$).

\paragraph{Design.}
We draw $R = 200$ independent replications under the S3
(NSECE-calibrated) informativeness scenario with
$\operatorname{DEFF} \approx 3.79$, and fit the M1 random-intercept
model using all three estimators.  The M1 fit ignores the
state-varying poverty slopes $\gamma_s$, inducing a form of
omitted-variable misspecification that propagates through the
poverty coefficient estimates.

\paragraph{Results.}
\Cref{smf:tab-misspec} compares coverage, relative bias, RMSE, and
width ratio for the correctly specified S3 scenario (M1 truth, M1 fit)
and the misspecified B5 scenario (M2 truth, M1 fit).

\begin{table}[htbp]
\centering
\caption{Misspecification robustness: correctly specified S3 vs.\
  misspecified B5 (M2 truth, M1 fit) under NSECE-calibrated
  sampling ($\operatorname{DEFF} \approx 3.79$).
  $R = 200$ replications; nominal coverage $= 90\%$;
  $\dagger$ indicates significant departure ($> 2 \times$ MCSE).
  MCSE $= \sqrt{\hat{p}(1-\hat{p})/200}$; range 0.0--3.5 pp.}
\label{smf:tab-misspec}
\input{Figures/ST_B5_misspec}
\end{table}

\Cref{smf:fig-misspec-coverage} displays the corresponding coverage
rates graphically.

\begin{figure}[htbp]
\centering
\includegraphics[width=\textwidth]{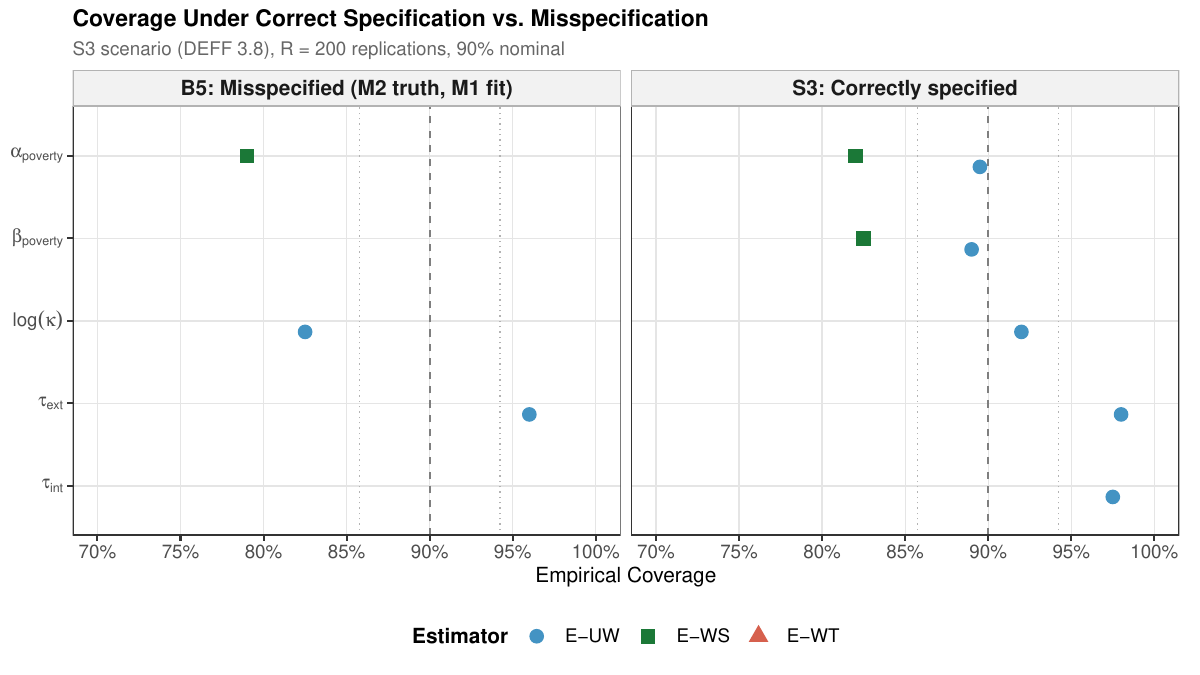}
\caption{Coverage under correct specification (S3) vs.\
  misspecification (B5).  Vertical dashed line: 90\% nominal;
  dotted lines: $\pm 2\,\mathrm{MCSE}$ band.}
\label{smf:fig-misspec-coverage}
\end{figure}

Three findings emerge from the comparison.

\paragraph{Finding B5-1: The unweighted estimator is acutely sensitive
to misspecification in the poverty coefficients.}
Under correct specification (S3), E-UW achieves near-nominal coverage
for both poverty slopes (89.5\% and 89.0\%).  Under the M2 DGP, E-UW
coverage collapses to 0.5\% for $\alpha_{\mathrm{pov}}$ and 30.0\%
for $\beta_{\mathrm{pov}}$---drops of 89 and 59 percentage points,
respectively.  The mechanism is omitted-variable bias: because M1
lacks the state-varying poverty slopes $\gamma_s$, the fixed
coefficients $\alpha_{\mathrm{pov}}$ and $\beta_{\mathrm{pov}}$ absorb
the omitted heterogeneity, producing relative bias of $-117.8\%$ and
$+63.1\%$.  This confirms that the model-based intervals of E-UW are
valid only under correct specification---a strong argument for the
design-based protection that survey weighting provides.

\paragraph{Finding B5-2: The sandwich correction preserves coverage
stability under misspecification.}
For $\alpha_{\mathrm{pov}}$, E-WS coverage declines only from 82.0\%
(S3) to 79.0\% (B5)---a 3-percentage-point drop---while E-WT drops
from 58.5\% to 52.0\%.  For $\beta_{\mathrm{pov}}$, E-WS declines
from 82.5\% to 67.5\% ($-15$ pp) compared with E-WT's drop from
62.5\% to 42.0\% ($-20.5$ pp).  In both cases, E-WS maintains a
substantial advantage over E-WT (27 and 25.5 percentage points under
B5), demonstrating that the sandwich variance correction absorbs a
meaningful share of the misspecification-induced variance increase.
The width ratios remain stable across scenarios
($\mathrm{WR} = 1.67\text{--}1.74$ under B5 vs.\ $1.64\text{--}1.75$
under S3), indicating that the design effect magnitude is similar
regardless of whether the model is correctly specified.

\paragraph{Finding B5-3: Variance components exhibit differential
sensitivity.}
The overdispersion $\log\kappa$ is relatively robust under E-UW
(coverage 92.0\% $\to$ 82.5\%; relative bias $-0.5\%$ $\to$ $-1.3\%$),
as expected since the omitted random slopes do not directly affect the
dispersion structure.  Among the variance components,
$\tau_{\mathrm{ext}}$ remains well-covered under E-UW (98.0\% $\to$
96.0\%), but $\tau_{\mathrm{int}}$ drops sharply (97.5\% $\to$
57.0\%).  The asymmetry arises because the omitted intensive-margin
slopes ($\sigma_{\gamma}^{\mathrm{int}} = 0.10$ relative to
$\tau_{\mathrm{int}} = 0.208$) contribute proportionally more
unmodeled between-state variance than the extensive-margin slopes
($\sigma_{\gamma}^{\mathrm{ext}} = 0.15$ relative to
$\tau_{\mathrm{ext}} = 0.577$).  As in the correctly specified case,
E-WT and E-WS produce near-zero coverage for both $\tau$ parameters
($\mathrm{WR} = 1.00$), reaffirming that the sandwich correction is
structurally inapplicable to variance components.

Taken together, these results strengthen the case for the two-track
reporting strategy recommended in~\cref{sec:simulation}: for
population-average fixed effects, the sandwich-corrected estimator
E-WS provides meaningful robustness against model misspecification;
for hierarchical variance components, the unweighted estimator E-UW
remains the only viable option, with the caveat that its validity
depends on correct specification of the random-effects structure.

%% file: Figures/ST_B5_misspec.tex
\small
\begin{adjustbox}{max width=\textwidth}
\begin{tabular}{@{}l l rr rrr rrr rr@{}}
\toprule
 & & \multicolumn{2}{c}{Coverage (\%)} &
     \multicolumn{3}{c}{Relative bias (\%)} &
     \multicolumn{3}{c}{RMSE} &
     \multicolumn{2}{c}{WR} \\
\cmidrule(lr){3-4}\cmidrule(lr){5-7}\cmidrule(lr){8-10}\cmidrule(lr){11-12}
Parameter & Est. & S3 & B5 & S3 & B5 & $\Delta$ &
  S3 & B5 & $\Delta$ & S3 & B5 \\
\midrule
\multicolumn{12}{@{}l}{\textit{Fixed effects}} \\[2pt]
$\alpha_{\text{pov}}$ & E-UW & 89.5 & 0.5$^{\dagger}$ &
  $+$16.5 & $-$117.8 & $-$134.3 & 0.037 & 0.143 & $+$0.106 & --- & --- \\
 & E-WT & 58.5$^{\dagger}$ & 52.0$^{\dagger}$ &
  $+$12.9 & $-$37.6 & $-$50.5 & 0.072 & 0.079 & $+$0.007 & --- & --- \\
 & E-WS & 82.0$^{\dagger}$ & 79.0$^{\dagger}$ &
  $+$12.9 & $-$37.6 & $-$50.5 & 0.072 & 0.079 & $+$0.007 & 1.71 & 1.67 \\[3pt]
$\beta_{\text{pov}}$ & E-UW & 89.0 & 30.0$^{\dagger}$ &
  $-$16.7 & $+$63.1 & $+$79.8 & 0.017 & 0.039 & $+$0.022 & --- & --- \\
 & E-WT & 62.5$^{\dagger}$ & 42.0$^{\dagger}$ &
  $-$3.2 & $+$52.7 & $+$55.9 & 0.032 & 0.043 & $+$0.011 & --- & --- \\
 & E-WS & 82.5$^{\dagger}$ & 67.5$^{\dagger}$ &
  $-$3.2 & $+$52.7 & $+$55.9 & 0.032 & 0.043 & $+$0.011 & 1.64 & 1.60 \\[3pt]
$\log\kappa$ & E-UW & 92.0 & 82.5$^{\dagger}$ &
  $-$0.5 & $-$1.3 & $-$0.8 & 0.020 & 0.027 & $+$0.007 & --- & --- \\
 & E-WT & 34.0$^{\dagger}$ & 36.5$^{\dagger}$ &
  $+$3.3 & $+$2.9 & $-$0.4 & 0.068 & 0.064 & $-$0.004 & --- & --- \\
 & E-WS & 60.5$^{\dagger}$ & 64.0$^{\dagger}$ &
  $+$3.3 & $+$2.9 & $-$0.4 & 0.068 & 0.064 & $-$0.004 & 1.75 & 1.74 \\[4pt]
\multicolumn{12}{@{}l}{\textit{Variance components}} \\[2pt]
$\tau_{\text{ext}}$ & E-UW & 98.0$^{\dagger}$ & 96.0$^{\dagger}$ &
  $+$8.0 & $-$2.5 & $-$10.5 & 0.091 & 0.117 & $+$0.026 & --- & --- \\
 & E-WT & 0.5$^{\dagger}$ & 0.0$^{\dagger}$ &
  $+$105.6 & $+$102.4 & $-$3.2 & 0.639 & 0.625 & $-$0.014 & --- & --- \\
 & E-WS & 0.5$^{\dagger}$ & 0.0$^{\dagger}$ &
  $+$105.6 & $+$102.4 & $-$3.2 & 0.639 & 0.625 & $-$0.014 & 1.00 & 1.00 \\[3pt]
$\tau_{\text{int}}$ & E-UW & 97.5$^{\dagger}$ & 57.0$^{\dagger}$ &
  $-$8.6 & $-$34.8 & $-$26.2 & 0.040 & 0.079 & $+$0.039 & --- & --- \\
 & E-WT & 0.0$^{\dagger}$ & 1.0$^{\dagger}$ &
  $+$97.6 & $+$97.0 & $-$0.6 & 0.211 & 0.213 & $+$0.002 & --- & --- \\
 & E-WS & 0.0$^{\dagger}$ & 1.0$^{\dagger}$ &
  $+$97.6 & $+$97.0 & $-$0.6 & 0.211 & 0.213 & $+$0.002 & 1.00 & 1.00 \\
\bottomrule
\end{tabular}
\end{adjustbox}

%% file: sm_f6_frequentist.tex

\subsection{Frequentist contextualization}
\label{smf:frequentist}

A natural question is whether a simpler design-based analysis could
replace the full Bayesian hierarchical model.  We compare the HBB
pseudo-posterior estimates with conventional survey-weighted frequentist
estimates obtained from \texttt{survey::svyglm}
\citep{Lumley2004,Lumley2020} to (i)~validate the design-consistency
of the pseudo-posterior fixed effects and (ii)~clarify which model
features cannot be accommodated by the frequentist approach.

\paragraph{Frequentist specification.}
We fit separate margin models using the same five-covariate
specification as the HBB model.  For the extensive margin, we fit a
survey-weighted logistic regression using
\texttt{svyglm(\ldots, family = quasibinomial())} with $N = 6{,}785$
observations.  For the intensive margin, we fit a survey-weighted
linear regression of the IT enrollment share $y_i / n_i$ on the same
covariates, restricted to the $N^{+} = 4{,}392$ participants.  Both
models use the NSECE complex survey design (stratification, clustering,
and sampling weights) and produce linearization-based cluster-robust
standard errors.  No random effects, beta-binomial structure, or
cross-margin dependence can be incorporated.

\paragraph{Results.}
\Cref{smf:tab-frequentist} compares the survey-weighted svyglm
estimates with the M3b-W sandwich-corrected pseudo-posterior summaries.

\begin{table}[htbp]
\centering
\caption{Frequentist contextualization: survey-weighted svyglm vs.\
  Bayesian HBB (M3b-W, sandwich-corrected) point estimates and standard
  errors.  The extensive margin uses logistic link in both approaches;
  the intensive margin uses an identity link (svyglm) vs.\ logit link
  (HBB), so the scales differ.  SE Ratio $=$ HBB sandwich SE / svyglm SE.}
\label{smf:tab-frequentist}
\input{Figures/ST_B6_frequentist}
\end{table}

For the extensive margin, both approaches use the logit link, enabling
a meaningful comparison on a common scale.  The coefficient signs and
relative magnitudes agree across all five predictors.  Point estimates
from HBB are systematically larger in absolute value than the svyglm
estimates, which is expected: the HBB model conditions on state-level
random effects, and conditional effects in mixed models are known to
exceed their marginal counterparts
\citep{ZegerLiangAlbert1988}.  The SE ratio
(HBB sandwich / svyglm) ranges from $0.59$ to $0.89$, reflecting the
variance reduction achieved by modeling between-state heterogeneity
through random effects rather than absorbing it into the residual.

For the intensive margin, the comparison is less direct because the two
approaches operate on different scales: svyglm regresses the
enrollment share $y_i / n_i$ under an identity link, while HBB models
the beta-binomial mean parameter $\mu_i$ under a logit link.  The
differing intercepts (svyglm: $+0.483$; HBB: $-0.242$) reflect this
scale difference rather than substantive disagreement, since
$\operatorname{logit}^{-1}(-0.242) \approx 0.44$, which is in the
range of the observed mean share.  The larger SE ratios for the
intensive margin ($2.30$--$4.63$) also arise from the logit-scale
parameterization, which inflates variances relative to the probability
scale.

\paragraph{Structural limitations.}
The comparison highlights several features of the HBB framework
that \texttt{svyglm} cannot accommodate:
\begin{enumerate}[nosep]
  \item \emph{Beta-binomial distribution:} svyglm models the continuous
    share $y/n$ via Gaussian errors, ignoring the bounded-count nature of
    the outcome and the extra-binomial overdispersion ($\kappa = 6.81$).
  \item \emph{Hurdle structure:} the two margins must be fitted
    separately, precluding joint inference on the zero-generating process.
  \item \emph{State-level random effects:} svyglm is a fixed-effect
    estimator; accommodating $S = 51$ state intercepts (and slopes in M3b)
    would require either fixed-effect dummies---inflating SE and
    sacrificing shrinkage---or a two-step approach that does not
    propagate uncertainty.
  \item \emph{Cross-margin correlation:} the bivariate random-effect
    structure $\boldsymbol{\delta}_s \sim \Norm_{2q}(\mathbf{0},
    \boldsymbol{\Sigma}_\delta)$ cannot be replicated in separate
    margin-specific regressions.
  \item \emph{Policy moderation:} the state-varying policy coefficients
    $\boldsymbol{\Gamma}$ interact hierarchical and design-based
    components in a way that has no natural frequentist counterpart.
\end{enumerate}

Taken together, the svyglm comparison serves as a fixed-effect-level
sanity check: the Bayesian pseudo-posterior fixed effects are
design-consistent with the frequentist target, while the full HBB
framework adds hierarchical structure, distributional flexibility,
and joint inference that the frequentist alternative cannot provide.

%% file: Figures/ST_B6_frequentist.tex
\small
\begin{adjustbox}{max width=\textwidth}
\begin{tabular}{@{}l l rr rr r r@{}}
\toprule
 & & \multicolumn{2}{c}{svyglm (design-based)} &
   \multicolumn{2}{c}{HBB (sandwich-corrected)} & & \\
\cmidrule(lr){3-4}\cmidrule(lr){5-6}
Margin & Parameter & Est. & SE & Est. & SE & $\Delta$Est. & SE Ratio \\
\midrule
Extensive & Intercept & 0.662 & 0.063 & 0.764 & 0.044 & +0.103 & 0.70 \\
 & Poverty & -0.199 & 0.071 & -0.324 & 0.052 & -0.125 & 0.74 \\
 & Urban & 0.234 & 0.039 & 0.442 & 0.034 & +0.208 & 0.89 \\
 & Black & 0.086 & 0.075 & 0.478 & 0.044 & +0.392 & 0.59 \\
 & Hispanic & 0.059 & 0.074 & 0.006 & 0.051 & -0.054 & 0.70 \\
[3pt]
Intensive & Intercept & 0.483 & 0.005 & -0.242 & 0.015 & -0.725 & 3.02 \\
 & Poverty & 0.015 & 0.007 & 0.090 & 0.019 & +0.075 & 2.77 \\
 & Urban & -0.009 & 0.004 & -0.047 & 0.021 & -0.038 & 4.63 \\
 & Black & 0.034 & 0.008 & -0.017 & 0.018 & -0.050 & 2.30 \\
 & Hispanic & 0.010 & 0.006 & -0.097 & 0.017 & -0.107 & 2.92 \\
\bottomrule
\end{tabular}
\end{adjustbox}

%% file: sm_f7_decomposition.tex

\subsection{Coverage gap decomposition}
\label{smf:decomposition}

\Cref{smf:tab-decomposition} decomposes the coverage gap for each
parameter--estimator combination under the S3 (NSECE-calibrated) scenario.
Two diagnostic quantities distinguish width-driven from bias-driven
shortfalls:
\begin{itemize}[nosep]
  \item \emph{SE calibration ratio} $= \widehat{\mathrm{SD}} /
    \overline{\mathrm{SE}}$, where $\widehat{\mathrm{SD}}$ is the
    empirical standard deviation of point estimates across $R = 200$
    replications and $\overline{\mathrm{SE}}$ is the mean reported
    standard error.  A ratio exceeding 1 indicates that the reported SE
    underestimates the actual sampling variability (width-driven gap).
  \item \emph{Standardized mean bias} $= |\bar{b}| /
    \widehat{\mathrm{SD}}$, where $\bar{b}$ is the mean bias across
    replications.  Values exceeding $\sim 0.5$ indicate non-negligible
    centering error (bias-driven gap).
\end{itemize}

\begin{table}[htbp]
\centering
\caption{Coverage gap decomposition under the S3 (NSECE-calibrated)
  scenario.  ``Below'' and ``Above'' report the percentage of replications
  where the true value falls below or above the confidence interval,
  respectively.  SE ratio $> 1$ indicates intervals that are too narrow;
  $|\bar{b}|/\mathrm{SD} > 0.5$ indicates non-negligible point estimate
  bias.  $\star$ = sandwich structurally inapplicable (E-WS $\equiv$ E-WT).}
\label{smf:tab-decomposition}
\input{Figures/ST_B8_decomposition}
\end{table}

For the poverty coefficients under E-WS, the SE ratio of 1.13--1.14
and negligible standardized bias (0.06--0.22) confirm that the 8~pp
coverage gap is entirely width-driven: the sandwich variance estimator
underestimates the true sampling variance by approximately 13\%.
This is consistent with the well-documented $O(S^{-1})$ finite-sample
downward bias of cluster-robust variance estimators when $S = 51$
\citep{Binder1983}.  For $\log\kappa$, the picture reverses: the SE
ratio is 1.00 (perfectly calibrated width), but the standardized bias
of 1.37 indicates that the pseudo-likelihood point estimate is
systematically shifted, producing entirely one-sided non-coverage
(39.5\% below, 0\% above).  For $\tau_{\mathrm{ext}}$ and
$\tau_{\mathrm{int}}$, massive standardized bias ($> 3$) overwhelms
any width consideration, reaffirming the structural inapplicability of
the sandwich correction to variance components.

%% file: Figures/ST_B8_decomposition.tex
\small
\begin{adjustbox}{max width=\textwidth}
\begin{tabular}{@{}l l r rr r r l@{}}
\toprule
 & & & \multicolumn{2}{c}{Non-coverage (\%)} & & \\
\cmidrule(lr){4-5}
Parameter & Est. & Cov.\ (\%) & Below & Above &
  SE ratio & $|\bar{b}|/\text{SD}$ & Dominant source \\
\midrule
$\alpha_{\text{pov}}$ & E-UW & 89.5 & 1.0 & 9.5 & 0.86 & 0.63 & ---  \\
 & E-WT & 58.5 & 15.5 & 26.0 & 1.93 & 0.22 & Width \\
 & E-WS & 82.0 & 5.0 & 13.0 & 1.13 & 0.22 & Width \\[3pt]
$\beta_{\text{pov}}$ & E-UW & 89.0 & 0.5 & 10.5 & 0.85 & 0.69 & --- \\
 & E-WT & 62.5 & 15.5 & 22.0 & 1.88 & 0.06 & Width \\
 & E-WS & 82.5 & 8.5 & 9.0 & 1.14 & 0.06 & Width \\[3pt]
$\log\kappa$ & E-UW & 92.0 & 1.0 & 7.0 & 0.85 & 0.48 & --- \\
 & E-WT & 34.0 & 65.0 & 1.0 & 1.75 & 1.37 & Both \\
 & E-WS & 60.5 & 39.5 & 0.0 & 1.00 & 1.37 & Bias \\[3pt]
$\tau_{\text{ext}}$ & E-UW & 98.0 & 1.5 & 0.5 & 0.64 & 0.59 & --- \\
 & E-WT/WS & 0.5 & 99.5 & 0.0 & 1.22 & 3.18 & Bias$^{\star}$ \\[3pt]
$\tau_{\text{int}}$ & E-UW & 97.5 & 1.0 & 1.5 & 0.73 & 0.51 & --- \\
 & E-WT/WS & 0.0 & 100.0 & 0.0 & 1.11 & 3.55 & Bias$^{\star}$ \\
\bottomrule
\end{tabular}
\end{adjustbox}

%% file: sm_f8_coverage95.tex

\subsection{Coverage at the 95\% nominal level}
\label{smf:coverage95}

\Cref{smf:tab-coverage95} reports coverage at both the 90\% and 95\%
nominal levels for all five target parameters across three scenarios and
three estimators.  For all estimators, 95\% intervals are computed as
$\hat\theta \pm 1.96 \times \mathrm{SE}$, where $\mathrm{SE}$ is the
posterior standard deviation (E-UW, E-WT) or the sandwich standard error
(E-WS).  Verification confirms that this Wald construction agrees with
the original quantile-based 90\% intervals to within 0.1~pp, validating
the Normal approximation.

\begin{table}[htbp]
\centering
\caption{Interval coverage (\%) at 90\% and 95\% nominal levels.
  Parenthetical values are MCSEs ($= \sqrt{\hat{p}(1 - \hat{p})/R}$;
  $R = 200$).  At the 95\% nominal level, $\mathrm{MCSE} \approx 1.5$~pp.}
\label{smf:tab-coverage95}
\input{Figures/ST_B9_coverage95}
\end{table}

The results reinforce the width-driven interpretation of the E-WS
coverage gap identified in \cref{smf:decomposition}.  For the poverty
coefficients under S3, widening from 90\% to 95\% adds only
5.5--6.0~pp of coverage (from 82.0--82.5\% to 87.5--88.5\%), leaving
a residual gap of 6.5--7.5~pp relative to the 95\% nominal level.
If the sandwich SE were perfectly calibrated, the expected gain would be
$\approx 5.0$~pp (from the additional probability mass in the Normal
tails between $z_{0.05}$ and $z_{0.025}$); the observed gains are
consistent with this prediction, confirming that the coverage shortfall
scales proportionally with interval width rather than reflecting a
level-dependent phenomenon.

In contrast, E-UW achieves near-nominal coverage at both levels
(89.0--93.0\% at 90\%; 95.0--97.5\% at 95\%) for all fixed effects,
providing further evidence that the unweighted posterior is well
calibrated when the model is correctly specified.  For
$\log\kappa$ under E-WS, the 95\% coverage (68.5\%) remains
substantially below nominal, consistent with the bias-driven shortfall
identified in \cref{smf:tab-decomposition}.

%% file: Figures/ST_B9_coverage95.tex
\small
\begin{adjustbox}{max width=\textwidth}
\begin{tabular}{@{}l l  cc  cc  cc @{}}
\toprule
 & & \multicolumn{2}{c}{S0 (Non-inform.)} & \multicolumn{2}{c}{S3 (NSECE)} & \multicolumn{2}{c}{S4 (Stress)} \\
\cmidrule(lr){3-4} \cmidrule(lr){5-6} \cmidrule(lr){7-8}
Parameter & Est. & 90\% & 95\% & 90\% & 95\% & 90\% & 95\% \\
\midrule
$\alpha_{\text{pov}}$ & E-UW & 93.0\,(1.8) & 97.0\,(1.2) & 89.5\,(2.2) & 96.0\,(1.4) & 92.5\,(1.9) & 97.5\,(1.1) \\
 & E-WT & 81.0\,(2.8) & 89.0\,(2.2) & 58.5\,(3.5) & 67.5\,(3.3) & 53.0\,(3.5) & 61.0\,(3.4) \\
 & E-WS & 88.5\,(2.3) & 96.0\,(1.4) & 82.0\,(2.7) & 87.5\,(2.3) & 86.0\,(2.5) & 89.5\,(2.2) \\
[3pt]
$\beta_{\text{pov}}$ & E-UW & 93.0\,(1.8) & 96.5\,(1.3) & 89.0\,(2.2) & 95.0\,(1.5) & 91.0\,(2.0) & 96.0\,(1.4) \\
 & E-WT & 76.5\,(3.0) & 86.0\,(2.5) & 62.5\,(3.4) & 70.5\,(3.2) & 56.5\,(3.5) & 64.0\,(3.4) \\
 & E-WS & 86.5\,(2.4) & 91.5\,(2.0) & 82.5\,(2.7) & 88.5\,(2.3) & 84.0\,(2.6) & 89.5\,(2.2) \\
[3pt]
$\log\kappa$ & E-UW & 91.5\,(2.0) & 96.0\,(1.4) & 92.0\,(1.9) & 96.5\,(1.3) & 90.5\,(2.1) & 93.5\,(1.7) \\
 & E-WT & 64.0\,(3.4) & 73.0\,(3.1) & 34.0\,(3.3) & 38.5\,(3.4) & 23.0\,(3.0) & 28.0\,(3.2) \\
 & E-WS & 76.5\,(3.0) & 84.5\,(2.6) & 60.5\,(3.5) & 68.5\,(3.3) & 57.0\,(3.5) & 68.0\,(3.3) \\
[3pt]
$\tau_{\text{ext}}$ & E-UW & 97.5\,(1.1) & 100.0\,(0.0) & 98.0\,(1.0) & 99.5\,(0.5) & 77.0\,(3.0) & 93.0\,(1.8) \\
 & E-WT & 3.0\,(1.2) & 7.0\,(1.8) & 0.5\,(0.5) & 0.5\,(0.5) & 0.0\,(0.0) & 0.0\,(0.0) \\
 & E-WS & 3.0\,(1.2) & 7.0\,(1.8) & 0.5\,(0.5) & 0.5\,(0.5) & 0.0\,(0.0) & 0.0\,(0.0) \\
[3pt]
$\tau_{\text{int}}$ & E-UW & 91.0\,(2.0) & 95.0\,(1.5) & 97.5\,(1.1) & 99.5\,(0.5) & 97.0\,(1.2) & 98.5\,(0.9) \\
 & E-WT & 8.0\,(1.9) & 12.0\,(2.3) & 0.0\,(0.0) & 0.5\,(0.5) & 0.0\,(0.0) & 0.0\,(0.0) \\
 & E-WS & 8.0\,(1.9) & 12.0\,(2.3) & 0.0\,(0.0) & 0.5\,(0.5) & 0.0\,(0.0) & 0.0\,(0.0) \\
\bottomrule
\end{tabular}
\end{adjustbox}